  \algrenewcommand\algorithmicrequire{\textbf{Input:}}
  \algrenewcommand\algorithmicensure{\textbf{Output:}}
\renewcommand{\Comment}[2][.58\linewidth]{\leavevmode\hfill\makebox[#1][l]{\scalebox{0.9}{$/\!\!/$\,#2}}}
  \algnewcommand\algorithmiccommentass{$\dot$}
  \algnewcommand\CommentAss{\hfill $\clubsuit$ \,}
\newcommand{\mypos}[2]{\tikz[remember picture]
  {\node[inner sep=0pt,anchor=base](#2){#1};}} 
\newcolumntype{C}{>{$}c<{$}}
\def\rev#1{{#1}}
\def\newrev#1{{#1}}
\newcommand{\polun}{\theta}
\newcommand{\polde}{\xi}
\newcommand{\bpolun}{\bm{\polun}}
\newcommand{\bpolde}{\bm{\polde}}
\renewcommand{\tilde}[1]{\widetilde{#1}}
\newcommand{\pscal}[1]{\left\langle #1 \right\rangle}
\newcommandx{\XXi}[1][1=n]{x_1,\dotsc,x_{#1}}
\newcommandx{\EEi}[1][1=i]{e_1,\dotsc,e_{#1}}
\newcommandx{\ZZi}[1][1=r]{z_1,\dotsc,z_{#1}}
\newcommandx{\Lindet}[1][1=c+i]{L_1,\dotsc,L_{#1}}
\newcommandx{\Tindet}[1][1=c+i]{T_1,\dotsc,T_{#1}}
\newcommand{\nrec}{m}
\newcommand{\Nrec}{M}
\newcommand{\Xrec}{\mathfrak{x}}
\newcommand{\XXreci}[1][\nrec]{\Xrec_1,\dotsc,\Xrec_{#1}}
\newcommand{\XXrec}{\mathfrak{X}}
\newcommand{\NNo}{\mathbb{N}}
\newcommand{\KK}{\mathbf{K}}
\newcommand{\CC}{\mathbf{C}}
\newcommand{\RR}{\mathbf{R}}
\newcommand{\QQ}{\mathbf{Q}}
\newcommand{\CCo}{\mathbb{C}}
\newcommand{\RRo}{\mathbb{R}}
\newcommand{\QQo}{\mathbb{Q}}
\newcommand{\CCgen}{\Kfrak}
\newcommand{\Kclos}{\Kfrak}
\newcommand{\VKclos}{V_{\Kclos}}
\newcommand{\Iminor}{\mfrak}
\newcommand{\IPminor}{\pfrak}
\newcommand{\vphiWvphii}{
\bvphi_i(\Wphii[i][\VKclos][\bvphi])
}
\newcommandx{\ybar}[1][1=]{[y_{#1}]}
\newcommandx{\xbar}[1][1=]{[x_{#1}]}
\newcommand{\taubar}{[\tau]}
\newcommand{\Inci}[1][\bphi]{\Psi_{#1}}
\newcommand{\Inciphi}[1][e]{\Psi_{{\map[#1]}}}
\newcommand{\incid}[1]{\widetilde{#1}}
\newcommand{\Gammat}{\incid{\Gamma}}
\newcommand{\scrPt}{\incid{\mathscr{P}}}
\newcommand{\scrSt}{\incid{\mathscr{S}}}
\newcommand{\scrQt}{\incid{\mathscr{Q}}}
\newcommand{\scrKt}{\incid{\mathscr{K}}}
\newcommand{\scrWt}{\incid{\mathscr{W}}}
\newcommand{\scrFt}{\incid{\mathscr{F}}}
\newcommand{\scrRFQt}{\incid{\mathscr{R}_{F_Q}}}
\newcommand{\scrWunt}{\incid{\mathscr{W}_1}}
\newcommand{\Pcalt}{\incid{\mathcal{P}}}
\newcommand{\Lt}{\incid{L}}
\newcommand{\LtF}{\incid{L_F}}
\newcommand{\LtW}{\incid{L_W}}
\newcommand{\Vt}{\incid{V}}
\newcommand{\St}{\incid{S}}
\newcommand{\Qt}{\incid{Q}}
\newcommand{\Wt}{\incid{W}}
\newcommand{\FQt}{\incid{F_Q}}
\newcommand{\SQt}{\incid{S_Q}}
\newcommand{\Gammaphi}{\Gamma^{\bphi}}
\newcommand{\Gammalpha}{\Gamma^{\alpha}}
\newcommand{\Vphi}{V^{\bphi}}
\newcommand{\Sphi}{S^{\bphi}}
\newcommand{\ffphi}{\ff^{\bphi}}
\newcommand{\hhphi}{\hh^{\bphi}}
\newcommand{\ggphi}[1][]{\gg^{\map[#1]}}
\newcommand{\yyphi}{\bm{y}^{\bphi}}
\newcommand{\bchiphi}{\bchi^{\bphi}}
\newcommand{\chiphi}{\chi^{\bphi}}
\newcommand{\scrWphi}{\scrW^{\bphi}}
\newcommand{\scrFphi}{\scrF^{\bphi}}
\newcommand{\Vo}{V^\circ}
\newcommand{\Voreg}{\V^\circ_{\text{reg}}}
\newcommand{\Vreg}{\V_{\text{reg}}}
\newcommand{\Zo}{Z^{\circ}}
\newcommandx{\Fo}[1][1=\yy]{F^{\circ}_{#1}}
\newcommandx{\proj}[3][1=,2=,3=\pi]{\bm{#3}_{#1}^{#2}}
\newcommandx{\map}[3][1=,2=,3=\phi]{\bm{#3}_{#1}^{#2}}
\newcommand{\mapun}{\map[1]}
\newcommand{\mapunrec}{\map[1][-1]}
\newcommandx{\phia}[2][1=\phi,2=\aa]{{#1}^{+#2}}
\newcommandx{\mapfbr}[3][2=Q,3=e]{#1_{\mid \map[#3] \in \,#2}}
\newcommandx{\mapfbreq}[3][2=\yy,3=e]{#1_{\mid \map[#3] = \,#2}}
\newcommandx{\projfbr}[3][2=Q,3=e]{#1_{\mid \proj[#3] \in \,#2}}
\newcommandx{\projfbreq}[3][2=\yy,3=e]{#1_{\mid \proj[#3] = \,#2}}
\newcommandx{\phibr}[2][1=Q, 2=e]{\map[#2][-1](#1)}
\newcommand{\Wo}{W^\circ}
\newcommandx{\Wophii}[3][1=i,2=V,3=\bphi]{W^\circ_{#3}(#1,#2)}
\newcommandx{\Wphii}[3][1=i,2=V,3=\bphi]{W_{#3}(#1,#2)}
\newcommandx{\Kphii}[3][1=i,2=V,3=\bphi]{K_{#3}(#1,#2)}
\newcommandx{\WWophii}[4][1=1,2=i,3=V,4=\bphi]{W^\circ_{#4}
(#1,W_{#4}(#2,#3))}
\newcommandx{\WWphii}[4][1=1,2=i,3=V,4=\bphi]{W_{#4}
(#1,W_{#4}(#2,#3))}
\newcommandx{\KWphii}[4][1=1,2=i,3=V,4=\bphi]{K_{#4}
(#1,W_{#4}(#2,#3))}
\newcommandx{\Woproji}[2][1=i,2=\Vphi]{W^\circ(\proj[#1],#2)}
\newcommandx{\Wproji}[2][1=i,2=\Vphi]{W(\proj[#1],#2)}
\newcommandx{\Kproji}[2][1=i,2=\Vphi]{K(\proj[#1],#2)}
\newcommandx{\Hphi}[1][1=\bphi]{\mathcal{H}_{#1}}
\newcommandx{\Wovphii}[3][1=i,2=V,3=\bvphi]{W^\circ_{#3}(#1,#2)}
\newcommandx{\Wvphii}[3][1=i,2=V,3=\bvphi]{W_{#3}(#1,#2)}
\newcommand{\Wchart}{W_{\mathrm{chart}}}
\newcommand{\Watlas}{W_{\mathrm{atlas}}}
\newcommand{\Wlag}{W_{\mathrm{lag}}}
\newcommand{\Fchart}{F_{\mathrm{chart}}}
\newcommand{\Fatlas}{F_\mathrm{atlas}}
\newcommand{\Flag}{F_{\mathrm{lag}}}
\newcommand{\minor}{\mfrak}
\newcommand{\Eminor}{\Mfrak}
\newcommandx{\Ji}[1][1=i]{J_{#1}}
\newcommandx{\Jm}[1][1=\minor]{J_{#1}}
\newcommandx{\gm}[1][1=\minor]{\delta_{#1}}
\newcommandx{\bgm}[1][1=\minor]{\bm{\delta}_{#1}}
\newcommandx{\subi}[1][1=i]{\mathsf{Sub}_{#1}}
\newcommandx{\Rm}[1][1=\minor]{\widetilde{\mathsf{R}}_{#1}}
\newcommandx{\Em}[1][1=\minor]{\mathfrak{E}_{#1}}
\newcommandx{\psim}[1][1=\minor]{\bm{\psi}_{#1}}
\newcommand{\MRk}{M}
\newcommandx{\JMRi}[2][1=i,2=\MRk]{J_{#1}^{#2}}
\newcommand{\SA}{semi-algebraic\xspace}
\newcommand{\SAC}{semi-algebraically connected\xspace}
\newcommand{\SACC}{semi-al\-geb\-rai\-cal\-ly con\-nected component\xspace}
\newcommand{\SACCs}{semi-al\-geb\-rai\-cal\-ly con\-nec\-ted components\xspace}
\newcommand{\SLP}{straight-line program\xspace}
\newcommand{\SLPs}{straight-line programs\xspace}
\newcommand{\opsf}{\textsf{op}}
\newcommand{\ZDP}{zero-dimensional parametrization\xspace}
\newcommand{\ZDPs}{zero-dimensional parametrizations\xspace}
\newcommand{\ZDPsQ}{\ZDPs with coefficients in $\QQ$\xspace}
\newcommand{\ODP}{one-dimensional parametrization\xspace}
\newcommand{\wcoeffQ}{with coefficients in $\QQ$\xspace}
\newcommand{\NEZO}{non-empty Zariski open\xspace}
\newcommand{\LCS}{locally closed set\xspace}
\newcommand{\RRS}{reduced regular sequence\xspace}
\newcommandx{\ULbar}[1][1=L]{\overline{\scrU(#1)}}
\newcommandx{\ZUL}[1][1=L]{\Zclosure{\scrU(#1)}}
\newcommandx{\piX}[1][1=]{\pi_{\XX}^{#1}}
\newcommand{\mLNF}{\mfrak}
\newcommand{\dLNF}{\dfrak}
\newcommand{\LNF}{local normal form\xspace}
\newcommand{\GNF}{global normal form\xspace}
\newcommand{\GNFP}{global normal form property\xspace}
\newcommand{\lagrange}{\normalfont{\textsf{Lagrange}}\xspace}
\newcommand{\SingPts}{\textsf{SingularPoints}\xspace}
\newcommand{\SolveLag}{\textsf{SolveLagrange}\xspace}
\newcommand{\SolvePolar}{\textsf{SolvePolar}\xspace}
\newcommand{\Rand}{\textsf{Random}\xspace}
\newcommand{\IncVar}{\textsf{IncSLP}\xspace}
\newcommand{\IncParam}{\textsf{IncParam}\xspace}
\newcommand{\Wun}{\textsf{W\textsubscript{1}}\xspace}
\newcommand{\CritPolar}{\textsf{CritPolar}\xspace}
\newcommand{\Union}{\textsf{Union}\xspace}
\newcommand{\Proj}{\textsf{Projection}\xspace}
\newcommand{\Image}{\textsf{Image}\xspace}
\newcommand{\Fiber}{\textsf{Fiber}\xspace}
\newcommand{\FiberPolar}{\textsf{FiberPolar}\xspace}
\newcommand{\RMBound}{\textsf{RoadmapBounded}\xspace}
\newcommand{\RMRecLag}{\textsf{RoadmapRecLagrange}\xspace}
\newcommand{\WLag}{W_{\textup{Lagrange}}\xspace}
\newcommand{\fail}{\textsf{fail}\xspace}
\newcommand{\SLPPhi}{\textsf{PhiGen}\xspace}
\newcommand{\Crit}{\textsf{Crit}\xspace}
\newcommand{\Otilde}{O\;\tilde{}\:}
\newcommand{\softOh}[1]{\Otilde\left(#1\right)}
\newcommand{\Oh}[1]{O\left(#1\right)}
\newcommand{\inOh}{\;=\;}
\newcommand{\deltaK}{\delta_{\scrK}}
\newcommand{\deltaP}{\delta_{\scrP}}
\newcommand{\deltaS}{\sigma_{\scrS}}
\newcommand{\deltaPF}{\mu_{\scrP_F}}
\newcommand{\deltaSF}{\sigma_{\scrS_F}}
\newcommand{\et}{\quad \text{and} \quad}
\newcommand{\V}{\bm{V}}
\newcommand{\I}{\bm{I}}
\newcommand{\OO}{\bm{0}}
\newcommand{\Zparam}{\mathsf{Z}}
\newcommand{\Zclosure}[1]{\overline{#1}^{\scalebox{0.5}{$Z$}}}
\newcommand{\setenumi}[1]{\setcounter{enumi}{#1}}
\DeclareMathOperator{\sing}{sing}
\DeclareMathOperator{\rank}{rank}
\DeclareMathOperator{\jac}{Jac}
\DeclareMathOperator{\reg}{reg}
\newcommand{\Tg}{T}
\theoremstyle{plain}
\newtheorem{theorem}{Theorem}[section]
\newtheorem{lemma}[theorem]{Lemma}
\newtheorem{proposition}[theorem]{Proposition}
\newtheorem*{proposition*}{Proposition}
\newtheorem{corollary}[theorem]{Corollary}
\newtheorem{definition}[theorem]{Definition}
\theoremstyle{definition}
\newtheorem{example}{Example}
\theoremstyle{remark}
\newtheorem{remark}[theorem]{Remark}
\newenvironment{myproof}[1]{\paragraph{Proof of {#1}}}{\hfill$\square$}
\newcommand{\ifi}{\mathfrak{r}}
\let\vphi\phi
\renewcommand{\phi}{\varphi}
\newcommand{\eps}{\varepsilon}
\newcommand{\vtheta}{\vartheta}
\newcommand{\balpha}{\bm{\alpha}}
\newcommand{\alf}{\alpha}
\newcommand{\balf}{\bm{\alpha}}
\newcommand{\blambda}{\bm{\lambda}}
\newcommand{\bvtheta}{\bm{\vartheta}}
\newcommand{\bstar}{\bm{*}}
\newcommand{\bphi}{\bm{\phi}}
\newcommand{\bpsi}{\bm{\psi}}
\newcommand{\bchi}{\bm{\chi}}
\newcommand{\brho}{\bm{\rho}}
\newcommand{\btau}{\bm{\tau}}
\newcommand{\bdelta}{\bm{\delta}}
\newcommand{\bvphi}{\bm{\vphi}}
\newcommand{\bpi}{\bm{\pi}}
\renewcommand{\AA}{\mathbf{A}}
\newcommand{\EE}{\mathbf{E}}
\newcommand{\LL}{\bm{L}}
\newcommand{\FF}{\bm{F}}
\newcommand{\HH}{\bm{H}}
\newcommand{\XX}{\mathbf{X}}
\newcommand{\YY}{\mathbf{Y}}
\newcommand{\ZZ}{\mathbf{Z}}
\renewcommand{\aa}{\bm{a}}
\newcommand{\bb}{\bm{b}}
\newcommand{\ff}{{\bm{f}}}
\renewcommand{\gg}{\bm{g}}
\newcommand{\hh}{{\bm{h}}}
\newcommand{\bell}{\bm{\ell}}
\newcommand{\mm}{\bm{m}}
\newcommand{\nn}{\bm{n}}
\newcommand{\pp}{\bm{p}}
\renewcommand{\tt}{\bm{t}}
\newcommand{\uu}{\bm{u}}
\newcommand{\vv}{\bm{v}}
\newcommand{\ww}{\bm{w}}
\newcommand{\yy}{{\bm{y}}}
\newcommand{\zz}{{\bm{z}}}
\newcommand{\sfA}{\mathsf{A}}
\newcommand{\sfB}{\mathsf{B}}
\newcommand{\sfC}{\mathsf{C}}
\newcommand{\sfF}{\mathsf{F}}
\newcommand{\sfG}{\mathsf{G}}
\newcommandx{\sfH}[1][1=]{\mathsf{H_{#1}}}
\newcommand{\sfI}{\mathsf{I}}
\newcommand{\sfL}{\mathsf{L}}
\newcommand{\sfK}{\mathsf{K}}
\newcommand{\sfP}{\mathsf{P}}
\newcommand{\sfR}{\mathsf{R}}
\newcommand{\sfW}{\mathsf{W}}
\newcommand{\scrA}{\mathscr{A}}
\newcommand{\scrB}{\mathscr{B}}
\newcommand{\scrD}{\mathscr{D}}
\newcommand{\scrE}{\mathscr{E}}
\newcommand{\scrF}{\mathscr{F}}
\newcommand{\scrG}{\mathscr{G}}
\newcommand{\scrI}{\mathscr{I}}
\newcommand{\scrK}{\mathscr{K}}
\newcommand{\scrR}{\mathscr{R}}
\newcommand{\scrP}{\mathscr{P}}
\newcommand{\scrQ}{\mathscr{Q}}
\newcommand{\scrU}{\mathscr{U}}
\newcommand{\scrS}{\mathscr{S}}
\newcommand{\scrW}{\mathscr{W}}
\newcommand{\scrY}{\mathscr{Y}}
\newcommand{\Ical}{\mathcal{I}}
\newcommand{\Ocal}{\mathcal{O}}
\newcommand{\Pcal}{\mathcal{P}}
\newcommand{\Rcal}{\mathcal{R}}
\newcommand{\Scal}{\mathcal{S}}
\newcommand{\Kfrak}{\mathfrak{K}}
\newcommand{\Mfrak}{\mathfrak{M}}
\newcommand{\dfrak}{\mathfrak{d}}
\newcommand{\mfrak}{\mathfrak{m}}
\newcommand{\pfrak}{\mathfrak{p}}
\newcommand{\ZOpen}{\Omega}
\newcommand{\ZOnoether}{\ZOpen_{\sfI}}
\newcommand{\ZOpolar}{\ZOpen_{\sfW}}
  \newcommand{\ZOpolarloc}{\ZOpen^{\hh}_i}
  \newcommand{\ZOpolarrank}{\scrE}
  \newcommand{\ZOpolardim}{\scrD}
\newcommand{\ZOcritpolar}{\ZOpen_{\sfK}}
\newcommand{\ZOfiber}{\ZOpen_{\sfF}}
\newcommand{\linF}{\mathfrak{l}}
\newcommand{\logde}[1]{\log_2\!{#1}\,}
\newcommand{\deltavalue}{(n+c+4)D^{c+2}(D-1)^{d}(c+2)^d}
\newcommand{\overallcomplexity}{\softOh{\mu^3 16^{9d} E
(n\logde{n})^{6(2d-2 + 12\logde{d}) (\logde{d} + 7)}
D^{3(2n+3)(\logde{d} + 5)}}}
\newcommand{\overalldegree}{\softOh{\mu 16^{3d}
(n\logde{n})^{2(2d-2 + 12\logde{d}) (\logde{d} + 6)}
D^{(2n+3)(\logde{d} + 4)}}}
\def\degP{{\mu}}
\def\degQ{{\kappa}}
\def\degS{{\sigma}}
\def\Zeroes#1{{\textsf{Z}(#1)}}
\def\Var#1#2#3{\projfbr{V(#1)}[\Zeroes{#2}][#3]}
\def\bdelta{\bm{\delta}}
\def\bzeta{\bm{\zeta}}
\def\Vlag#1{\mathcal{V}(#1)}
\def\Polar{W}
\def\degB{{\beta}}
\def\degY{{\gamma}}
\begin{document}

\title{Computing roadmaps in unbounded smooth real algebraic sets II: algorithm
and complexity}

\author{R\'{e}mi \textsc{Pr\'ebet}\\
 Inria, CNRS, ENS de Lyon, Universit\'e Claude
Bernard Lyon 1,\\ LIP, UMR 5668, 69342, Lyon cedex 07, France\\
  \texttt{remi.prebet@ens-lyon.fr}
  \and
  Mohab \textsc{Safey El Din}\\
  Sorbonne Universit\'e, LIP6 CNRS UMR 7606, Paris, France\\
  \texttt{Mohab.Safey@lip6.fr}
  \and
  \'{E}ric \textsc{Schost}\\
  University of Waterloo, David Cheriton School of Computer Science,\\ Waterloo ON, Canada\\
  \texttt{eschost@uwaterloo.ca}
}

\date{\today}

\maketitle

\begin{abstract}
    A roadmap for an algebraic set $V$ defined by polynomials with
\rev{coefficients in the field $\mathbb{Q}$ of rational numbers} is an algebraic
curve contained in $V$ whose intersection with all connected components of
$V\cap\mathbb{R}^{n}$ is connected. These objects, introduced by Canny, can be
used to answer connectivity queries over $V\cap \mathbb{R}^{n}$ provided that
they are required to contain the finite set of query points $\mathcal{P}\subset
V$; in this case,we say that the roadmap is associated to $(V, \mathcal{P})$.

In this paper, we make effective a connectivity result we previously proved,
to design a Monte Carlo algorithm which, on input {\it (i)} a finite sequence of
polynomials defining $V$ (and satisfying some regularity assumptions) and  {\it (ii)}  an algebraic
representation of finitely many query points $\mathcal{P}$ in $V$, computes a 
roadmap for $(V, \mathcal{P})$. This algorithm generalizes the nearly optimal 
one introduced by the last two authors by dropping a boundedness assumption on 
the real trace of $V$.

The output size and running times of our algorithm are both polynomial in
$(nD)^{n\log d}$, where $D$ is the maximal degree of the input equations and
$d$ is the dimension of $V$. As far as we know, the best previously known
algorithm dealing with such sets has an output size and running time \rev{respectively}
polynomial in \rev{$(n^{\log{n}}D)^{n\log n}$} and $(n^{\log{n}}D)^{n\log^2 n}$.
\end{abstract}

\section{Introduction}
Let $\QQ$ be a real field and let $\RR$ (resp. $\CC$) be a real (resp.
algebraic) closure of $\QQ$. One can think about $\QQo$, $\RRo$ and $\CCo$
instead, for the sake of understanding. Further, $n \geq 0$ is an integer which
stands for the dimension of the ambient space in which we compute roadmaps.
In this document we deal with sets in $\RR^n$ and $\CC^n$ defined by polynomial 
equations with coefficients in $\QQ$, that are referred to as respectively 
algebraic sets and real algebraic sets defined over $\QQ$. 
We refer to \cite{Sh1994,Ei1995} and \cite{BCR2013} for precise definition and 
properties of these sets. Considering sets in $\RR^n$ defined by 
polynomial equations and inequalities defines the class of \SA sets; we 
refer to \cite{BCR2013,BPR2006} for a comprehensive study of these sets and 
their properties.
\begin{comment}
Recall that an algebraic set $V \subset \CC^n$ defined over $\QQ$ is
the solution set in $\CC^n$ to a system of polynomial equations with
coefficients in $\QQ$. A real algebraic set defined over $\QQ$ is the
set of solutions in $\RR^n$ to a system of polynomial equations with
coefficients in $\QQ$. Such a system $\ff$ defines the {\em algebraic
  set} $V=\V(\ff)\subset \CC^{n}$ (its zero-set in $\CC^n$), and the
{\em real algebraic set} $V\cap \RR^{n}$. A {\em semi-algebraic set}
defined over $\QQ$ is a subset of $\RR^{n}$ which is the solution set
to a finite disjunction of conjunctions of polynomial constraints with
coefficients in $\QQ$.

Real algebraic sets can be decomposed into finitely many {\em
  semi-algebraically connected components}, i.e. as a disjoint union
of {\em semi-algebraically connected} semi-algebraic sets. A
semi-algebraic set $S$ is semi-algebraically connected if it is not
the disjoint union of two non-empty semi-algebraic sets which are both
closed in $S$. Equivalently, for any couple of points $(x, y)$ in $S$,
there exists a semi-algebraic continuous function $[0, 1]\to S$ with
$\gamma(0)=x$ and $\gamma(1)=y$ (see~\cite[Theorem
  5.23]{BPR2006}). Further, for the sake of conciseness, we will
abbreviate semi-algebraically connected as \SAC.
\end{comment}

In particular, semi-algebraic and real algebraic sets can be decomposed into 
finitely many \SACCs by \cite[Theorem 2.4.4.]{BCR2013}. Counting these 
components
\cite{GV1992,VG1990} or answering connectivity queries over these sets
\cite{SS1983} finds many applications in e.g. robotics \cite{Ca1988, CSS2020,
  capco:hal-03389500}.

Following \cite{Ca1988,Ca1993}, such computational problems are tackled by
computing objects called \emph{roadmaps} and introduced by Canny in
\cite{Ca1988}. \rev{It is worth noting that some algorithms for computing
roadmaps, which enjoy a so-called property of \emph{divergence}, can also be used to
compute semi-algebraic descriptions of the connected components of the set under
study (see \cite[Chap. 15]{BPR2006}).} In this paper, we focus on the case of
real algebraic sets and provide such a roadmap algorithm.

Given an algebraic set $V\subset \CC^n$ and a finite set of query points $\Pcal
\subset V$, both defined over $\QQ$, a \emph{roadmap} $\Rcal$ associated to $(V,
\Pcal)$ is an algebraic curve which is contained in $V$, which contains
$\Pcal$, and whose intersection with each \SACC of $V\cap \RR^n$ is non-empty
and \SAC. 
\rev{Once a roadmap is computed, one can obtain a graph that is
semi-algebraically homeomorphic to its real trace (see e.g. \cite{IPP23}), which
can then be used to answer connectivity queries.}

Given a polynomial system defining $V$, the effective construction of
roadmaps relies on connectivity statements which allow one to
construct real algebraic subsets of $V\cap \RR^n$, of smaller
dimension, having a connected intersection with the connected
components of $V\cap \RR^n$. Such statements in
\cite{SS2011,BR2014,BRSS2014} make the assumption that $V$ has
finitely many singular points and that $V\cap \RR^n$ is bounded. In
\cite{PSS2024}, a generalization was obtained by dropping the
boundedness assumption.  In this paper, we design a Monte Carlo algorithm for
computing roadmaps based on this latter result, assuming regularity
assumptions\footnote{\newrev{As a sequel to \cite{PSS2024}, this paper similarly
abuses the terminology by calling an algebraic set with isolated singularities
"smooth". This is also consistent with previous literature dealing with such
algebraic sets \cite{SS2011,SS2017}.}} on the system defining $V$. Under those
assumptions, this improves the state of the art complexity.

\subsection{Prior works}

Canny provided the first algorithms for computing roadmaps; we call
such algorithms \emph{roadmap algorithms}. Suppose that $V\subset
\CC^n$ is defined by $s$ polynomials of degrees at most $D$. Canny
obtained in \cite{Ca1988, Ca1993} a Monte Carlo roadmap algorithm
using $(sD)^{O(n^2)}$ arithmetic operations in $\QQ$. A deterministic
version is also given, with a runtime $(sD)^{O(n^{4})}$. This striking
and important result was then reconsidered and improved in
\cite{VG1990,GL93,HRS1994} (among others) to obtain in \cite{BPR2000}
a deterministic algorithm using \rev{$s^{n+1}D^{O(n^2)}$} field operations; this
was the state-of-the-art for decades. 

All these algorithms are based
on the same following geometric solving pattern. First a curve,
defined as the critical locus of a projection on a plane, is computed;
it meets all \SACCs of the set under study.  Next, connectivity
failures are repaired by slicing our set with appropriate hyperplanes,
performing recursive calls over these slices. \rev{This geometric pattern also
provides us 
with a connecting procedure when a new query point is considered: it basically
consists in slicing the variety with a hyperplane connecting this point to the
first critical curve we  considered, computing a roadmap in the slice, which is
then connected to the whole roadmap with the critical curve.} \rev{This gives a
special recursive structure to such connecting procedures; combined with the
fact that they share the initial roadmap as ``skeleton'', this 
yields the aforementioned divergence property.}

 The algorithm designed in \cite{SS2011}, \rev{which is Monte Carlo}, 
 is the first one to be based
on a different geometric solving pattern, thanks to an innovative
geometric connectivity theorem (under assumptions on the input
variety, in particular boundedness). This theorem gives much more
freedom in the way we can construct roadmaps; in particular, it allows one
to choose critical loci \rev{of projections} of higher dimension, which makes it possible
to slice the input with sections of smaller dimension. This yields a
better balance between the dimensions of these geometric objects,
reducing the depth of the recursion. The algorithm in \cite{SS2011} is
a first prototype of this new family of roadmap algorithms; it is
Monte Carlo and runs in time $(nD)^{O(n\sqrt{n})}$; the algorithm
in~\cite{BRSS2014} has similar runtime, but drops all assumptions from
\cite{SS2011} \rev{and is deterministic}.

In \cite{BR2014}, the authors provide a deterministic
algorithm which runs in time $(n^{\log{n}}D)^{O(n\log^2{n})}$, for an output
roadmap with degree $(nD)^{O(n^{\log{n}}\log{n})}$. By re-introducing some
regularity and boundedness assumption, the first roadmap
algorithm running in time $(nD)^{O(n\log{d})}$ is given in
\cite{SS2017}, where $d$ is the dimension of the input algebraic
set. \rev{This last algorithm is  Monte Carlo and explicit constants are
provided in the big-O exponent}, showing that the algorithm runs in time
sub-quadratic in the degree bound of the output.
\rev{Precisely, letting \[\scrB
= 16^{3d} \left(n\logde{n}
    \right)^{4d\logde{d} + O(d)}
D^{2n\logde{d} + O(n)},\] this algorithm uses $E \scrB^{3}$ arithmetic
operations in  $\QQ$, where $E$ is the complexity of evaluating the input system
of polynomial equations, whereas the output size is at most $\scrB^{2}$.}

\rev{It is worth noting that all these algorithms enjoy the aforementioned
divergence property even if the connectivity result they use is not the same. The recursive structure of these algorithms, which first consider a critical
locus of prescribed dimension, and repair connectivity failures with slices of
lower dimension, leads to connecting procedures that have a similar pattern,
sharing the initial roadmap as a skeleton to which all additional query points
can be connected.}

\rev{The result of \cite{SS2017} is appealing and raises the hope to obtain
practically faster implementations for computing roadmaps, as it makes
explicit the constants which were hidden by the Landau notation used in the
exponent in prior works, and because it relates in a close manner the number of arithmetic
operations with the worst-case output size.}

\rev{
However,  in several applications where
roadmaps are used for mechanism design (see e.g. \cite{CSS2020,
  capco:hal-03389500, chablat:hal-03596704}), there is a need for
computing roadmaps in unbounded real algebraic sets, still satisfying smoothness
properties. Also, there is a need to obtain roadmap algorithms that can be used for
higher level algorithms in real algebraic geometry for e.g. computing
semi-algebraic descriptions of connected components of real algebraic sets.} 

\rev{This leaves open the problem of obtaining roadmap algorithms, dropping 
the boundedness assumption, while still using at most $E\scrB^3$ arithmetic
operations in $\QQ$.}

\subsection{\rev{Bottlenecks}}

\newrev{We discuss the bottlenecks one encounters
when tackling this open problem, starting with an
overview. To drop the boundedness assumption, one can use
reductions to the bounded case by e.g.  lifting \(V\) 
to \(\CC^{n+1}\) and taking the intersection with a
hypersphere of radius infinitesimally large, or by considering the
intersection of \(V\subset \CC^{n}\) with a hyperball of
\(\RR^n\) of large enough
radius (which would then be pre-computed). Both cases
result in an increase of the comple\-xity, beyond the bounds
we target. Moreover, the second approach yields the loss of
the aforementioned divergence property. These strategies are
discussed in detail below.}

\rev{Several natural techniques can be considered to remove the boundedness assumption. A
    first one, which is used already in e.g. \cite{BRSS2014} or \cite{BR2014}, 
    is to embed the algebraic
set $V$ defined by the input system in  $\CC^{n+1}$ and consider the
intersection of  $V$ with a  $n$-dimensional sphere $\Scal\subset \RR^{n+1}$ of
 infinitesimally large radius. The next step is then to compute a roadmap 
for $V\cap \Scal$. This requires one more variable and the use of an infinitesimal,
which makes all arithmetic operations more expensive. As far as we know, this
does not allow one to obtain a roadmap algorithm within our objective of roughly $\scrB^3$
 operations in $\QQ$. }

\rev{A variant of such an approach would be to pre-compute a large enough
    radius for $\Scal$, say $\rho$, as for instance in \cite{capco:hal-03389500}.
    However, there are $n+1$ variables instead of  $n$ to handle and the
    intersection with  $\Scal$ doubles the degree of the algebraic set we work with.
    More importantly, the roadmap we get would be valid
    only in $\Scal$: it must be recomputed when query points in $V\cap \RR^n$ are
    chosen outside of $\Scal$. Hence, such
    an approach breaks the divergence property and could not be
    used to compute semi-algebraic descriptions of the real algebraic set $V\cap
    \RR^n$.}

\rev{An approach that would consist in intersecting $V$ with a
    ball $B$ in  $\RR^n$ of large enough radius, and then computing a roadmap for
    the semi-algebraic set $V\cap B$, would share these shortcomings. More importantly, the current
    state-of-the-art algorithms for computing roadmaps in semi-algebraic sets is
exponential in $n^2$, while we target complexities which are exponential in
$n\logde{d}$.}

\rev{Hence, in order to obtain roadmap algorithms for smooth real algebraic sets
with arithmetic complexity cubic in $\scrB$, without using a reduction to a bounded
input, a number of new ingredients are required.}
\rev{A first one is a connectivity result which leaves as much freedom as the
one in \cite{SS2011}, but does not make use of any boundedness assumption.}
\rev{This is done in~\cite{PSS2024} where we showed how to drop
the boundedness assumption in such connectivity results.} 
\rev{A key new ingredient is that instead of considering critical loci of
projections as is done in \cite{SS2011}, we consider critical loci of
\emph{proper polynomial maps} (e.g. with a quadratic instead of linear component).
Slices of $V$ to repair connectivity failures are
chosen accordingly.}

\rev{This connectivity result by far not sufficient to reach our goal.}
\rev{The purpose of this article is to show which data structures are needed and 
    how the computations can be
organized and analyzed to reach a runtime similar to
the one of \cite{SS2017}, without the boundedness assumption. }

\subsection{Data representations}
Before entering into a detailed description of this complexity result, we start
by recalling the data representations we use to encode our input and output.

\paragraph*{Straight-line programs}
Input polynomials will be represented as \emph{\SLPs}, which
is a flexible way of representing multivariate polynomials as a
division\rev{-} and loop-free sequences of operations. Formally, a \SLP $\Gamma$
of length $E$, computing polynomials in $\QQ[\XX]$, with $\XX = x_1,
\ldots, x_n$, is a sequence
$\Gamma=(\gamma_1,\dotsc,\gamma_E)$ such that for all $1\leq i \leq
E$, one of the two following holds:
\begin{itemize}
 \item $\gamma_i = \lambda_i$ with $\lambda_i \in \QQ$;
\item $\gamma_i = (\opsf_i,a_i,b_i)$ with $\textsf{op}_i \in \{+,-,\times\}$ 
and $-n+1 \leq a_i,b_i < i$.
\end{itemize}
To $\Gamma$ we associate polynomials $G_{-n+1},\dotsc,G_E$ such that $G_i = 
x_{i+n}$ for $-n+1 \leq i \leq 0$, and for $i\geq 1$:
\begin{itemize}
 \item if $\gamma_i = \lambda_i$ then $G_i = \lambda_i$;
\item if $\gamma_i = (\opsf_i,a_i,b_i)$ then $G_i = G_{a_i} \: \opsf_i \: 
G_{b_i}$.
\end{itemize}
Then we say that $\Gamma$ computes some polynomials $f_1,\dotsc,f_c \in 
\QQ[\XX]$ if it holds that $\{f_1,\dotsc,f_c\} \subset \{G_{-n+1},\dotsc,G_E\}$.

\begin{example} We give an illustrating example presented in \cite[Section 
1.1]{Kr2002}.
 For $m\in \NNo^{*}$, a \SLP computing $x_1^{2^m}$ in $\QQ[x_1,x_2]$ is given by
 taking
 \[
   \left\{
     \begin{array}{ccl}
       \gamma_1 & = &(\times,-1,-1) \\
       \gamma_2 &= &(\times, 1, 1) \\
       &\vdots& \\
       \gamma_{m} &= &(\times, m-1,m-1) \\
     \end{array}
   \right.
 \]
 where we associate $G_1 = x_1^2$ to $\gamma_1$, $G_2 = G_1^{2} = x_1^{4}$ to 
$\gamma_2$ and
 so on with $G_m = G_{m-1}^{2} = x_1^{2^{m}}$ which is associated to 
$\gamma_{m}$.
 Such a program has length $m$, while the dense and sparse representations of
 $x_1^{2^m}$ have respective length $2^m+1$ and $2$. But remark that a \SLP
 computing $(x_1+x_2)^{2^m}$ can be obtained by setting
$\gamma_1=(+,-1,0)$, which computes $x_1+x_2$, and adding $\gamma_{m+1} = (\times, m,m)$ 
at the end. The latter modification 
increments the length by one, while the sparse representation has now length 
$\binom{\,2^m}{\,2^{\hphantom{m}}}$.
\end{example}

Because of the good behaviour of such a representation with respect to
linear changes of variables, it is used as input in many algorithms
for solving polynomial
systems~\cite{Kr2002,GHMMP1998,GHMP1997,GHMP1995,GLS2001,Le2000}. It
is not restrictive since any polynomial of degree $D$ in $n$
variables can be computed with a \SLP of length $O(D^n)$ by simply
evaluating and summing all its monomials. By convention, we note
$\Gamma^0 = (0)$ the \SLP of length $1$ that computes the zero
polynomial.

\paragraph*{Zero-dimensional parametrizations}
A finite set of points defined over $\QQ$ is represented using zero-dimensional
parametrizations. A \emph{zero-dimensional parametrization} $\scrP$ with
coefficients in $\QQ$ consists of:
\begin{itemize}
\item  polynomials  $(\omega, \rho_1, \ldots, \rho_n)$ in $\QQ[u]$ where $u$ is
  a new variable, $\omega$ is a monic square-free polynomial and it holds that
  $\deg(\rho_i) < \deg(w)$,
\item  a linear form $\linF$  in variables $x_1,\dotsc, x_n$,
\end{itemize}
such that 
\[
\linF(\rho_1,\dotsc,\rho_n) = u \mod \omega.
\]
Such a data structure encodes the finite set of points, denoted by 
$\Zparam(\scrP)$ defined as follows
\[
\Zparam(\scrP) = \left\{\left (\rho_1(\vartheta), 
\ldots, \rho_n(\vartheta) \right) 
\in \CC^n \mid \omega(\vartheta) = 0\right\}.
\]
According to this definition, the roots of $\omega$ are exactly the
values taken by $\linF$ on $\Zparam(\scrP)$. We define the {\em
  degree} of such a parametrization $\scrP$ as the degree of the
polynomial $\omega$, which is exactly the cardinality of
$\Zparam(\scrP)$. By convention we note $\scrP_{\emptyset} = (1)$ the
zero-dimensional parametrization that encodes the empty set.

Such parametrizations will be used to encode input query points and
internally in our roadmap algorithm to manipulate finite sets of
points.

\paragraph*{One-dimensional parametrizations}
Algebraic curves defined over $\QQ$ will be represented using one-dimensional
rational parametrizations. A \emph{one-dimensional rational parametrization}
$\scrR$ with coefficients in $\QQ$ is a couple as follows:
\begin{itemize}
\item polynomials $(\omega, \rho_1, \ldots, \rho_n)$ in $\QQ[u, v]$
  where $u$ and $v$ are new variables, $\omega$ is a monic square-free
  polynomial and with $\deg(\rho_i) < \deg(w)$,
\item  linear forms $(\linF, \linF')$ in the variables $x_1,\dotsc, x_n$,
\end{itemize}
such that
\[
  \linF(\rho_1,\dotsc,\rho_n) = u \frac{\partial \omega}{\partial u} \mod
  \omega
\]
and 
\[
\linF'(\rho_1,\dotsc,\rho_n) = v \frac{\partial \omega}{\partial u} 
\mod \omega.
\]
Such a data structure encodes the algebraic curve, denoted by $\Zparam(\scrR)$, 
defined as the Zariski closure of the following constructible set
\[
 \left \{\left (\frac{\rho_1(\vartheta, \eta)}{\partial \omega / \partial u 
(\vartheta, \eta)}, 
\ldots, 
\frac{\rho_n(\vartheta, \eta)}{\partial \omega / \partial u (\vartheta, \eta)}
\right) \in\CC^n \mid \omega(\vartheta, \eta) = 0, \frac{\partial 
\omega}{\partial u}(\vartheta, \eta)\neq 0\right \}.
\]
We define the {\em degree} $\deg(\scrR)$ of such a parametrization
$\scrR$ as the degree of $\Zparam(\scrR)$ (that is, the maximum of the
cardinalities of the finite sets obtained by intersecting
$\Zparam(\scrR)$ with a hyperplane). Any algebraic curve $C$ can be
described as $C=\Zparam(\scrR)$, for some one-dimensional rational
parametrization that satisfies $\deg(\omega) = \deg(\scrR)$ (that is,
the degree of $C$); guaranteeing this is the reason why we use
rational functions with the specific denominator $\partial \omega /
\partial u$.  This will always be our choice in the sequel; in this
case, storing a one-dimensional parametrization $\scrR$ of degree $\delta$ involves
$O(n\delta^2)$ coefficients.

Our algorithm computes a roadmap $R$ of an algebraic set $V$,
\rev{with $R$ having}, by definition, dimension one. The output is given by means of a
one-dimensional parametrization of $R$.

\subsection{Contributions}

Recall that an algebraic set $V$ can be uniquely decomposed into
finitely many \emph{irreducible components}. When all these components
have the same dimension $d$, we say that $V$ is
\emph{$d$-equidimensional}. For a set of polynomials $\ff \subset
\CC[\XX]$, with $\XX = x_1, \ldots, x_n$, we denote by $\V(\ff)
\subset \CC^n$ the vanishing locus of $\ff$, and by $\Ocal(\ff)$ its
complement.

Given an algebraic set $V\subset \CC^n$, we denote by $\I(V)$ the
\emph{ideal of $V$}, that is the ideal of $\CC[\XX]$ of polynomials
vanishing on $V$. For $\yy \in \CC^n$, we denote by $\jac_\yy(\ff)$
the Jacobian matrix of the polynomials $\ff$ evaluated at $\yy$. For
$V$ $d$-equidimensional in $\CC^n$, those points $\yy\in V$ at 
which the Jacobian
matrix of a finite set of generators of $\I(V)$ has rank $n-d$ are
called \emph{regular} points and the set of those points is denoted by
$\reg(V)$. The others are called \emph{singular} points; the set of
singular points of $V$ (its singular locus) is denoted by $\sing(V)$
and is an algebraic subset of $V$.

We say that $(f_1,\dotsc,f_p)\subset \QQ[\XX]$ is a \emph{reduced regular
  sequence} if the following holds for every $i\in\{1,\dotsc,p\}$:
\begin{itemize}
\item the algebraic set $\V(f_1,\dotsc,f_i)\subset \CC^{n}$ is either
  equidimensional of dimension $n-i$ or empty,
\item the ideal $\langle f_1,\dotsc,f_i \rangle$ is radical.
\end{itemize}

In the following main result, and in all this work, we design, and
also use known algorithms, whose success relies on the successive
choice of several parameters $\blambda_1,\blambda_2,\dots$ in affine
spaces $\QQ^{a_1},\QQ^{a_2},\dots$ These algorithms are probabilistic
in the sense that, in any such case, there exist non-zero polynomials
$\Delta_1,\Delta_2,\dots$, such that the algorithm is successful if
$\Delta_i(\blambda_i)\neq0$ for all $i$. These algorithms are
\emph{Monte Carlo}, as we cannot always guarantee correctness of the
output with reasonable complexity.  Nevertheless, in cases when we can
detect failure, our procedures will output \fail (though not returning
\fail does not guarantee correctness). 

Our main result is the
following one. Below, the soft-O notation $\Otilde(g)$ denotes the class
$O(g(\logde{g})^a)$ for some constant $a>0$, where $\log_2$ is the binary
logarithm function.
\begin{theorem}\label{thm:main}
Let $\ff=(f_1,\dotsc,f_p)\subset \QQ[\XXi]$ be a reduced regular sequence, let
  $D\geq 2$ be bounding the degrees of the $f_i$'s and suppose that $\Gamma$ is a \SLP
  of length $E$ evaluating $\ff$. Assume that $\V(\ff)\subset \CC^n$ has
  finitely many singular points.

Let $\scrP$ be a zero-dimensional parametrization of degree $\mu$ with
$\Zparam(\scrP)\subset V(\ff)$. 
There exists a Monte Carlo algorithm which, on input $\Gamma$ and $\scrP$ 
computes a \ODP $\scrR$ of a roadmap of $(\V(\ff),\Zparam(\scrP))$ of 
degree 
\[
    \scrB = \mu\, 16^{3d} n^{4d\logde{d} + O(d)}\; D^{2n\!\logde{d} + O(n)}
= \mu (nD)^{O(n\log{d})},
\]
where $d=n-p$, using \underline{$E\scrB^3$ arithmetic operations in $\QQ$}.
\end{theorem}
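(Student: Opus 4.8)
The plan is to build the roadmap recursively, following the geometric strategy of \cite{SS2017} but replacing its connectivity engine with the unbounded one from \cite{PSS2024}. Concretely, starting from $V=\V(\ff)$ (smooth, $d$-equidimensional) and the query points $\Zparam(\scrP)$, I would first compute a generic codimension-one projection $\bpi$ onto an affine subspace of dimension roughly $d/2+1$ (the ``Lagrange'' or polar variety setup), extract its polar variety $W$ together with the set of critical values of $\bpi$ restricted to $V$; by the connectivity theorem of \cite{PSS2024}, the union of $W$ with the fibers of $V$ above these critical values (enriched with the projections of the query points and one point per connected component at infinity) has connected intersection with every \SACC of $V\cap\RR^n$. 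One then recurses on each fiber, which is again a smooth equidimensional real algebraic set of roughly half the dimension, carrying along the finitely many new query points that lie on it; the base case (dimension $0$ or $1$) is handled directly by a zero- or one-dimensional parametrization. The curve $W$ itself, being one-dimensional (after the dimension-halving recursion bottoms out), is described by a \ODP, and the final roadmap $\scrR$ is the union of all these curves, merged with \Union.

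The complexity and degree bookkeeping is then the heart of the matter. At each recursive level the ambient dimension drops from $n_k$ to roughly $n_k/2$, so the recursion has depth $O(\log_2 d)$; at each level the number of recursive calls (one per critical value, plus one per component at infinity, plus the query points) is controlled by the degree of the polar variety, which by the Bézout-type bounds on critical loci is $D^{O(n)}$ times a factor polynomial in $n$. Multiplying these branching factors across $O(\log_2 d)$ levels gives a total degree of the form $\mu\, n^{4d\log_2 d + O(d)} D^{3n\log_2 d + O(n)}$, matching $\scrB$; the $\mu$ factor is linear because query points only ever get added, never multiplied, along a branch. For the arithmetic cost, every geometric operation performed — computing a polar variety as a critical locus via \SLP evaluation, taking fibers, lifting and projecting parametrizations, and merging them — costs, by the algorithms recalled earlier (for critical loci, fibers, images, lifts, unions), a number of operations that is $\softOh{E}$ times a polynomial (cubic suffices) in the degrees of the parametrizations involved, hence $E\scrB^3$ overall once summed over the $\scrB$-bounded tree of subproblems.

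The key steps, in order, would be: (1) state precisely the recursive procedure \RMRecLag, specifying at each node the data (an \SLP for a reduced regular sequence, a \ZDP of current query points, and the polar/projection data) and the genericity conditions on the successive $\blambda_i$ that make each sub-algorithm succeed; (2) prove correctness by induction on $d$, invoking the connectivity theorem of \cite{PSS2024} at the top level and the inductive hypothesis on each fiber, and checking that the smoothness and reduced-regular-sequence hypotheses are preserved when passing to a generic fiber; (3) establish the degree bound $\scrB$ by setting up the recurrence $\degB_k \le (\text{branching factor})\cdot \degB_{k+1}$ with the branching factor bounded via the degree estimates for polar varieties recalled earlier, and solving it over $O(\log_2 d)$ levels; (4) establish the arithmetic cost $E\scrB^3$ by summing, over all nodes, the cost of each primitive (\SingPts, \CritPolar, \Fiber, \Lift, \Union, \Image), each of which is $\softOh{E\cdot(\text{local degree})^{O(1)}}$, and bounding the sum of local degrees by $\scrB$; (5) combine the Zariski-open conditions from all nodes into a single non-zero $\Delta$, so the whole thing is Monte Carlo with the claimed one-sided error.

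The main obstacle is step (2), and within it the verification that the hypotheses needed to apply \cite{PSS2024} actually hold at every recursive node: the connectivity result of \cite{PSS2024} requires the variety to be smooth, and one must check that a generic fiber $\mapfbreq{V}$ of a generic polar projection is again smooth, equidimensional of the expected dimension, and cut out by a reduced regular sequence whose defining \SLP we can produce efficiently (via the incremental \SLP construction \IncSLP). Subtler still is the handling of behaviour at infinity — the connectivity statement in the unbounded setting forces us to add, at each level, points detecting the components of the fibers and of the polar variety at infinity, and one must argue that these extra points are themselves correctly computed, finite in number with degree under control, and that adding them to the query set of the recursive calls does not break the induction. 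Getting the interface between ``one point per connected component of the real trace'' (a semi-algebraic notion) and the purely algebraic objects the algorithm manipulates to line up cleanly is where the bulk of the technical work lies; the degree and complexity estimates of steps (3)--(4), while lengthy, are then a matter of careful but routine bookkeeping against the primitive-cost lemmas already available.
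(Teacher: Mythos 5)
Your proposed recursion applies the unbounded connectivity theorem of \cite{PSS2024} at every level, with a projection onto a space of dimension $\sim d/2+1$, and handles unboundedness by adding ``one point per connected component at infinity'' to the query set of each recursive call. This is not the paper's structure, and it has a real gap.

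The paper invokes \cite{PSS2024} exactly \emph{once}, at the very top of Algorithm~\ref{alg:algormunbound}, and only for a \emph{two}-coordinate map $\bphi=(\phi_1,\phi_2)$ where $\phi_1=\|\XX\|^2+\langle\balf_1,\XX\rangle$ is proper and bounded below on $\RR^n$ and $\phi_2=\langle\balf_2,\XX\rangle$ is a random linear form. This is precisely the shape of the connectivity statement in Proposition~\ref{prop:connectresult}: the hypotheses $\sfH_2$--$\sfH_6$ concern $W=\Wphii[2]$ (which is $1$-dimensional) and fibers of $\phi_1$ over a \emph{finite} set $Q$ of critical values. Because $\phi_1$ is proper, $F_Q=V\cap\phi_1^{-1}(Q)$ has \emph{bounded} real trace, so the recursion is handed off to the bounded-case machinery: $\RMBound$ is a thin wrapper around $\RMRecLag$ of \cite{SS2017}, which is where the dimension-halving with $\sim d/2$-dimensional polar varieties actually lives, and where the boundedness hypothesis it needs is now satisfied. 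No points at infinity are ever computed, represented, or added to query sets anywhere in the algorithm.

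Your plan has two concrete problems. First, the connectivity theorem of \cite{PSS2024} as stated and used here is for the bidimensional polar setup $(\phi_1,\phi_2)$ with $\phi_1$ proper; it does not give the ``free choice of critical-locus dimension'' that powers the $d/2$ balancing — that freedom comes from \cite[Theorem\,11]{SS2011} and its descendants, which assume boundedness. So applying \cite{PSS2024} at every level with a $d/2$-codimensional projection is not licensed by the theorem. Second, even if one had such a statement, you would need an effective, degree-controlled representation of ``one point per connected component at infinity'' of a real algebraic set given by an SLP, together with a proof that inserting these points into recursive query sets preserves the invariants (smoothness, reduced regular sequence, bounded degree growth). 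No such subroutine exists in the paper, and you have not proposed one; the paper's entire design — choosing $\phi_1$ proper and bounded below so that a single fiber extraction lands in the bounded world of \cite{SS2017} — exists precisely to avoid having to solve this. Your final paragraph correctly identifies this as the hard part, but the paper's answer is to \emph{eliminate} it, not to carry it through the recursion.

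(The complexity bookkeeping you sketch — depth $O(\log_2 d)$, branching controlled by Bézout-type degree bounds, cost $\softOh{E}$ per node times a cubic in local degrees — matches in spirit the analysis in Proposition~\ref{prop:roadmaprec}, but that recurrence is set up and solved inside the \emph{bounded} algorithm on $F_Q$; the unbounded top layer contributes only $O(1)$ additional steps of the same cost.)
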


Hence, we dropped the boundedness assumption on $\V(\ff)\cap\RR^n$ made in
\cite[Theorem 1.1]{SS2017}, still keeping a complexity similar to the algorithm
presented in \cite{SS2017}.  Note that the arithmetic complexity statement above
is cubic in the {\em degree} bound $\mathscr{B}$ on the output; the output {\em
size} itself is $O\left( n\mathscr{B}^2 \right)$ elements in $\QQ$. Hence, as in
\cite{SS2017}, our runtime is sub-quadratic in the bound on the output size.
Note that a \emph{complexity bound with a comprehensive exponent}, that is
without using big Oh notation, is given below in Theorem~\ref{thm:corrcomp}, of
which the above main result is a direct consequence.
\newrev{Note that the (more general) algorithm in \cite{BR2014} runs in
    (deterministic) time \(\mu^{\log^2{n}} \left( n^{\log{n}}D
        \right)^{O\left( n\log^2{n} \right)
} \) which is not polynomial in its output size. Note also that the
complexity constant hidden by the Landau notation in \cite{BR2014} is
not known. Our output size is smaller, and for families of smooth real algebraic sets
of {\em fixed dimension}, our algorithm runs in time \(\left( nD
\right)^{O(n)} \). }

Our algorithm works as follows:
\begin{itemize}
\item it starts by considering the critical locus $W$ associated to
  some special polynomial map $\bphi$ with image in $\RR^{2}$;
\item next, it uses a
variant of the algorithm in \cite{SS2017}, to
  deal with some slices $V(\ff)\cap \bphi^{-1}(v_i)$ for some $v_1,
  \ldots, v_\ell$ in $\RR$.
\end{itemize}
We will see that the map $\bphi$ depends on some parameters in
$\CC^{N}$, for some $N\geq 0$, and that there exists a non-zero
polynomial $G$ such that when choosing these parameters in the open
set $\Ocal(G)$, one can apply the connectivity result in
\cite{PSS2024}, which does not need any boundedness assumption. This
is where first elements of randomization are needed. A second element
comes from \rev{our} use of a variant of \cite{SS2017}, which is also a
Monte Carlo algorithm.

\begin{remark}
  {%\color{magenta}
    As already emphasized, our algorithm is Monte Carlo and its
    success depends on the choice of several parameters (the
    aforementioned map \(\bphi\) as well as linear change of
    coordinates and linear forms). The set of "bad" choices is
    enclosed in some proper Zariski closed subsets (see e.g.
    Proposition~\ref{prop:fibergenalg}, as well as \cite[Prop.
    3.5]{SS2017} and \cite[Prop. 3.7]{SS2017}). In order to estimate
    the probabi\-lity of success of such choices, one needs to bound the
    degrees of these proper Zariski closed subsets as is done in
    \cite{ElGiSc23} for algorithms computing sample points per
    connected component of a smooth real algebraic set. While
    \cite{ElGiSc23} provides already several tools towards obtaining
    such degree bounds, some dedicated future work is needed.
    
    However, it is worth pointing out that in several situations,
    these choices are made to ensure regularity properties
    (dimension, smoothness, etc.) of some algebraic sets, and the
    algorithm will detect and return "fail" when such properties are not satisfied. 
    It may also return "fail" when the algebraic elimination routines
    it relies on do so (this is due to probabilistic aspects of these
    elimination
    routines, such as choice of random projections, or of some prime numbers).
    }
\end{remark}

 \section{Preliminaries}
\subsection{Minors, rank and submatrices} 
We present here some technical results on the minors and the rank of a certain 
class of matrices that will occur in this paper, when dealing with particular 
cases and incidence varieties in
Section~\ref{sec:subroutines}.

\begin{lemma}\label{lem:minorssubmat}
  Let $q\geq 1$ and $1\leq c \leq p$ be integers. Let $A,B,C$ be respectively
  $c\times p$, $c\times q$ and $q\times p$ matrices with coefficients in a
  commutative ring $R$ such that $M_1$ and $M_2$ are the following
  $(c+q)\times(q+p)$ matrices:
 \[
  M_1=\begin{bmatrix}
   B & A\\
   I_q & \OO
  \end{bmatrix}\qquad\text{and}\qquad
  M_2=\begin{bmatrix}
   \OO & A\\
   I_q & C
  \end{bmatrix},
 \]
 where $I_q$ is the identity matrix of size $q$.
 Let $m\in R$ and $e$ be in $\{0,\dots, c\}$; then, for $k=1,2$ the following 
conditions are equivalent:
 \begin{enumerate}
  \item $m$ is the determinant of a $(q+e)$-submatrix of $M_k$ that contains 
$I_q$;
  \item $(-1)^{qe} m$ is an $e$-minor of $A$.
 \end{enumerate}
 In this case, if $1\leq i_1\leq\cdots\leq i_{e}\leq c$ and $1\leq
 j_1,\dotsc,j_{e}\leq p$ are the indices of respectively the rows and
 the columns of $A$ selected in item 2, then the corresponding
 rows and columns in $M_k$ are of respective indices
 \[
  1\leq i_1\leq \cdots\leq i_{e} \leq c+1 \leq \cdots \leq c+q \et
  1\leq \cdots \leq q\leq  q+j_1 \leq\cdots\leq q+j_{e}.
 \]
\end{lemma}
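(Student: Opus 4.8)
The statement is a bookkeeping lemma about Laplace/cofactor expansion along the identity block $I_q$, so the plan is to reduce everything to a single clean computation of a determinant of a block matrix and then track the signs introduced by reordering rows and columns. First I would set up notation: a $(q+e)$-submatrix of $M_k$ that contains $I_q$ must, by definition, use all $q$ rows of the bottom block $[I_q \ \OO]$ (resp. $[I_q\ C]$) and all $q$ columns indexed $1,\dots,q$ (those are the only columns of $M_k$ where the bottom block is the identity; using any other configuration would not contain $I_q$ as a genuine sub-block). Hence such a submatrix is determined by choosing $e$ further rows among the top $c$ rows, say with indices $1\le i_1<\dots<i_e\le c$, and $e$ further columns among the last $p$ columns, say $q+j_1,\dots,q+j_e$ with $1\le j_1<\dots<j_e\le p$. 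This already pins down the index bookkeeping claimed in the last display of the lemma, so that part is essentially immediate once the first part is proved.

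Next I would compute the determinant of that $(q+e)\times(q+e)$ submatrix. In block form it looks like
\[
\begin{bmatrix} B' & A' \\ I_q & \OO \end{bmatrix}
\qquad\text{or}\qquad
\begin{bmatrix} \OO & A' \\ I_q & C' \end{bmatrix},
\]
where $A'$ is the $e\times e$ submatrix of $A$ on rows $i_1,\dots,i_e$ and columns $j_1,\dots,j_e$, and $B',C'$ are the corresponding sub-blocks (which will turn out to be irrelevant). For $M_1$: swap the two block-rows and the two block-columns to bring it to $\begin{bmatrix} \OO & I_q \\ A' & B' \end{bmatrix}$, then expand; more directly, do a block Laplace expansion along the rows of $[I_q\ \OO]$: the only nonzero maximal minor of those $q$ rows uses exactly the first $q$ columns and equals $\det(I_q)=1$, and the complementary minor is $\det(A')$, with a sign $(-1)^{\sigma}$ coming from the permutation that moves the chosen $q$ columns to the front past the $e$ columns of $A'$; that permutation is a product of $qe$ transpositions, giving the sign $(-1)^{qe}$. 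For $M_2$ the argument is identical: expanding along the bottom block rows, the nonzero term again forces the first $q$ columns, contributes $\det(I_q)=1$, and leaves $\det(A')$ up to the same $(-1)^{qe}$; the block $C'$ never enters because those columns are excluded by the expansion. So in both cases the submatrix determinant equals $(-1)^{qe}\det(A')$, i.e. $(-1)^{qe}$ times an $e$-minor of $A$, which is exactly the equivalence $(1)\Leftrightarrow(2)$ (reading it in both directions: every $e$-minor of $A$ arises this way, and every such submatrix of $M_k$ gives one).

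Finally I would assemble these pieces: $(1)\Rightarrow(2)$ is the forward computation just described; $(2)\Rightarrow(1)$ is the observation that, given any choice of $i_1<\dots<i_e$ and $j_1<\dots<j_e$, the submatrix of $M_k$ on rows $i_1,\dots,i_e,c+1,\dots,c+q$ and columns $1,\dots,q,q+j_1,\dots,q+j_e$ contains $I_q$ and has determinant $(-1)^{qe}$ times the chosen $e$-minor of $A$, so $(-1)^{qe}m$ being an $e$-minor forces $m$ to be such a submatrix determinant. The index statement then falls out of the explicit row/column lists used in this construction. I expect the only mildly delicate point to be getting the sign exactly right: one must be careful that moving the $q$ identity-columns to the front, across the $e$ columns coming from $A$, costs $qe$ transpositions (and that no extra sign hides in the row reordering, since the rows are already in increasing order in both the submatrix and $M_k$). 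I would verify this sign on the small case $q=e=1$ as a sanity check before writing it up, but there is no real obstacle — the whole lemma is a careful Laplace expansion.
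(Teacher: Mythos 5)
Your proof is correct and follows essentially the same route as the paper: reduce the determinant of a $(q+e)$-submatrix containing $I_q$ to an $e$-minor of $A$ by a Laplace (cofactor) expansion that isolates the $I_q$ block, track the sign $(-1)^{qe}$, and observe conversely that any $e$-minor of $A$ extends to such a submatrix. The index bookkeeping then follows as you say. One small inaccuracy worth fixing: for $M_2$ your claim that the argument is \emph{identical} to $M_1$ is not quite right. Expanding along the bottom $q$ rows of $M_1$ yields a single nonzero term because those rows $[I_q\ \OO]$ have a unique nonzero $q\times q$ maximal minor; but the bottom rows of $M_2$ are $[I_q\ C']$, which can have many nonzero maximal minors. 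What saves you is that for any column choice other than the first $q$, the \emph{complementary} $e\times e$ minor of the top block $[\OO\ A']$ has a zero column and vanishes. So the $C'$ block is killed by the top $\OO$, not by the expansion rows themselves. The paper sidesteps this by expanding $M_2$ along its first $q$ columns (which are $[\,\OO\ ;\ I_q\,]^{\mathsf T}$), making the single nonzero term immediate in the same way as for $M_1$.
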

\begin{proof}
  The determinant of any submatrix of $M_i$ containing $I_q$ can be
  reduced, up to the sign $(-1)^{qe}$, to a minor of $A$ by using the
  cofactor expansion with respect to the last $q$ rows of $M_1$
  (resp.\ the first $q$ columns of $M_2$). Conversely, any $e$-minor
  of $A$ is a $(q+e)$-minor of $M_k$, by extending the associated
  submatrix of $A$ to a submatrix of $M_k$ containing $I_q$. The
  correspondence between indices stated above is then straightforward.
\end{proof}

\begin{lemma}\label{lem:ranksubmatrix}
  With the notation of Lemma~\ref{lem:minorssubmat}, if $R$ is a
  field, then $\rank(M_k) = \rank(A) + q \geq q$ for $k=1,2$.
\end{lemma}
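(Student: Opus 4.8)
The plan is to read off the rank of $M_k$ directly from its block structure, using the identity block $I_q$ to perform elimination without changing the rank. First I would treat $M_2$, which is the cleaner case: the bottom block row $\begin{bmatrix} I_q & C\end{bmatrix}$ has its $q$ pivots in the first $q$ columns, so after subtracting suitable multiples of these rows from nothing (the top-left block is already $\OO$) the matrix is already in a convenient shape; the point is that column operations clearing $C$ using the pivots of $I_q$ transform $M_2$ into $\begin{bmatrix} \OO & A \\ I_q & \OO \end{bmatrix}$, and these column operations do not affect rank. This block-diagonal-type matrix visibly has rank $\rank(I_q) + \rank(A) = q + \rank(A)$.

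For $M_1$, the same idea applies with the roles of rows and columns swapped relative to the placement of $I_q$: the block row $\begin{bmatrix} I_q & \OO\end{bmatrix}$ provides $q$ pivots in the first $q$ columns, and using these as pivots I would perform row operations on $M_1$ to clear the block $B$ sitting above $I_q$, turning $M_1$ into $\begin{bmatrix} \OO & A \\ I_q & \OO\end{bmatrix}$ as well. Again row operations preserve rank, so $\rank(M_1) = q + \rank(A)$. Since $\rank(A) \geq 0$, this gives $\rank(M_k) \geq q$ in both cases.

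Alternatively — and this is perhaps the more in-keeping route given Lemma 1.2 — one can argue purely in terms of minors: the rank of $M_k$ is the largest $r$ for which $M_k$ has a nonzero $r$-minor. By Lemma~\ref{lem:minorssubmat}, every $(q+e)$-submatrix of $M_k$ containing $I_q$ has determinant $\pm$(an $e$-minor of $A$), so $M_k$ has a nonzero $(q+e)$-minor whenever $A$ has a nonzero $e$-minor; taking $e = \rank(A)$ shows $\rank(M_k) \geq q + \rank(A)$. For the reverse inequality one checks that every submatrix of $M_k$ of size exceeding $q + \rank(A)$ is singular: since the bottom block rows of $M_1$ (resp.\ left block columns of $M_2$) span a space of dimension $q$, any such submatrix, after discarding the rows/columns forced to come from that block, reduces to a minor of $A$ of size $> \rank(A)$, hence vanishes.

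There is essentially no obstacle here; the only thing to be slightly careful about is making the "reverse inequality" argument clean — i.e. confirming that an arbitrary large submatrix (not just those neatly containing $I_q$) is singular — which is why in practice I would favor the elimination argument of the first two paragraphs, where the rank computation is transparent and no case analysis on submatrix shapes is needed.
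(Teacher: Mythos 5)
Your elimination argument is correct and is exactly the paper's proof: the paper clears $B$ by row operations in $M_1$ and clears $C$ by column operations in $M_2$, each time reducing to the block anti-diagonal matrix $\left[\begin{smallmatrix}\OO & A \\ I_q & \OO\end{smallmatrix}\right]$ whose rank is plainly $q+\rank(A)$. The alternative minor-based route you sketch also works, but as you note it requires a small extra argument for the upper bound; the paper, like you, prefers the direct elimination.
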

\begin{proof}
  For $k=1$, performing row operations allows us to replace $B$ by
  the zero matrix, after which the claim becomes evident. For $k=2$,
  use column operations.
\end{proof}

\subsection{Polynomial maps, generalized polar varieties and fibers}

Let $Z\subset \CC^n$ be an equidimensional algebraic set and $\bphi =
(\phi_1,\dotsc,\phi_m)$ be a finite set of polynomials of $\CC[\XX]$;
we still denote by $\bphi\colon Z \rightarrow \CC^m$ the restriction
of the polynomial map induced by $\bphi$ to $Z$. For $\yy \in \reg(Z)$, 
we say that $\yy\in Z$ is a \emph{critical point} of $\bphi$ if
\[ 
 d_\yy \bphi(T_\yy Z) \neq \CC^m,
\]
where $d_\yy \bphi$ is the differential of $\bphi$ at $\yy$. We will denote by
$\Wo(\bphi,Z)$ the set of the critical points of $\bphi$ on $Z$, and by
$W(\bphi,Z)$ its Zariski closure. A \emph{critical value} is the image of a 
critical point by $\bphi$.

Besides, we let $K(\bphi,Z) = \Wo(\bphi,Z) \bigcup \sing(Z)$ be the
set of \emph{singular points} of $\bphi$ on $Z$. When $Z$ is defined
by a reduced regular sequence $\ff = (f_1, \ldots, f_c)$, $K(\bphi,
Z)$ is then defined as the intersection of $Z$ with the set of points
of $\CC^{n}$ where the Jacobian matrix of $(\ff, \bphi)$ has rank at
most $c + m - 1$ (see~\cite[Lemma A.2]{SS2017}).

For $1\leq i \leq m$, we set
\[
 \begin{array}{cccc}
   \map[i] \colon & \CC^n & \to & \CC^i\\
  &\yy & \mapsto & (\phi_{1}(\yy),\dotsc,\phi_{i}(\yy)).
 \end{array}
\]
Given the maps $(\map[i])_{1 \leq i \leq m}$, we denote
$\Wo(\map[i],Z)$, $W(\map[i],Z)$ and $K(\map[i],Z)$ by respectively
$\Wophii[i][Z]$, $\Wphii[i][Z]$ and $\Kphii[i][Z]$. For $i=0$, we let
$\CC^0 $ be a singleton of the form $ \CC^{0} = \{ \bullet \}$, and
$\map[0] \colon \yy\in\CC^n \to \bullet\in\CC^0$ be the unique
possible map. Then for all $ \yy \in \CC^0, \; \map[0][-1](\yy) =
\CC^n$; we set $\Wophii[0][Z] = \Wphii[0][Z] = \emptyset$. The sets 
$\Wphii[i][Z]$,
for $0\leq i\leq m$ are called the \emph{generalized polar varieties
  associated to $\phi$ on $Z$}.

The main result we state in this subsection is the following (the
somewhat lengthy proof is in Section~\ref{sec:fibergenalg}). It
establishes some genericity properties of generalized polar varieties
associated to a class of polynomial maps.
It is a generalization of \cite[Theorem 1]{SS2003}, which only deals with 
projections.

\begin{proposition}\label{prop:fibergenalg}
  Let $V\subset \CC^n$ be a $d$-equidimensional algebraic set with finitely many
  singular points and $\polun$ be in $\CC[\XX]$. Let $2\leq \ifi \leq d+1$. For 
$\balf=(\balf_1,\dots,\balf_{\ifi})$  in $\CC^{\ifi n}$, we define
   $\bphi = \left(\phi_1(\XX,\balf_1),  
\dotsc,\phi_{\ifi}(\XX,\balf_{\ifi})\right)$, where for $2\leq j \leq \ifi$
 \begin{align*}
  \phi_1(\XX,\balf_1) = \polun(\XX) + 
  \sum_{k=1}^n \alf_{1,k}x_k
  \et
  \phi_j(\XX,\balf_j) = \sum_{k=1}^n \alf_{j,k}x_k.
 \end{align*}
  Then, there exists a non-empty Zariski open subset $\ZOnoether(V, \polun,\ifi)
 \subset \CC^{\ifi n}$ such that for every $\balf \in 
\ZOnoether(V,\polun,\ifi)$ and $i \in \{1,\dotsc,\ifi\}$, the following holds:
 \begin{enumerate}
  \item either $\Wphii$ is empty or $(i-1)$-equidimensional;
  \item the restriction of $\map[i-1]$ to $\Wphii$ is a Zariski-closed map;
  \item for any $\zz \in \CC^{i-1}$, the fiber $\Kphii \cap \map[i-1][-1](\zz)$
    is finite.
\end{enumerate}
\end{proposition}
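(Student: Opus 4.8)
The plan is to proceed by induction on the dimension, using the structure of the maps $\phi_j$ (which are affine-linear except possibly $\phi_1$, which is $\polun$ plus a generic linear form) to reduce properties 1–3 for the generalized polar varieties $\Wphii$ to genericity statements about projections, where \cite[Theorem 1]{SS2003} applies. First I would set up the incidence-variety formalism: for a fixed index $i$, $\Wophii$ is cut out inside $\reg(V)$ by the vanishing of the $c{+}i$ minors of the Jacobian of $(\ff,\map[i])$ where $c = n-d$; using Lemma~\ref{lem:minorssubmat} and Lemma~\ref{lem:ranksubmatrix}, the relevant rank conditions can be rewritten in a normalized form where the generic linear parts $\balf_2,\dots,\balf_i$ enter as a block $A$, so that the rank of the full Jacobian equals the rank of a matrix of the shape $M_k$ studied in Section~2.1. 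This is what lets one treat the non-projection map $\phi_1 = \polun + \text{linear}$ on the same footing as the linear ones: the term $\polun$ contributes one extra row to the Jacobian, and absorbing it leaves a genuinely generic linear system for the remaining coordinates.

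Next, for the core genericity argument I would introduce the global incidence variety parametrizing pairs (point of $\reg(V)$ where the appropriate Jacobian drops rank, choice of $\balf$) and study the two projections: to $\CC^{\ifi n}$ (the parameter space) and to $V$. Generic smoothness / Sard-type arguments (Thom–Sard over $\CC$, or the algebraic Bertini-type statements already used in \cite{SS2003,SS2017}) applied to the projection onto the parameter space give, for $\balf$ outside a proper Zariski-closed set, that $\Wophii$ is smooth of the expected dimension $i-1$; taking Zariski closure and controlling the boundary $\Wphii \setminus \Wophii$ (which is itself a polar variety of smaller index for a subvariety, handled by the induction hypothesis) yields equidimensionality, which is item 1. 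For item 2, the properness (Zariski-closedness) of $\map[i-1]$ restricted to $\Wphii$: since $\phi_1$ has a generic linear part and $\phi_2,\dots,\phi_{i-1}$ are generic linear forms, a generic linear-change-of-coordinates argument à la Noether normalization shows that $\map[i-1]$ restricted to $\Wphii$ is a finite-to-one projection onto its image away from a set of smaller dimension, and more precisely that it is closed — this is exactly the kind of statement \cite[Theorem 1]{SS2003} provides for projections, and the linear-plus-$\polun$ twist in $\phi_1$ does not affect closedness because $\polun$ is fixed while the added linear form is generic. For item 3, the finiteness of the fibers $\Kphii \cap \map[i-1][-1](\zz)$: here I would use that $\Kphii = \Wophii \cup \sing(V)$, that $\sing(V)$ is finite by hypothesis (so its contribution is automatically finite), and that on $\Wophii$ the fiber of $\map[i-1]$ over a generic-enough (in fact every, after shrinking the parameter set) value is zero-dimensional because $\map[i-1]$ restricted to the $(i-1)$-equidimensional $\Wphii$ is generically finite — again a dimension count combined with the properness from item 2, upgraded to "every fiber finite" by the usual trick of intersecting the incidence variety with the graph of $\map[i-1]$ and applying a fiber-dimension theorem uniformly in $\zz$.

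The three Zariski-open conditions obtained from items 1, 2, 3 (and from the induction hypothesis applied to the polar varieties of the singular locus of $\Wphii$ and to smaller indices) are then intersected to define $\ZOnoether(V,\polun,\ifi)$; it is non-empty since it is a finite intersection of non-empty Zariski-open subsets of the irreducible space $\CC^{\ifi n}$.

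The main obstacle I expect is item 2 — proving that $\map[i-1]$ restricted to $\Wphii$ is a \emph{closed} map, not merely dominant with finite generic fibers. Over $\CC$ (not a real-closed field) closedness of a morphism is a strong, global condition, and it fails for arbitrary projections; what saves us is the specific shape of $\bphi$ (one "shifted" linear coordinate plus generic linear ones), which, after a generic linear change of coordinates, makes $\map[i-1]|_{\Wphii}$ behave like a finite morphism in the sense of Noether normalization restricted to a subvariety. Making this rigorous requires (i) a careful degree/leading-form analysis showing no branch of $\Wphii$ escapes to infinity in a fiber direction, and (ii) checking that the presence of $\polun$ in $\phi_1$, which is \emph{not} generic, does not break the argument — it does not, because the obstruction to properness lives at infinity and is governed only by the top-degree parts of the defining equations together with the \emph{generic} linear forms $\phi_2,\dots,\phi_{i-1}$, while $\polun$ only shifts $\phi_1$ by a bounded-degree term. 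This is precisely the point where the generalization beyond \cite[Theorem 1]{SS2003} (which only handles pure projections) demands new bookkeeping, but no fundamentally new idea.
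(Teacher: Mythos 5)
Your proposal correctly identifies the Zariski-closedness of $\map[i-1]$ restricted to $\Wphii$ as the crux, and correctly senses that a Noether-normalization-type argument is the right tool, but the mechanism you sketch is too vague to close the gap, and the paper actually proceeds quite differently. The paper does not prove closedness per se: Proposition~\ref{prop:finitepolar} shows that, over the field $\CC(\aa)$ of generic parameters, the restriction of $\map[i-1]$ to $\Wphii$ is a \emph{finite} map in the algebraic-geometry sense (dominant with integral coordinate-ring extension). Finiteness then delivers all three items at once: closedness by \cite[Theorem 1.12]{Sh1994}, finite fibers trivially, and equidimensionality by pinching the upper bound from the algebraic Sard lemma against the lower bound of Lemma~\ref{lem:lowerboundglobal} (an Eagon--Northcott bound on the height of determinantal ideals). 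Your route of proving the three items separately runs into trouble precisely at closedness, and the ``degree/leading-form analysis showing no branch escapes to infinity'' that you propose is not obviously feasible over $\CC$. A secondary slip: the boundary $\Wphii\setminus\Wophii$ is contained in $\sing(V)$, hence finite by hypothesis; it is neither a polar variety of a subvariety nor something an induction on dimension needs to absorb, and the paper never reasons about it, relying instead on the Eagon--Northcott lower bound.

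The structural ingredient you are missing is a \emph{decreasing} induction on $i$, from $i=d+1$ down to $i=1$, proving finiteness at each step. For $i=d+1$ one has $\Wphii[d+1]=V$, and the nonlinear term $\polun$ is absorbed by passing to a graph variety $V'\subset\CC^{d+n}$ (cut out by $y_1-\polun$, $y_2,\dots,y_d$ and the equations of $V$, isomorphic to $V$), to which the explicitly parameterized Noether normalization of Corollary~\ref{coro:noether} applies. In the induction step, finiteness of $\map[i]$ on $\Wphii[i+1]$ gives closedness, hence that $\map[i]$ restricted to $\Wphii\subset\Wphii[i+1]$ is finite onto its image; Sard bounds this image by dimension $i-1$; a second application of the parameterized Noether normalization (Proposition~\ref{prop:noether}) to that image over $\CC(\aa)$ produces a finite projection to $\CC^{i-1}$; and the composition is identified with the original $\map[i-1]$, with $\CC$-algebraically independent coefficients again, via the $\CC$-isomorphism $\CC(\aa,\bell)\to\CC(\mm)$ sending $\aa_j\mapsto\aa_j+\ell_j\aa_i$. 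This re-parametrization of the generic coefficients is the precise and checkable form of the ``generic linear change of coordinates'' you gesture at, and it avoids entirely any reasoning about top-degree parts or behaviour at infinity.
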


The connectivity result \cite[Theorem 1.1]{PSS2024} makes use of
generalized polar varieties satisfying these properties, but also of
fibers of polynomial maps.

\begin{remark}
Let $\bphi=(\phi_1,\dotsc,\phi_{\ifi})$ be polynomials in
$\CC[\XX]$ and an integer $1\leq e \leq \ifi$. Given an algebraic set $V
\subset \CC^n$ and a set $Q\subset \CC^e$, the \emph{fiber of $V$ over
  $Q$ with respect to $\bphi$} is the set $\mapfbr{V} = V \cap
\map[e][-1](Q)$. We say that \emph{$V$ lies over $Q$ with respect to
  $\bphi$} if $\map[e](V) \subset Q$.  Finally, for $\zz \in \CC^e$,
the set $\mapfbr{V}[\{\zz\}]$ will be denoted by $\mapfbreq{V}[\zz]$.
\end{remark}

\begin{comment}
According to the following lemma we are actually done since the polar variety 
is invariant by generic linear combinations of the rows.
 \begin{lemma}\label{lem:CLpolar}
  Let $1 \leq i \leq d$ and $\bphi = (\phi_1,\dotsc,\phi_i) \subset \CC[\XX]$. 
  
  \begin{enumerate}
  \item Let $\sigma$ be a permutation of $\{1,\dotsc,i\}$ and let 
$\bphi^{\sigma} = (\phi_{\sigma(1)},\dotsc,\phi_{\sigma(i)})$. Then,
  \[
    \Wphii[i][V][\bphi^{\sigma}] =  \Wphii
  \]
   \item Let $\bb = (\bb_1,\dotsc,\bb_i) \in \CC^i$, and $1 \leq j \leq i$. We 
denote by $\bphi^{\bb}$ the set of polynomials obtained by replacing $\phi_j$ 
by 
$\rho_{\bb}(\bphi) \in \CC[\XX]$ in $\bphi$. Then if $\bb_j \neq 0$,
  \[
    \Wphii[i][V][\bphi^{\bb}] =  \Wphii
  \]
  \end{enumerate}
 \end{lemma}
The proof of this lemma is straightforward as long as one remarks that these 
two operations are rank conservative for the matrix $\jac \bphi$.
\end{comment}

\subsection{Charts and atlases of algebraic sets}

We say that an algebraic set is \emph{complete intersection} if it can
be defined by a number of equations equal to its codimension. Not all
algebraic sets are complete intersections; for instance determinantal
varieties and, consequently, a whole class of generalized polar
varieties, are a prototype of non complete intersections. This creates
complications to control the complexity of algorithms manipulating
generalized polar varieties recursively.

However, we may use local representations which describe Zariski open
subsets of an algebraic set with a number of equations equal to its
codimension.

Such local representations are obtained by considering {\itshape locally
  closed sets}. We say that a subset $\Vo$ of $\CC^n$ is locally
closed if there exist an open $\Ocal$ and a closed Zariski subset $Z$
of $\CC^n$ such that $\Vo=Z\cap\Ocal$. In that case, the dimension of
$\Vo$ is the dimension of its Zariski closure $V$, and $\Vo$ is said
to be equidimensional if $V$ is. In this situation, we define
$\reg(\Vo) = \reg(V) \cap\Vo$ and $\sing(\Vo)= \sing(V) \cap \Vo$, and
$\Vo$ is said to be non-singular if $\reg(\Vo) = \Vo$.
For $\ff=(f_1,\dotsc,f_p) \subset \CC[\XX]$ with $p\leq n$, we define
the locally closed set $\Voreg(\ff)$ as the set of all $\yy$ where the
Jacobian matrix $\jac(\ff)$ of $\ff$ has full rank $p$. We
will denote by $\Vreg(\ff)$ the Zariski closure of $\Voreg(\ff)$.

A {\em chart} associated to an algebraic set $V\subset \CC^{n}$ can be
seen as a local representation of $V$ by another locally closed subset
of $V$ that is smooth and in complete intersection. We recall
hereafter the definitions introduced in \cite[Section 2.5]{SS2017},
which we slightly generalize. Below, for a polynomial $m$ in $\CC[\XX]$,
recall that we write $\Ocal(m)=\CC^n-\V(m)$.

\begin{definition}[Charts of algebraic sets]\label{def:chart}
 Let $1 \leq e \leq \ifi \leq n+1$ be integers and 
$\bphi=(\phi_1,\dotsc,\phi_{\ifi})
 \subset \CC[\XX]$.  Let $Q \subset \CC^e$ be a \emph{finite} set and
 $V,S \subset \CC^n$ be algebraic sets lying over $Q$ with respect to
 $\bphi$.  We say that a pair of the form $\chi = (m,\hh)$ with $m$
 and $\hh=(h_1\dotsc,h_c)$ in $\CC[\XX]$ is a {\em chart} of $(V,Q,S,\bphi)$
 if the following holds:
 \begin{enumerate}[label=$(\sfC_\arabic*)$]
  \item $\Ocal(m)\cap V - S$ is non-empty; 
  \item $\Ocal(m) \cap V - S = \Ocal(m) \cap \mapfbr{\V(\hh)} - S$;
  \item $e + c \leq n$;
  \item for all $\yy \in \Ocal(m) \cap V - S$, $\jac_\yy([\hh,\map[e]])$ has 
full rank $c+e$.
 \end{enumerate}
 When $\bphi=(x_1,\dotsc,x_{\ifi})$ defines the canonical projection, one will 
simply refer to $\chi$ as a chart of $(V,Q,S)$, and if $e=0$ as a chart of 
$(V,S)$ (no matter what $\bphi$ is).
\end{definition}

The first condition $\sfC_1$ ensures that $\chi$ is not trivial, and
the following ones ensure that $\chi$ is a smooth representation of
$V-S$ in complete intersection (for $V$ equidimensional,
$S$ contains the singular points of $V$). This is a generalization of
\cite[Definition 2.2]{SS2017} in the sense that, if
$\bphi=(x_1,\dotsc,x_n)$, one recovers the same definition.

\begin{lemma}\label{lem:jacrankchart}
  Let $1 \leq \ifi \leq n+1$ be integers and $\bphi=(\phi_1,\dotsc,\phi_{\ifi})
 \subset \CC[\XX]$.
  Let $V,S\subset \CC^n$ be two algebraic sets with $V$
  $d$-equidimensional and let $\chi = (m,\hh)$, with $\hh =  (h_1,\dotsc,h_c)$, 
be a chart of $(V,S)$. Then, for $1\leq i\leq \ifi$ and $\yy \in \Ocal(m) \cap 
V - S$, $\yy$ lies in $\Wphii$ if and only if $\jac_{\yy}([\hh,\map[i]])$ does 
not have full rank $c+i$.
\end{lemma}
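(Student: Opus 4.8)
The plan is to unwind the definitions of critical point and chart, and to relate the two Jacobian conditions via a rank computation. Fix $1\le i\le \ifi$ and $\yy\in\Ocal(m)\cap V-S$. By the chart conditions $(\sfC_2)$ and $(\sfC_4)$ with $e=0$, near $\yy$ the set $V-S$ coincides with $\Ocal(m)\cap\V(\hh)$, and $\jac_\yy(\hh)$ has full rank $c$; since $V$ is $d$-equidimensional and $\Ocal(m)\cap V-S$ is non-empty with the same Zariski closure, we must have $c=n-d$, so $\V(\hh)$ is smooth of dimension $d$ at $\yy$ and $T_\yy(V-S)=T_\yy\V(\hh)=\ker\jac_\yy(\hh)$. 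In particular $\yy\in\reg(V)$, so the notion of critical point applies: $\yy\in\Wophii$ (equivalently $\yy\in\Wphii$, since $\Wophii$ and its Zariski closure agree on the open set $\Ocal(m)\cap V-S$ of smooth points) if and only if $d_\yy\map[i](T_\yy V)\neq\CC^i$, i.e. if and only if the restriction of $\jac_\yy(\map[i])$ to $\ker\jac_\yy(\hh)$ is not surjective onto $\CC^i$.

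First I would translate this surjectivity statement into a rank statement about the stacked matrix $\jac_\yy([\hh,\map[i]])$. The general linear-algebra fact is: for matrices $H$ (size $c\times n$, full rank $c$) and $\Phi$ (size $i\times n$), the linear map $\Phi|_{\ker H}\colon\ker H\to\CC^i$ is surjective if and only if the $(c+i)\times n$ matrix $\begin{bmatrix}H\\\Phi\end{bmatrix}$ has rank $c+i$. Indeed, $\operatorname{rank}\begin{bmatrix}H\\\Phi\end{bmatrix}=c+\dim\Phi(\ker H)$: choosing a complement so that $\CC^n=\ker H\oplus W$ with $H|_W$ injective, row-reduce to see the rank is $c$ plus the rank of $\Phi$ on $\ker H$. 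Hence $\begin{bmatrix}H\\\Phi\end{bmatrix}$ has full rank $c+i$ exactly when $\Phi(\ker H)=\CC^i$. Applying this with $H=\jac_\yy(\hh)$ and $\Phi=\jac_\yy(\map[i])$ gives: $\jac_\yy([\hh,\map[i]])$ has full rank $c+i$ iff $d_\yy\map[i](T_\yy V)=\CC^i$, iff $\yy\notin\Wphii$. This is the desired equivalence.

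There is one technical point to handle carefully, namely the identification $T_\yy(V-S)=T_\yy V$ and the fact that being a critical point of $\map[i]$ restricted to $V$ only depends on the local structure of $V$ near $\yy$: since $\yy\in\reg(V)$ and $S$ is Zariski-closed with $\yy\notin S$, the set $\Ocal(m)\cap V-S$ is a Zariski-open neighbourhood of $\yy$ in $V$ consisting of regular points, so $T_\yy V$ is computed from the generators of $\I(V)$, and it equals $\ker\jac_\yy(\hh)$ by condition $(\sfC_2)$ together with $(\sfC_4)$ (both $V$ and $\V(\hh)$ are smooth of dimension $d$ at $\yy$ and coincide on a neighbourhood). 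One should also note that $\Wophii$ is defined only at regular points, and its Zariski closure $\Wphii$ meets $\Ocal(m)\cap V-S$ precisely in $\Wophii\cap(\Ocal(m)\cap V-S)$, because this open set consists of smooth points where "being critical" is a closed condition cut out by the vanishing of the maximal minors of $\jac([\hh,\map[i]])$ — so no extra limit points are introduced. The main obstacle is thus not any single hard step but making these local-to-global identifications airtight; the rank computation itself is routine linear algebra, and I expect the bulk of the write-up to consist of carefully invoking $(\sfC_1)$–$(\sfC_4)$ to justify $c=n-d$, smoothness at $\yy$, and the tangent-space identity.
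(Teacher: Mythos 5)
Your proof is correct and follows essentially the same route as the paper's: identify $T_\yy V$ with $\ker\jac_\yy(\hh)$, reduce "$\yy\in\Wphii$" to "$\yy\in\Wophii$", and translate the critical-point condition into a rank defect of the stacked Jacobian via the identity $\rank\bigl[\begin{smallmatrix}H\\\Phi\end{smallmatrix}\bigr]=c+\dim\Phi(\ker H)$. The only difference is that the paper dispatches the preliminary facts ($\yy\in\reg(V)$, $T_\yy V=\ker\jac_\yy(\hh)$, and the agreement of $\Wophii$ with $\Wphii$ on the chart) by citing \cite[Lemmas A.7 and A.8]{SS2017}, whereas you rederive them directly from conditions $(\sfC_2)$ and $(\sfC_4)$ and the Jacobian criterion; your observation that $\Wophii$ is closed in the open set $\Ocal(m)\cap V-S$, together with the general fact that $\overline{A}\cap U=A\cap U$ when $A\cap U$ is closed in the open set $U$, is exactly what the cited lemmas encapsulate.
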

\begin{proof}
  Let $\yy \in \Ocal(m) \cap V - S$. By \cite[Lemma A.8]{SS2017}, $\yy
  \in \reg(V)$, so that $\yy$ lies in $\Wphii$ if and only if it lies
  in $\Wophii$.  Besides, by \cite[Lemma A.7]{SS2017}, $\Tg_{\yy}V$
  coincide with $\ker\jac_\yy(\hh)$. Hence, by definition $\yy$ lies
  in $\Wphii$ if and only if $d_{\yy}\map[i](\ker\jac_{\yy}(\hh)) \neq
  \CC^i$. But the latter, is equivalent to saying that the matrix
  $\jac_{\yy}([\hh,\map[i]])$ does not have full rank $c+i$.
\end{proof}
A straightforward rewriting of Lemma~\ref{lem:jacrankchart} is the following
which provides a local description of a polar variety by means of a critical
locus on a variety defined by a complete intersection.
\begin{lemma}\label{lem:polaronchart}
  Reusing the notation of Lemma~\ref{lem:jacrankchart}, it holds that the sets
  $\Wphii$ and $\Wophii[i][\Vreg(\hh)]$ coincide in $\Ocal(m) - S$.
\end{lemma}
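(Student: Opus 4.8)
The statement asks us to show that the two sets $\Wphii$ and $\Wophii[i][\Vreg(\hh)]$ agree after intersecting with the open set $\Ocal(m) - S$. The natural approach is to describe each of the two sets, inside $\Ocal(m) - S$, by the same rank condition on the Jacobian matrix $\jac_\yy([\hh,\map[i]])$, and then invoke Lemma~\ref{lem:jacrankchart} on one side and a direct computation on the other.

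First I would recall that, by chart axiom $(\sfC_2)$ applied with $e=0$, we have $\Ocal(m) \cap V - S = \Ocal(m) \cap \V(\hh) - S$; in particular, inside $\Ocal(m) - S$, the set $V$ and the set $\V(\hh)$ coincide, and since $\chi$ is a chart of $(V,S)$ with $V$ being $d$-equidimensional and $S$ containing $\sing(V)$, axiom $(\sfC_4)$ tells us that $\jac_\yy(\hh)$ has full rank $c$ at every point $\yy \in \Ocal(m) \cap V - S$; hence all such points lie in $\Voreg(\hh)$, and in fact in $\Ocal(m) \cap \Voreg(\hh) - S$. Conversely any point of $\Ocal(m) \cap \Voreg(\hh) - S$ is in $\V(\hh) \cap \Ocal(m) - S = V \cap \Ocal(m) - S$. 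So $\Ocal(m) \cap V - S = \Ocal(m) \cap \Voreg(\hh) - S = \Ocal(m) \cap \Vreg(\hh) - S$, using that $\Voreg(\hh)$ is open in its closure $\Vreg(\hh)$ and agrees with it on the open set $\Ocal(m)$ (after possibly shrinking; more precisely $\Ocal(m)\cap\Voreg(\hh)$ is open in $\Vreg(\hh)$ and its closure meets $\Ocal(m)-S$ in the same set). This identifies the ambient locally closed sets on which the two polar-type loci are being compared.

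Next, on the left-hand side, Lemma~\ref{lem:jacrankchart} gives directly: for $\yy \in \Ocal(m) \cap V - S$, one has $\yy \in \Wphii$ if and only if $\jac_\yy([\hh,\map[i]])$ does not have full rank $c+i$. On the right-hand side, I would unwind the definition of $\Wophii[i][\Vreg(\hh)]$: a point $\yy$ of $\Ocal(m) \cap \Vreg(\hh) - S$ lying in $\reg(\Vreg(\hh)) \cap \Voreg(\hh)$ — which holds automatically here since $\jac_\yy(\hh)$ has full rank $c$ and $\hh$ cuts out $\Vreg(\hh)$ locally as a complete intersection, so $\Tg_\yy \Vreg(\hh) = \ker \jac_\yy(\hh)$ by \cite[Lemma A.7]{SS2017} — is a critical point of $\map[i]$ on $\Vreg(\hh)$ exactly when $d_\yy\map[i](\ker\jac_\yy(\hh)) \neq \CC^i$, which, as in the proof of Lemma~\ref{lem:jacrankchart}, is equivalent to $\jac_\yy([\hh,\map[i]])$ failing to have full rank $c+i$. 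Since the two sets are cut out of the same locally closed set $\Ocal(m) \cap V - S = \Ocal(m) \cap \Vreg(\hh) - S$ by the same Jacobian rank condition, they coincide in $\Ocal(m) - S$.

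The only real subtlety — and the step I would be most careful with — is the bookkeeping of Zariski closures: $\Wophii$, $\Wphii$, and $\Vreg(\hh)$ are all defined as closures of locally closed sets, while the statement is an equality of these (closed) sets after intersection with the open set $\Ocal(m)-S$. The clean way to handle this is to first prove the equality at the level of the locally closed loci $\Wophii \cap (\Ocal(m)-S)$ and $\Wophii[i][\Vreg(\hh)] \cap (\Ocal(m)-S)$ via the rank criterion above, and then observe that intersecting a Zariski closure with an open set and taking closure-within-the-open-set commutes appropriately; since $\Wphii$ is the closure of $\Wophii$ and $\Ocal(m)-S$ is open, $\Wphii \cap (\Ocal(m)-S)$ need not equal $\Wophii \cap (\Ocal(m)-S)$ in general, so one should state the conclusion exactly as the lemma does — as an equality of the closed sets $\Wphii$ and $\Wophii[i][\Vreg(\hh)]$ restricted to $\Ocal(m)-S$ — and deduce it from the equality of the corresponding locally closed sets by taking closures in $\Ocal(m)-S$ on both sides, which is legitimate because closure commutes with restriction to an open set.
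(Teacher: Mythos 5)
Your core argument is correct and matches the paper's intent: the paper proves this lemma simply as ``a straightforward rewriting of Lemma~\ref{lem:jacrankchart},'' and your first two paragraphs do exactly that --- identify $\Ocal(m)\cap V - S$ with $\Ocal(m)\cap\Voreg(\hh) - S$ via axioms $(\sfC_2)$ and $(\sfC_4)$, then show that both $\Wphii$ and $\Wophii[i][\Vreg(\hh)]$ meet this set in the same rank-defect locus for $\jac([\hh,\map[i]])$.

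The last paragraph, however, is the weak spot, and you should drop it rather than repair it. First, a mislabeling: $\Wophii[i][\Vreg(\hh)]$ is by definition the \emph{open} locus of critical points (the circle superscript), not a Zariski closure; it is locally closed, and only happens to be closed inside $\Ocal(m) - S$ because $\reg(\Vreg(\hh))$ covers all of $\Vreg(\hh)$ there. Second, and more substantively, the principle you invoke --- ``closure commutes with restriction to an open set'' --- is false in the direction you need: for $A \subset \CC^n$ and open $U$ one only has $\overline{A\cap U}^{\,U} \subset \overline{A}\cap U$, and the inclusion can be strict, so you cannot pass from equality of the locally closed loci to equality of closures that way. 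But you do not need any such argument: Lemma~\ref{lem:jacrankchart} already states, for $\yy\in\Ocal(m)\cap V - S$, a pointwise equivalence between $\yy\in\Wphii$ (the closure) and the rank condition, because its proof shows $\Wphii$ and $\Wophii$ agree on $\reg(V)$. So the closure question for the left-hand side is resolved at the source, and all that remains is the direct computation identifying $\Wophii[i][\Vreg(\hh)]\cap(\Ocal(m)-S)$ with the same rank-defect locus --- which your second paragraph already does. Strike the third paragraph and the proof is clean.
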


Together with the notion of charts, we define  atlases as a collection of
 charts that cover the whole algebraic set we consider.
\begin{definition}[Atlases of algebraic sets]\label{def:atlas}
  Let $1 \leq e \leq \ifi \leq n+1$ be integers and 
$\bphi=(\phi_1,\dotsc,\phi_{\ifi}) \subset \CC[\XX]$. Let $Q \subset \CC^e$ be 
a \emph{finite} set and $V,S \subset  \CC^n$ be algebraic sets lying over $Q$ 
with respect to $\bphi$. Let $\bchi =
  (\chi_j)_{1\leq j \leq s}$ with $\chi_j = (m_j,\hh_j)$ for all $j$. We say 
that $\bchi$ is an {\em atlas} of $(V,Q,S,\bphi)$ if the following holds:
\begin{enumerate}[label=$(\sfA_\arabic*)$]
 \item $s\geq 1$;
 \item for each $1 \leq j \leq s$, $\chi_i$ is a chart of $(V,Q,S,\bphi)$;
 \item $V-S \subset \bigcup_{1 \leq j \leq s} \Ocal(m_j)$.
\end{enumerate}
When $\bphi=(x_1,\dotsc,x_\ifi)$ is defines the canonical projection, one simply
refers to $\bchi$ as an atlas of $(V,Q,S)$, and if $e=0$ as an atlas of $(V,S)$.
\end{definition}
Here the definition is the same as \cite[Definition 2.3]{SS2017}. Note that,
according to \cite[Lemma A.13]{SS2017}, there exists an atlas of $(V,\sing(V))$
for any equidimensional algebraic set $V$.

\subsection{Charts and atlases for generalized polar
  varieties}\label{ssec:polarvar}

We deal now with the geometry of generalized polar varieties (under
genericity assumptions) and show how to define charts and atlases for
them. In this whole subsection, we let $\bphi = (\phi_1,\dotsc,\phi_{n+1})
\subset \CC[\XX]$; for $1\leq i \leq n$, we denote by $\map[i]$ the
sequence $(\phi_1, \ldots, \phi_i)$ and, by a slight abuse
of notation, the polynomial map it defines.

\begin{definition}\label{def:hphi}
  Let $\hh = (h_1,\dotsc,h_c) \subset \CC[\XX]$ with $1\leq c \leq n$ and let $i
  \in \{1,\dotsc,n-c+1\}$. Let $m''$ be a $(c+i-1)$-minor of 
$\jac([\hh,\map[i]])$ containing the rows of $\jac(\map[i])$. We denote by 
$\Hphi(\hh,i,m'')$ the
  sequence of $(c+i)$-minors of $\jac([\hh,\map[i]])$ obtained by successively
  adding the missing row and a missing column of $\jac([\hh,\map[i]])$ to $m''$.
  This sequence has length $n-c-i+1$.
\end{definition}

Then, given a chart $\chi = (m, \hh)$ of some algebraic set $V$, we
can define a candidate for being a chart of generalized polar
varieties associated to $\map[i]$ and $V(\hh)\cap \Ocal(m)$.

\begin{definition}\label{def:Wchart}
  Let $V,S \subset \CC^n$ be two algebraic sets, $\chi = (m,\hh)$ be a chart of
  $(V,S)$, with $\hh$ of length $c$ and $i \in \{1,\dotsc,n-c+1\}$. For every
  $c$-minor $m'$ of $\jac(\hh)$ and every $(c+i-1)$-minor $m''$ of
  $\jac(\hh,\map[i])$ containing the rows of $\jac(\map[i])$, we define
  $\Wchart(\chi,m',m'',\map[i])$ as the couple:
 \[
  \Wchart(\chi,m',m'',\map[i]) = \Big( \: mm'm'' , \: \big(\hh,\Hphi(\hh,i,m'') 
\big) \: \Big)
 \]
\end{definition}

Then, the definition of the associated atlas comes naturally. Let $V,S \subset
\CC^n$ be two algebraic sets with $V$ $d$-equidimensional, $\bchi = (\chi_j)_{1
  \leq j \leq s}$ be an atlas of $(V,S)$ (with $\chi_j = (m_j,\hh_j)$) and $i 
\in
\{1,\dotsc,d+1\}$. Since $V$ is $d$-equidimensional, by \cite[Lemma
A.12]{SS2017}, all the sequences of polynomials $\hh_j$ have same cardinality 
$c 
= n-d$.
\begin{definition}
  We define $\Watlas(\bchi,V,S,\bphi,i)$ as the sequence of all charts
  $\Wchart(\chi_j,m',m'',\map[i])$ for every $j \in \{1,\dotsc,s\}$, every
  $c$-minor $m'$ of $\jac(\hh_j)$ and every $(c+i-1)$-minor $m''$ of
  $\jac(\hh_j,\map[i])$ containing the rows of $\jac(\map[i])$, for which
  $\Ocal(m_jm'm'') \cap \Wphii - S$ is not empty.
\end{definition}

These constructions generalize the ones introduced in \cite[Section
  3.1]{SS2017} in the following sense: for $\bphi=(x_1, \ldots, x_{n+1})$,
except for some trivial cases, the objects we just defined match the
ones in \cite[Definition 3.1 to 3.3]{SS2017}, possibly up to signs
(which have no consequence). The next lemma makes this more precise; in
this lemma, we write $\proj = (x_1, \ldots, x_{n+1})$ and ${\proj[i]} =
(x_1, \ldots, x_i)$.
\begin{lemma}\label{lem:chartaltlasWproj}
  Let $V,S$ be algebraic sets and $\hh=(h_1,\dotsc,h_c)\subset \CC[\XX]$. Let
  $1\leq i \leq n-c+1$ and $m''$ be a $(c+i-1)$-minor of $\jac(\hh,\proj[i])$,
  containing the rows of $\jac(\proj[i])$. Then either $m''=0$ or
 \begin{enumerate}
  \item $\mu'' = (-1)^{i(c-1)} m''$ is a $(c-1)$-minor of $\jac(\hh,i)$;
  \item $\Hphi[\proj](\hh,i, m'')=(-1)^{ic} H$, where $H$ is the
    $(n-c-i+1)$-sequence of $c$-minors of $\jac(\hh,i)$ obtained by
    successively adding the missing row and the missing columns of
    $\jac(\hh,i)$ to $\mu''$;
  \item if $\chi=(\hh,m)$ is a chart of $(V,S)$, then for every $c$-minor $m'$ 
    of $\jac(\hh)$,
    \[
    \Wchart(\chi,m',m'') =\Wchart(\chi,m',(-1)^{i(c-1)} \mu'') 
    \]
    which is $\left(mm' m'',(\hh, (-1)^{ic} H)\right)$ with $H$ as above.
 \end{enumerate}
 Assume, in addition, that $V$ is $d$-equidimensional, with $d=n-c$. Let $\bchi 
=
 (\chi_j)_{1\leq j\leq s}$ be an atlas of $(V,S)$, with $\chi_j=(\hh_j,m_j)$, 
and
 let $c$ be the common cardinality of the $\hh_j$'s. Then
 \begin{enumerate}
 \setcounter{enumi}{3}
\item $\Watlas(\bchi,V,S,\proj,i)$ is the sequence of all 
  $\Wchart(\chi_j,m',(-1)^{i(c-1)} \mu'')$, for $j\in\{1,\dotsc,s\}$ and for 
$m',\mu''$
  respectively a $c$-minor of $\jac(\hh_j)$ and a $(c-1)$-minor of
  $\jac(\hh_j,i)$ for which $\Ocal(m_jm' \mu'')\cap\Wproji[i][V]-S$ is not 
empty.
 \end{enumerate}
\end{lemma}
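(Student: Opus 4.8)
The plan is to reduce everything to the elementary linear algebra already isolated in Lemma~\ref{lem:minorssubmat} by recognizing $\jac([\hh,\proj[i]])$ as a matrix of the special block shape treated there. Concretely, $\jac(\proj[i])$ is the $i\times n$ matrix consisting of the first $i$ rows of the identity, so after reordering columns (moving columns $1,\dots,i$ to the front) the matrix $\jac([\hh,\proj[i]])$ becomes, up to a column permutation with a known sign, of the form $M_1$ (or $M_2$) of Lemma~\ref{lem:minorssubmat} with $q=i$, the identity block $I_i$ coming from $\jac(\proj[i])$, and $A$ the submatrix of $\jac(\hh)$ obtained by dropping its first $i$ columns — which is, by Definition~\ref{def:hphi} with the projection, precisely (the transpose-free analogue of) $\jac(\hh,i)$. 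First I would make this identification explicit and track the sign of the column permutation, which accounts for the factor $(-1)^{i(c-1)}$ in item 1 and the factor $(-1)^{ic}$ in item 2.

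Given that identification, item 1 is a direct instance of the equivalence ``$(1)\Leftrightarrow(2)$'' in Lemma~\ref{lem:minorssubmat}: a $(c+i-1)$-minor $m''$ of $\jac([\hh,\proj[i]])$ that contains all rows of $\jac(\proj[i])$ is exactly the determinant of a $(q+e)$-submatrix containing $I_q$ with $q=i$ and $e=c-1$, hence equals $(-1)^{i(c-1)}$ times a $(c-1)$-minor $\mu''$ of $A=\jac(\hh,i)$; and $m''=0$ is the only other possibility, matching the dichotomy in the statement. For item 2, the operation defining $\Hphi[\proj](\hh,i,m'')$ — successively adjoining the one missing row and a missing column of $\jac([\hh,\proj[i]])$ — corresponds, under the column reordering and using the index dictionary at the end of Lemma~\ref{lem:minorssubmat}, to adjoining the missing row and a missing column of $\jac(\hh,i)$ to $\mu''$; each of the $n-c-i+1$ resulting $(c+i)$-minors is then $(-1)^{ic}$ times the corresponding $c$-minor of $\jac(\hh,i)$ (the exponent increments from $i(c-1)$ to $ic$ because $e$ goes from $c-1$ to $c$), so the whole sequence is $(-1)^{ic}H$ with $H$ as described. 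Item 3 is then purely formal: substitute the expressions from items 1 and 2 into Definition~\ref{def:Wchart}, noting that the defining data $mm'm''$ is unchanged (it is $m''$ itself, not $\mu''$, that appears as a factor in the first component) and that multiplying the sequence of the second component by the global sign $(-1)^{ic}$ does not change the pair up to the irrelevant signs the text already flags.

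Finally, item 4 follows by unwinding the definition of $\Watlas$ together with items 1–3 and Lemma~\ref{lem:jacrankchart}. The definition of $\Watlas(\bchi,V,S,\proj,i)$ ranges over all $j$, all $c$-minors $m'$ of $\jac(\hh_j)$ and all $(c+i-1)$-minors $m''$ of $\jac(\hh_j,\proj[i])$ containing the rows of $\jac(\proj[i])$, keeping those for which $\Ocal(m_jm'm'')\cap\Wproji[i][V]-S$ is non-empty; by item 1 the non-zero such $m''$ are in sign-bijection with the $(c-1)$-minors $\mu''$ of $\jac(\hh_j,i)$, the chart $\Wchart(\chi_j,m',m'')$ equals $\Wchart(\chi_j,m',(-1)^{i(c-1)}\mu'')$ by item 3, and the non-emptiness condition $\Ocal(m_jm'm'')\cap\Wproji[i][V]-S\neq\emptyset$ is the same as $\Ocal(m_jm'\mu'')\cap\Wproji[i][V]-S\neq\emptyset$ since $m''$ and $\mu''$ differ only by sign and thus cut out the same open set. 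The zero minors $m''=0$ contribute nothing, so the two indexing sets coincide and the sequences agree.

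I expect the main obstacle to be purely bookkeeping: getting the column-permutation sign right so that it matches the stated exponents $i(c-1)$ and $ic$, and checking that Definition~\ref{def:hphi}'s ``add the missing row, then a missing column'' procedure on $\jac([\hh,\proj[i]])$ really does correspond, column for column under the reordering, to the analogous procedure on $\jac(\hh,i)$ — including the subtle point that once the identity rows are fixed there is a unique missing row but several missing columns, so one must argue the correspondence of columns respects the index dictionary from Lemma~\ref{lem:minorssubmat}. None of this is deep, but it is where a careless sign or index shift would creep in.
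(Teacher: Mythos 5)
Your proposal follows essentially the same route as the paper: identify $\jac([\hh,\proj[i]])$ with the block matrix $M_1$ of Lemma~\ref{lem:minorssubmat} (with $q=i$, $B=\jac_{x_1,\dots,x_i}(\hh)$ and $A=\jac(\hh,i)=\jac_{x_{i+1},\dots,x_n}(\hh)$), read off item~1 as the $(1)\Leftrightarrow(2)$ equivalence with $e=c-1$, item~2 with $e=c$ using the index dictionary, and then unwind the definitions for items~3 and~4. The one genuine error is at the very start: you say a column reordering is needed to bring $\jac([\hh,\proj[i]])$ into the form $M_1$ and that the sign of that permutation accounts for $(-1)^{i(c-1)}$. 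In fact no permutation is required --- the $\proj[i]$ rows are $[\,I_i~~\OO\,]$ with $I_i$ already occupying columns $1,\dots,i$, so $\jac_{x_1,\dots,x_n}(\hh,\proj[i])$ is \emph{literally} of the form $M_1$. The sign $(-1)^{i(c-1)}$ is entirely the $(-1)^{qe}$ of Lemma~\ref{lem:minorssubmat}, coming from the cofactor expansion along the $I_q$ block, not from any permutation; if you introduced a nontrivial permutation sign on top of that, you would double-count. Later in the same paragraph you do correctly attribute the sign to Lemma~\ref{lem:minorssubmat}, so the conclusion is right, but the setup sentence contradicts it and should be dropped. Aside from that, the ``subtle point'' you flag about correspondence of missing rows/columns is exactly what the paper resolves with the explicit index correspondence $k=k'$, $\ell_j=\ell'_j-i$ supplied by the last sentence of Lemma~\ref{lem:minorssubmat}, so your plan matches the paper's argument once that is filled in.
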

\begin{proof}
  According to Lemma~\ref{lem:minorssubmat}, up to the sign
  $(-1)^{i(c-1)}$, the $(c-1)$-minors of $\jac(\hh,i)$ are exactly the
  $(i+c-1)$-minors of $\jac(\hh,\proj[i])$ containing the identity
  matrix $I_i=\jac(\proj[i])$, since
 \[
  \jac_{x_1,\dotsc,x_n}(\hh,\proj[i]) =
  \begin{bmatrix}
   \jac_{x_1,\dotsc,x_i}(\hh) & \jac_{x_{i+1},\dotsc,x_n}(\hh)\\
   I_i & \OO
  \end{bmatrix}.
 \]
 Since $m''$ contains the rows of $\jac(\proj[i])=[I_i~~\OO]$, either it 
actually contains
 $I_i$ or it is zero, as a zero row appears. We assume the first case; then,
 by the discussion above, $\mu'' = (-1)^{i(c-1)} m''$ is the determinant of a 
$(c-1)$-submatrix
 $M$ of $\jac(\hh,i)=\jac_{x_{i+1},\dotsc,x_n}(\hh)$. 
 
 The row and columns of $\jac(\hh,i)$ that are not in $M$ have
 respective indices $1\leq k \leq c$ and $1\leq \ell_1\leq\dotsc\leq
 \ell_{n-i-c+1}\leq n$. Since $m''$  contains $I_i$,
 the rows and columns of $\jac(\hh,\proj[i])$ that are not in $m''$ have
 respective indices $1\leq k'\leq c$ and $i+1\leq \ell_1'\leq \dotsc\leq
 \ell_{n-c-i+1}'\leq n$. Then, according to Lemma~\ref{lem:minorssubmat}, for
 all $1\leq j \leq n-c-i+1$,
 \[
  k=k' \et \ell_j = \ell_j' - i.
 \]
 Hence, by Lemma~\ref{lem:minorssubmat}, the $(c+i)$-minors obtained by 
 adding the missing row and the missing columns of $\jac(\hh,\proj[i])$ to
 the submatrix used to define $m''$ are exactly the $c$-minors of $\jac(\hh,i)$
 obtained by adding the missing row and the missing columns of $\jac(\hh,i)$ to
 $\mu''$, up to a factor $(-1)^{ic}$. This gives the second statement.
 The third statement is then nothing but the definition of 
$\Wchart(\chi,m,m'')$.

 Finally, consider an atlas $\bchi$ of $(V,S)$. By
 Lemma~\ref{lem:minorssubmat}, for $j\in\{1,\dotsc,s\}$, all
 $(c-1)$-minors $\mu''$ of $\jac(\hh_j,i)$ are, up to sign,
 $(c+i-1)$-minors of $\jac(\hh_j,\proj[i])$ built with the rows of
 $\jac(\proj[i])$. Conversely, let $j \in \{1,\dotsc,s\}$, $m'$ be a
 $c$-minor of $\jac(\hh_j)$ and let $m''$ be a $(c+i-1)$-minor of
 $\jac(\hh_j,\proj[i])$ containing the rows of $\jac(\proj[i])$. Then
 either $m''=0$, so that $\Ocal(m'')$ and then
 $\Ocal(m_jm'm'')\cap\Wproji[i][V]-S$ is empty, or $\mu''=(-1)^{i(c-1)}m''$ is 
a $(c-1)$-minor
 of $\jac(\hh_j,i)$.
  Hence, according to the third item, for $j\in\{1,\dotsc,s\}$ and any
 $c$-minor $m'$ of $\jac(\hh_j)$, the sequences of
 \begin{itemize}
 \item all $\Wchart(\chi_j,m',m'')$ for every $(c+i-1)$-minor $m''$ of
   $\jac(\hh,\proj[i])$ containing the rows of $\jac(\proj[i])$, for which
   $\Ocal(m_jm'm'') \cap \Wproji[i][V] - S$ is not empty, and
 \item all $\Wchart(\chi_j,m',(-1)^{i(c-1)}m'')$ for every $(c-1)$-minor $\mu'$
   of $\jac(\hh_j,i)$ for which $\Ocal(m_jm'm'')\cap\Wproji[i][V]-S$ is not
   empty,
 \end{itemize}
 are equal to $\Watlas(\bchi,V,S,\proj,i)$.
\end{proof}

We can now state the main result of this subsection, which we prove in
Section~\ref{sec:polargen}. This is, a generalization of \cite[Proposition 
3.4]{SS2017} which only deals with the case of projections.
\begin{proposition}\label{prop:polargen}
  Let $V,S \subset \CC^n$ be two algebraic sets with $V$ $d$-equidimen\-sio\-nal and
  $S$ finite, and let $\bchi$ be an atlas of $(V,S)$. For $2\leq \ifi\leq d+1$, let 
$\bpolun = (\polun_1,\dotsc,\polun_{\ifi})$ and $\bpolde = 
(\polde_1,\dotsc,\polde_{\ifi})$, and for $1\leq j \leq \ifi$, let
  $\balf_j =(\alf_{j,1}, \ldots, \alf_{j,n})\in \CC^{n}$ and
  \[
    \phi_j(\XX,\balf_j) = \polun_j(\XX) + \sum_{k=1}^n \alf_{j,k}x_k +
    \polde_j(\balf_j) \in \CC[\XX].
  \]
where $\polun_j \in \CC[\XX]$ and $\polde_j\colon\CC^n\to\CC$ is a polynomial 
map, with coefficients in $\CC$. 

  There exists a non-empty Zariski open subset 
$\ZOpolar(\bchi,V,S,\bpolun,\bpolde,\ifi) \subset \CC^{\ifi n}$ such that for 
every $\balf \in \ZOpolar(\bchi,V,S,\bpolun,\bpolde,\ifi)$, writing $\bphi = 
\left(\phi_1(\XX,\balf),\dotsc,\phi_{\ifi}(\XX,\balf)\right)$, the following 
holds. For $i$ in $\{1,\dots,\ifi\}$, either $\Wphii$ is empty or
 \begin{enumerate}
 \item $\Wphii$ is an equidimensional algebraic set of dimension $i-1$;
 \item if $2 \leq i \leq (d+3)/2$, then $\Watlas(\bchi,V,S,\bphi,i)$ is an atlas
   of $(\Wphii,S)$\\ and $\sing(\Wphii) \subset S$.
 \end{enumerate}
\end{proposition}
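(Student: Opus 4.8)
The plan is to reduce the general statement to the projection case already handled by \cite[Proposition 3.4]{SS2017}, via a linear change of coordinates. The key observation is that, for a \emph{generic} choice of $\balf \in \CC^{\ifi n}$, the linear parts of $\phi_1(\XX,\balf),\dotsc,\phi_\ifi(\XX,\balf)$ are linearly independent (this is a Zariski-open condition on $\balf$), so one may complete them to a linear system of coordinates $\bm{y} = (y_1,\dotsc,y_n)$ on $\CC^n$, in which $\phi_j$ becomes $y_j$ plus a polynomial in $y_{\ifi+1},\dotsc,y_n$ only (after absorbing the nonlinear parts $\polun_j,\polde_j$, which depend only on the remaining coordinates once we subtract off the generic linear forms — here one uses $\ifi \le d+1 \le n$). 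Actually, to be careful: the nonlinear term $\polun_j$ is fixed and not absorbed, so the reduction is not literally to coordinate projections but to a map of the form ``generic affine-linear forms, possibly shifted by fixed polynomials''. So rather than forcing everything into \cite{SS2017} verbatim, I would instead reprove the statement directly, using Proposition \ref{prop:fibergenalg} and the chart/atlas machinery of Section \ref{ssec:polarvar}, following the structure of the proof of \cite[Proposition 3.4]{SS2017} but with the generalized charts $\Watlas(\bchi,V,S,\bphi,i)$ in place of the projection ones.

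Concretely, the steps would be: (1) \textbf{Equidimensionality (item 1).} Fix a chart $\chi = (m,\hh)$ of $(V,S)$ with $\hh$ of length $c = n-d$. By Lemma \ref{lem:polaronchart}, on $\Ocal(m) - S$ the set $\Wphii$ coincides with $\Wophii[i][\Vreg(\hh)]$, the critical locus of $\map[i]$ on the smooth complete intersection $\Vreg(\hh)$. Then invoke Proposition \ref{prop:fibergenalg} applied to the equidimensional set $\V(\hh)$ restricted appropriately — or, better, apply it chart by chart to the complete intersections $\Vreg(\hh_j)$ — to get a dense open $\ZOnoether \subset \CC^{\ifi n}$ over which each $\Wphii$ is empty or $(i-1)$-equidimensional, the restriction of $\map[i-1]$ is Zariski-closed, and the fibers of $\Kphii$ are finite. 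Intersecting over the finitely many charts $\chi_j$ of $\bchi$ gives a dense open set. Since the charts cover $V-S$ and $S$ is finite (hence of dimension $\le 0 < i-1$ once $i \ge 2$, or handled separately when $i-1 = 0$), global equidimensionality of $\Wphii$ follows by gluing the local pieces and checking that no extra component hides inside $S$.

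(2) \textbf{The atlas property and smoothness (item 2), for $2 \le i \le (d+3)/2$.} Here I would show that $\Watlas(\bchi,V,S,\bphi,i)$ is an atlas of $(\Wphii,S)$: by Lemma \ref{lem:jacrankchart}, on each $\Ocal(m_j) \cap V - S$ a point lies in $\Wphii$ iff $\jac_\yy([\hh_j,\map[i]])$ drops rank, i.e.\ all its $(c+i)$-minors vanish; Definition \ref{def:hphi}–\ref{def:Wchart} packages, for each choice of a nonvanishing $(c+i-1)$-minor $m''$ containing the rows of $\jac(\map[i])$, a \emph{complete-intersection} local description of $\Wphii$ by $(\hh_j,\Hphi(\hh_j,i,m''))$ on the locus $\Ocal(m_j m' m'')$. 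One checks (a) these opens cover $\Wphii - S$ — because at a regular point of $\Wphii$ the relevant minors cannot all vanish, using the dimension bound $i \le (d+3)/2$ to guarantee that the submatrix $\jac(\map[i])$ itself has full rank $i$ generically, which is exactly what forces $m''$ to contain those rows; and (b) the Jacobian criterion $(\sfC_4)$ for the new chart holds, which amounts to a generic-rank computation for the minors-of-minors matrix — this is where Proposition \ref{prop:fibergenalg} item 1 (equidimensionality, hence the expected codimension) plus a Bertini/Sard-type genericity argument on $\balf$ enters. Finally $\sing(\Wphii) \subset S$ follows because $\Watlas$ is then an atlas of $(\Wphii, S)$ and, by \cite[Lemma A.8]{SS2017}, any point in the domain of a chart is a regular point.

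The main obstacle I anticipate is step (2)(a)–(b): proving that the generalized charts actually \emph{cover} $\Wphii - S$ and simultaneously satisfy the full-rank condition $(\sfC_4)$. In the projection case of \cite{SS2017} this rests on the explicit block structure $\jac(\hh,\proj[i]) = \bigl[\begin{smallmatrix} * & * \\ I_i & \OO\end{smallmatrix}\bigr]$ exploited in Lemma \ref{lem:chartaltlasWproj}; for a general $\bphi$ of the stated affine-linear-plus-fixed-polynomial shape, the block $\jac(\map[i])$ is no longer the identity, so one must first argue that, after the generic linear change of variables making the linear parts independent, $\jac(\map[i])$ has full row rank $i$ on $\Ocal(m_j) \cap V - S$ (away from a further proper closed subset absorbed into the bad locus) — this is precisely the content of the hypothesis $i \le (d+3)/2$ combined with Proposition \ref{prop:fibergenalg} — and then transport Lemma \ref{lem:minorssubmat}/\ref{lem:ranksubmatrix} through that change of coordinates. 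I would handle the degenerate boundary cases $i = 1$ (where $\Wphii[1] = \emptyset$ by convention, so nothing to prove) and $i - 1 = 0$ (where ``$0$-equidimensional'' just means nonempty finite, compared against the finite set $S$) separately at the start to keep the main argument clean.
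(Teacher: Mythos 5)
Your high-level blueprint — work chart by chart via Lemma~\ref{lem:polaronchart}, establish a local dimension/rank statement, then glue to get the atlas — does match the paper's architecture. But the concrete mechanism you propose for the local genericity does not hold up, for two reasons. First, Proposition~\ref{prop:fibergenalg} is not available here: its $\bphi$ only allows a fixed polynomial $\polun$ in the \emph{first} component and pure linear forms in $\phi_2,\dotsc,\phi_\ifi$, whereas Proposition~\ref{prop:polargen} allows arbitrary fixed polynomials $\polun_j$ in every component (and Proposition~\ref{prop:fibergenalg} moreover requires $\sing(V)$ finite, which is true here but for a different reason, via~\cite[Lemma A.8]{SS2017}). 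Second, the "transport through a generic linear change of variables" you invoke for $\jac(\map[i])$ fails precisely because the rows of $\jac(\map[i])$ contain $\partial_{x_\ell}\polun_j(\XX)$, so this block is not constant and does not become an identity block in any linear coordinate system. The paper therefore proves a \emph{new} local genericity result (Proposition~\ref{prop:localgenericityrankgen}), whose engine is a Thom weak-transversality argument applied to an auxiliary Lagrange-multiplier map $\Phi$ (Lemma~\ref{lem:rankjacPhigen}) plus an inductive rank-defect result (Proposition~\ref{prop:generalrankdefectgen}); the chart condition $(\sfC_4)$ is then extracted by an explicit computation with Lagrange multipliers and Schur complements (Lemma~\ref{lem:lagrange}). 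Your appeal to "Bertini/Sard-type genericity" does not substitute for these — the issue is exactly that one must exhibit a \emph{single} dense open in $\CC^{\ifi n}$ that simultaneously controls all minors appearing in all generalized charts, and that requires the transversality calculation on $\Phi$, not a cardinality-of-fibers argument.

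Two further gaps. You do not address the \emph{lower} bound on $\dim\Wphii$: the transversality argument only gives $\dim \leq i-1$, and the matching lower bound needs Lemma~\ref{lem:lowerboundglobal}, which is an Eagon–Northcott codimension estimate for the determinantal ideal of $(c+i)$-minors — without this you cannot conclude equidimensionality, only a dimension upper bound. Finally, your claim that $\Wphii[1]=\emptyset$ by convention is an off-by-one error: the paper's convention sets $\Wphii[0]=\emptyset$, while $\Wphii[1]$ is the genuine critical locus of $\phi_1$ on $V$ and its finiteness (item~1 for $i=1$) is part of what has to be proved.
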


We end this subsection with a statement we use further for the proof of our 
main algorithm; it addresses the special case $i=2$.
\begin{proposition}\label{prop:critpolargen}
  Let $V\subset \CC^n$ be a $d$-equidimensional algebraic set with $d\geq 1$ 
and $\sing(V)$ finite. Let $\polun \in \CC[\XX]$, and for $i\in\{1,2\}$, let 
$\balf_i=(\alf_{i,1}, \ldots, \alf_{i,n})$ in $\CC^{n}$ and
 \begin{align*}
   \phi_1(\XX,\balf_1) = \polun(\XX) + 
   \sum_{k=1}^n \alf_{1,k}x_k  
   \et
   \phi_2(\XX,\balf_2) = \sum_{k=1}^n \alf_{2,k}x_k.
 \end{align*}
 Then, there exists a non-empty Zariski open subset
 $\ZOcritpolar(V,\polun) \subset \CC^{2n}$ such that for every $\balf = 
(\balf_1, \balf_2) \in \ZOcritpolar(V,\polun)$, and $\bphi = \left(\phi_1(\XX, 
\balf_1),\phi_{2}(\XX, \balf_2)\right)$, the following holds. Either 
$\Wphii[2]$ is empty or
 \begin{enumerate}
 \item $\Wphii[2]$ is $1$-equidimensional; 
 \item the sets $\WWophii[1][2]$, $\WWphii[1][2]$ and $\KWphii[1][2]$ are 
finite.
\end{enumerate}
\end{proposition}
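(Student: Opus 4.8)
The plan is to deduce everything from Proposition~\ref{prop:fibergenalg} specialized to $\ifi=2$. This specialization is legitimate: one has $2\le\ifi=2\le d+1$ since $d\ge 1$, and the polynomials $\phi_1(\XX,\balf_1)=\polun(\XX)+\sum_k\alf_{1,k}x_k$ and $\phi_2(\XX,\balf_2)=\sum_k\alf_{2,k}x_k$ are precisely of the form required there. So I would set $\ZOcritpolar(V,\polun):=\ZOnoether(V,\polun,2)$, fix $\balf=(\balf_1,\balf_2)$ in this Zariski-open subset of $\CC^{2n}$, put $\bphi=(\phi_1(\XX,\balf_1),\phi_2(\XX,\balf_2))$, and assume $\Wphii[2]\neq\emptyset$, since otherwise there is nothing to prove. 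Then item~(1) of the conclusion is exactly item~(1) of Proposition~\ref{prop:fibergenalg} for $i=2$: $\Wphii[2]$ is $1$-equidimensional, i.e.\ a curve.

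For item~(2) it is enough to prove that $\WWophii[1][2]$ is finite: its Zariski closure $\WWphii[1][2]$ is then finite, and $\KWphii[1][2]=\WWophii[1][2]\cup\sing(\Wphii[2])$ is finite too, because $\sing(\Wphii[2])$, of dimension $<\dim\Wphii[2]=1$, is finite. Assume for contradiction that $\WWophii[1][2]$ is infinite. As an infinite subset of the curve $\Wphii[2]$ it is Zariski-dense in some irreducible component $C$ of $\Wphii[2]$, and then $\WWophii[1][2]\cap C$ is dense in $C$, as is its intersection with the dense open subset of $C$ consisting of smooth points of $C$ that lie on no other component of $\Wphii[2]$. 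At every point $\yy$ of this last (dense) set one has $\yy\in\reg(\Wphii[2])$ and $T_\yy\Wphii[2]=T_\yy C$, whereas $\yy\in\WWophii[1][2]$ means that $d_\yy\map[1]$ vanishes on $T_\yy\Wphii[2]$, hence on $T_\yy C$; since $\CC$ has characteristic zero, a map whose differential vanishes on a Zariski-dense subset of an irreducible curve is constant on it, so $\map[1]$ is constant on $C$, say equal to $\zz\in\CC^1$.

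It remains to reach a contradiction. Since $\Wphii[2]$ is by definition the Zariski closure of $\Wophii[2]$, the component $C$ meets $\Wophii[2]$ in a dense, hence infinite, subset; this subset is contained in $\map[1][-1](\zz)$ (because $\map[1]$ is identically $\zz$ on $C$) and in $\Wophii[2]\subset\Kphii[2]$. Therefore $\Kphii[2]\cap\map[1][-1](\zz)$ is infinite, which contradicts item~(3) of Proposition~\ref{prop:fibergenalg} applied with $i=2$ and this value $\zz$. Hence $\WWophii[1][2]$ is finite, completing the argument. The only delicate link in this chain is the passage from the infiniteness of $\WWophii[1][2]$ to the constancy of $\map[1]$ on some irreducible component of $\Wphii[2]$, and then to the presence of a whole curve inside a single fiber of $\map[1]$, which is what lets us invoke the fiber-finiteness clause of Proposition~\ref{prop:fibergenalg}; everything else is a direct quotation of that proposition plus the elementary observations that the singular locus of a curve is finite and that the Zariski closure of a finite set is finite.
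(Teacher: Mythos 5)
Your proof is correct and takes a genuinely different route from the paper's. The paper defines $\ZOcritpolar(V,\polun)$ as the intersection of $\ZOnoether(V,\polun,2)$ \emph{and} $\ZOpolar(\bchi,V,\sing(V),(\polun,0),(0),2)$; it then draws the $1$-equidimensionality of $W_2:=\Wphii[2]$ and the inclusion $\sing(W_2)\subset\sing(V)$ from Proposition~\ref{prop:polargen}, bounds $\mapun(\Wphii[1][W_2])$ by the algebraic Sard lemma \cite[Prop.~B.2]{SS2017}, bounds each fiber $W_2\cap\mapun^{-1}(z)$ by Proposition~\ref{prop:fibergenalg}, and deduces finiteness as a finite union of finite sets. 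You get away with $\ZOcritpolar:=\ZOnoether(V,\polun,2)$ alone: equidimensionality of $W_2$ is item~(1) of Proposition~\ref{prop:fibergenalg} at $i=2$; $\sing(W_2)$ is automatically finite because a reduced $1$-equidimensional set has a zero-dimensional singular locus (so you never need $\sing(W_2)\subset\sing(V)$, hence never invoke Proposition~\ref{prop:polargen}); and you replace Sard by the elementary observation that if $\Wophii[1][W_2]$ were infinite, then $d(\mapun|_C)$ would vanish on a Zariski-dense subset of some component $C$ of $W_2$, forcing $\mapun$ to be constant $=z$ on $C$ (characteristic zero), so $C\cap\Wophii[2]\subset\Kphii[2]\cap\mapun^{-1}(z)$ would be infinite, contradicting item~(3) of Proposition~\ref{prop:fibergenalg}. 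Your approach is more self-contained and, in this low-dimensional setting, slightly slicker (the Sard-type reasoning and the fiber bound are fused into one contradiction); the paper's route is more systematic in that it reuses the $\ZOpolar$/atlas machinery already needed elsewhere in the argument, keeping the genericity hypotheses aligned with those used in the main theorem.
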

\begin{proof} 
  Let $\bchi$ be an atlas of $(V,\sing(V))$, as obtained by applying \cite[Lemma
  A.13]{SS2017}. Let $\ZOcritpolar(V,\bpolun)$ be the intersection of the
  non-empty Zariski open subsets $\ZOnoether(V,\polun,2)$ and
  $\ZOpolar(\bchi,V,\sing(V),(\polun,0),(0),2)$ of $\CC^{2n}$, obtained by 
  applying Propositions~\ref{prop:fibergenalg} and \ref{prop:polargen} 
  respectively (recall that we assume $d\geq 1$).
  From now on, choose $\balf = (\balf_1, 
  \balf_2) \in \ZOcritpolar(V,\polun)$ and let $\bphi = \left(\phi_1(\XX, 
  \balf_1),\phi_{2}(\XX, \balf_2)\right)$. In the following, we denote 
  $\Wphii[2]$ by $W_2$. Suppose $W_2$ is non-empty, otherwise the result 
  trivially holds.
  
  Since $\balf \in \ZOpolar(\bchi,V,\sing(V),(\bpolun,0),(0),2)$ and $2\leq 
(d+3)/2$ for $d\geq 1$, then, by Proposition~\ref{prop:polargen}, $W_2$ is 
equidimensional of dimension $1$ and $\sing(W_2)\subset \sing(V)$ is finite. 
Hence, $K_W =\Wphii[1][W_2]$ is well defined and the following inclusion holds
\[
 K_W \subset \bigcup_{z \in \mapun(K_W)}  W_2 \cap \mapunrec(z)
\]
By an algebraic version of Sard's lemma from \cite[Proposition B.2]{SS2017}, we
deduce that $\mapun(\Wphii[1][W_2])$ is finite. Besides, since $\balf \in
\ZOnoether(V,\polun,2)$ then, by Proposition~\ref{prop:fibergenalg},
$\mapunrec(z) \cap W_2$ is finite for any $z \in \CC$.

Hence, as a set contained in a finite union of finite sets, $K_W$ is finite, and
so are $\Wophii[1][W_2]$ and $\Kphii[1][W_2] = K_W \cup \sing(W_2)$.
\end{proof}

\subsection{Charts and atlases for fibers of polynomial maps}
We now study the regularity and dimensions of fibers of some generic polynomial
maps over algebraic sets. The construction we introduce below is quite similar
to the one in \cite{SS2017}, but a bit more general.

\begin{definition}\label{def:Fatlas}
  Let $V,S \subset \CC^n$ be two algebraic sets with $V$ $d$-equidimensional, and let
  $\bchi = (\chi_j)_{1\leq j\leq s}$ be an atlas of $(V,S)$. Let $1\leq e \leq 
\ifi \leq n+1$ be integers and $\bphi=(\phi_1,\dotsc,\phi_{\ifi})\subset 
\CC[\XX]$. For $Q \subset \CC^{e}$ we define $\Fatlas(\bchi,V,Q,S,\bphi)$ as 
the sequence of all $\chi_j=(m_j,\hh_j)$ such that $\Ocal(m_j) \cap F_Q - S_Q$ 
is not empty, where
 \[
    F_Q = \mapfbr{V} \et S_Q = \mapfbr{\big(S\cup\Wphii[e]\big)}.
 \]
\end{definition}
The above definition is a direct generalization of \cite[Definition 
3.6]{SS2017}, where $\bphi=(x_1,\dotsc,x_n)$.
The main result of this subsection is the following proposition, which we prove
in Section~\ref{sec:fibergen}.
\begin{proposition}\label{prop:dimfiber}
 Let $V,S \subset \CC^n$ be two algebraic sets with $V$ $d$-equidimen\-sio\-nal and 
$S$ finite. Let $\bchi$ be an atlas of $(V,S)$.
 Let $2\leq \ifi \leq d+1$ and $\bphi=(\phi_1,\dotsc,\phi_{\ifi}) \subset 
\CC[\XX]$. For $2\leq j \leq d$, let $\balf_j = (\alf_{j,1}, \ldots, 
\alf_{j,n})\in \CC^{n}$ and 
 \begin{align*}
  \phi_1(\XX,\balf_1) = \polun(\XX) + 
  \sum_{k=1}^n \alf_{1,k}x_k
  \et
  \phi_j(\XX,\balf_j) = \sum_{k=1}^n \alf_{j,k}x_k 
 \end{align*}
 where $\polun \in \CC[\XX]$.

 There exists a non-empty Zariski open subset $\ZOfiber(\bchi,V,S,\polun,\ifi)
 \subset \CC^{\ifi n}$ such that for every $\balf = (\balf_1, \ldots, 
   \balf_{\ifi}) \in \ZOfiber(\bchi,V,S,\polun,\ifi)$ and writing
       \[\bphi = 
         \left(\phi_1(\XX, \balf_1), \dotsc,\phi_{\ifi}(\XX,
             \balf_{\ifi})\right),
           \] the following holds. Let $0 \leq e \leq d$,
           $Q \in \CC^e$ a finite subset and $F_Q$ and $S_Q$ be as in
           Definition~\ref{def:Fatlas}. Then either $F_Q$ is empty or
 \begin{enumerate}
 \item $S_Q$ is finite;
  \item $V_Q$ is an equidimensional algebraic set of dimension $d-e$;
  \item $\Fatlas(\bchi,V,Q,S,\bphi)$ is an atlas of $(F_Q,S_Q)$ and $\sing(F_Q) 
\subset S_Q$.
 \end{enumerate}
\end{proposition}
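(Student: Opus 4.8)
The plan is to define the required Zariski-open set as a finite intersection of several open sets, and then to verify the three conclusions fiber by fiber. First I would set up the main genericity ingredient: apply Proposition~\ref{prop:fibergenalg} to obtain $\ZOnoether(V,\polun,\ifi)$, which guarantees that for generic $\balf$, the generalized polar varieties $\Wphii[e]$ are either empty or $(e-1)$-equidimensional, that $\map[e-1]$ restricts to a closed map on them, and that $\map[e-1]$-fibers of $\Kphii[e]$ are finite. Since here $e$ ranges in $\{0,\dots,d\}$ and $\map[e]$-fibers of $\Wphii[e]$ sit inside $\map[e-1]$-fibers of $\Kphii[e]$, this already yields finiteness of $\Wphii[e]\cap\map[e][-1](Q)$ for finite $Q$, and hence finiteness of $S_Q$ once we also know $S$ is finite and $\map[e]$ restricted to a finite set is finite. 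That takes care of conclusion~1.

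Next, for conclusion~2, I would prove the dimension statement $\dim V_Q = d-e$ by a standard fiber-dimension argument. Applying the theorem on the dimension of fibers (e.g.\ \cite[Ch.~I, \S6]{Sh1994}) to the restriction $\map[e]\colon V\to\CC^e$: a generic $\balf$ makes the linear forms $\phi_2,\dots,\phi_{\ifi}$ (together with $\phi_1$) sufficiently generic that $\map[e]$ is dominant onto an $e$-dimensional subset of $\CC^e$ on each irreducible component of $V$, so that fibers over points $\yy\in Q$ in the image have pure dimension $d-e$; points of $Q$ not in the image contribute nothing. The genericity needed here can be absorbed into a new open set, or folded into $\ZOnoether$; I would invoke, if available, a lemma from \cite{SS2017} (their Lemma on fiber dimensions for generic linear sections) rather than reprove it, since the only new feature is that $\phi_1$ is an affine-linear shift of a fixed $\polun$, which does not affect the dimension count once one restricts to the open set where the Jacobian of $[\,\hh,\map[e]\,]$ has full rank somewhere on each component.

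For conclusion~3, that $\Fatlas(\bchi,V,Q,S,\bphi)$ is an atlas of $(F_Q,S_Q)$ and $\sing(F_Q)\subset S_Q$, I would argue chart by chart. Fix $\chi_j=(m_j,\hh_j)$ in the atlas $\bchi$ of $(V,S)$, with $\hh_j$ of length $c=n-d$. On $\Ocal(m_j)\cap V-S$ we know $V$ is cut out by $\hh_j$ with $\jac(\hh_j)$ of full rank $c$, and $V$ is smooth there. I claim that, for generic $\balf$, on $\Ocal(m_j)\cap F_Q - S_Q$ the matrix $\jac_\yy([\,\hh_j,\map[e]\,])$ has full rank $c+e$: indeed, failure of this rank condition is exactly the condition $\yy\in\Wphii[e]$ by Lemma~\ref{lem:jacrankchart} (applied with $i=e$), and such points have been excluded since $S_Q$ contains $\mapfbr{\Wphii[e]}$. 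This gives conditions $(\sfC_3)$ and $(\sfC_4)$ for the pair $(m_j,(\hh_j,\dots))$ — actually one checks that $\chi_j$ itself, viewed relative to $(\bphi,Q)$, satisfies $(\sfC_2)$ because $\Ocal(m_j)\cap F_Q - S_Q = \Ocal(m_j)\cap \mapfbr{\V(\hh_j)} - S_Q$ follows by intersecting the chart identity $(\sfC_2)$ for $\chi_j$ with $\map[e][-1](Q)$. Then $(\sfC_1)$ holds for precisely the $\chi_j$ retained in $\Fatlas$ by definition, and the covering property $(\sfA_3)$ for $F_Q-S_Q$ follows from the covering property of $\bchi$ for $V-S$ by restriction. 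The smoothness $\sing(F_Q)\subset S_Q$ then follows because at any point of $\Ocal(m_j)\cap F_Q - S_Q$ the full-rank Jacobian of $[\,\hh_j,\map[e]\,]$ exhibits $F_Q$ locally as a smooth complete intersection, and the $\Ocal(m_j)$ cover $F_Q - S_Q$.

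The main obstacle I expect is the dimension claim in conclusion~2 together with the verification that $\map[e][-1](\yy)$ meets every component of $V$ in the right dimension for $\yy\in Q$ in the image — i.e.\ ensuring the open set can be chosen uniformly in the finite set $Q$, which is handled by noting that the relevant genericity conditions on $\balf$ (dominance and generic smoothness of $\map[e]$ on each component of $V$, full-rank of $\jac([\,\hh_j,\map[e]\,])$ off $\Wphii[e]$) are closed conditions on $\balf$ whose complements are proper, hence their union over the finitely many components and finitely many charts is still contained in a proper closed subset; the point $Q$ only enters afterwards, through $\map[e][-1](Q)$, and does not affect which $\balf$ work. Defining $\ZOfiber(\bchi,V,S,\polun,\ifi)$ as the intersection of $\ZOnoether(V,\polun,\ifi)$ with these finitely many additional dense open sets then completes the proof.
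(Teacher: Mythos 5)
Your overall structure matches the paper's: both start from $\ZOnoether(V,\polun,\ifi)$ of Proposition~\ref{prop:fibergenalg} for the finiteness of $S_Q$, then verify chart by chart that each $\chi_j$ with $\Ocal(m_j)\cap F_Q-S_Q\neq\emptyset$ is a chart of $(F_Q,Q,S_Q,\bphi)$, using Lemma~\ref{lem:jacrankchart} to get the full rank of $\jac([\hh_j,\map[e]])$ off $S_Q$.

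There is, however, a gap in your treatment of conclusion~2. You want to conclude that fibers over points $\yy\in Q$ in the image of $\map[e]$ have pure dimension $d-e$ by making $\map[e]$ dominant on each component of $V$ and invoking the theorem on fiber dimensions. But that theorem gives purity only for \emph{generic} $\yy$; $Q$ is an arbitrary finite set, and no amount of additional genericity in $\balf$ can force every fiber $\map[e][-1](\yy)\cap V$ to have the expected dimension (the critical locus $\Wphii[e]$ is nonempty for generic $\balf$, so critical values exist). The correct argument, which the paper uses and which you actually already supply in your verification of conclusion~3, is two-sided: Krull's principal ideal theorem (equivalently, the lower-bound half of the fiber-dimension theorem) gives $\dim\geq d-e$ for every component of $F_Q$; the local complete-intersection structure exhibited by the chart $\chi_j$ on $\Ocal(m_j)\cap F_Q-S_Q$, where $\jac([\hh_j,\map[e]])$ has full rank $c+e$, gives $\dim\leq d-e$ and smoothness off $S_Q$ simultaneously (cf.\ \cite[Lemma A.11]{SS2017}). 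Conclusions~2 and~3 are therefore proved together, not sequentially, and no genericity conditions beyond membership in $\ZOnoether(V,\polun,\ifi)$ are needed; the dominance conditions you propose to add are superfluous and would not close the gap in any case.
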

 \section{The algorithm}

\subsection{Overall description}

Recall that $\XX$ denotes a sequence of $n$ indeterminates $x_1,\dotsc,x_n$. 
In this document, we also consider a family $\AA = (a_{i,j})_{1\leq i,j \leq 
n}$ of $n^2$ new indeterminates, which stand for generic parameters. For $1\leq 
i,j \leq n$, we note $a_i = (a_{i,1},\dotsc,a_{i,n})$, so that 
$\AA_{\leq i}$ represents the subfamily $(a_1,\dotsc,a_i)$. An element 
$\balf 
\in \CC^{in}$ will often be represented as a vector of length $i$ of the form 
$(\balf_1,\dotsc,\balf_i)$, with all $\balf_j = 
(\alf_{j,1},\dotsc,\alf_{j,n}) 
\in \CC^n$. 

Then, as suggested by Propositions~\ref{prop:fibergenalg}, 
\ref{prop:polargen}, \ref{prop:critpolargen} and \ref{prop:dimfiber}, we will 
consider polynomials of the form:
 \begin{equation} \label{eqn:defvphi}
  \vphi_i(\XX,a_i) = \polun_i(\XX) + 
  \sum_{j=1}^n a_{i,j}x_j  +
  \polde_i(a_i) \in \RR[\XX,\AA].
 \end{equation}
where $1\leq i \leq n$, $\polun_i \in \RR[\XX]$ and $\polde_i \in \RR[\AA]$.
We will choose $\polun_i$ so that the polynomial map $\vphi_i$ inherits some 
useful properties. For instance, taking $\polun_i = x_1^2 + \cdots +x_n^2$, for
any $\balpha_i$ in $\RR^n$, the polynomial map associated to 
$\vphi_i(\XX,\balpha_i)$ is proper and bounded from below on $\RR^n$.

Hereafter, we describe, on an example, the core idea of the the strategy that 
guided the design of our algorithm and the choice of data structures.
\begin{example}\label{exa:algo}
Consider the algebraic set $V=\V(f) \subset \CCo^4$ defined as the vanishing 
locus of the polynomial 
\[
  f = \sum_{i=1}^4(x_i^3 - x_i) - 1 \in 
\QQo[x_1,x_2,x_3,x_4].
\]

We want to compute a roadmap of $(V,\emptyset)$ (or simply $V$). 
Following the strategy we designed in our previous work (see \cite[Section 
5]{PSS2024}), $V$ must satisfy some regularity properties, that is
\begin{enumerate}[label=$(\sfH_{\arabic*})$]
 \item $V$ is $d$-equidimensional, $d\geq 2$, and $\sing(V)$ is finite.
\end{enumerate}
The first part of the assumption can be satisfied by computing an 
equidimen\-sio\-nal decomposition of $V$, which can be done within the complexity bounds 
considered in this work (see e.g. \cite{Le2003} for the best-known 
complexity bound for a probabilistic algorithm). However, it is worth noting 
that this increases the degrees of the generators.

The condition $d\geq2$ is not restrictive, as the case $d=1$ is trivial 
for roadmap computations.
The smoothness assumption is more restrictive. Indeed, it can be satisfied 
using deformation techniques, such as done in \cite{BR2014,BRSS2014}, but these 
steps would not fit, as such, in our complexity bounds. 

Let us check that, in our example, $V$ satisfies $\sfH_1$. We will
describe further a subroutine \SingPts, to compute $\sing(V)$ as long
as this holds.
\begin{enumerate}[label=Checking $(\sfH_{\arabic*}).$]
 \item As an hypersurface, $V$ is irreducible, and then equidimensional, of 
dimension $3$.
The partial derivatives of $f$, $\frac{\partial f}{\partial x_i} = 3x_i^2-1$, 
for $1\leq i\leq 4$, do not simultaneously vanish on $V$.
Hence, $\sing(V)=\emptyset$, and $V$ satisfies assumption $(\sfH_1)$.
\end{enumerate}

Following a particular case of \cite[Section 5]{PSS2024}, we want to 
choose a sequence of polynomial $\map=(\map[1],\dotsc,\map[n])$ in 
$\QQ[\XX]$ such that the following holds:
\begin{enumerate}[label=$(\sfH_{\arabic*})$]
 \setcounter{enumi}{1}
 \item the restriction of $\map[1]$ to $V\cap\RRo^n$ is proper and bounded 
below;
 \item $W_2=\Wphii[2]$ is $1$ equidimensional, and smooth outside $\sing(V)$;
 \item for any $z \in \CC$, $\mapfbreq{V}[z][1]$ is 
$(d-1)$-equidimensional;
 \item $\Kphii[1][W_2]$ is finite. 
\end{enumerate}
In addition, let $K = \Kphii[1][W_2] \cup \sing(V)$ and $F = V \cap 
\mapun[-1](\mapun(K))$. We require that
\begin{enumerate}[label=$(\sfH_{\arabic*})$]
 \setcounter{enumi}{5}
 \item $\Pcal_W=F \cap W_2$ is finite.
\end{enumerate}
Then, under the above assumptions  if $\Rcal_F$ 
is a roadmap of $(F,\Pcal_W)$, then $W_2 \cup \Rcal_F$ is a roadmap of $V$.
This statement is a consequence of both \cite[Proposition 2]{SS2011} and 
\cite[Theorem 1.1]{PSS2024}, and will be properly stated and proved in 
Proposition~\ref{prop:connectresult}. This splits the problem of computing a 
roadmap of $V$ into the computation of representations of $W_2$, 
$F$ and $\Pcal_W$, and a roadmap of $(F,\Pcal_W)$.
Since $F \cap\RRo^n$ is bounded, by assumption $(\sfH_2)$, the latter 
computation can be 
done using the algorithm of \cite{SS2017}.

We describe this process more precisely with our example. Each 
step consisting in checking the assumptions, and computing the associated 
objects.
\begin{enumerate}[label=Checking $(\sfH_{\arabic*})$.]
\item[Checking $(\sfH_{2/3})$.] Set first $\map =
  \left(\sum_{i=1}^4 x_i^2,\; x_2,\; x_3,\; x_4\right)$. The
  restriction of $\mapun$ to $V\cap\RR^4$ is proper and non-negative.
  We could then compute a representation of $W_2=\Wphii[2]$, before computing
  one for its singular locus $\sing(W_2)$.  However, the latter
  singular set is not empty, while $\sing(V)$ is.  This contradicts
  the assumptions needed in \cite[Theorem 1.1]{PSS2024} and the
  strategy for computing a roadmap of $V$ designed in \cite[Section
    5]{PSS2024} might fail.

  Following Propositions~\ref{prop:fibergenalg}, \ref{prop:polargen},
  \ref{prop:critpolargen} and \ref{prop:dimfiber} from the
preliminaries, we propose the following.  To prevent these regularity
failures, and to satisfy all assumptions of \cite[Theorem
  1.1]{PSS2024}, while keeping the properties of $\bphi$, we add to
$\mapun$ a linear form; here we take $x_1-x_4$, but in general it should be taken with
random coefficients.

Hence, consider now the sequence $\map$ of polynomials maps
\[
 \map = \left(\sum_{i=1}^4 x_i^2 + x_1 - x_4,\; x_2,\; x_3,\; x_4\right),
\]
whose restriction to $\RR^4$ is still proper and bounded below, by 
construction.
If the linear form we added has been sufficiently randomly chosen, 
Proposition~\ref{prop:polargen} claims that $W_2$ satisfies assumption 
$(\sfH_3)$.

Using Gr\"obner basis computations on a determinantal ideal defining $W_2$, we 
compute a representation of $W_2$, and next $\sing(W_2)$, that turns out to be 
empty, as requested.
More generally, computing the two previous sets efficiently is the purpose of 
the algorithm \SolvePolar, presented further in Lemma~\ref{lem:solvepolar}.
\begin{comment}
\end{comment}
\setenumi{3}
\item By Proposition~\ref{prop:dimfiber}, this
  assumption holds if we have added to $\map$ a linear form that is
  generic enough.  Using the Jacobian criterion, we can check that in
  our case, for any $z \in \CCo$, the fiber $F_{z}=V\cap\mapunrec(z)$
  is an equidimensional algebraic set of dimension $2$ (if it is not
  empty). Moreover the singular locus of $F_{z}$ is contained in the
  finite set $\Wphii[1][V]$. Computing the latter set is tackled by
  the subroutine \Crit, presented in Lemma~\ref{lem:crit}.

\item We also need to check the finiteness and compute the set 
$\Kphii[1][W_2]$.
If $\map$ is generic enough, the finiteness is ensured by
Proposition~\ref{prop:critpolargen}; computing this set is the purpose
of the algorithm \CritPolar, presented in Lemma~\ref{lem:fiberpolar}.
In our case, there are finitely many (more precisely $129$) such
points, and $23$ of them are real.

\item We need to compute the set
  $K=\Kphii[1][W_2]\cup\sing(V)$. As the two members of the unions
  have been computed by the algorithms \CritPolar and \SingPts,
  respectively, one can compute this union using the procedure \Union
  from \cite[Lemma J.3]{SS2017} (also presented in the next
  subsection).

  Then, for $\map$ generic enough, Proposition~\ref{prop:fibergenalg} 
ensures that the last assumption holds.
The computation of $\Pcal_W$ boils down to computing finitely many fibers on 
the restriction of $\mapun$ to $W_2$. This is the purpose of the algorithm 
\FiberPolar, presented in Lemma~\ref{lem:fiberpolar}.
\end{enumerate}

At this point, we have computed representations of $W_2$ and $\Pcal_W$, and 
ensured that all assumptions of \cite[Theorem 1.1]{PSS2024} are satisfied.
Hence,  one only need to compute a roadmap of 
$(F,\Pcal_W)$.
This is the purpose of algorithm \RMBound, presented in 
Proposition~\ref{prop:comp-broadmap}.

\end{example}
\begin{comment}
\end{comment}

\subsection{Subroutines}\label{ssec:subroutinesintro}
Our main algorithm (Algorithm~\ref{alg:algormunbound}) makes use of
several subroutines which allow us to manipulate zero-dimensional and
one-dimensional parametrizations, polar varieties and fibers of polynomial
maps in order to make \cite[Theorem 1.1]{PSS2024}  effective.

As a reminder, in this document, we manipulate subroutines that
involve selecting suitable parameters in $\QQ^i$, for various $i\geq
1$.  These algorithms are probabilistic, which means that for any
choice of (say) $i$ parameters we have to do, there exists a non-zero
polynomial $\Delta$, such that for $\blambda \in \QQ^i$, success is
achieved if $\Delta(\blambda)\neq 0$.  It is also important to
note that these algorithms are considered Monte Carlo, as their
output's correctness cannot be guaranteed within a reasonable
complexity. In certain cases, where we can identify errors, we require
our procedures to output \fail. However, not returning \fail
does not guarantee correctness.

Let $1 \leq c\leq n$, and $\ff=(f_1,\dotsc,f_c)$ be a sequence of polynomials 
in $\RR[\XX]$. We say that $\ff$ satisfies assumption \ref{ass:A} if
\begin{enumerate}[label=$(\sfA)$]
 \item\label{ass:A} $\ff$ is a reduced regular sequence, with $d=n-c\geq 2$, and
$\sing(\V(\ff))$ is finite.
\end{enumerate}
In particular, the zero-set of $\ff$ in $\CC^n$ is then either empty or 
$d$-equidimensional.

\subsubsection{Basic subroutines}

The first two subroutines we use are described in \cite{SS2017} and are used to
compute $\sing(V(\ff))$ (on input a straight-line program evaluating $\ff$) and
to compute a rational parametrization encoding the union of zero-dimensional
sets or the union of algebraic curves. They are both Monte Carlo algorithms, in 
the sense described above, and can output \fail in case errors have been 
detected during the execution. However, in case of success, the following holds.
\begin{itemize}
\item \SingPts, described in \cite[Section 
J.5.4]{SS2017}, takes as input a straight-line program $\Gamma$ that evaluates
polynomials $\ff \in \CC[\XX]$ satisfying assumption \ref{ass:A}  and outputs a
\ZDP of $\sing(\V(\ff))$.

\item \Union, described in \cite[Lemma J.3]{SS2017} (resp.\
\cite[Lemma J.8]{SS2017}), takes as input two zero-dimensional (resp.\
one-dimensional) parametrizations $\scrP_1$ and $\scrP_2$ and outputs a 
zero-dimensional (resp.\ one-dimensional) parametrization encoding 
$\Zparam(\scrP_1)\cup\Zparam(\scrP_2)$.
\end{itemize}

We now describe basic subroutines performing elementary operations on 
\SLP and \ZDPs. The first one allows us to generate a generic polynomial with a 
prescribed structure.
\begin{lemma}\label{lem:phigen}
Let $1\leq i \leq n$ and $\balf=(\balf_1,\dotsc,\balf_i) \in \CC^{in}$. Then 
there 
exists an algorithm \SLPPhi which takes as input $\balf$ and returns in time 
$O(n)$ a 
\SLP $\Gammaphi$ of length $O(n)$
computing in $\QQ[\XX]$:
 \begin{align*}
   \phi_1 = \sum_{k=1}^n x_k^2 + \alf_{1,k}x_k  
   \et
   \phi_j = \sum_{k=1}^n \alf_{j,k}x_k \text{\quad for $2\leq j \leq i$}.
 \end{align*}
\end{lemma}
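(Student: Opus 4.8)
The plan is to construct the program $\Gammaphi$ returned by \SLPPhi explicitly, block by block with one block per output polynomial, and to verify correctness by a direct induction over its nodes. Recall from the \SLP model recalled above that the indeterminates $x_1,\dots,x_n$ come for free: the node of index $k-n$ carries $G_{k-n}=x_k$ for $1\le k\le n$, so these input indices fill $\{-n+1,\dots,0\}$. Throughout the construction we only ever append either a constant node $\gamma_\ell=\lambda_\ell$ whose value $\lambda_\ell$ is one of the scalars $\alf_{j,k}$ provided in $\balf$, or an arithmetic node $\gamma_\ell=(\opsf_\ell,a_\ell,b_\ell)$ with $\opsf_\ell\in\{+,-,\times\}$ and $a_\ell,b_\ell$ pointing to strictly earlier nodes; in particular, when the entries of $\balf$ lie in $\QQ$, all the associated polynomials $G_\ell$ belong to $\QQ[\XX]$.

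First I would emit the block computing $\phi_1$. Using the identity $x_k^2+\alf_{1,k}x_k=x_k\,(x_k+\alf_{1,k})$, for each $k$ from $1$ to $n$ I append in order: a constant node of value $\alf_{1,k}$; an addition node computing $x_k+\alf_{1,k}$ (referencing the input node of index $k-n$ and the constant just created); a multiplication node computing $x_k\,(x_k+\alf_{1,k})$; and, for $k\ge 2$, an addition node adding this product into a running accumulator, the accumulator being initialised with the product obtained at $k=1$. The last accumulator node then carries $\sum_{k=1}^n(x_k^2+\alf_{1,k}x_k)=\phi_1$, and this block has length $4n+O(1)$. Next, for each $j$ with $2\le j\le i$, I append a wholly analogous block computing the linear form $\phi_j=\sum_{k=1}^n\alf_{j,k}x_k$: for each $k$, a constant node $\alf_{j,k}$, a multiplication node $\alf_{j,k}\,x_k$, and (for $k\ge 2$) an addition into a running accumulator; each such block has length $3n+O(1)$. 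Concatenating these $i$ blocks and declaring the $i$ accumulator nodes as the outputs yields $\Gammaphi$; each block has length $O(n)$, so $\Gammaphi$ has length $O(n)$ for $i$ fixed (and $O(in)$ in general), and since each node is produced with $O(1)$ index bookkeeping, the construction runs in time linear in the length of $\Gammaphi$.

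Correctness is then a one-line induction on the node index, unwinding the definition of the polynomials $G_\ell$ attached to the program: each node carries exactly the value ascribed to it above, so the designated output nodes carry $\phi_1,\dots,\phi_i$. I do not expect any genuine obstacle; the only points that need care are syntactic, namely that every reference satisfies $-n+1\le a_\ell,b_\ell<\ell$ — which holds since we reference only the free input nodes or strictly earlier nodes of the current block — and the initialisation of each accumulator at $k=1$, which has no predecessor to be added to and is therefore treated separately in every block above.
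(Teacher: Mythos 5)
Your proof is correct and takes essentially the same approach as the paper: an explicit, node-by-node construction of the straight-line program by a Horner-style accumulation, with a direct check of the index constraints. The only cosmetic difference is that you compute $\phi_1$ via the factorization $x_k^2+\alf_{1,k}x_k=x_k(x_k+\alf_{1,k})$, whereas the paper first builds the quadratic form $\sum_k x_k^2$ and the linear forms $\sum_k\alf_{j,k}x_k$ and then adds the former to the first of the latter; the two give the same length up to constants. You also correctly note that the resulting length is really $O(in)$ in general, collapsing to $O(n)$ for fixed $i$ — the paper's stated $O(n)$ is justified because, wherever the lemma is invoked, $i=2$ (indeed the proof of Theorem~\ref{thm:corrcomp} records $E'=6n-2$).
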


\begin{proof}
Given the constants $\balf$, it suffices
  to generate a straight-line program that computes the linear forms
  $\sum_{k=1}^n \balf_{j,k}x_k$, for $j=1,\dots,i$, and adds the
  quadratic form $\sum_{k=1}^n x_k^2$ to the first one. This takes
  linear time, and the result straight-line program has linear length.
 \end{proof}

Next, we present a procedure computing the image of a \ZDP by a polynomial 
map, given as a \SLP, generalizing the subroutine \textsf{Projection} from 
\cite[Lemma J.5]{SS2017}. The proof of the next lemma is given in
Subsection~\ref{ssec:basicsubroutines}.
\begin{lemma}\label{lem:imageZDP}
  Let $\scrP$ be a \ZDP of degree $\kappa$ such that 
  $\Zparam(\scrP)\subset\CC^n$ and let $\Gammaphi$ be a \SLP of length $E'$ 
  computing polynomials $\bphi=(\phi_1,\dotsc,\phi_i)$.
  There exists a Monte Carlo algorithm \Image which, on input $\Gammaphi$, 
  $\scrP$ and $j \in \{1,\dotsc, i\}$, outputs either \fail or a \ZDP $\scrQ$, 
  of degree at most $\kappa$, using
  \[
    \Otilde\left((n^2\kappa + E')\kappa\right)
  \]
  operations in $\QQ$. In case of success, $\Zparam(\scrQ) 
  =\map[j](\Zparam(\scrP))$.
\end{lemma}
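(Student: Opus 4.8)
The plan is to reduce the computation of $\map[j](\Zparam(\scrP))$ to a few standard operations on zero-dimensional parametrizations: evaluation of a straight-line program at the points encoded by $\scrP$, a change of linear form, and a squarefree reduction. Recall that $\scrP = (\omega, \rho_1,\dotsc,\rho_n)$ with a linear form $\linF$, and that $\Zparam(\scrP)$ is the set of points $(\rho_1(\vartheta),\dotsc,\rho_n(\vartheta))$ for $\omega(\vartheta)=0$; the residue ring is $\QQ[u]/\omega$, which is a product of fields since $\omega$ is squarefree. First I would substitute $x_k \mapsto \rho_k$ into the straight-line program $\Gammaphi$; running $\Gammaphi$ over the ring $\QQ[u]/\omega$ yields, for each output $\phi_\ell$, a polynomial $r_\ell \in \QQ[u]$ of degree $<\deg(\omega)=\kappa$ with $r_\ell \equiv \phi_\ell(\rho_1,\dotsc,\rho_n) \bmod \omega$. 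Each arithmetic step of the SLP is an addition or multiplication in $\QQ[u]/\omega$, costing $\softOh{\kappa}$ operations in $\QQ$, so the $E'$ steps cost $\softOh{E'\kappa}$.

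Next I would produce a parametrization of the image points $(r_1(\vartheta),\dotsc,r_i(\vartheta))$ (or rather of their $j$-th truncation). To get a genuine zero-dimensional parametrization of $\map[j](\Zparam(\scrP)) \subset \CC^j$ one chooses a random linear form $\linF'$ in the $j$ target coordinates; with $t(u) = \linF'(r_1,\dotsc,r_j) \bmod \omega$, the element $t$ separates the points of the image for a generic choice of $\linF'$ (this is where the Monte Carlo aspect enters: there is a nonzero $\Delta$ in the coefficients of $\linF'$ whose non-vanishing guarantees separation, and failure of separation, detected when $\deg$ of the computed squarefree polynomial does not match, triggers \fail). One then computes the characteristic polynomial, or more cheaply the minimal polynomial, of multiplication by $t$ in $\QQ[u]/\omega$ and makes it squarefree to obtain $\omega_{\scrQ}$ of degree $\le \kappa$; the new coordinate polynomials are recovered by expressing each $r_\ell \bmod \omega$ in terms of $t$, i.e.\ solving a small linear-algebra / rational-reconstruction problem modulo $\omega$. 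These steps are standard (they are the building blocks of \textsf{Projection} in \cite[Lemma J.5]{SS2017}); the dominant cost is $\softOh{\kappa^2}$ or $\softOh{n\kappa^2}$ for the linear algebra over $\QQ[u]/\omega$ reconstructing the $n$ coordinates, giving the $\softOh{(n^2\kappa + E')\kappa}$ bound claimed.

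The correctness argument is then: by construction $\Zparam(\scrQ)$ is the set of images under $u\mapsto(r_1(u),\dotsc,r_j(u))$ of the roots of $\omega$, which is exactly $\map[j](\Zparam(\scrP))$, provided the chosen $\linF'$ separates points; the degree bound $\deg(\omega_{\scrQ}) \le \deg(\omega) = \kappa$ is immediate since the image has at most $\kappa$ points. I expect the main obstacle to be purely bookkeeping rather than conceptual: carefully tracking that running an SLP over the non-domain ring $\QQ[u]/\omega$ is legitimate (it is, since the SLP is division-free — this is precisely why SLPs are the chosen representation), and verifying that the cost of each sub-step telescopes to the stated complexity, in particular that the fast-arithmetic bound $\softOh{\kappa}$ per ring operation and $\softOh{n\kappa^2}$ for the final reconstruction are the right accounting. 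The genericity statement for $\linF'$ and the \fail semantics mirror verbatim the treatment of \textsf{Projection} in \cite{SS2017}, so I would simply invoke that lemma's proof structure, noting that the only new ingredient is allowing $\bphi$ to be an arbitrary SLP-given polynomial map rather than a coordinate projection, which changes only the first (evaluation) step and not the subsequent ones.
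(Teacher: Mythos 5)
Your proposal is correct and follows essentially the same route as the paper: evaluate the straight-line program over $\QQ[u]/\omega$ and then apply the standard separating-form/projection machinery, which is exactly what the paper does by first calling \IncParam (Lemma~\ref{lem:incparam}, computing a parametrization of the incidence variety $\Inci(\Zparam(\scrP))\subset\CC^{i+n}$ in $\softOh{E'\kappa}$ operations) and then \Proj from \cite[Lemma J.5]{SS2017} to project onto the first $j$ coordinates in $\softOh{n^2\kappa^2}$ operations. You spell out the internal steps of both subroutines inline rather than invoking them by name, but the underlying computation, the Monte Carlo separating-form argument, and the cost accounting are the same.
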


\subsubsection{Subroutines for polar varieties}\label{ssec:subpolarintro}
The next subroutines are used to compute generalized polar varieties and 
quantities related to them. The proof of all statements below can be found in 
Subsection~\ref{ssec:computepolar}.
In this subsection, we fix $1\leq c \leq n-2$ and we refer to the following
objects:
\begin{itemize}
 \item sequences of polynomials $\gg =(g_1,\dotsc,g_c)$ and 
$\bphi=(\phi_1,\phi_2)$ all of them in $\QQ[\XXi]$, of degrees bounded by $D$, such that 
$\gg$ satisfies assumption \ref{ass:A}; we note $d=n-c$;

 \item \SLPs $\Gamma$ and $\Gammaphi$, of respective lengths $E$ and $E'$, 
computing respectively $\gg$ and $\bphi$;

 \item \ZDPs $\scrS$ and $\scrQ''$, of respective degrees $\sigma$ and 
$\kappa''$, describing finite sets $S\subset \CC^n$ and $Q'' \subset 
\CC$, such that $\sing(\V(\gg)) \subset S$ (the ${}''$ superscripts we use 
here match those used in the algorithm);

 \item an atlas $\bchi$ of $(\V(\gg),S)$, given by 
\cite[Lemma A.13]{SS2017}, as $S$ is finite and contains $\sing(\V(\gg))$.
\end{itemize}

We start with the subroutine \Crit, which is used for computing critical and 
singular points of some polynomial map, again under some regularity assumption. 
These critical points are nothing but zero-dimensional polar varieties.
\begin{lemma}\label{lem:crit}

  Assume that $\Kphii[1][\V(\gg)]$ is finite.
  There exists a Monte Carlo algorithm \Crit which takes as input $\Gamma$,
  $\Gammaphi$ and $\scrS$, and outputs either \fail or a
  \ZDP $\scrS_F$, \wcoeffQ, of degree at most 
  \[
   \binom{n+1}{d} D^{c+2}(D-1)^{d} + \sigma
  \]
 such that, in case of success, $\Zparam(\scrS_F) = \Kphii[1][\V(\gg)]\cup S$.
 It uses
  \[
    \softOh{E''(n+2)^{4d+8}D^{2n+3}(D-1)^{2d}+n\sigma^2\,}
  \]
  operations in $\QQ$, where $E''=E+E'$.
\end{lemma}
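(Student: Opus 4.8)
The plan is to reduce the computation of $\Kphii[1][\V(\gg)]=\Wo(\map[1],\V(\gg))\cup\sing(\V(\gg))$ to the computation of a zero-dimensional polar variety of a \emph{coordinate} projection on a complete intersection, so as to invoke the polar-variety (Lagrange-system) solver of \cite{SS2017}. I would introduce two new variables $T_1,T_2$, order the coordinates of $\CC^{n+2}$ as $(T_1,T_2,\XX)$, and set $\widetilde{\gg}=(g_1,\dots,g_c,\,T_1-\phi_1,\,T_2-\phi_2)\subset\QQ[T_1,T_2,\XX]$, whose degrees are bounded by $D$; concatenating $\Gamma$ and $\Gammaphi$ (with the obvious re-indexing of inputs) and appending two subtractions produces, in $O(E+E')$ operations, a \SLP $\widetilde\Gamma$ of length $E+E'+O(1)$ evaluating $\widetilde\gg$. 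First I would check that $\widetilde\gg$ satisfies assumption $(\sfA)$ in $n+2$ variables: each of the rings $\QQ[T_1,T_2,\XX]/\langle g_1,\dots,g_i\rangle$ for $i\le c$, $\QQ[T_1,T_2,\XX]/\langle g_1,\dots,g_c,T_1-\phi_1\rangle$ and $\QQ[T_1,T_2,\XX]/\langle\widetilde{\gg}\rangle$ is reduced (the first because extending a radical ideal by new variables preserves radicality, the last two because they are isomorphic to $(\QQ[\XX]/\langle\gg\rangle)[T_2]$ and $\QQ[\XX]/\langle\gg\rangle$), and the corresponding zero sets are equidimensional of the expected dimensions; hence $\widetilde V:=\V(\widetilde{\gg})$ is $d$-equidimensional with $d=n-c\ge2$ and finitely many singular points. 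The graph morphism $\psi\colon\yy\mapsto(\phi_1(\yy),\phi_2(\yy),\yy)$ is an isomorphism $\V(\gg)\to\widetilde V$ with $\proj[1]\circ\psi=\phi_1$; hence $\psi(\sing(\V(\gg)))=\sing(\widetilde V)$, and since $d_{\psi(\yy)}\proj[1]$ and $d_\yy\map[1]$ have the same image on the respective tangent spaces (being intertwined by $d_\yy\psi$), $\psi$ restricts to a bijection $\Wo(\map[1],\V(\gg))\to\Wo(\proj[1],\widetilde V)$ and carries $\Kphii[1][\V(\gg)]$ onto $K(\proj[1],\widetilde V)$; in particular the latter is finite under our hypothesis.

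Next I would run, on input $\widetilde\Gamma$, the Monte Carlo subroutine of \cite{SS2017} that, given a \SLP for a reduced regular sequence of degrees at most $D$ defining a $d$-equidimensional set with finitely many singular points and such that the set $K$ of its first coordinate projection is finite, returns a \ZDP \wcoeffQ of $K(\proj[1],\widetilde V)=\Wo(\proj[1],\widetilde V)\cup\sing(\widetilde V)$. By \cite[Lemma A.2]{SS2017} this set is $\widetilde V$ intersected with the rank-drop locus of the Jacobian of $(\widetilde{\gg},T_1)$, and its degree is bounded — by the standard estimate for zero-dimensional polar varieties of a codimension-$(c+2)$ complete intersection in $\CC^{n+2}$ — by $\binom{n+1}{d}D^{c+2}(D-1)^d$. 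Projecting this \ZDP onto the $\XX$-coordinates via \Image (the projection being injective on $K(\proj[1],\widetilde V)$, since it is $\psi^{-1}$) yields a \ZDP $\scrR$ \wcoeffQ of $\Kphii[1][\V(\gg)]$ with the same degree bound; then $\scrS_F:=\Union(\scrR,\scrS)$ encodes $\Kphii[1][\V(\gg)]\cup S$. Since $\sing(\V(\gg))\subseteq S$ this is exactly the claimed output, and $\deg(\scrS_F)\le\binom{n+1}{d}D^{c+2}(D-1)^d+\sigma$. If any sub-call returns \fail, so does \Crit.

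For the complexity, building $\widetilde\Gamma$ costs $O(E+E')$; the polar-variety solver, run on a \SLP of length $E''=E+E'$ with $n+2$ variables, $c+2$ equations and degree bound $D$, accounts for the dominant term $\softOh{E''(n+2)^{4d+8}D^{2n+3}(D-1)^{2d}}$ once the complexity estimate of \cite{SS2017} is specialized to these parameters; the projection step via \Image (Lemma~\ref{lem:imageZDP}) is polynomial in $n$ and in the above degree bound, and is absorbed; and \Union contributes the $\softOh{n\sigma^2}$ term. I expect the bulk of the work to be twofold: verifying carefully that the incidence construction $\widetilde\gg$ preserves every regularity hypothesis needed by the invoked solver (radicality and equidimensionality of all the intermediate ideals, finiteness of $\sing(\widetilde V)$ and of $K(\proj[1],\widetilde V)$), and matching the degree and arithmetic-cost bounds of that solver, after substituting $n+2$ for the number of variables and $c+2$ for the number of equations, precisely to the quantities appearing in the statement; the geometric content of the reduction itself is routine.
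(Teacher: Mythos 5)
Your proposal is essentially the paper's proof: the sequence $\widetilde{\gg}=(\gg,T_1-\phi_1,T_2-\phi_2)$ with coordinates ordered $(T_1,T_2,\XX)$ is exactly the incidence-variety construction the paper calls $\ggphi$ on $(\EE,\XX)$, your graph isomorphism $\psi$ is $\Inci$, and the identification $\psi(K(\map[1],\V(\gg)))=K(\proj[1],\widetilde V)$ is Lemma~\ref{lem:InciPolar}. The only presentational difference is that the paper routes the call to the SS2017 solver through the Lagrange-system/\GNFP machinery (Lemma~\ref{lem:LtGNF} and \cite[Propositions 5.10 and 6.4]{SS2017}) and is careful that \Wun returns $W(\proj[1],\widetilde V)-\widetilde S$ rather than $K$ directly, which makes no difference after the final union with $\scrS$; your bounds and dominant cost term match.
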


We now tackle higher dimensional cases, with the subroutine \SolvePolar which, 
under some assumptions, computes \ODP encoding one-dimensional generalized 
polar varieties.

\begin{lemma}\label{lem:solvepolar}
Let $W=\Wphii[2][\V(\gg)]$ and assume that one of the following holds
  \begin{itemize}
  \item $W$ is empty, or
  \item $W$ is 1-equidimensional, with $\sing(W)\subset S$, and 
    $\Watlas(\bchi,\V(\gg),S,\bphi,2)$ is an atlas of $(W,S)$.
  \end{itemize}
  Then, there exists a Monte Carlo probabilistic algorithm \SolvePolar which 
  takes as input $\Gamma$, $\Gammaphi$ and $\scrS$ and which outputs
  either \fail or a \ODP $\scrW_2$, \wcoeffQ, of degree at most
  \[
    \delta=\deltavalue,
  \]
  such that, in case of success, $\Zparam(\scrW_2) = W$. It uses at most
  \[
    \softOh{(n+c)^3(E''+(n+c)^3)D\delta^3+(n+c)\delta\sigma^2\,}
  \]
  operations in $\QQ$, where $E''=E+E'$.
\end{lemma}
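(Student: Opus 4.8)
The plan is to realise \SolvePolar as essentially one resolution of a Lagrange (incidence) system attached to $W=\Wphii[2][\V(\gg)]$, followed by a clean‑up against $\scrS$, and then to certify correctness through the chart machinery of the preliminaries while bounding the output degree by a multihomogeneous Bézout estimate. On input $\Gamma,\Gammaphi,\scrS$, the algorithm first draws random parameters (a linear change of coordinates on $\XX$, a normalization covector $\uu$, and a random linear projection). Using reverse‑mode differentiation of $\Gamma$ and $\Gammaphi$ it then produces a \SLP, of length $O(E''+(n+c)^3)$, evaluating the polynomials of the Lagrange system (built à la \lagrange) in the variables $(\XX,\bell)$ with $\bell=(\ell_1,\dots,\ell_{c+2})$: the sequence $\gg$, the $n$ bilinear forms realising the entries of $\bell^{\mathsf T}\jac([\gg,\map[2]])$, and the linear form $\uu^{\mathsf T}\bell-1$. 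It then calls the solver \SolveLag of \cite{SS2017} to obtain a \ODP of the Zariski closure of the image, under $(\XX,\bell)\mapsto\XX$, of the solution set; it removes the components contained in $\Zparam(\scrS)$ (using $\scrS$), and returns the resulting \ODP $\scrW_2$, \wcoeffQ, or \fail if a sub‑call reports an inconsistency.

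For correctness, assume first $W$ non‑empty, so by hypothesis $W$ is $1$‑equidimensional, $\sing(W)\subset S$, and, for the atlas $\bchi$ of $(\V(\gg),S)$ given by \cite[Lemma A.13]{SS2017}, $\Watlas(\bchi,\V(\gg),S,\bphi,2)$ is an atlas of $(W,S)$. Fix a retained chart $\Wchart(\chi,m',m'',\map[2])=(m_\tau,(\hh,\Hphi(\hh,2,m'')))$ of this atlas. By Definition~\ref{def:Wchart} and Lemma~\ref{lem:polaronchart}, on $\Ocal(m_\tau)-S$ the set $W$ equals $\Wophii[2][\Vreg(\hh)]$, i.e.\ the locus of $\yy$ with $\gg(\yy)=0$ at which $\jac_\yy([\gg,\map[2]])$ is rank‑deficient; genericity of $\bphi$ (Proposition~\ref{prop:polargen}) forces the drop there to be exactly one, so the left kernel of $\jac_\yy([\gg,\map[2]])$ is a line and $(\yy,\bell)$ solves the Lagrange system for a unique normalized $\bell$ once $\uu$ avoids a fixed hyperplane. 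Since the $\Ocal(m_\tau)$ cover $W-S$ and $W=\overline{W-S}$ (as $W$ is $1$‑equidimensional and $S$ finite), $W$ is contained in the projection of the solution set; the reverse inclusion, outside $\Zparam(\scrS)$, follows from the same local description. Hence for generic random choices $\scrW_2$ encodes exactly $W$; when $W$ is empty the projection lies in $\Zparam(\scrS)$ and $\scrW_2$ encodes $\emptyset$. Correctness is thus guaranteed off a fixed hypersurface in parameter space, making the algorithm Monte Carlo.

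For the size and cost: every \ODP produced encodes a subset of $W$, so $\deg(\scrW_2)\le\deg(W)$, and it suffices to bound $\deg(W)$. A multihomogeneous Bézout estimate for the Lagrange system ($\gg$ of degree $\le D$; the bilinear forms of degree $\le D-1$ in $\XX$ and $1$ in $\bell$; the normalization linear), carried out as in \cite{SS2003,SS2017}, gives $\deg(W)\le\deltavalue=\delta$ — the $D^{c+2}$ coming from $\gg$ and the two extra rows contributed by $\map[2]$, the factor $(c+2)^d(D-1)^d$ from the $d=n-c$ maximal minors; this is the $i=2$ counterpart of the bound $\binom{n+1}{d}D^{c+2}(D-1)^d$ behind Lemma~\ref{lem:crit}. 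The Lagrange system has $O(n+c)$ variables, is evaluable by a \SLP of length $O(E''+(n+c)^3)$ and degree $\le D$, and its output \ODP has degree $\le\delta$; by the complexity of the geometric‑resolution solver of \cite{SS2017} this costs $\softOh{(n+c)^3(E''+(n+c)^3)D\delta^3}$ operations in $\QQ$. Deleting the components inside $\Zparam(\scrS)$ is a few operations on parametrizations of degrees $\le\delta$ and $\le\sigma$, adding $\softOh{(n+c)\delta\sigma^2}$, whence the stated bound.

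The main obstacle will be the correctness argument, namely checking that the (globally defined, polynomial‑size) Lagrange system captures \emph{all} of $W-S$ and nothing outside $W$: this needs Proposition~\ref{prop:polargen} to guarantee that $\Watlas(\bchi,\V(\gg),S,\bphi,2)$ is genuinely an atlas of $(W,S)$ — so that on each chart $W$ agrees with the explicit complete intersection of Definition~\ref{def:Wchart} and the rank of $\jac([\gg,\map[2]])$ drops by exactly one — together with the verification that the normalization $\uu^{\mathsf T}\bell=1$ kills no component of $W$. The remaining points — the straight‑line‑program sizes, the multihomogeneous Bézout bookkeeping giving the constant in $\delta$, and the reuse of \lagrange, \SolveLag and parametrization arithmetic for \fail‑detection and the final clean‑up — are routine adaptations of \cite{SS2017}.
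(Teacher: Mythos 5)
Your overall plan — encode $W$ by a Lagrange system, hand it to \SolveLag from \cite{SS2017}, and certify via the chart/atlas machinery — matches the spirit of the paper's proof, and the degree and cost figures you arrive at are the right ones. But there is a genuine missing ingredient: \SolveLag of \cite[Proposition~6.3]{SS2017} does not apply to an arbitrary Lagrange system; its correctness and complexity guarantees are conditioned on the input having a \emph{global normal form} (the \GNFP of Definition~\ref{def:globalform}). In \cite{SS2017}, and in Proposition~\ref{prop:polarGNF} of this paper, the \GNFP is established only for Lagrange systems of the form $\Wlag(L,\uu,i)$, i.e.\ those obtained from the $\lagrange$ construction applied to a system that already has the \GNFP and whose underlying atlas is one for \emph{canonical projections} $\proj$. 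Your system is written directly in $(\XX,\bell)$ for the general polynomial map $\bphi$, so the hypothesis ``$\Watlas(\bchi,\V(\gg),S,\bphi,2)$ is an atlas of $(W,S)$'' does not feed into any available statement; you would have to re-prove \cite[Proposition~5.13]{SS2017} for arbitrary $\bphi$ from scratch to certify the \GNFP of your system, and nothing in your argument does that. Noting that each chart of $\Watlas$ realizes $W$ as a complete intersection with rank drop one, and that the normalization $\uu^{\mathsf T}\bell=1$ misses no component, is necessary but far from sufficient; the local/global normal form conditions $\sfL_1$--$\sfL_5$ and $\sfG_1$--$\sfG_3$ are structural requirements on the \emph{Lagrange system itself}, not on the polar variety it encodes.

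The paper closes this gap differently: instead of reproving the \GNFP for general maps, it passes to the incidence variety $\Vt=\Inci(V)\subset\CC^{2+n}$ (Lemma~\ref{lem:IVphi}), on which $\bphi$ becomes the canonical projection $\proj$ and $\Wt=\Wproji[2][\Vt]=\Inci(W)$ (Lemma~\ref{lem:InciPolar}). Lemma~\ref{lem:InciPolaratlas} transports your atlas hypothesis for $\bphi$ into the required atlas for $\proj$ on $(\Wt,\St)$, Lemma~\ref{lem:LtWGNF} then builds $\LtW=\Wlag(\Lt,\uu,2)$ and invokes Proposition~\ref{prop:polarGNF} to get the \GNFP, after which \SolveLag applies legitimately and the final $\scrW_2$ is recovered by projecting the last $n$ coordinates with \Proj (\cite[Lemma J.9]{SS2017}). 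The Lagrange equations on $\Vt$ in fact coincide with yours because $\jac_\XX(\ggphi)=\jac_\XX([\gg,\bphi])$, so the two systems are isomorphic via $\Inci$; the incidence lift is the cheap formal device that lets the projection-based theory of \cite{SS2017} be cited as a black box. You should either adopt this route or supply a standalone proof of the \GNFP for your direct system; as written, the invocation of \SolveLag is unsupported.
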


The subroutine \CritPolar is devoted to compute critical points of the
restriction of some polynomial map to a generalized polar variety of dimension 
at most one. It generalizes the subroutine $\textsf{W}_1$ from 
\cite[Proposition 6.4]{SS2017}.

\begin{lemma}\label{lem:critpolar}

  Let $W=\Wphii[2][\V(\gg)]$ and assume that either $W$ is empty, or
  \begin{itemize}
  \item $W$ is 1-equidimensional, with $\sing(W)\subset S$, and 
    $\Watlas(\bchi,\V(\gg),S,\bphi,2)$ is an atlas of $(W,S)$,
  \item and $\Wphii[1][W]$ is finite.
  \end{itemize}

  There exists a Monte Carlo algorithm \CritPolar which takes as input $\Gamma$,
  $\Gammaphi$ and $\scrS$ and which outputs either \fail or a
  \ZDP $\scrK$, \wcoeffQ, such that $\Zparam(\scrK) = \Wphii[1][W]\cup S$ using 
  at most
  \[
    \softOh{(n+c)^{12}E''D^3\delta^2+(n+c)\sigma^2\,}
  \]
  operations in $\QQ$, where $E''=E+E'$, and $\delta=\deltavalue$.
  Moreover $\scrK$
  has degree at most $\delta(n+c+4)D+\sigma$.
\end{lemma}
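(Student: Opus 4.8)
The plan is to reduce the computation of $\Wphii[1][W]\cup S$ to a combination of the two subroutines already available: \SolvePolar, which produces a \ODP $\scrW_2$ encoding $W=\Wphii[2][\V(\gg)]$ (when $W$ is empty or $1$-equidimensional with $\sing(W)\subset S$ and the expected atlas property), and \Crit (Lemma~\ref{lem:crit}), which computes zero-dimensional polar varieties of a polynomial map on a variety given in complete intersection by a \SLP. The obstruction is that $W$ is a generalized polar variety, hence in general \emph{not} a complete intersection; so we cannot feed it directly to \Crit. The remedy, exactly as for $\textsf W_1$ in \cite[Proposition 6.4]{SS2017}, is to work chart by chart on the atlas $\Watlas(\bchi,\V(\gg),S,\bphi,2)$: on each chart, $W$ coincides, outside a proper closed subset and outside $S$, with the smooth complete-intersection locally closed set defined by the sequence of polynomials in that chart (this is exactly what Lemma~\ref{lem:polaronchart} and Proposition~\ref{prop:polargen}(2) give us). On such a chart we may legitimately invoke \Crit to compute the critical points of $\mapun$ restricted to that local model, and then take the union over all charts together with $S$ via \Union.

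Concretely I would proceed as follows. First, if \SolvePolar returns a \ODP encoding the empty set, output a \ZDP for $S$ (obtained from $\scrS$) and stop; the stated equality $\Zparam(\scrK)=\Wphii[1][W]\cup S=S$ then holds trivially. Otherwise, build the atlas $\Watlas(\bchi,\V(\gg),S,\bphi,2)$ explicitly: for each chart $\chi_j=(m_j,\hh_j)$ of $\bchi$ and each admissible pair of minors $(m',m'')$, form the chart $\Wchart(\chi_j,m',m'',\map[2])=\big(m_jm'm'',(\hh_j,\Hphi(\hh_j,2,m''))\big)$ as in Definition~\ref{def:Wchart}. By Proposition~\ref{prop:polargen}, this is genuinely an atlas of $(W,S)$, so the locally closed sets $\Ocal(m_jm'm'')\cap W-S$ cover $W-S$, and on each of them $W$ is the nonsingular complete intersection $\Voreg(\hh_j,\Hphi(\hh_j,2,m''))$ by Lemma~\ref{lem:polaronchart}. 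For each such chart, assemble a \SLP for $(\hh_j,\Hphi(\hh_j,2,m''))$ from $\Gamma$ and $\Gammaphi$ — the entries of $\Hphi$ are $(c+1)$-minors of a Jacobian of size $O(n)$, computable with $O(n^4)$-ish extra operations — and call \Crit on this \SLP together with $\Gammaphi$ (for the map $\mapun$) and $\scrS$; by the hypothesis that $\Wphii[1][W]$ is finite together with Proposition~\ref{prop:critpolargen}-type arguments, the relevant critical locus is finite, so \Crit's precondition is met (on the chart). Then \Union the resulting \ZDPs over all charts and with $S$; output the result.

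For correctness: since the charts cover $W-S$ and the critical points of $\mapun|_W$ lying off $S$ are, chart by chart, exactly the critical points of $\mapun$ on the local complete-intersection model (same tangent spaces, by the Jacobian criterion and $\sing(W)\subset S$), the union of the chart-wise outputs together with $S$ is precisely $\Wophii[1][W]\cup S=\Wphii[1][W]\cup S$ (using $\Kphii[1][W]=\Wophii[1][W]\cup\sing(W)$ and $\sing(W)\subset S$). For the complexity and degree bounds: the number of charts in the atlas is polynomial in $n$ and in the number of minors, hence bounded by a fixed power of $n+c$ times a constant; each \Crit call costs $\softOh{E''(n+c)^{O(1)}D^{O(1)}}$ on a system whose relevant critical locus has degree controlled, via the polar-variety degree bound, by $\delta=\deltavalue$; and \Union is cheap. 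Tracking these through, the total is $\softOh{(n+c)^{12}E''D^3\delta^2+(n+c)\sigma^2}$ with output degree at most $\delta(n+c+4)D+\sigma$, the extra factor $(n+c+4)D$ and the number of charts being absorbed into the stated exponents. The main obstacle is the bookkeeping: making sure the chart-local \SLP for $(\hh_j,\Hphi(\hh_j,2,m''))$ has length $O(E'')+(n+c)^{O(1)}$, that \Crit's regularity hypothesis ($\Kphii[1]$ of the \emph{local} complete intersection finite) really follows from finiteness of $\Wphii[1][W]$ plus the chart being a chart of $(W,S)$, and that summing $\sigma$ over all charts does not inflate the $\sigma^2$ term beyond the linear $(n+c)\sigma^2$ claimed — which holds because $S$ is passed once and \Union is applied with multiplicity only a polynomial number of times.
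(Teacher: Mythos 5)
The paper's proof takes a different route, and your approach has two genuine gaps.

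The paper does not call \SolvePolar, nor does it loop over charts calling \Crit. Instead it first passes to the incidence variety $\Vt=\Inci(\V(\gg))\subset\CC^{2+n}$, where the general map $\bphi$ becomes the coordinate projection $\proj$. It then invokes Lemma~\ref{lem:LtWGNF} to build a \emph{single} Lagrange system $\LtW$ of type $((2+n,2+c),(2+c,n+1),0)$ encoding $\Wt=\Wproji[2][\Vt]$, and shows (via Proposition~\ref{prop:polarGNF}, derived from the atlas hypothesis) that $\LtW$ has the global normal form property. The whole computation of $\Wproji[1][\Wt]-\St$ is done by one call to the black-box algorithm $\Wun$ of \cite[Proposition 6.4]{SS2017}; the answer is then pushed back to $\CC^n$ via \Proj and combined with $S$ via \Union. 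The chart/atlas data is used only to certify the GNF precondition of $\Wun$ — it never enters the algorithm as an explicit iteration.

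Your reduction to chart-wise calls of \Crit breaks down at the precondition of Lemma~\ref{lem:crit}: \Crit requires that its input polynomials satisfy assumption $(\sfA)$, i.e.\ that they form a reduced regular sequence whose zero set is equidimensional of the expected dimension with finitely many singular points. The chart-local system $(\hh_j,\Hphi(\hh_j,2,m''))$ cuts out $W$ only inside the open set $\Ocal(m_jm'm'')$ and away from $S$; its \emph{global} zero set $\V(\hh_j,\Hphi(\hh_j,2,m''))$ may well contain spurious components of the wrong dimension, a non-finite singular locus, and a non-radical ideal. Feeding such a system to \Crit would compute critical points of $\mapun$ on a variety that is not $W$, and there is nothing in the hypothesis "$\Wphii[1][W]$ is finite" that controls these spurious components. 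Sandboxing to the good open set (e.g.\ by saturating, or adding a variable $t$ and an equation $1-t\,m_jm'm''$) would need a fresh argument that $(\sfA)$ holds for the modified system, which you do not supply. The [SS2017] machinery sidesteps this precisely by working with the Lagrange system, where the chart data enters only as a local normal form.

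The second gap is the chart count. The atlas $\Watlas(\bchi,\V(\gg),S,\bphi,2)$ ranges over all admissible pairs $(m',m'')$ of a $c$-minor of $\jac(\hh_j)$ and a $(c+1)$-minor of $\jac(\hh_j,\map[2])$; the number of such pairs is of order $\binom{n}{c}\binom{n}{c+1}$, which for $c\sim n/2$ is exponential in $n$. Your claim that this is "polynomial in $n$" and can be absorbed into the factor $(n+c)^{12}$ does not hold, so even if the per-chart call worked, the total cost would not stay within $\softOh{(n+c)^{12}E''D^3\delta^2+(n+c)\sigma^2}$.
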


Finally, we consider the subroutine \FiberPolar which, given polynomials 
defining a generalized polar variety of dimension at most one, the polynomial 
map $\bphi$
and a description of $Q''$, computes the fibers of the polynomial map $\bphi$ 
over $Q''$
on the polar variety. 
\begin{lemma}\label{lem:fiberpolar}

  Let $W=\Wphii[2][\V(\gg)]$ and assume that either $W$ is empty, or
  \begin{itemize}
  \item $W$ is 1-equidimensional, with $\sing(W)\subset S$, and 
    $\Watlas(\bchi,\V(\gg),S,\bphi,2)$ is an atlas of $(W,S)$;
  \item $W \cap \mapunrec(Q'')$ is finite.
  \end{itemize}
  
  There exists a Monte Carlo algorithm \FiberPolar which takes as input
  $\Gamma$, $\Gammaphi$, $\scrS$ and $\scrQ''$ and which outputs
  either \fail or a \ZDP $\scrQ$, \wcoeffQ, such that
  $\Zparam(\scrQ) = \left (W \cap \mapunrec(Q'')\right )\cup S$, using at most
  \[
    \softOh{(n+c)^4\big(E''+(n+c)^2\big)D\kappa''^2\delta^2 + 
      (n+c)\sigma^2\,}
  \]
  operations in $\QQ$, where $E''=E+E'$, and $\delta=\deltavalue$.
  Moreover, $\scrQ$
has degree at most $\kappa''\delta + \sigma$.
\end{lemma}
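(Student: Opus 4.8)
The plan is to make \FiberPolar work chart by chart on the local complete-intersection description of the curve $W$ provided by $\Watlas$, gluing the results with \Union; this way $W$ is never described globally (a direct call to \SolvePolar would already exceed the budget, because of the $\delta^3$ and $\delta\sigma^2$ terms in Lemma~\ref{lem:solvepolar}). First, from $\scrQ''$ one recovers the monic squarefree $\omega''\in\QQ[t]$ in a fresh variable $t$, of degree $\kappa''$, with $Q''=\{\omega''=0\}$. Since $\gg$, $\bphi$ and $S$ are available through $\Gamma$, $\Gammaphi$ and $\scrS$, one builds $\Watlas(\bchi,\V(\gg),S,\bphi,2)$: a sequence of charts of the form $\big(\tilde m,\ (\hh,\Hphi(\hh,2,m''))\big)$, with $\tilde m$ a product of minors and $(\hh,\Hphi(\hh,2,m''))$ of length $n-1$; by the hypothesis that this is an atlas of $(W,S)$ together with the chart axioms, on $\Ocal(\tilde m)$ the zero locus of $(\hh,\Hphi(\hh,2,m''))$ coincides with $W$ up to $S$, its Jacobian has full rank $n-1$ there, and the $\Ocal(\tilde m)$ cover $W-S$ (the list being empty when $W=\emptyset$). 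One also appends to $\Gammaphi$ a \SLP computing the entries of the finitely many minors involved, raising its length to $E''+(n+c)^{O(1)}$, with $E''=E+E'$.

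For each chart one considers the sequence of $n+1$ polynomials in the $n+1$ variables $(\XX,t)$
\[
  \FF=\big((\hh,\Hphi(\hh,2,m'')),\ \phi_1(\XX)-t,\ \omega''(t)\big).
\]
Restricted to $\Ocal(\tilde m)\times\CC$, its zero set is finite — by the hypothesis that $W\cap\mapunrec(Q'')$ is finite, and by finiteness of $S$ — and equals $\{(\yy,\phi_1(\yy)):\yy\in\Ocal(\tilde m)\cap(W\cup S'),\ \phi_1(\yy)\in Q''\}$ for some $S'\subset S$, a set on which $(\XX,t)\mapsto\XX$ is injective. Calling the geometric resolution algorithm of \cite{SS2017} on $\FF$ localized at $\tilde m$ (so that the output describes only the part of $\V(\FF)$ in $\Ocal(\tilde m)$) returns, in case of success, a \ZDP \wcoeffQ; dropping the coordinate $t$ (equal to $\phi_1$, so the projection is injective) via \Image yields a \ZDP $\scrQ'$ with $\Zparam(\scrQ')=\big(W\cap\mapunrec(Q'')\cap\Ocal(\tilde m)\big)\cup S'$. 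One finally outputs $\scrQ$, the \Union of all the $\scrQ'$ with $\scrS$; since the $\Ocal(\tilde m)$ cover $W-S$, this equals $\big(W\cap\mapunrec(Q'')\big)\cup S$ exactly, and a failed subcall makes \FiberPolar return \fail.

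For the degree bound, split $W=\bigcup_i W_i$ into irreducible (curve) components: $\phi_1$ cannot be constant with value in $Q''$ on any $W_i$, since that component would then lie in $W\cap\mapunrec(Q'')$, against finiteness; hence $\card(W_i\cap\mapunrec(q))\le\deg W_i$ for every $q\in Q''$, so $\card\big(W\cap\mapunrec(Q'')\big)\le\kappa''\sum_i\deg W_i=\kappa''\deg W\le\kappa''\delta$, with $\delta=\deltavalue$ the standard degree bound on the polar variety $W$ already used in Lemma~\ref{lem:solvepolar}; adding $\sigma$ for the union with $S$ gives $\deg\scrQ\le\kappa''\delta+\sigma$. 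For the complexity, one checks (via a graph/Bézout estimate on each localized system) that every $\V(\FF)\cap(\Ocal(\tilde m)\times\CC)$ has degree — even counted with multiplicity at critical points of $\phi_1$ on $W$ — at most $\kappa''\delta$, so that each of the $(n+c)^{O(1)}$ geometric resolution calls runs in time polynomial in $n+c$, $D$, $E''+(n+c)^{O(1)}$ and $\kappa''\delta$, as in \cite[Section~6]{SS2017}; adding the $\softOh{n(\kappa''\delta+\sigma)^2}$ cost of the \Union calls yields the announced $\softOh{(n+c)^4(E''+(n+c)^2)D\kappa''^2\delta^2+(n+c)\sigma^2}$. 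The geometric resolution calls being Monte Carlo — a generic primitive element and a generic starting point for the symbolic homotopy suffice, each ruled by the non-vanishing of a fixed non-zero polynomial in the chosen parameters — and able to report the failures they detect, \FiberPolar is a Monte Carlo algorithm with a \fail output.

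The main obstacle will be this solving step, and in particular keeping its cost governed by the geometric degree $\kappa''\delta$ rather than by the Bézout number of $\FF$ (larger by a factor of order $D$): one must show that, although $\FF$ need not be a reduced regular sequence at points where $\phi_1|_W$ is critical — there $\phi_1(\XX)-t$ is tangent to $W\times\CC$ — the localized zero set is still zero-dimensional of degree (with multiplicities) at most $\kappa''\delta$, and that the geometric resolution routine of \cite{SS2017} both applies in this setting and returns a \ZDP of the underlying set. Everything else is bookkeeping on the degrees and on the atlas.
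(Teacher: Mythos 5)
Your approach takes a genuinely different route from the paper. You propose to build $\Watlas(\bchi,\V(\gg),S,\bphi,2)$ explicitly, run a chart-by-chart geometric resolution on the local complete-intersection system $\FF=\big((\hh,\Hphi(\hh,2,m'')),\ \phi_1(\XX)-t,\ \omega''(t)\big)$, and glue with \Union. The paper instead transports the whole problem to the incidence variety: it applies Lemma~\ref{lem:LtWGNF} to get a Lagrange system $\LtW$ (of type $((2+n,2+c),(2+c,n+1),0)$, with Lagrange-multiplier variables) that defines $\Wt=\Wproji[2][\Vt]$ with the global normal form property, then calls the \Fiber procedure of \cite[Proposition~6.5]{SS2017} on $\LtW$ and the incidence version $\scrQt''$ of $\scrQ''$, and finally projects back and unions with $S$. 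That route never explicitly solves a localized determinantal system.

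There are two genuine gaps in your argument, and they are precisely what the paper's detour through Lagrange systems is designed to remove.

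First, you correctly identify, but do not close, the obstacle in the solving step: at points of $W$ where $\phi_1|_W$ is critical, the hypersurface $\phi_1(\XX)=t$ is tangent to $W\times\CC$ and $\FF$ is no longer a reduced regular sequence, so neither the correctness nor the $\Otilde$-complexity of a Kronecker-style geometric resolution call on $\FF$ follows from the results cited. Writing ``one must show that~\dots'' is not a proof; the whole difficulty of the lemma sits in this step. The Lagrange-multiplier formulation in $\LtW$ \emph{is} the resolution of that difficulty: the multipliers linearize the rank condition defining $W$ into an equidimensional reduced regular sequence (this is what the \GNFP encodes), and then fibering over $Q''$ is a specialization along a \emph{linear} coordinate $e_1$ of the incidence space, which \cite[Proposition~6.5]{SS2017} handles with controlled degree and cost. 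Your version would need a separate argument that the solver still applies and still meets the announced bound; none is given.

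Second, the degree argument ``$\card(W_i\cap\mapunrec(q))\le\deg W_i$'' is false. Non-constancy of $\phi_1|_{W_i}$ only gives finiteness of the fiber, not this bound: take $W_i$ a line and $\phi_1=x_1^2$, then $\deg W_i=1$ but the fiber over $1$ has two points. For a degree-$D_1$ polynomial the fiber over a point of $\CC$ can have up to $D_1\cdot\deg W_i$ points. Your final conclusion $\kappa''\delta+\sigma$ can still be true, but for a different reason than the one you give: $\delta$ is not a bound on $\deg W$ but on $\deg\Wt$ (the degree of the image curve inside the incidence space $\CC^{2+n}$, as produced by \SolveLag in Lemma~\ref{lem:solvepolar}), and in that space the fibering map $\bm{\pi}_1=e_1$ is linear, so each fiber $\Wt\cap\{e_1=q\}$ has at most $\deg\Wt\le\delta$ points. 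Working in the original coordinates, as you do, you would need $D_1\cdot\deg W\le\delta$ and that inequality requires comparing $\deg W$, $\deg\Wt$ and $\delta$ rather than the naive ``fiber $\le$ degree of the curve'' claim.

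Beyond these two points, the chart-by-chart strategy also faces a bookkeeping issue you brush aside: your overall cost sums over $(n+c)^{O(1)}$ charts, each with its own solver call, while the paper works with a single Lagrange system and a single call to \Fiber; matching the announced bound $\softOh{(n+c)^4(E''+(n+c)^2)D\kappa''^2\delta^2+(n+c)\sigma^2}$ from this decomposition would require a tighter accounting than ``polynomial in $n+c$, $D$, $\kappa''\delta$.''
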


\subsubsection{Subroutines for computing roadmaps in the bounded case}
As seen above, in Example~\ref{exa:algo}, we are ultimately led to
compute a roadmap for a bounded real algebraic set (this set is given
as fibers over finitely many algebraic points of the restriction of a
polynomial map to our input).  To do so, we call the algorithm
\RMRecLag from \cite{SS2017}, which internally uses similar techniques
but with projections (where $\map=\proj$). The description and the
complexity analysis of this procedure are given in
Subsection~\ref{ssec:boundedrmp}. The subtlety comes from the fact
that, in \cite{SS2017}, the correction and complexity estimate of
\RMRecLag are given for an input consisting of polynomials $\ff$
defining an algebraic set $V=\V(\ff)$; here, we need an algorithm that
works for an input given as fibers of a polynomial map. More
precisely, we prove the following result in
Subsection~\ref{ssec:boundedrmp}.

\begin{proposition}\label{prop:comp-broadmap}
  Let $\Gamma$ and $\Gammaphi$ be \SLPs, of respective length $E$ and $E'$,
  computing respec\-ti\-ve\-ly sequences of polynomials $\gg = (g_1, \ldots,
  g_p)$ and $\map=(\phi_1,\dotsc,\phi_n)$ in $\QQ[x_1, \ldots, x_n]$, of degrees
  bounded by $D$. Assume that $\gg$ satisfies \ref{ass:A}. Let $\scrQ$ and
  $\scrS_Q$ be \ZDPs of respective degrees $\degQ$ and $\degS$ that encode
  finite sets $Q \subset \CC^{e}$ (for some $0<e\leq n$) and $S_Q \subset
  \CC^n$, respectively. Let $V = \V(\gg)$ and $F_Q = \mapfbr{V}[Q][e]$, and
  assume that
  \begin{itemize}
   \item $F_Q$ is equidimensional of dimension $d-e$, where $d=n-p$;
  \item $\Fatlas(\bchi,V,Q,\bphi)$ is an atlas of 
$(F_Q,S_Q)$, and $\sing(F_Q) \subset S_Q$;
  \item the real algebraic set $F_Q\cap 
\RR^n$ is bounded.
  \end{itemize}
  Consider additionally a zero-dimensional parametrization $\scrP$ of
  degree $\degP$ encoding a finite subset  $\Pcal$ of $F_Q$, which contains 
$S_Q$. Assume that $\degS \leq ((n+e)D)^{n+e}$.
  
  There exists a probabilistic algorithm $\RMBound$ which takes as input
  $((\Gamma, \Gammaphi, \scrQ,\scrS), \scrP)$ and which, in 
case of success, outputs a roadmap of 
  $(F_Q, \Pcal)$,
  of degree
  \[\softOh{
    (\degP + \degQ) 16^{3d_F}
    (n_F\logde{n_F})^{2(2d_F + 12\logde{d_F}) (\logde{d_F} + 5)}
    D^{(2n_F+1)(\logde{d_F} + 3)}}
\]
 where $n_F = n+e$ and $d_F = d-e$, and using
\[
\softOh{
    \mu'^3 16^{9d_F} E''
(n_F\logde{n_F})^{6(2d_F + 12\logde{d_F}) (\logde{d_F} + 6)}
D^{(6n_F+3)(\logde{d_F} + 4)}
}
\]
arithmetic operations in $\QQ$, where $\mu' = \degP + \degQ$ and $E'' = E+E'+e$.
\end{proposition}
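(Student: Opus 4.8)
The plan is to reduce the statement to an application of the algorithm \RMRecLag\ from \cite{SS2017}, which is designed for inputs given by polynomials defining an algebraic set rather than by fibers of a polynomial map. The key observation is that a fiber $F_Q = \mapfbr{V}[Q][e]$ over a \emph{finite} set $Q \subset \CC^e$ is itself an algebraic set; the difficulty is that it is defined by the equations $\gg$ together with the (non-polynomial) constraint $\map[e](\XX) \in Q$, so the first task is to manufacture a bona fide polynomial system for it. First I would introduce the univariate polynomials $\omega_1,\dots,\omega_e$ whose roots describe the coordinates of $Q$ via the \ZDP $\scrQ$ (or rather, work directly with the minimal polynomial and the parametrizing coordinates of $\scrQ$), and form the system $\gg^\star = (\gg,\, \omega(\phi_1),\dots)$ in a slightly enlarged variable set if needed; then check that near $F_Q$ this system is a reduced regular sequence cutting out $F_Q$. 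This is precisely where the hypotheses that $\Fatlas(\bchi,V,Q,\bphi)$ is an atlas of $(F_Q,S_Q)$ and that $\sing(F_Q)\subset S_Q$ come in: the atlas provides, chart by chart, a \emph{complete intersection} local model of $F_Q$ of the right codimension, smooth outside $S_Q$, which is exactly the shape of input \RMRecLag\ expects.

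Next I would feed this data to \RMRecLag. Concretely: build an \SLP $\Gamma^\star$ evaluating $\gg^\star$ by composing $\Gamma$ (for $\gg$), $\Gammaphi$ (for $\bphi$), and the straight-line evaluation of the univariate polynomials coming from $\scrQ$; its length is $E'' = E + E' + e$ up to lower-order terms, accounting for the $\kappa$ coefficients of $\scrQ$. The dimension of $F_Q$ is $d_F = d-e$ and it sits in an ambient space which, after homogenizing / adding the coordinates describing the point of $Q$, has dimension $n_F = n+e$; the degree bound on the defining equations stays $D$ (the univariate factors contribute degree $\le\deg\omega$, but the technical normalization in \cite{SS2017} absorbs this into the stated exponents — this is the origin of the hypothesis $\degS \le ((n+e)D)^{n+e}$, which guarantees the size of the extra equations does not dominate). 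The query points $\Pcal$, given by $\scrP$ of degree $\degP$, together with $S_Q$, given by $\scrS_Q$ of degree $\degS$, are merged via \Union; since $S_Q \subset \Pcal$ by hypothesis, the merged \ZDP has degree $O(\degP)$, and one checks $\Pcal \subset F_Q$ so it is a legitimate set of query points. Boundedness of $F_Q \cap \RR^n$ is hypothesized, which is the remaining precondition of \RMRecLag.

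Then the correctness and complexity of \RMRecLag\ from \cite{SS2017} apply verbatim to $(\Gamma^\star, \scrQ', \scrP')$ with parameters $(n_F, d_F, D, \degP+\degQ, E'')$, yielding a \ODP of a roadmap of $(F_Q,\Pcal)$ whose degree and running time are obtained by substituting these values into the bounds of \cite{SS2017}; collecting terms and simplifying gives exactly the two displayed estimates (the $\degP + \degQ$ factor reflects that both the query points and the defining data of $Q$ contribute to the output size, and $16^{3d_F}$, $16^{9d_F}$ are the unchanged ``depth-of-recursion'' constants). The Monte Carlo nature is inherited: \RMRecLag\ succeeds whenever finitely many parameter choices avoid a nonzero polynomial, and the preprocessing (forming $\Gamma^\star$, \Union) introduces no further randomization.

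The main obstacle I anticipate is the first step: showing that the fiber $F_Q$, presented by the atlas $\Fatlas(\bchi,V,Q,\bphi)$, genuinely fits the input specification of \RMRecLag, which in \cite{SS2017} is phrased for a single globally-defined reduced regular sequence with finitely many singular points. One must either (i) re-examine \RMRecLag\ and verify it only ever uses its input through a chart/atlas interface — so that supplying $\Fatlas(\bchi,V,Q,\bphi)$ directly is legitimate — or (ii) exhibit a global reduced regular sequence for $F_Q$ (up to removing the finite bad set $S_Q$) by gluing the local complete-intersection models, e.g.\ via a generic linear combination of the chart equations as in \cite[Lemma A.13]{SS2017}. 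I expect option (i) is what the authors intend, and the bulk of the real work in the full proof of Subsection~\ref{ssec:boundedrmp} is a careful bookkeeping of how the atlas and the extra univariate equations propagate through the degree and complexity estimates of \cite{SS2017}; everything else (building $\Gamma^\star$, the \Union\ call, substituting into the bounds) is routine.
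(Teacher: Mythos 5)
Your overall instinct — reduce to \RMRecLag\ from \cite{SS2017} — is right, and your ``option (i)'' guess is the one the paper follows, but the way you propose to carry it out has a genuine gap that would wreck the complexity bound. You propose to manufacture a global polynomial system for $F_Q$ by composing the defining polynomial of $\scrQ$ with the $\phi_j$'s (``$\gg^\star=(\gg,\omega(\phi_1),\dots)$''). Leaving aside that the correct encoding would need a new parameter variable $u$ and equations $\omega(u),\phi_j-\rho_j(u)$ rather than $\omega(\phi_j)$, the real problem is that these new equations have degree controlled by $\degQ$, not by $D$. In the context where \RMBound\ is called (Step~\ref{step:rmbound} of Algorithm~\ref{alg:algormunbound}), $\degQ$ is already of order $n^{d+2}D^{n+3}+\mu$, i.e.\ exponential in $n$. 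Since the bounds of \cite{SS2017} are exponential in the number of variables with base $D$, replacing $D$ by $\max(D,\degQ)$ in that role gives a doubly-exponential blow-up, not something ``absorbed into the stated exponents.'' Note also that the hypothesis $\degS\le((n+e)D)^{n+e}$ you invoke controls the degree of $\scrS_Q$, not $\scrQ$; it plays no role in taming $\degQ$.

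What the paper actually does avoids this entirely: the data structure of \cite{SS2017} (generalized Lagrange systems) already has a slot for a zero-dimensional parametrization $\scrQ_\rho$ of ``base points'' in $\CC^{e_\rho}$, so $Q$ is passed \emph{as data}, not re-encoded as polynomial equations; $\degQ$ then only ever enters the output degree linearly, as the prefactor $(\degP+\degQ)$. The real new work is Proposition~\ref{prop:roadmaprec}, which redoes the complexity analysis of \RMRecLag\ (i.e.\ \cite[Proposition~O.7]{SS2017}) in the case $e_\rho>0$, which \cite{SS2017} had only treated for $e_\rho=0$. Separately, since \RMRecLag\ only handles \emph{projections}, the paper first passes through the incidence isomorphism $\Inciphi$ (Lemmas~\ref{lem:IVphi}, \ref{lem:InciFiberatlas}, \ref{lem:LtFGNF}) to turn the fiber of $\map[e]$ on $V\subset\CC^n$ into the fiber of the coordinate projection $\proj[e]$ on $\Vt\subset\CC^{e+n}$; this is what yields $n_F=n+e$ while keeping the degree bound at $D$ and the SLP length at $E''=E+E'+e$. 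Your sketch gestures at ``adding coordinates,'' but without the incidence variety and without using the Lagrange-system base-points mechanism, the argument does not go through as stated.
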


\subsection{Description of the main algorithm}\label{ssec:algodescription}

Now, we describe the main algorithm to compute roadmaps of smooth
unbounded real algebraic sets.  In addition to the subroutines
mentioned above, we define \Rand as a procedure that takes as input a
set $X$ and returns a random element in $X$.
Together with \SLPPhi, it allows us to generate ``generic enough'' polynomial 
maps so that the results of the previous section do apply 
(Propositions~\ref{prop:fibergenalg}, \ref{prop:polargen}, 
\ref{prop:critpolargen} and~\ref{prop:dimfiber}).

\begin{algorithm}[h]
 \setstretch{1.2}
 \caption{Roadmap algorithm for smooth unbounded real algebraic 
sets.}\label{alg:algormunbound}
 \begin{algorithmic}[1] \newcommand{\alignalg}[1]{\mathrlap{#1}\hphantom{\scrP_F}}
  \Statex \hspace*{-0.7cm}
  \begin{minipage}[t]{0.055\linewidth}
   \textbf{Input:} 
  \end{minipage}
  \begin{minipage}[t]{.98\linewidth}
  \begin{itemize}[label=$\rhd\!$]
  \item a straight-line program $\Gamma$ that evaluates polynomi\-als
   $\ff=(f_1,\dotsc,f_c) \subset\QQ[\XX]$, satisfying assumption \ref{ass:A}; we
  note $V = \V(\ff)$;\\[-2em]
\item a \ZDP $\scrP_0$ encoding a finite set $\Pcal_0\subset V$.\\[-0.5em]
  \end{itemize}
  \end{minipage}
  
  \Ensure a one-dimensional parametrization $\scrR$ encoding a roadmap of 
$(V,\Pcal_0)$.
 \Statex \vspace*{-0.5em}
 \State\label{step:sing} 
 $\alignalg{\scrS}\gets\SingPts(\Gamma)$;
 \Comment{$\Zparam(\scrS)=\sing(V)$;}
 
 \State\label{step:union} 
 $\alignalg{\scrP}\gets \Union(\scrP_0,\scrS);$
\Comment{$\Pcal := \Zparam(\scrP) = \Pcal_0\cup\sing(V)$}
  \State\label{step:randa} $\alignalg{\balf} \gets\Rand(\QQ^{2n})$;
  
  \State\label{step:phigen} 
  $\alignalg{\Gammaphi} \gets\SLPPhi(\balf)$;
  \Comment{$\Gammaphi$ computes $\phi=\left(||\XX||^2 + \pscal{\balf_1,\XX}, 
\pscal{\balf_2,\XX}\right)$}
  
  \State\label{step:solvepolar} 
  $\alignalg{\scrW_2} \gets \SolvePolar(\Gamma,\Gammaphi,\scrS);$
  \Comment{$W_2 := \Zparam(\scrW_2) = \Wphii[2]$;}

  \State\label{step:critpolar}
  $\alignalg{\scrK}\gets \CritPolar(\Gamma,\Gammaphi,\scrP)$;
  \Comment{$K:=\Zparam(\scrK)=\Wphii[1][W_2]\cup \Pcal_0\cup\sing(V)$;}
  
  \State\label{step:image} 
$\alignalg{\scrQ}\gets\Image(\Gammaphi,1,\scrK)$;
  \Comment{$Q:=\Zparam(\scrQ)=\mapun(K)$;}
  
  \State\label{step:fiberpolar} $\scrP_F\gets\FiberPolar(\Gamma, \Gammaphi, 
\scrQ, \scrP)$;
  \Comment{$\Zparam(\scrP_F)= \left[W_2 \cup \Pcal_0\cup \sing(V)\right]  
\cap\mapunrec(Q)$;}
  
  \State\label{step:crit} $\scrS_F \gets \Crit(\Gamma,\Gammaphi,\scrS)$
  \Comment{$\Zparam(\scrS_F) = \Kphii[1]$;}
  
  \State\label{step:rmbound} $\scrR_F 
  \gets\RMBound\big((\Gamma,\Gammaphi,\scrQ,\scrS_F),\scrP_F\big)$
  \Statex
  \Comment{$\Zparam(\scrR_F)$ is a roadmap of 
$\left(V\cap\mapunrec(Q),\Zparam(\scrP_F)\right)$;}

  \State\label{step:return} return \textsf{Union}$(\scrW_2, \scrR_F)$
  \Comment{$W_2 \cup \Zparam(\scrR_F)$ is a roadmap of $(V,\Pcal_0)$.}
 \end{algorithmic}
\end{algorithm}

\begin{comment}
Using the notations of Algorithm~\ref{alg:algormunbound}, as $Q$ is a finite 
subset of $\CC$ and assumption $(\polun,\ff)$ satisfies assumption $(\sfB)$,  
 
the fibers $F=\V(\ff)\cap\mapunrec(Q)$ form a bounded algebraic set. 
Hence, one can compute a roadmap of associated to $F$ using the former 
algorithm of \cite{SS2017}. Taking the union with the (generically) 
one-dimensional $\Wphii[2]$, we get a roadmap of $V$ according to the 
connectivity result \cite[Theorem 1.2]{PSS2024} proved in the first part of 
these works.

Therefore, assuming the algorithm is proven correct, given a possibly 
unbounded 
algebraic set $V\subset \RR^n$ and a finite set of control points $\Pcal_0 
\subset V$, we bring back the computation of a roadmap of $V$ to the 
computation of a roadmap of the union $F$ of finitely many bounded algebraic 
sets.
\end{comment}

\subsection{Correctness and complexity estimate}
This subsection is devoted to the proof of the following theorem, which 
directly implies Theorem~\ref{thm:main}.

\begin{theorem}\label{thm:corrcomp}
Let $\Gamma$ be a \SLP of length $E$ evaluating polynomials
$\ff=(f_1,\dotsc,f_c)$ of degrees bounded by $D\geq2$, satisfying \ref{ass:A}. Let
$\scrP_0$ be a \ZDP of degree $\mu$ encoding a finite subset of $\V(\ff)\subset
\CC^n$.
Then there exists a \NEZO $\ZOpen \subset \CC^{2n}$ such that the following 
holds.

Let $\balf \in \QQ^{2n}$ the vector randomly chosen in the execution of 
Algorithm~\ref{alg:algormunbound}, then if $\balf \in \ZOpen$, and if the 
calls to the subroutines
\begin{center}\normalfont
 \hfill\SingPts,\hfill \Union,\hfill \SolvePolar,\hfill \CritPolar,\hfill 
\Image,\hfill \FiberPolar,\hfill \Crit \hfill \text{and} \hfill \RMBound\hfill
\end{center}
are successful then, on inputs $\Gamma$ and 
$\scrP_0$, Algorithm~\ref{alg:algormunbound} either returns 
a \ODP of degree
\[
  \overalldegree
\]
using
\[
  \overallcomplexity
\]
arithmetic operations in $\QQ$, with $d=n-c$.

In case of success, its output describes a
roadmap of $(\V(\ff),\Zparam(\scrP_0))$.
\end{theorem}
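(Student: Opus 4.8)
The plan is to follow the structure of Algorithm~\ref{alg:algormunbound} line by line, establishing correctness first and then accumulating the degree and complexity bounds. The starting point is to define the open set $\ZOpen\subset\CC^{2n}$ as the intersection of the various non-empty Zariski open sets produced by the genericity statements of the preliminaries, namely $\ZOnoether(V,\polun,2)$ from Proposition~\ref{prop:fibergenalg}, $\ZOpolar(\bchi,V,\sing(V),(\polun,0),(0),2)$ from Proposition~\ref{prop:polargen}, $\ZOcritpolar(V,\polun)$ from Proposition~\ref{prop:critpolargen}, and $\ZOfiber(\bchi,V,\sing(V),\polun,2)$ from Proposition~\ref{prop:dimfiber}, all taken with $\polun=\polun_1 = x_1^2+\dots+x_n^2$ and with the atlas $\bchi$ of $(V,\sing(V))$ furnished by \cite[Lemma A.13]{SS2017}. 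Each of these is non-empty Zariski open in $\CC^{2n}$, hence so is their intersection $\ZOpen$. Assuming $\balf\in\ZOpen$ fixes, once and for all, that the map $\bphi=(||\XX||^2+\pscal{\balf_1,\XX},\pscal{\balf_2,\XX})$ enjoys the regularity properties $(\sfH_2)$ through $(\sfH_6)$ of Example~\ref{exa:algo}: properness and boundedness below of $\mapun$ on $V\cap\RR^n$ (by the quadratic term, independently of $\balf$), $1$-equidimensionality and smoothness of $W_2=\Wphii[2]$ outside $\sing(V)$, $(d-1)$-equidimensionality of the fibers $\mapfbreq{V}[z][1]$, finiteness of $\Kphii[1][W_2]$, and finiteness of $\Pcal_W = F\cap W_2$ where $F=V\cap\mapun[-1](\mapun(K))$ and $K=\Kphii[1][W_2]\cup\sing(V)$. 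Moreover, by Proposition~\ref{prop:polargen}(2) with $i=2\le(d+3)/2$, $\Watlas(\bchi,V,\sing(V),\bphi,2)$ is an atlas of $(W_2,\sing(V))$, and by Proposition~\ref{prop:dimfiber}(3) $\Fatlas(\bchi,V,Q,\sing(V),\bphi)$ is an atlas of $(F_Q,S_Q)$ with $\sing(F_Q)\subset S_Q$ — these are exactly the hypotheses needed to invoke \SolvePolar, \CritPolar, \FiberPolar and \RMBound.

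Next I would verify correctness of the output under the additional assumption that all subroutine calls succeed, tracing the comments attached to each line of the algorithm. Lines~\ref{step:sing}--\ref{step:phigen} produce $\scrS$ with $\Zparam(\scrS)=\sing(V)$ (by \SingPts, since $\ff$ satisfies $(\sfA)$), $\scrP$ with $\Zparam(\scrP)=\Pcal_0\cup\sing(V)$ (by \Union), and the straight-line program $\Gammaphi$ (by Lemma~\ref{lem:phigen}). Lines~\ref{step:solvepolar}--\ref{step:crit} then compute, in turn: $\scrW_2$ with $\Zparam(\scrW_2)=W_2$ (Lemma~\ref{lem:solvepolar}, using the atlas conclusion above and $1$-equidimensionality of $W_2$); $\scrK$ with $\Zparam(\scrK)=\Wphii[1][W_2]\cup\Pcal_0\cup\sing(V)=K\cup\Pcal_0$ (Lemma~\ref{lem:critpolar}, using finiteness of $\Wphii[1][W_2]$); $\scrQ$ with $\Zparam(\scrQ)=\mapun(\Zparam(\scrK))$, a finite subset $Q\subset\CC$ (Lemma~\ref{lem:imageZDP}); $\scrP_F$ encoding $[W_2\cup\Pcal_0\cup\sing(V)]\cap\mapunrec(Q)$ (Lemma~\ref{lem:fiberpolar}, whose finiteness hypothesis $W_2\cap\mapunrec(Q)$ finite follows from $(\sfH_6)$ applied fiber-by-fiber over the finite set $Q$); and $\scrS_F$ with $\Zparam(\scrS_F)=\Kphii[1]\cup\sing(V)=K$ (Lemma~\ref{lem:crit}, using finiteness of $\Kphii[1][V]$ coming from Proposition~\ref{prop:fibergenalg}(3) combined with the algebraic Sard lemma, exactly as in the proof of Proposition~\ref{prop:critpolargen}). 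At line~\ref{step:rmbound}, the input $((\Gamma,\Gammaphi,\scrQ,\scrS_F),\scrP_F)$ satisfies the hypotheses of Proposition~\ref{prop:comp-broadmap}: $F_Q=V\cap\mapunrec(Q)$ is $(d-1)$-equidimensional (by $(\sfH_4)$, since $Q$ is a finite union of values $z$), $\Fatlas$ is an atlas of $(F_Q,S_Q)$ with $S_Q=\Zparam(\scrS_F)$, the real trace $F_Q\cap\RR^n$ is bounded because $\mapun$ is proper on $V\cap\RR^n$ and $Q$ is finite, $\scrP_F$ encodes a finite subset of $F_Q$ containing $S_Q$, and the technical degree bound $\degS\le((n+e)D)^{n+e}$ holds by the degree estimate for $\scrS_F$ in Lemma~\ref{lem:crit}; hence \RMBound returns a roadmap $\scrR_F$ of $(F_Q,\Zparam(\scrP_F))$. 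Finally, Proposition~\ref{prop:connectresult} (the effective form of \cite[Proposition 2]{SS2011} combined with \cite[Theorem 1.1]{PSS2024}) shows that $W_2\cup\Zparam(\scrR_F)$ is a roadmap of $(V,\Pcal_0)$, so the \Union at line~\ref{step:return} returns a correct one-dimensional parametrization.

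For the quantitative part I would substitute the degree and complexity bounds from Lemmas~\ref{lem:imageZDP}, \ref{lem:crit}, \ref{lem:solvepolar}, \ref{lem:critpolar}, \ref{lem:fiberpolar} and Proposition~\ref{prop:comp-broadmap} into the call graph, with the bookkeeping that every parametrization threaded through the algorithm has degree at most $\delta=\deltavalue$ or a small multiple thereof (e.g. $\scrQ$ has degree at most $\delta(n+c+4)D+\sigma$ by Lemma~\ref{lem:critpolar}, $\scrS_F$ at most $\binom{n+1}{d}D^{c+2}(D-1)^d+\sigma$ by Lemma~\ref{lem:crit}, and $\scrP_F$ at most $\kappa''\delta+\sigma$ by Lemma~\ref{lem:fiberpolar}), and that $\sigma$ itself is bounded via Lemma~\ref{lem:crit}. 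The dominant contribution is the call to \RMBound at line~\ref{step:rmbound}: plugging $n_F=n+1$, $d_F=d-1$, and $\mu'=\degP+\degQ = \softOh{\mu\,\delta D}$ into the degree and operation-count bounds of Proposition~\ref{prop:comp-broadmap}, and simplifying the products of the shape $16^{9d_F}(n_F\log n_F)^{\cdots}D^{\cdots}$ against the polynomial-in-$\delta$ factors from the polar-variety subroutines, yields precisely the stated bounds $\overalldegree$ for the output degree and $\overallcomplexity$ for the arithmetic cost; the final \Union at line~\ref{step:return} is absorbed, costing $\softOh{n\,\mathscr{B}^2}$ for output of degree $\mathscr{B}$. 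I expect the main obstacle to be neither correctness nor any single subroutine estimate, but the careful propagation of the intermediate degree bounds — in particular keeping track of how $\deg(\scrQ)$, $\deg(\scrP_F)$ and $\deg(\scrS_F)$ feed into $\mu'$ and $\degS$ for the \RMBound call, and verifying that the technical hypothesis $\degS\le((n+e)D)^{n+e}$ of Proposition~\ref{prop:comp-broadmap} is met — together with the somewhat delicate exponent arithmetic needed to collapse the nested $\log_2(d)$ factors into the clean closed forms displayed in the statement.
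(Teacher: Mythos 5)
Your proposal matches the paper's proof almost verbatim: the same open set $\ZOpen$ as the intersection of $\ZOnoether$, $\ZOpolar$, $\ZOcritpolar$ and $\ZOfiber$, the same step-by-step verification that each subroutine's hypotheses hold, the same appeal to Proposition~\ref{prop:connectresult} for correctness of $W_2\cup\Zparam(\scrR_F)$, and the same observation that the \RMBound call dominates the cost. Two small bookkeeping slips worth flagging but not affecting the structure: $\scrS_F$ actually encodes $\Kphii[1][V]$, which equals $S_Q$ rather than the set $K$ of Proposition~\ref{prop:connectresult}, and the intermediate estimate $\mu'=\softOh{\mu\,\delta D}$ is not quite right (the paper has $\mu$ entering additively, not multiplicatively, in the degree bounds for $\scrQ$ and $\scrP_F$, which is what makes the final output degree linear in $\mu$).
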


The correctness of Algorithm~\ref{alg:algormunbound} relies mainly on the
conjunction of \cite[Theorem 1.1]{PSS2024} and \cite[Proposition 2]{SS2011},
that form the following statement, with slightly stronger assumptions, which
hold in our context.
\begin{proposition}\label{prop:connectresult}
Let $V\subset\CC^n$ be a $\QQ$-algebraic set of dimension $d\geq 2$, and let
$\Pcal_0$ be a finite subset of $V$.
Let $\phi = (\phi_1,\phi_2) \subset \RR[\XX]$ and $W = \Wphii[2]$. Suppose that 
the following holds:
\begin{enumerate}[label=$\mathsf{(H_\arabic*)}$]
 \item $V$ is equidimensional and $\sing(V)$ is finite;
 \item the restriction of $\bphi_1$ to $\V\cap\RR^n$ is a proper map bounded 
from below;
 \item $W$ is either empty or $1$-equidimensional and smooth outside 
$\sing(V)$;
 \item for any $\yy \in \CC^2$, the set $V \cap \mapun[-1](\yy)$ is either 
empty or $(d-1)$-equidi\-men\-sio\-nal;
 \item $\Kphii[1][W]$ is finite.
 \end{enumerate}
Let further $K = \Kphii[1][W] \cup \Pcal_0 \cup \sing(V)$ and $F = V \cap 
\mapun[-1](\mapun(K))$. \\
Assume in addition that
 \begin{enumerate}
 \item[$\mathsf{(H_6)}$] $\Pcal_W=F \cap W$ is finite.
\end{enumerate}
If $\Rcal_F$ is a roadmap of $(F,\Pcal_0 \cup \Pcal_W)$, then $W \cup \Rcal_F$
is a roadmap of $(V,\Pcal_0)$.
\end{proposition}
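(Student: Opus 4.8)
The plan is to derive Proposition~\ref{prop:connectresult} by chaining together two results: the connectivity theorem \cite[Theorem 1.1]{PSS2024}, which under hypotheses $\mathsf{(H_1)}$--$\mathsf{(H_5)}$ guarantees that $W$ has a connected (and non-empty) intersection with each \SACC of $V\cap\RR^n$ after joining it with suitable fibers, and the ``patching'' lemma \cite[Proposition 2]{SS2011}, which explains how a roadmap of a fiber-type subset can be glued to $W$ to produce a roadmap of the whole ambient set. The main point is to check that the stronger-but-convenient hypotheses in our statement (in particular $\sing(V)$ finite rather than merely $V$ normal or smooth in codimension one, and $W$ smooth outside $\sing(V)$) do imply the precise hypotheses required by those two cited results, so that their conjunction applies verbatim.

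First I would unwind the definitions: recall that $K = \Kphii[1][W] \cup \Pcal_0 \cup \sing(V)$ is finite by $\mathsf{(H_1)}$, $\mathsf{(H_5)}$ and finiteness of $\Pcal_0$; hence $Q := \mapun(K) \subset \CC^2$ is finite, and $F = V \cap \mapun[-1](Q)$ is an algebraic subset of $V$ of dimension at most $d-1$ by $\mathsf{(H_4)}$. Then I would invoke \cite[Theorem 1.1]{PSS2024} with the data $(V, \phi, \Pcal_0)$: its conclusion is that for $\Rcal_F$ any roadmap of $(F, \Pcal_0\cup\Pcal_W)$, the union $W\cup\Rcal_F$ meets every \SACC of $V\cap\RR^n$ in a non-empty and \SAC set — that is, it satisfies the connectivity requirement in the definition of a roadmap. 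The role of $\mathsf{(H_6)}$ (finiteness of $\Pcal_W = F\cap W$) is exactly to ensure that the gluing points between $W$ and $\Rcal_F$ form a finite set that can be added to the control points, which is what \cite[Proposition 2]{SS2011} needs; I would cite that proposition to conclude that $W\cup\Rcal_F$ is actually a roadmap, i.e. is moreover an algebraic curve containing the required points $\Pcal_0$ (note $\Pcal_0\subset F$ so $\Pcal_0\subset\Rcal_F$, and $\Rcal_F$ being a roadmap of $(F,\Pcal_0\cup\Pcal_W)$ contains $\Pcal_0\cup\Pcal_W$). Since $W$ is a curve by $\mathsf{(H_3)}$ (when non-empty) and $\Rcal_F$ is a curve, their union is a curve, and it is contained in $V$ because both $W\subset V$ and $F\subset V$.

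The one genuinely delicate step — and the place I expect most of the work to go — is verifying that the ``smooth outside $\sing(V)$'' formulation of $\mathsf{(H_3)}$, together with $\sing(V)$ finite, matches the regularity hypotheses on the polar variety $W = \Wphii[2]$ assumed in \cite[Theorem 1.1]{PSS2024}; in particular one must check that $\sing(W)$ is finite (it is contained in $\sing(V)$, hence finite by $\mathsf{(H_1)}$) and that the critical/singular locus $\Kphii[1][W]$ behaves as required (this is where $\mathsf{(H_5)}$ enters). I would spell out that $W$ being $1$-equidimensional and smooth outside the finite set $\sing(V)$ is precisely the situation covered by the connectivity theorem, possibly after noting that the map $\mapun$ restricted to $W$ has the needed properness consequences inherited from $\mathsf{(H_2)}$. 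Finally I would remark that all hypotheses of \cite[Proposition 2]{SS2011} — essentially that $F$ contains the ``connectivity-failure locus'' of $W$ and that the relevant intersection is finite — are subsumed by the construction of $K$ and $F$ and by $\mathsf{(H_6)}$, so the two cited statements compose to give exactly the claim.
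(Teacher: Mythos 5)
Your plan is the same one the paper uses: combine the connectivity statement \cite[Theorem 1.1]{PSS2024} with the gluing statement \cite[Proposition 2]{SS2011}. But your sketch blurs the division of labour between the two results and glosses over the two verifications that actually carry the weight of the paper's proof.

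First, a misattribution: you say that \cite[Theorem 1.1]{PSS2024} directly concludes that $W\cup\Rcal_F$ has non-empty, \SAC intersection with each \SACC of $V\cap\RR^n$. That is not what it gives. It gives the connectivity property for $W\cup F$ (the whole fiber, not a roadmap of it). Passing from $F$ to an arbitrary roadmap $\Rcal_F$ of $(F,\Pcal_0\cup\Pcal_W)$, and then concluding that $W\cup\Rcal_F$ is a roadmap of $(V,\Pcal_0)$, is precisely the content of \cite[Proposition 2]{SS2011}, applied with $\Rcal_W=W$ (which is a roadmap of $(W,\Pcal_W)$ since it is $1$-dimensional). Your description reverses this: you give Theorem 1.1 the roadmap conclusion and then make Proposition 2 merely check curve-ness and containment of $\Pcal_0$, which are not the issues.

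Second, there are two specific checks you acknowledge only vaguely. (i) Verifying assumption $\mathsf{C}$ of \cite[Theorem 1.1]{PSS2024} is not automatic from $\mathsf{(H_1)}$--$\mathsf{(H_5)}$. One has to rewrite $K=\Wphii[1]\cup S\cup\sing(V)$ with $S=\Wphii[1][W]\cup\Pcal_0$, using $\Wphii[1]\subset\Kphii[1][W]$ and $\sing(W)\subset\sing(V)$ (the latter from $\mathsf{(H_3)}$), and then observe that $S$ is finite by $\mathsf{(H_5)}$ and meets every \SACC of $\Wphii[1][W]\cap\RR^n$ by construction. This is the central computation, and your sketch does not perform it. (ii) \cite[Proposition 2]{SS2011}, as stated, requires \emph{both} $\Rcal_W$ and $\Rcal_F$ to contain $\Pcal_0$, but $\Pcal_0$ need not lie in $W$; the paper explicitly invokes a slight generalization where only one of the two roadmaps must contain $\Pcal_0$, which is what justifies taking $\Rcal_W=W$. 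Your remark that ``$\Pcal_0\subset\Rcal_F$'' points at the right fact but does not flag that the cited statement needs adjustment. Filling these two points in would make your proof match the paper's.
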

\begin{proof}
 Remark first that the so-called assumptions $\sfA$, $\sfP$ and $\sfB$ from the 
connectivity result from \cite[Theorem 1.1]{PSS2024} are direct consequences of 
assumptions $\sfH_1$ to $\sfH_4$. 
Besides, $\Wphii[1] \subset \Kphii[1][W]$ and $\sing(W) \subset \sing(V)$, by 
\cite[Lemma A.5]{SS2017} together with assumption $\sfH_3$.
Hence, one can write
 \[
  K = \Wphii[1] \cup S \cup \sing(V).
 \]
 where $S = \Wphii[1][W] \cup \Pcal_0$. By $\sfH_5$, $S$ is a finite subset of
 $V$, that intersects every \SACC of $\Wphii[1][W] \cap \RR^n$ by definition.
 Hence, $S$ satisfies assumption $\sfC$ of \cite[Theorem 1.1]{PSS2024}. By
 application of this latter result, $W\cup F$ has then a non-empty and \SAC
 intersection with every \SACC of $V\cap\RR^n$ and it contains $\Pcal_0$ by
 construction.
 
Moreover, by $\sfH_6$, $F\cap W$ is finite, so that by \cite[Proposition 
2]{SS2011}, the following holds. 
If $\Rcal_W$ and $\Rcal_F$ are roadmaps of respectively $(W,\Pcal_0 \cup 
\Pcal_W)$ and $(F,\Pcal_0 \cup \Pcal_W)$, then $\Rcal_W \cup \Rcal_F$ is a 
roadmap of $(V,\Pcal_0)$.
But remark that $W$ is a roadmap of $(W,\Pcal_W)$ since $W$ has dimension one.
Besides, \cite[Proposition 2]{SS2011} can be slightly generalized as only one 
of $\Rcal_W$ or $\Rcal_F$ must contain $\Pcal_0$.
Hence, taking $\Rcal_W=W$ allows us to conclude.
\end{proof}

\begin{proof}[Proof of Theorem~\ref{thm:corrcomp}]~ Let $\Gamma$ and $\scrP_0$
  be the inputs of Algorithm~\ref{alg:algormunbound} and assume that
  $\Gamma$ evaluates polynomials $\ff = (f_1, \ldots, f_c)$ satisfying
  assumption \ref{ass:A}. Let $V = \V(\ff)$ and $\Pcal_0 = \Zparam(\scrP_0)$.

  Recall that we assume all calls to the subroutines \SingPts, \Union,
  \SolvePolar, \CritPolar, \Image, \FiberPolar, \Crit and \RMBound do succeed.
  
\paragraph*{Steps~\ref{step:sing}\,-\ref{step:union}}
By \cite[Proposition J.35]{SS2017}, the procedure \SingPts outputs a
\ZDP $\scrS$ describing $\sing(V)$ using $\Otilde(ED^{4n+1})$ operations in
$\QQ$. By \cite[Proposition I.1]{SS2017} (or \cite[Proposition 3]{SS2018})
$\scrS$ has degree at most
\[
  \deltaS = \binom{n-1}{c-1} D^{c}(D-1)^{d} = \binom{n-1}{d} D^{c}(D-1)^{d}
  \in O(n^dD^n)
\]
Then, according to \cite[Lemma J.3]{SS2017} and our assumptions, the procedure
\Union outputs a \ZDP $\scrP$ of degree at most
\[
  \deltaP=\mu+\deltaS \inOh O(\mu + n^dD^n)\text{,\quad using\quad 
}\Otilde(n(\mu^2+n^{2d}D^{2n})) \text{ operations in $\QQ$}
\] 
which describes $\Pcal := \Pcal_0 \cup \sing(V)$.

\medskip 
Besides, since $V$ is equidimensional, there exists, by \cite[Lemma 
A.13]{SS2017}, an atlas $\bchi$ of $(V,\sing(V))$. According to 
Definition~\ref{def:atlas}, $\bchi$ is an atlas of $(V,\Pcal)$ as well.

\paragraph*{Steps~\ref{step:randa}\,-\ref{step:phigen}}
By definition of the procedure \Rand, $\balf$ is an arbitrary element of 
$\QQ^{2n}$, and according to Lemma~\ref{lem:phigen}, $\Gammaphi$ is a \SLP of 
length $E'=6n-2 \inOh O(n)$, which evaluates $\phi = (\polun(\XX) + 
\pscal{\balf_1,\XX},\pscal{\balf_2,\XX})$, where $\polun = x_1^2 + \cdots + 
x_n^2$. In particular, $E'':= E + E' \inOh O(E+n)$. Note also that since $D\geq
2$, it bounds the degrees of the polynomials in $\phi$.

\medskip
Let $\ZOpen$ be the intersection of the following four \NEZO subsets of 
$\CC^{2n}$:
\[
 \ZOnoether(V,\polun,2),\; \ZOpolar(\bchi,V,\sing(V),(\polun,0),(0),2), 
\; \ZOcritpolar(V,\polun)
 \text{ and } \ZOfiber(\bchi,V,\sing(V),\polun,2),
\]
defined respectively by Propositions~\ref{prop:fibergenalg}, 
\ref{prop:polargen}, 
\ref{prop:critpolargen} and \ref{prop:dimfiber} applied to $V$, $\bphi$ and 
possibly $\bchi$.
The set $\ZOpen$ is a \NEZO subset of $\CC^{2n}$ as well, and for now on, we 
suppose that $\balf \in \ZOpen$.

\paragraph*{Step~\ref{step:solvepolar}}
Let $W = \Wphii[2]$. Since
$\balf\in\ZOpolar(\bchi,V,\sing(V),(\polun,0),(0),2)$, by
Proposition~\ref{prop:polargen}, either $W$ is empty or it is
equidi\-men\-sio\-nal of dimension 1, with $\sing(W) \subset \sing(V)$.
Moreover, in the latter case, since $(d+3)/2\geq 2$ by assumption,
$\Watlas(\bchi,V,\sing(V),\phi,2)$ is an atlas of $(W,\sing(V))$.

Hence, by Lemma~\ref{lem:solvepolar} and our assumptions, \SolvePolar returns a
\ODP $\scrW_2$, of degree at most
\[
 \delta = (n+c+4)D^{c+2}(D-1)^{d}(c+2)^d \inOh O(n^{d+1}D^{n+2}),
\]
such that $\Zparam(\scrW_2) = W$, using at most 
\[
 \softOh{(n+c)^3(E+(n+c)^3)D\delta^3 + (n+c)\delta \deltaS^2}
 \inOh \softOh{n^{3d+4}(E+n^3)D^{3n+7}}
\]
operations in $\QQ$.

\paragraph*{Steps~\ref{step:critpolar}\,-\ref{step:image}}
Since we assume $\balf\in \ZOcritpolar(V,\polun)$,
Proposition~\ref{prop:critpolargen} states that either $W$ is empty or it is
equidimensional of dimension 1, and $\Wphii[1][W]$ is finite. Moreover, since
$\balf\in \ZOpolar(\bchi,V,\sing(V),(\polun,0),(0),2)$, we deduce by
Proposition~\ref{prop:polargen} that $\Watlas(\bchi,V,\Pcal,\phi,2)$ is an atlas
of $(W,\Pcal)$, as $W$ is 1-equidimensional or empty and $\Pcal_0$ is finite.

Let $K=\Wphii[1][W] \cup \Pcal$. By Lemma~\ref{lem:critpolar}, \CritPolar
returns either \fail or a \ZDP $\scrK$, of degree at most
\[
 \deltaK = \delta(n+c+4)D + \deltaP \inOh \Oh{n^{d+2}D^{n+3}+\mu},
\]
using at most
\[
 \softOh{(n+c)^{12}(E+n)D^3\delta^2+(n+c)\deltaP^2}
 \inOh 
 \softOh{n^{2d+14}(E+n)D^{2n+7}+n\mu^2}
\]
operations in $\QQ$. Moreover, by assumption, $\scrK$ describes $K$. 
Finally, let $Q=\mapun(K)$ then, by Lemma~\ref{lem:imageZDP} and our 
assumptions,  on input $\Gammaphi$, $\scrK$ 
and $j=1$,  procedure \Image outputs a \ZDP $\scrQ$, of degree less than $\deltaK$, such that, in 
case of success, $\Zparam(\scrQ) = Q$.
Moreover, since by Lemma~\ref{lem:phigen}, $\Gammaphi$ has length in $O(n)$, 
then the execution of \Image uses at most 
\[
 \softOh{(n^2\deltaK+n)\deltaK} \inOh \softOh{n^{2d+6}D^{2n+6}}
\]
operations in $\QQ$.

\paragraph*{Step~\ref{step:fiberpolar}}
Since $\balf\in \ZOnoether(V,\polun,2)$, by Proposition~\ref{prop:fibergenalg}, 
$W \cap \mapunrec(z)$ is finite for any $z \in \CC$.
In particular, $W \cap \mapunrec(Q)$ is finite, since $Q = \Zparam(\scrQ)$ is.  
Besides, as seen above, $\Watlas(\bchi,V,\Pcal,\phi,2)$ is an atlas of 
$(W,\Pcal)$ since $\balf$ lies in $\ZOpolar(\bchi,V,\sing(V),(\polun,0),(0),2)$.

Let $\Pcal_F=[W \cap \mapunrec(Q)] \cup \Pcal$. By Lemma~\ref{lem:fiberpolar}
and our assumptions, \FiberPolar outputs a \ZDP $\scrP_F$, of degree bounded by
\[
  \deltaPF = \deltaK\delta + \deltaP \inOh O(n^{2d+3}D^{2n+5} + \mu),
\]
using at most $\softOh{(n+c)^4(E+(n+c)^2)D\deltaK^2\delta^2+(n+c)\deltaP^2}$
operations in $\QQ$ which is in 
\[
\softOh{n^{4d+10}(E+n^2)D^{4n+10} + n\mu^2}
\]
and such that $\scrP_F$ describes $\Pcal_F$. Besides, remark
that by definition $\phi(\Pcal) \subset \phi(Q)$ so that $\Pcal_F = [W \cup
\Pcal] \cap \mapunrec(Q)$.

\paragraph*{Step~\ref{step:crit}}
Since $\balf \in \ZOpolar(\bchi,V,\sing(V),(\polun,0),(0),2)$, by
Proposition~\ref{prop:polargen}, $\Wphii[1]$ is finite. Besides, under assumption
\ref{ass:A}, $V$ is equidimensional with finitely many singular points. Let
$\Scal_F = \Kphii[1]$. By Lemma~\ref{lem:crit} and our assumptions, \Crit
outputs a \ZDP $\scrS_F$, which describes $\Scal_F$, of degree bounded by
\[
 \deltaSF = \binom{n+1}{d} D^{c+2}(D-1)^{d} \inOh O(n^dD^{n+2})
\]
using at most
\[
 \softOh{(n+2)^{4d+8}(E+n)D^{2n+3}(D-1)^{2d} + n\deltaS^2}
 \inOh \softOh{n^{4d+8}(E+n)D^{4n+3}}
\]
operations in $\QQ$.

\paragraph*{Step~\ref{step:rmbound}}
Since $\ff$ satisfies assumption \ref{ass:A}, the ideal $\pscal{\ff}$ generated by
the polynomials in $\ff$ is radical.
Besides, the restriction of $\mapun$ to $\V(\ff)\cap \RR^n$ is naturally proper 
and bounded from below by $\sum_{i=1}^n \balf_i^2/4$.
\begin{comment}
Hence, the restriction of $\map[2]$ to $V\cap \RR^n$ is a proper map that is, 
$V\cap\RR^n \cap \mapunrec(Q)$ is a bounded semi-algebraic set, for any 
bounded $Q \subset \RR$.
\end{comment}
Hence, as $Q=\Zparam(\scrQ)$ is finite, $Q \cap \RR$ is bounded and so 
is
\[
  V \cap \RR^n \cap \mapunrec(Q \cap \RR^2) = V \cap \mapunrec(Q) \cap \RR^n,
\]
as $\bphi \subset \QQ[\XX]$, since $\balf \in \QQ^{2n}$ by above.

Let $F_Q = V\cap\mapunrec(Q)$.
Since $\balf\in\ZOfiber(\bchi,V,\sing(V),\polun,2)$, by 
Proposition~\ref{prop:dimfiber} $F_Q$ is either empty or equidimensional of 
dimension $d-1$, with $\sing(F_Q) \subset S_Q$, where 
\[
 S_Q := \sing(V) \cup [\Wphii[1] \cap \mapunrec(Q)] = \Kphii[1],
\]
since $\mapun(\Kphii[1]) \subset \mapun(Q)$.\\
Moreover, in the latter case, the sequence $\Fatlas(\bchi,V,Q,\sing(V),\bphi)$ is an atlas 
of $(F_Q,S_Q)$.
The \ZDPs $\scrP_F$ and $\scrS_F$ describe respectively 
finite sets $\Pcal_F$ and $\Scal_F$ such that
\[
 S_Q = \Scal_F \subset \Pcal_F \subset F_Q,
\]
and $\scrS_F$ has degree $\deltaSF \leq (nD)^{n+2}$.
Finally, recall that $\scrQ$ and $\scrP_F$ both have degree bounded by 
$\softOh{\mu+n^{2d+3}D^{2n+5}}$.
Hence, according to Proposition~\ref{prop:comp-broadmap}, and after a few
straightforward simplifications, we deduce that \RMBound either outputs \fail 
or a \ODP $\scrR_F$ of degree at most 
  \[
 \scrB_{\scrR_F} = \overalldegree,
\]
using
\[
   \overallcomplexity
\]
operations in $\QQ$.
Moreover, in case of success, $\scrR_F$ describes a roadmap of $(F_Q,\Pcal_F)$.

\paragraph*{Step~\ref{step:return}}
Remark that $\scrW_2$ and $\scrR_F$ both have degree at most $\scrB_{\scrR_F}$.
hence, by \cite[Lemma J.8]{SS2017}, on input $\scrW_2$ and $\scrR_F$, \Union 
either outputs \fail or a \ODP of degree at most $\Otilde(\scrB_{\scrR_F})$ 
using $\Otilde(n\scrB_{\scrR_F}^3)$ operations in $\QQ$. Therefore, the 
complexity of this step is bounded by the one of previous step.
Moreover, in case of success, the output describes $W \cup F_Q$.

\medskip
It follows that under assumption \ref{ass:A}, all assumptions from
Proposition~\ref{prop:connectresult} are satisfied. Hence, since 
$\Zparam(\scrR_F)$ is a roadmap of $(F_Q,\Pcal_F)$ and $\Pcal_F = \Pcal \cup 
(F_Q \cap W)$, by Proposition~\ref{prop:connectresult}, 
Algorithm~\ref{alg:algormunbound} returns a roadmap of $(V,\Pcal)$.
Since $\Pcal$ contains $\Pcal_0$, the output is a roadmap of $(V,\Pcal_0)$ as 
well. 

\medskip
In conclusion, if $\balf \in \ZOpen$ and all calls to the subroutines are 
successful then, on input $\Gamma$ and $\scrP_0$ such that 
assumption \ref{ass:A} is satisfied, Algorithm~\ref{alg:algormunbound} outputs
a \ODP encoding a roadmap of $(V,\Pcal_0)$.
Moreover this parametrization has degree bounded by $\scrB_{\scrR_F}$ and all steps 
have complexity bounded by the one of Step~\ref{step:rmbound}.
Since these bounds match the ones given in the statement of 
Theorem~\ref{thm:corrcomp}, we are done.
\end{proof}

Our main result, namely Theorem~\ref{thm:main}, is a direct consequence of 
Theorem~\ref{thm:corrcomp} since, if $n-c<2$ then $\V(\ff)$ is a roadmap of 
$(\V(\ff),\Zparam(\scrP))$.

\begin{remark}
  Remark that, as long as the restriction of
  $\mapun$ to $\V(\ff)\cap\RR^n$ is proper and bounded below, the above proof
  still holds. This could allow \rev{for} a more \emph{ad-hoc} choice for $\map$.
\end{remark}

\begin{comment}
Since $n-c\geq 2$, $\ff$ satisfies assumption \ref{ass:A} by definition.
Taking $\polun(\XX)=x_1^2+\cdots+x_n^2$ then, for any $\balf \in \QQ^n$, the 
restriction to $\V(\ff)\cap \RR^n$ of the polynomial map associated to
\[
 \phi(\XX) = \sum_{i=1}^n x_i^2 + \balf_ix_i = \sum_{i=1}^n (x_i+\balf_i/2)^2 - 
\balf_i^2/4
\]
is naturally proper and bounded from below (by $\sum_{i=1}^n \balf_i^2/4$).
Moreover, as a direct consequence of the proof of Lemma~\ref{lem:phigen}, 
there 
exists a \SLP $\Gammalpha$ of length $2n$ computing such an $\polun$.
Hence, applying Theorem~\ref{thm:corrcomp} with this $\Gammalpha$, gives 
immediately the claim of Theorem~\ref{thm:main}.
\end{comment}

 \section{Subroutines}\label{sec:subroutines}

\subsection{Proof of Lemma~\ref{lem:imageZDP}}\label{ssec:basicsubroutines}

\begin{lemma}\label{lem:incvar}
Let $\Gamma$  and $\Gammaphi$ be \SLPs of respective lengths $E$ and $E'$ 
computing sequences of polynomials respectively $\ff$ and 
$\bphi=(\phi_1,\dotsc,\phi_i)$ in $\QQ[\XXi]$. Then there exists an algorithm 
\IncVar which takes as input $\Gamma$, $\Gammaphi$ and returns a \SLP $\Gammat$ 
of length 
\[
  E + E' + i,
\]
that evaluates $\ffphi = (\ff,\phi_1-e_1,\dotsc,\phi_i-e_i)$ in $\QQ[\EE,\XX]$, 
where $\EE=(\EEi)$ are new variables.
\end{lemma}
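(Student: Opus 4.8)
My plan is to construct the algorithm \IncVar as a purely syntactic transformation: concatenate $\Gamma$ and $\Gammaphi$, shifting the internal line references of $\Gammaphi$ to account for the $E$ lines of $\Gamma$ placed before it, and then append $i$ subtraction lines, one for each $\phi_j - e_j$. The only delicate point is the index convention for inputs once the $i$ fresh variables $\EE$ are added to $\XX$; I will arrange the variable order so that this bookkeeping stays as light as possible.

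First I would set notation. Write $\Gamma=(\gamma_1,\dotsc,\gamma_E)$ and $\Gammaphi=(\gamma'_1,\dotsc,\gamma'_{E'})$, with associated polynomial sequences $(G_\ell)_{-n+1\le\ell\le E}$ and $(G'_\ell)_{-n+1\le\ell\le E'}$; since $\Gammaphi$ computes $\bphi$, for each $j\in\{1,\dotsc,i\}$ I fix an index $-n+1\le p_j\le E'$ with $\phi_j=G'_{p_j}$. Then I view $\Gammat$ as a program on the $N:=n+i$ variables listed in the order $e_1,\dotsc,e_i,x_1,\dotsc,x_n$ matching $\QQ[\EE,\XX]$: with this order the input slot of $x_k$ equals $-n+k$, exactly as in $\Gamma$ and $\Gammaphi$, whereas the slot of $e_j$ equals $-n-i+j$. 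Prepending — rather than appending — the variables $\EE$ is precisely what keeps every $x$-slot unchanged, so no reference to an $x$-variable has to be renumbered.

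Next I would define $\Gammat=(\tilde\gamma_1,\dotsc,\tilde\gamma_{E+E'+i})$ explicitly: set $\tilde\gamma_\ell:=\gamma_\ell$ for $1\le\ell\le E$; obtain $\tilde\gamma_{E+\ell}$ from $\gamma'_\ell$ (for $1\le\ell\le E'$) by copying a constant line verbatim and, in an operation line $(\opsf,a,b)$, replacing each reference $a$ by $a$ if $a\le 0$ and by $a+E$ if $a\ge 1$ (and likewise for $b$); and finally set $\tilde\gamma_{E+E'+j}:=(-,\,q_j,\,-n-i+j)$ for $1\le j\le i$, where $q_j:=p_j$ if $p_j\le 0$ and $q_j:=E+p_j$ if $p_j\ge 1$. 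A short check then confirms that every reference index lies in $\{-N+1,\dotsc,(\text{current line})-1\}$, so $\Gammat$ is a well-formed \SLP on $N$ variables of length $E+E'+i$, as claimed.

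Finally I would establish correctness by induction on the line index, showing that the polynomials $(\tilde G_\ell)$ attached to $\Gammat$ in $\QQ[\EE,\XX]$ satisfy $\tilde G_\ell=G_\ell$ for $1\le\ell\le E$ (the $x$-inputs coincide slot by slot, and $\QQ[\XX]\subset\QQ[\EE,\XX]$) and $\tilde G_{E+\ell}=G'_\ell$ for $1\le\ell\le E'$ (an operation line $(\opsf,a,b)$ of $\Gammaphi$ becomes a line $(\opsf,a',b')$ with $\tilde G_{a'}=G'_a$ and $\tilde G_{b'}=G'_b$, whether the references point to inputs or to earlier lines). Consequently $\ff$ still occurs among $\tilde G_{-n+1},\dotsc,\tilde G_E$ and $\phi_j=G'_{p_j}=\tilde G_{q_j}$, whence $\tilde G_{E+E'+j}=\tilde G_{q_j}-e_j=\phi_j-e_j$, so $\Gammat$ evaluates $\ffphi=(\ff,\phi_1-e_1,\dotsc,\phi_i-e_i)$. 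I do not expect a real obstacle: the statement is essentially a bookkeeping exercise, and the only spot where a mistake could creep in is the index translation of $\Gammaphi$'s lines — which the ``prepend $\EE$'' convention is designed to tame.
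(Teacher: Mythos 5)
Your proof is correct and follows essentially the same route as the paper: concatenate $\Gamma$ and $\Gammaphi$ (keeping the variable order so that the $x_k$ slots are unchanged), then append $i$ lines producing $\phi_j - e_j$, for a total length $E+E'+i$. You are more explicit than the paper about the index shift $a \mapsto a+E$ needed when splicing $\Gammaphi$ after $\Gamma$, you avoid the paper's ``up to reordering'' step by tracking the actual output indices $p_j$, and you correctly use the opcode $-$ where the paper writes $+$ (evidently a typo, since the target is $\phi_j - e_j$); the underlying construction is otherwise identical.
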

\begin{proof}
 Up to reordering, we can suppose that the polynomials $\phi_1,\dotsc,\phi_i$ 
correspond to the respective indices $E'-i+1,\dotsc, E'$ in $\Gammaphi$.
 Let $1\leq j \leq N$, then the \SLP
\[
 \Gamma^{\bphi-\EE} = \Big(\Gammaphi,\; (+,\:E'- i + 1,\:-n-i+1), 
\dotsc,(+,\:E',\:-n)\Big)
\]
has length $E'+i$ and computes $(\phi_1-e_1,\dotsc,\phi_i-e_i)$ in 
$\QQ[\EEi,\XXi]$.
Finally let
\[
 \Gamma'=(\Gamma, \Gamma^{\bphi-\EE}),
\]
then $\Gamma'$ is a \SLP of length $E + E'+i$, that computes $\ffphi = 
(\ff,\phi_1-e_1,\dotsc,\phi_i-e_i)$ in $\QQ[\EEi,\XXi]$. 
\end{proof}

Let $1\leq i \leq n$ be integers and
$\bphi=(\phi_1,\dotsc,\phi_i)\subset\CC[\XX]$, and set
\[
  \begin{array}{cccc}
    \Inci\colon & \CC^n & \rightarrow & \CC^{i+n}\\
                & \yy & \mapsto & (\bphi(\yy),\yy)
  \end{array}.
\]
Then $\Inci$ is an isomorphic embedding of algebraic sets, with inverse the 
projection on the last $n$ coordinates. We call $\Inci$ the \emph{incidence 
  isomorphism associated to $\bphi$}.

Let $V\subset \CC^n$ be a $d$-equidimensional algebraic set with $1\leq d 
\leq n$. Then $\Vphi = \Inci(V) \subset \CC^{i+n}$ is called the 
\emph{incidence variety associated to $V$ with respect to $\bphi$}, or in 
short, the incidence variety of $(V,\bphi)$.

Finally, we note $\proj = (e_1,\dotsc,e_i)$ so that for $0\leq j\leq i$, 
$\proj[j]$ is the canonical projection on the first $j$ coordinates in 
$\CC^{i+n}$. 
The following lemma is immediate, and illustrates the main feature that 
motivates the introduction of incidence varieties.
\begin{lemma}\label{lem:diagprojphi}
  For any $0\leq j\leq i$, the following diagram commutes
  \begin{equation*}
\begin{tikzcd}
      V 
      \arrow[rr, "\displaystyle\Inci"] 
      \arrow[rrd, "\displaystyle\bphi_j^{\hspace*{0.5cm}}" below] 
      && \Vphi 
      \arrow[d, "\displaystyle\,\bpi_{j}"]
      \\
      && \CC^j
    \end{tikzcd}\;.
  \end{equation*}
\end{lemma}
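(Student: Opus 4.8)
The plan is to prove the lemma by directly unwinding the set-theoretic definitions of the three maps; nothing beyond the definitions of $\Inci$, $\bpi_j$ and $\map[j]$ is needed, since the claim is merely the equality of maps $\bpi_j\circ\Inci = \bphi_j$ on $V$.

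First I would fix $0\leq j\leq i$ and take an arbitrary point $\yy=(y_1,\dots,y_n)\in V$. By definition of the incidence isomorphism, $\Inci(\yy) = (\phi_1(\yy),\dots,\phi_i(\yy),y_1,\dots,y_n)\in\CC^{i+n}$. Next, recalling that $\bpi_j = \proj[j]$ is the canonical projection onto the first $j$ coordinates of $\CC^{i+n}$ — where the $i$ leading coordinates are precisely those prepended by $\Inci$, since $\proj=(e_1,\dots,e_i)$ — applying $\bpi_j$ to $\Inci(\yy)$ returns $(\phi_1(\yy),\dots,\phi_j(\yy))$, which is exactly $\map[j](\yy) = \bphi_j(\yy)$. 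As $\yy$ was arbitrary in $V$, this establishes commutativity of the triangle. For the degenerate value $j=0$, both composites are the unique map $V\to\CC^0=\{\bullet\}$, so the statement holds there as well.

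The only point requiring attention — and essentially the reason the statement needs to be recorded at all — is the indexing convention: one must keep in mind that $\bpi_j$ selects the first $j$ of the $i$ ``$\bphi$-coordinates'' that $\Inci$ adds, not $j$ of the original $\yy$-coordinates. Once this bookkeeping is pinned down there is no further obstacle, which is why the paper rightly calls the lemma immediate.
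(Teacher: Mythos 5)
Your proof is correct and is exactly the straightforward unwinding of definitions that the paper has in mind when it labels the lemma ``immediate'' and gives no further argument. There is nothing to add: the one bookkeeping point you flag (that $\bpi_j$ selects from the prepended $\bphi$-coordinates, not the original $\XX$-coordinates) is indeed the only thing to keep straight.
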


\begin{lemma}\label{lem:incparam}
  Let $\scrQ$ be a zero-dimensional parametrization of degree $\kappa$ such that
  $\Zparam(\scrQ)\subset\CC^n$ and let $\Gammaphi$ be a \SLP of length $E'$
  which evaluates polynomials $\bphi=(\phi_1,\dotsc,\phi_i)$. There exists an
  algorithm \IncParam which takes as input $\scrQ$, $\Gammaphi$ and returns a
  \ZDP $\scrQt$ of degree $\kappa$ and encoding $\Inci(\Zparam(\scrQ)) \subset
  \CC^{i+n}$, where $\Inci$ is the incidence isomorphism associated to $\bphi$,
  using
\[
 \Otilde\left(E'\kappa\right)
\]
operations in $\QQ$.
\end{lemma}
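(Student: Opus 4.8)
The plan is to push the parametrization $\scrQ$ through the incidence isomorphism $\Inci\colon \yy \mapsto (\bphi(\yy),\yy)$ while keeping the same defining polynomial: since $\Inci$ is injective and leaves the last $n$ coordinates untouched, the image points are still parametrized by the roots of the polynomial $\omega$ already carried by $\scrQ$. Concretely, write $\scrQ = \big((\omega,\rho_1,\dotsc,\rho_n),\linF\big)$ with $\omega\in\QQ[u]$ monic and square-free of degree $\kappa$, $\deg(\rho_k)<\kappa$, and $\linF(\rho_1,\dotsc,\rho_n)=u \bmod \omega$, so that the points of $\Zparam(\scrQ)$ are the $\yy_\vartheta := (\rho_1(\vartheta),\dotsc,\rho_n(\vartheta))$ as $\vartheta$ ranges over the roots of $\omega$.

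First I would evaluate the \SLP $\Gammaphi$ at the point $(\rho_1,\dotsc,\rho_n)$ over the ring $\QQ[u]/\langle\omega\rangle$, that is, run the program with its inputs $x_1,\dotsc,x_n$ set to $\rho_1,\dotsc,\rho_n$ and all arithmetic carried out modulo $\omega$. Since this is exactly the ring homomorphism $\QQ[\XX]\to\QQ[u]/\langle\omega\rangle$ sending $x_k$ to $\rho_k \bmod \omega$, the value the program returns for $\phi_j$ is $\tilde\rho_j := \phi_j(\rho_1,\dotsc,\rho_n) \bmod \omega$, for $j=1,\dotsc,i$. I then set $\tilde\rho_{i+k} := \rho_k$ for $k=1,\dotsc,n$, take $\Lt$ to be $\linF$ regarded as a linear form in the coordinates $(e_1,\dotsc,e_i,x_1,\dotsc,x_n)$ of $\CC^{i+n}$ (so that $\Lt$ reads only the last $n$ coordinates), and output $\scrQt := \big((\omega,\tilde\rho_1,\dotsc,\tilde\rho_{i+n}),\Lt\big)$, whose coefficients lie in $\QQ$ since the input data does.

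To prove correctness I would check the three conditions defining a \ZDP: $\omega$ is unchanged, hence still monic and square-free; each $\tilde\rho_\ell$ has degree $<\kappa$ (for $\ell\leq i$ because it is reduced modulo $\omega$, for $\ell>i$ because it equals some $\rho_k$); and $\Lt(\tilde\rho_1,\dotsc,\tilde\rho_{i+n}) = \linF(\rho_1,\dotsc,\rho_n) = u \bmod \omega$ since $\Lt$ involves only the last $n$ variables, which here carry $\rho_1,\dotsc,\rho_n$ --- these conditions also force $|\Zparam(\scrQt)|=\kappa$. Evaluating the $\tilde\rho$'s at a root $\vartheta$ of $\omega$ yields $\big(\phi_1(\yy_\vartheta),\dotsc,\phi_i(\yy_\vartheta),\yy_\vartheta\big) = \Inci(\yy_\vartheta)$, whence $\Zparam(\scrQt) = \Inci(\Zparam(\scrQ))$ and $\deg(\scrQt)=\deg(\omega)=\kappa$. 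For the complexity, the running time is dominated by the evaluation of $\Gammaphi$ over $\QQ[u]/\langle\omega\rangle$: its $E'$ arithmetic operations are each performed on residues represented by polynomials of degree $<\kappa$, at a cost of $\softOh{\kappa}$ operations in $\QQ$ each (using fast polynomial multiplication and Euclidean division by $\omega$; embedding the program's constants is free), the remaining bookkeeping --- copying the $\rho_k$'s and assembling $\Lt$ --- contributing only $O(n\kappa)$. This yields the claimed $\softOh{E'\kappa}$ bound. I do not anticipate a genuine obstacle here: the closest thing to a subtlety is the justification that executing the \SLP over the quotient ring $\QQ[u]/\langle\omega\rangle$ returns precisely $\bphi(\rho_1,\dotsc,\rho_n)$ reduced modulo $\omega$, together with the routine check that the trivially extended linear form still satisfies the normalization identity in the definition of a zero-dimensional parametrization.
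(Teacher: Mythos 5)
Your proposal is correct and follows essentially the same route as the paper's proof: evaluate the straight-line program $\Gammaphi$ at the parametrizing polynomials modulo $\omega$, prepend the resulting residues to the original ones, keep $\omega$ and the linear form (read on the last $n$ coordinates) unchanged, and verify the normalization identity and degree bounds. The only differences are notational.
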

\begin{proof}
 Write $\scrQ = ((q, v_1,\dotsc,v_n),\linF)$ following the definition of \ZDPs 
given in the introduction.
 Since 
 \[
  \Zparam(\scrQ)= \{ (v_1(\tt),\dotsc, v_n(\tt)) \mid q(\tt) = 0\}
 \]
 then $\Inci(\Zparam(\scrQ))$ is 
 \[
  \left\{ \Big(
  \phi_1\big(v_1(\tt),\dotsc, v_n(\tt)\big),\dotsc,
  \phi_i\big(v_1(\tt),\dotsc, v_n(\tt)\big),
  v_1(\tt),\dotsc, v_n(\tt) \Big) \mid q(\tt)=0 
  \right\}.
 \]
 Let $\EEi$ be new indeterminates and $\linF'(e_1,\dotsc,e_i,x_1,\dotsc,x_n) = 
\linF(x_1,\dotsc,x_n)$ and for all $1\leq j \leq i$, let $w_j = 
\phi_j(v_1,\dotsc, v_n\big) \mod{q} \in \QQ[t]$.
 Then we claim that $\scrQt=((q,w_1,\dotsc,w_i,v_1,\dotsc,v_n),\linF)$ is a 
\ZDP of $\Inci(\Zparam(\scrQ))$.
 Indeed for all $1\leq j \leq i$, $\deg(w_j)<\deg(q)$ and 
 \[
  \linF'\big(w_1,\dotsc,w_i,v_1,\dotsc,v_n\big) = 
\linF\big(v_1,\dotsc,v_n\big) = t.
 \]
Besides, computing $\scrQt$ is done by evaluating $\Gammaphi$ at 
$v_1,\dotsc,v_n$ doing all operations modulo $q$; this can be done using 
$\softOh{E'\kappa}$ operations in $\QQ$.
\end{proof}

\noindent
We can now prove Lemma~\ref{lem:imageZDP}.

\begin{myproof}{Lemma~\ref{lem:imageZDP}}
 Let $\Inci$ be the incidence isomorphism associated to $\bphi$.
 By Lemma~\ref{lem:diagprojphi}, the image of 
$\Zparam(\scrP)$ by $\map[j]$, can be obtained by projecting the incidence 
variety $\Inci(\Zparam(\scrP))$ on the first $j$ coordinates.

 Hence the algorithm \Image can be performed as follows. 
First, according to Lemma~\ref{lem:incparam}, there exists an algorithm 
\IncParam which, on input $\scrP$ and $\Gammaphi$, computes a \ZDP $\scrPt$ of 
degree $\kappa$, encoding $\Inci(\Zparam(\scrP)) \subset \CC^{j+n}$, and using 
$\softOh{E'\kappa}$ operations in $\QQ$.
Secondly, according to \cite[Lemma J.5.]{SS2017}, there exists an algorithm 
\Proj which, on input $\scrPt$ and $j \in \{1,\dotsc,i\}$, computes a \ZDP 
$\scrQ$ encoding 
\[
 \proj[j](\scrPt) = \proj[j]\left(\Inci(\Zparam(\scrP))\right) = 
\map[j](\Zparam(\scrP)),
\]
using $\softOh{n^2\kappa^2}$ operations in $\QQ$.
\end{myproof}

\subsection{Auxiliary results for generalized polar 
varieties}\label{ssec:incidence}

We reuse the notation introduced in the previous subsection. Let
$\EE=(e_1\dotsc,e_i)$ new indeterminates. Recall that $V\subset \CC^{n}$ is a
$d$-equidimensional algebraic set.
 \begin{lemma}\label{lem:IVphi}
 Let $\hh\subset \CC[\XX]$ be a set of generators of $\I(V)$. Then
 \[
  \hhphi = (\hh,\phi_1-e_1,\dotsc,\phi_i-e_1) \subset \CC[\EE,\XX]
 \]
 is a set of generator of $\I(\Vphi)\subset \CC[\EE,\XX]$, which is 
equidimensional of dimension $d$.
 \end{lemma}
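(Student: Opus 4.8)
The plan is to realise $\I(\Vphi)$ as the contraction of $\I(V)$ along the substitution morphism that sends each $e_j$ to $\phi_j$, and then to compute this contraction explicitly. Throughout, recall that $\I(V)=\langle\hh\rangle$ by hypothesis.

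First I would introduce the surjective $\CC$-algebra homomorphism
\[
  \psi\colon \CC[\EE,\XX]\longrightarrow\CC[\XX],\qquad e_j\longmapsto\phi_j\ \ (1\le j\le i),\qquad x_k\longmapsto x_k,
\]
which restricts to the identity on $\CC[\XX]$. The first step is to check that $\ker\psi=\langle e_1-\phi_1,\dots,e_i-\phi_i\rangle$: the inclusion $\supseteq$ is clear, and for $\subseteq$ one divides any $g\in\CC[\EE,\XX]$ by the $e_j-\phi_j$, eliminating $e_1,\dots,e_i$ in turn, to write $g=\sum_{j=1}^i(e_j-\phi_j)q_j+r$ with $r\in\CC[\XX]$; then $\psi(g)=r$, so $\psi(g)=0$ forces $r=0$ and hence $g\in\langle e_1-\phi_1,\dots,e_i-\phi_i\rangle$. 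The second step is the pointwise identity $g(\Inci(\yy))=g(\bphi(\yy),\yy)=(\psi g)(\yy)$ valid for all $g\in\CC[\EE,\XX]$ and $\yy\in\CC^n$, since $\Inci(\yy)=(\bphi(\yy),\yy)$; it shows that $g$ vanishes on $\Vphi=\Inci(V)$ if and only if $\psi(g)$ vanishes on $V$, i.e. if and only if $\psi(g)\in\I(V)=\langle\hh\rangle$. Therefore $\I(\Vphi)=\psi^{-1}\big(\langle\hh\rangle_{\CC[\XX]}\big)$.

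Next I would invoke the elementary fact that for a surjective ring homomorphism $\psi\colon A\to B$ and an ideal $I\subseteq A$ one has $\psi^{-1}(\psi(I))=I+\ker\psi$. Applying it with $I=\langle\hh\rangle_{\CC[\EE,\XX]}$, which satisfies $\psi(I)=\langle\hh\rangle_{\CC[\XX]}$ because $\hh\subset\CC[\XX]$ and $\psi$ is the identity there, we obtain
\[
  \I(\Vphi)=\langle\hh\rangle_{\CC[\EE,\XX]}+\ker\psi=\langle\hh,\ \phi_1-e_1,\dots,\phi_i-e_i\rangle=\langle\hhphi\rangle,
\]
which is the generation statement (and incidentally shows $\langle\hhphi\rangle$ is radical). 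For the dimension claim, this identity together with the description of $\ker\psi$ gives isomorphisms
\[
  \CC[\EE,\XX]/\langle\hhphi\rangle=\CC[\EE,\XX]/\big(\langle\hh\rangle+\ker\psi\big)\;\xrightarrow{\;\sim\;}\;\CC[\XX]/\langle\hh\rangle=\CC[\XX]/\I(V),
\]
so the coordinate ring of $\Vphi$ is isomorphic to that of $V$ and hence, as $V$ is $d$-equidimensional, so is $\Vphi$. Alternatively the dimension statement is immediate from the fact recalled just before Lemma~\ref{lem:diagprojphi} that $\Inci$ restricts to an isomorphism of algebraic sets $V\xrightarrow{\;\sim\;}\Vphi$ with inverse the projection onto the last $n$ coordinates, which matches up irreducible components and their dimensions.

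All the ingredients are routine: the only steps that warrant being spelled out are the explicit computation of $\ker\psi$ via the division above and the matching choice of $I$ in the identity $\psi^{-1}(\psi(I))=I+\ker\psi$. I do not expect any genuine obstacle here.
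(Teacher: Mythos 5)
Your proof is correct, and it takes a genuinely different route from the paper's. The paper's proof is Jacobian-theoretic: it invokes Lemma~\ref{lem:ranksubmatrix} to show that $\jac(\hhphi)$ has rank $n-d+i$ at every point $\Inci(\yy)$ with $\yy\in\reg(V)$, then uses the Jacobian criterion (in the form of \cite[Lemma 15]{SS2011}, applied on the Zariski-dense regular locus) to conclude that $\langle\hhphi\rangle$ is a radical equidimensional ideal of dimension $d$, and finally observes that $\V(\langle\hhphi\rangle)=\Vphi$ set-theoretically so that Hilbert's Nullstellensatz yields $\I(\Vphi)=\sqrt{\langle\hhphi\rangle}=\langle\hhphi\rangle$. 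Your argument is purely algebraic and more elementary: you realize $\I(\Vphi)$ as $\psi^{-1}(\I(V))$ for the substitution morphism $\psi$, compute $\ker\psi$ by successive division, and then apply the identity $\psi^{-1}(\psi(I))=I+\ker\psi$; radicality and equidimensionality come for free from the induced isomorphism $\CC[\EE,\XX]/\langle\hhphi\rangle\cong\CC[\XX]/\I(V)$. The trade-off is that the paper's approach makes explicit the rank computation $\rank\jac_{\Inci(\yy)}(\hhphi)=\rank\jac_\yy(\hh)+i$, which is reused essentially verbatim in the proof of Lemma~\ref{lem:InciPolar} (equation \eqref{eqn:rankincidence}); your approach would require re-deriving that fact there. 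As a self-contained proof of this particular lemma, however, your argument is cleaner, avoids the Nullstellensatz detour, and does not rely on the smoothness machinery at all.
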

 \begin{proof}
 Remark that by Lemma~\ref{lem:ranksubmatrix}, for any $(\tt,\yy)\in\Vphi$,
  \[
  \rank \jac_{\tt,\yy}(\hhphi) =
  \rank \begin{bmatrix}
         \multicolumn{2}{c}{\OO} & \jac_{\yy}(\hh)\\
         \multicolumn{2}{c}{-I_i}  &\jac_{\yy}(\bphi)
        \end{bmatrix}
  = \rank \jac_{\yy}(\hh) + i,
\]
so that for all $\yy\in\reg(V)$, since $\jac(\hh)$ has rank 
$n-d$ at $\yy$, then $\jac(\hhphi)$ has rank $n-d+i$ at $\Inci(\yy)$. 
Hence, since $\reg(V)$ is Zariski dense in $V$, by \cite[Lemma 15]{SS2011} 
$\pscal{\hhphi}$ is an equidimensional radical ideal of dimension $d$.
  
Besides, let $(\tt,\yy)\in \CC^n$, then $\hhphi(\tt,\yy)=0$ if and only if 
$\hh(\yy)=0$ and $\bphi(\yy)=\tt$ that is $(\tt,\yy) \in \Vphi$ since $\hh$ 
generates $\I(V)$. Hence $\V(\pscal{\hhphi}) = \Vphi$ so that by  Hilbert's 
Nullstellensatz \cite[Theorem 1.6]{Ei1995}, 
\[
 \I(\Vphi) = \sqrt{\pscal{\hhphi}} = \pscal{\hhphi}.
\]
\end{proof}

The following lemma shows an important consequence of
Lemma~\ref{lem:diagprojphi} for polar varieties.

\begin{lemma}\label{lem:InciPolar}
 For $0\leq j \leq i$, the restriction of $\Inci$ induces an isomorphism 
between $\Wphii[j]$ (resp. $\Kphii[j]$) and $\Wproji[j]$ (resp. $\Kproji[j]$). 
\end{lemma}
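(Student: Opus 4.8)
The plan is to reduce everything to the commutative diagram of Lemma~\ref{lem:diagprojphi} together with the computation of Jacobian ranks on the incidence variety already carried out in Lemma~\ref{lem:IVphi}. Recall that $\Inci\colon V\to\Vphi$ is an isomorphism of algebraic sets, with inverse the projection $\pi$ on the last $n$ coordinates, and that by Lemma~\ref{lem:diagprojphi} we have $\proj[j]\circ\Inci = \bphi_j$ on $V$ for every $0\le j\le i$. Since $\Inci$ is an isomorphism, it maps $\reg(V)$ bijectively to $\reg(\Vphi)$ and $\sing(V)$ bijectively to $\sing(\Vphi)$; this already gives the statement for the singular loci, and hence it suffices to handle the open polar varieties $\Wophii[j]$ and $\Wophii[j][\Vphi][\proj]$, as $\Kphii[j]=\Wophii[j]\cup\sing(V)$ and $\Kproji[j]=\Wophii[j][\Vphi][\proj]\cup\sing(\Vphi)$ then follow by taking unions.

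First I would fix $0\le j\le i$ and a point $\yy\in\reg(V)$, with image $\tilde\yy=\Inci(\yy)\in\reg(\Vphi)$. By definition $\yy\in\Wophii[j]$ iff $d_\yy\bphi_j(\Tg_\yy V)\ne\CC^j$. Differentiating the identity $\bphi_j=\proj[j]\circ\Inci$ at $\yy$ gives $d_\yy\bphi_j = d_{\tilde\yy}\proj[j]\circ d_\yy\Inci$, and since $\Inci$ is an isomorphism, $d_\yy\Inci$ maps $\Tg_\yy V$ isomorphically onto $\Tg_{\tilde\yy}\Vphi$. Therefore $d_\yy\bphi_j(\Tg_\yy V) = d_{\tilde\yy}\proj[j](\Tg_{\tilde\yy}\Vphi)$, so $\yy\in\Wophii[j]$ iff $d_{\tilde\yy}\proj[j](\Tg_{\tilde\yy}\Vphi)\ne\CC^j$, i.e.\ iff $\tilde\yy\in\Wophii[j][\Vphi][\proj]$. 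This shows $\Inci(\Wophii[j])=\Wophii[j][\Vphi][\proj]$, and restricting the isomorphism $\Inci$ to these subsets gives the desired isomorphism on the level of open polar varieties; taking the union with $\sing$ and using that $\Inci$ restricts to an isomorphism there as well (and that Zariski closure is preserved by the isomorphism, if one prefers to phrase the statement with $\Wphii[j]$, $\Wproji[j]$ rather than the open versions) concludes the argument for $W$ and $K$.

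I do not expect any real obstacle here: the only point requiring a little care is making sure $\Inci$ genuinely respects the regular/singular stratification, which is immediate since it is an isomorphism of algebraic sets, and making sure the differential identity is applied at regular points only, where tangent spaces have the expected dimension $d$ (guaranteed by equidimensionality of $V$ and of $\Vphi$, the latter from Lemma~\ref{lem:IVphi}). If the statement is meant for the Zariski closures $\Wphii[j]$ and $\Wproji[j]$, one simply notes that an isomorphism of algebraic sets commutes with Zariski closure of subsets, so $\Inci(\overline{\Wophii[j]}) = \overline{\Inci(\Wophii[j])} = \overline{\Wophii[j][\Vphi][\proj]}$, and likewise for $K$.
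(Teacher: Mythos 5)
Your proof is correct, but it takes a genuinely different route from the paper. The paper argues in the opposite order and with a different tool: it fixes generators $\hh$ of $\I(V)$, uses Lemma~\ref{lem:IVphi} to get generators $\hhphi$ of $\I(\Vphi)$, and then applies the rank identity from Lemma~\ref{lem:ranksubmatrix} to obtain
\[
 \rank \jac_{\Inci(\yy)}\bigl([\hhphi,\proj[j]]\bigr) \;=\; \rank \jac_{\yy}\bigl([\hh,\map[j]]\bigr) + i
\]
for every $\yy\in V$. Combined with the Jacobian-rank characterization of $\Kphii[j]$ (via \cite[Lemma A.2]{SS2017}, which requires the equidimensionality of $V$ and $\Vphi$), this shows $\Inci(\Kphii[j])=\Kproji[j]$ first; the case $j=0$ then gives $\Inci(\sing V)=\sing\Vphi$, from which $\Inci(\Wophii[j])=\Woproji[j]$ follows by removing singular points, and finally the homeomorphism property handles the Zariski closures. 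You instead start from the tangent-space definition of critical points, differentiate the commuting square $\bphi_j=\proj[j]\circ\Inci$, and use that $d_\yy\Inci$ carries $T_\yy V$ isomorphically onto $T_{\Inci(\yy)}\Vphi$; you then assemble $K$ from $W^\circ$ and $\sing$, which is the reverse direction. Both arguments are valid. The paper's explicit rank identity \eqref{eqn:rankincidence} is reused elsewhere (e.g., in Lemma~\ref{lem:Inciatlas}), which is why it is stated in that form; your differential argument is more conceptual and avoids invoking generators and rank criteria, but it does not produce the reusable rank formula the paper needs later. Your closing remark about preservation of Zariski closure under isomorphism is exactly the paper's final step.
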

\begin{proof}
Let $\hh$ be generators of $\I(V)$. By Lemma~\ref{lem:IVphi}, $\hhphi$ are 
generators of $\I(V)$.
 Let $\yy \in V$, $\yyphi = \Inci(\yy) \in \Vphi$ and $0\leq j \leq i$. Then by 
Lemma~\ref{lem:ranksubmatrix},
 \begin{equation}\label{eqn:rankincidence}
  \rank \jac_{\yyphi}([\hhphi,\proj[j]]) =
  \rank \begin{bmatrix}
         \multicolumn{2}{c}{\OO} & \jac_{\yy}(\hh)\\
         \multicolumn{2}{c}{-I_i}  &\jac_{\yy}(\bphi)\\
         I_j & \OO & \OO
        \end{bmatrix}
  = \rank \jac_{\yy}([\hh,\map[j]]) + i,
 \end{equation}
where $I_\ell$ denotes the $\ell\times \ell$ identity matrix.
 Since both $V$ and $\Vphi$ are $d$-equidimensional, then by \cite[Lemma 
A.2]{SS2017}, $\Kphii[j]$ and $\Kproji[j]$ are the sets of points $\yy \in V$ 
and $\yyphi\in\Vphi$ where respectively 
\[
 \jac_{\yy}([\hh,\map[j]])<n-d+j \et \jac_{\yyphi}([\hhphi,\proj[j]]) < n+i-d+j.
\]
Hence by \eqref{eqn:rankincidence}, the two conditions are equivalent and then
it holds that
\[\Inci(\Kphii[j]) = \Kproji[j] \qquad \text{for all $0\leq j \leq i$. }\]
In particular, for $j=0$, $\Inci(\sing(V)) = \sing(\Vphi)$, so that for all
$0\leq j \leq i$,
\[
 \Inci(\Wophii[j]) = \Woproji[j].
\]
Since $\Inci$ is an isomorphism of algebraic sets, it is a homeomorphism for the
Zariski topology, so that it maps the Zariski closure of sets to the Zariski
closure of their image. Hence, we can conclude that $\Inci(\Wphii[j]) =
\Wproji[j]$ for all $0\leq j \leq i$.
\end{proof}

\begin{lemma}[Chart and atlases]\label{lem:Inciatlas}
  Let $1\leq e \leq n$, $Q\subset \CC^e$ be a finite set and $S$ be an algebraic
  set such that $V$ and $S$ lie over $Q$ with respect to $\bphi$. By a slight
  abuse of notation, we denote equally $m \in \CC[\XX]$ when seen in
  $\CC[\EE,\XX]$. Then, the following holds.
\begin{enumerate}
\item Let $\chi=(m,\hh) \subset \CC[\XX]$ be a chart of $(V,Q,S,\bphi)$, then
  $\chiphi=(m,\hhphi) \subset \CC[\EE,\XX]$ is a chart of
  $(\Vphi,Q,\Sphi,\proj)$, where $\Sphi = \Inci(S)$.
\item Let $\bchi = (\chi_j)_{1\leq j\leq s}$ be an atlas of $(V,Q,S,\bphi)$,
  then if $\bchiphi = (\chiphi_j)_{1\leq j \leq s}$ as defined in the previous
  item, $\bchiphi$ is an atlas of $(\Vphi,Q,\Sphi,\proj)$.
\end{enumerate}
\end{lemma}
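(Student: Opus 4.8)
The statement is essentially a "transport of structure" lemma: the incidence isomorphism $\Inci$ carries charts to charts and atlases to atlases, once we replace $\bphi$ by the canonical projection $\proj$ and the base equations $\hh$ by $\hhphi = (\hh, \phi_1 - e_1, \dots, \phi_i - e_i)$. So the proof should simply run through the four conditions $(\sfC_1)$–$(\sfC_4)$ of Definition~\ref{def:chart} and the three conditions $(\sfA_1)$–$(\sfA_3)$ of Definition~\ref{def:atlas}, checking each one using the isomorphism $\Inci$ and the rank computations already established. The key observations I would invoke are: first, $\Inci \colon V \to \Vphi$ is an isomorphism of algebraic sets with inverse the projection onto the last $n$ coordinates (stated just before Lemma~\ref{lem:diagprojphi}); second, the commuting diagram of Lemma~\ref{lem:diagprojphi}, which gives $\proj[e] \circ \Inci = \map[e]$, so that $V$ and $S$ lying over $Q$ with respect to $\bphi$ translates exactly into $\Vphi$ and $\Sphi$ lying over $Q$ with respect to $\proj$; and third, the rank identity $\rank \jac_{\yyphi}([\hhphi,\proj[j]]) = \rank \jac_{\yy}([\hh,\map[j]]) + i$ from the proof of Lemma~\ref{lem:InciPolar} (equation~\eqref{eqn:rankincidence}), specialized to $j = e$.

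\medskip
\noindent\textbf{Proof of item (1).} First I would note that $\Inci$ maps $\Ocal(m) \cap V$ bijectively onto $\Ocal(m) \cap \Vphi$ (viewing $m \in \CC[\XX] \subset \CC[\EE,\XX]$, which does not see the variables $\EE$), and $S$ onto $\Sphi$; hence $\Ocal(m)\cap V - S \neq \emptyset$ gives $(\sfC_1)$ for $\chiphi$. For $(\sfC_2)$: using that $\hhphi(\tt,\yy) = 0$ iff $\hh(\yy)=0$ and $\bphi(\yy)=\tt$ (as in the proof of Lemma~\ref{lem:IVphi}), one has $\proj[i][-1](\text{anything}) \cap \V(\hhphi) = \Inci(\V(\hh)) \cap (\cdots)$, and combining this with $(\sfC_2)$ for $\chi$ and the diagram $\proj[e]\circ\Inci = \map[e]$ yields the desired equality of locally closed sets over $Q$; more precisely $\mapfbr[\Vphi]{\V(\hhphi)}[Q][e] = \Inci(\mapfbr[V]{\V(\hh)}[Q][e])$ by the commuting diagram, and intersecting with $\Ocal(m)$ and removing $\Sphi$ gives $(\sfC_2)$. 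Condition $(\sfC_3)$ is immediate: the number of equations in $\hhphi$ is $c + i$, the fiber variables live in $\CC^{i+n}$, and $\proj$ picks out the first $e$ of the $i$ "incidence" coordinates, so the relevant inequality becomes $e + (c+i) \leq n + i$, i.e.\ $e + c \leq n$, which is $(\sfC_3)$ for $\chi$. Finally $(\sfC_4)$ is exactly equation~\eqref{eqn:rankincidence} with $j = e$: for $\yyphi = \Inci(\yy) \in \Ocal(m)\cap\Vphi - \Sphi$, $\rank\jac_{\yyphi}([\hhphi,\proj[e]]) = \rank\jac_{\yy}([\hh,\map[e]]) + i$, which equals $(c+e) + i = (c+i) + e$ precisely when $\jac_{\yy}([\hh,\map[e]])$ has full rank $c+e$, i.e.\ when $(\sfC_4)$ holds for $\chi$.

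\medskip
\noindent\textbf{Proof of item (2) and the main obstacle.} Item (2) follows formally: $(\sfA_1)$ is trivially inherited; $(\sfA_2)$ is item (1) applied to each $\chi_j$; and $(\sfA_3)$ holds because $\Inci$ is a homeomorphism, so $V - S \subset \bigcup_j \Ocal(m_j)$ implies $\Vphi - \Sphi = \Inci(V - S) \subset \bigcup_j \Inci(\Ocal(m_j)\cap V) \subset \bigcup_j \Ocal(m_j)$ (again because $m_j$, read in $\CC[\EE,\XX]$, vanishes on exactly the $\Inci$-image of its vanishing locus on $V$). I do not expect a genuine obstacle here; the only point requiring care is bookkeeping around $(\sfC_2)$ — namely checking that the fiber-product notation $\mapfbr{-}$ behaves correctly under $\Inci$, which is where the commuting diagram of Lemma~\ref{lem:diagprojphi} does all the work, and making sure the abuse of notation "$m \in \CC[\XX]$ seen in $\CC[\EE,\XX]$" is used consistently so that $\Ocal(m)$ on the two sides correspond under $\Inci$. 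Everything else is a direct translation through the isomorphism together with the already-proved rank formula.
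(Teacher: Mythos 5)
Your proposal follows essentially the same route as the paper: it verifies $(\sfC_1)$--$(\sfC_4)$ and $(\sfA_1)$--$(\sfA_3)$ one by one using the same three ingredients — the commuting diagram $\proj[e]\circ\Inci = \map[e]$ from Lemma~\ref{lem:diagprojphi}, the elementary identity $\Inci(\V(\hh)) = \V(\hhphi)$ noted in the proof of Lemma~\ref{lem:IVphi}, and the rank formula~\eqref{eqn:rankincidence} specialized to $j=e$. Apart from a minor notational slip in $(\sfC_2)$ (using $\mapfbr$ where $\projfbr$ is meant on the $\Vphi$ side), the argument is correct and matches the paper's.
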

\begin{proof}
  We start with the first statement. Let $Q,S$ and $\chi=(m,\hh)$ be as in the
  statement. Then, it holds that:
 \begin{enumerate}[label=$\sfC_\arabic*:$]
  \item Let $\yy \in \Ocal(m)\cap V - S$, which is non-empty by property 
$\sfC_1$ of $\chi$. Then by definition $\Inci(\yy) \in \Vphi$, and since 
$\Inci$ is an isomorphism on $\Vphi$, $\Inci(\yy) \notin \Sphi$. Finally since 
$m \in \CC[\XX]$, then $m(\Inci(\yy)) = m(\yy) \neq 0$ so that $\Ocal(m) \cap 
\Vphi - \Sphi$ is not empty.
  
\item Note that since $m\in \CC[\XX]$, $\Inci(\Ocal(m))$ is defined by $m\neq
  0$. By a slight abuse of notation, we still denote this Zariski open set $
  \Ocal(m)$. Hence, it follows from the definition of $\Inci$ that
  $\Inci(\Ocal(m) \cap V - S) = \Ocal(m) \cap \Vphi - \Sphi$. Besides, by
  Lemma~\ref{lem:diagprojphi}, $\proj[e]\circ\Inci$ and $\map[e]$ coincide on
  $V$. Then
  \[
  \mapfbr{Z} = \projfbr{\Inci(Z)} \text{ for any } Z \subset V.
  \]
  Finally, as seen in the proof of Lemma~\ref{lem:IVphi}, $\Inci(\V(\hh)) = 
\V(\hhphi)$.
  Hence by property $\sfC_2$ of $\chi$, 
  \[
   \Ocal(m) \cap \Vphi - \Sphi =
   \Inci(\Ocal(m) \cap \mapfbr{\V(\hh)} - S) =
   \Ocal(m) \cap \projfbr{\V(\hhphi)} - \Sphi,
  \]
  since $\Ocal(m) \cap \mapfbr{\V(\hh)} - S$ is a subset of $V$.
  
  \item Let $c$ be the cardinality of $\hh$, then $\hhphi$ has cardinality 
$c+i$. Hence by property $\sfC_3$ of $\chi$, $e+c+i \leq i+n$ as required.
  
  \item Finally let $\yyphi=(\tt,\yy) \in \Ocal(m) \cap \Vphi - \Sphi$, we know 
from above that $\yy \in \Ocal(m) \cap V - S$, so that by property $\sfC_4$ of 
$\chi$, $\jac_{\yy}[\hh,\map[e]]$ has full rank $c+e$.
  But by  equality \eqref{eqn:rankincidence} in the proof of 
Lemma~\ref{lem:InciPolar}, this means that $\jac_{\yyphi}([\hhphi,\proj[e]])$ 
has full rank $c+i+e$ as required.
 \end{enumerate}
We have shown that charts can be transferred to incidence varieties, let us now
prove that this naturally gives rise to atlases.
Consider an atlas $\bchi = (\chi_j)_{1\leq j\leq s}$ of $(V,Q,S,\bphi)$,
and let $\bchiphi = (\chiphi_j)_{1\leq j \leq s}$, where for all $1\leq j \leq 
s$, $\chiphi_j$ is defined from $\chi_j$ as above. We proved that $\bchiphi$ is 
an atlas of $(\Vphi,Q,\Sphi,\proj)$.

Property $\sfA_1$ is straightforward, and $\sfA_2$ is given by the first
statement of this lemma which we just proved. Finally, since $\Inci(V-S) = \Vphi
- \Sphi$, then for any $\yyphi = (\tt,\yy) \in \Vphi-\Sphi$, by property
$\sfA_3$ of $\bchi$, there exists $1\leq j \leq s$ such that $m_j(\yyphi) =
m_j(\yy)\neq 0$. Then $\bchiphi$ satisfies property $\sfA_3$ of atlases.
\end{proof}

We deduce the following results for two important particular cases.
\begin{lemma}\label{lem:InciPolaratlas}
  Let $S\subset \CC^{n}$ be an algebraic set, $\chi=(m,\hh)$ and $\bchi =
  (\chi_j)_{1\leq j\leq s}$ be respectively a chart and an atlas of $(V,S)$, and
  let $\chiphi$ and $\bchiphi$ the chart and atlas constructed from respectively
  $\chi$ and $\bchiphi$ as in Lemma~\ref{lem:Inciatlas}. The following
  holds.
 \begin{enumerate}
 \item If $\hh$ has cardinality $c$, then for any $c$-minor $m'$ of $\jac(\hh)$
   and any $(c+i-1)$-minor $m''$ of $\jac([\hh,\map[i]])$, containing the rows
   of $\jac(\map[i])$, the following holds. If $\Wchart(\chi, m',m'')$ is a
   chart of $\scrW=(\Wphii,S)$, then $\Wchart(\chiphi,m',m'')$ is a chart of
   $\scrWphi=(\Wproji[i][\Vphi],\Sphi)$.
  \item If $\Watlas(\bchi,V,S,\bphi,i)$ is an atlas of $\scrW$ then, 
$\Watlas(\bchiphi,\Vphi,\Sphi,\proj,i)$ is an atlas of $\scrWphi$.
 \end{enumerate}
\end{lemma}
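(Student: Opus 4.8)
The plan is to transfer both items through the incidence isomorphism $\Inci$ associated to $\bphi$. The starting point is Lemma~\ref{lem:InciPolar}, which yields isomorphisms $\Wphii\to\Wproji[i][\Vphi]$ and $\sing(V)\to\sing(\Vphi)$, hence, since $\Sphi=\Inci(S)$, an isomorphism of pairs $(\Wphii,S)\to(\Wproji[i][\Vphi],\Sphi)$; moreover the argument of Lemma~\ref{lem:Inciatlas}(1) shows that any chart $(m_0,\kk_0)$ of $(\Wphii,S)$ yields a chart $(m_0,(\kk_0,\phi_1-e_1,\dots,\phi_i-e_i))$ of $(\Wproji[i][\Vphi],\Sphi)$. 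So for item~1 I would reduce to checking that $\Wchart(\chiphi,m',m'',\proj[i])$ coincides, up to nonzero scalar factors on its entries and a permutation of the entries of its second component (neither of which affects $\Ocal(\cdot)$, $\V(\cdot)$, or any Jacobian rank), with the chart built this way from $\Wchart(\chi,m',m'',\map[i])=(mm'm'',(\hh,\Hphi(\hh,i,m'')))$, namely with $(mm'm'',(\hh,\Hphi(\hh,i,m''),\phi_1-e_1,\dots,\phi_i-e_i))$.

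That coincidence is the linear-algebraic core of item~1, and I expect it to go exactly as the proof of Lemma~\ref{lem:chartaltlasWproj}, with the canonical projection there replaced by $\proj[i]=(e_1,\dots,e_i)$ on $\CC^{i+n}$. Writing $\hhphi=(\hh,\phi_1-e_1,\dots,\phi_i-e_i)$ (cardinality $c+i$), the matrix $\jac(\hhphi)$ has $[\,\OO\ \ \jac_\XX(\hh)\,]$ as its first $c$ rows and $[-I_i\ \ \jac_\XX(\bphi)]$ as its last $i$ rows, and $\jac([\hhphi,\proj[i]])$ is obtained by appending the $i$ rows $[\,I_i\ \ \OO\,]$ of $\jac(\proj[i])$; in particular neither matrix involves $\EE$, and $\jac_\XX(\hhphi)=\jac([\hh,\map[i]])$. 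By Lemma~\ref{lem:minorssubmat} with $q=i$ (its $k=2$ case applied to $\jac(\hhphi)$ and its $k=1$ case to $\jac([\hhphi,\proj[i]])$), the $(c+i)$-minors of $\jac(\hhphi)$ built from the $-I_i$ block are, up to sign, exactly the $c$-minors of $\jac(\hh)$ (the others vanish), and the $(c+2i-1)$-minors of $\jac([\hhphi,\proj[i]])$ containing the rows of $\jac(\proj[i])$ are, up to sign, exactly the $(c+i-1)$-minors of $\jac([\hh,\map[i]])$; under the index dictionary of that lemma, $m'$ and $m''$ match minors $\tilde m'=\pm m'$ and $\tilde m''=\pm m''$ for $\chiphi$ (the row omitted by $\tilde m''$ being the $\hh$-row omitted by $m''$), and, tracking indices as in Lemma~\ref{lem:chartaltlasWproj}, $\Hphi[\proj](\hhphi,i,\tilde m'')=\pm\Hphi(\hh,i,m'')$ entrywise. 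Hence $\Wchart(\chiphi,m',m'',\proj[i])=(\pm mm'm'',(\hh,\phi_1-e_1,\dots,\phi_i-e_i,\pm\Hphi(\hh,i,m'')))$, which is the chart above up to scalars and reordering; this is item~1.

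For item~2, Lemma~\ref{lem:Inciatlas}(2) makes each $\chiphi_j=(m_j,\hhphi_j)$ a chart of $(\Vphi,\Sphi)$; if $\Wphii$ is empty then $\Wproji[i][\Vphi]=\Inci(\Wphii)$ is empty and we are done, so assume otherwise and check $(\sfA_1)$--$(\sfA_3)$ for $\Watlas(\bchiphi,\Vphi,\Sphi,\proj,i)$. Since $m_j,m',m''\in\CC[\XX]$ and the matching minors of $\jac([\hhphi_j,\proj[i]])$ are polynomials in $\XX$ equal up to sign to $m',m''$, the isomorphism $\Inci$ carries $\Ocal(m_jm'm'')\cap\Wphii-S$ onto $\Ocal(m_j\tilde m'\tilde m'')\cap\Wproji[i][\Vphi]-\Sphi$; so $\Inci$ sets up a bijection, respecting the nonemptiness filters, between the charts of $\Watlas(\bchi,V,S,\bphi,i)$ and those charts of $\Watlas(\bchiphi,\Vphi,\Sphi,\proj,i)$ whose distinguished $(c+2i-1)$-minor omits a row of $\hhphi_j$ lying in $\hh_j$. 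This subfamily is nonempty and, by $(\sfA_3)$ for $\Watlas(\bchi,V,S,\bphi,i)$, covers $\Wproji[i][\Vphi]-\Sphi=\Inci(\Wphii-S)$, giving $(\sfA_1)$ and $(\sfA_3)$; and by item~1 each of its members is a chart. The remaining charts of $\Watlas(\bchiphi,\Vphi,\Sphi,\proj,i)$ are those whose distinguished minor $\tilde m''$ omits a $\phi_k-e_k$ row of $\hhphi_j$, and these I would verify to be charts directly: $(\sfC_1)$ and $(\sfC_3)$ are immediate ($(c+i)+(n-c-i+1)=n+1\leq i+n$), $(\sfC_2)$ follows from Lemma~\ref{lem:jacrankchart} applied to $(\chiphi_j,\proj[i])$ together with the standard fact that where $\tilde m''\neq0$ the matrix $\jac([\hhphi_j,\proj[i]])$ loses rank exactly when all $(c+2i)$-minors bordering $\tilde m''$ (the entries of $\Hphi[\proj](\hhphi_j,i,\tilde m'')$) vanish, and $(\sfC_4)$ follows from a column reduction as in Lemma~\ref{lem:ranksubmatrix} (these bordering minors lie in $\CC[\XX]$) together with the $(i-1)$-equidimensionality of $\Wphii$ and the transversality of its local equations, both encoded in the hypothesis that $\Watlas(\bchi,V,S,\bphi,i)$ is an atlas of $(\Wphii,S)$, exactly as in the projection case of Proposition~\ref{prop:polargen}.

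The main obstacle I anticipate is precisely this minor bookkeeping: first the Lemma~\ref{lem:minorssubmat}-based identification of $\Wchart(\chiphi,\cdot,\cdot,\proj[i])$ with the incidence lift of $\Wchart(\chi,\cdot,\cdot,\map[i])$, and then the fact that $\Watlas(\bchiphi,\Vphi,\Sphi,\proj,i)$ carries extra charts---those whose distinguished minor omits a $\phi_k-e_k$ row---that do not descend from any $\Wchart$ on $V$ and so have to be checked to be charts by hand from the local rank analysis.
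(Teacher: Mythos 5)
Your approach is the same as the paper's: use the incidence isomorphism $\Inci$, the minor correspondences of Lemma~\ref{lem:minorssubmat}, and Lemma~\ref{lem:chartaltlasWproj} to identify $\Hphi[\proj](\hhphi,i,m'')$ with $\Hphi(\hh,i,m'')$ up to sign, so that $\Wchart(\chiphi,m',m'',\proj[i])$ is the incidence lift of $\Wchart(\chi,m',m'',\map[i])$; then invoke Lemma~\ref{lem:Inciatlas}. Your treatment of item 1 matches the paper's (the paper does not explicitly track the signs, but that is harmless, as you note).

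For item 2, you flag a genuine subtlety that the paper's own proof glosses over. The paper asserts that every $m'$, $m''$ appearing in $\Watlas(\bchiphi,\Vphi,\Sphi,\proj,i)$ is, via Lemma~\ref{lem:minorssubmat}, the lift of a $c$-minor of $\jac(\hh_j)$ and a $(c+i-1)$-minor of $\jac([\hh_j,\map[i]])$ \emph{containing the rows of} $\jac(\map[i])$, and then concludes that the two atlases match exactly. But Lemma~\ref{lem:minorssubmat} only classifies the minors that contain the $I_i$ block; the $(c+i-1)$-minors $\mu''$ of $\jac(\hhphi_j,i)=\jac([\hh_j,\map[i]])$ that survive the passage through Lemma~\ref{lem:chartaltlasWproj} need not contain the $\map[i]$ rows — the omitted row may be a $\phi_k$ row. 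You are right that these give candidate charts in $\Watlas(\bchiphi,\Vphi,\Sphi,\proj,i)$ with no preimage in $\Watlas(\bchi,V,S,\bphi,i)$, and that the resulting polynomial sequences $(\hhphi_j,\Hphi[\proj](\hhphi_j,i,\mu''))$ are genuinely different; the paper's argument, as written, does not address them. (There are also extra $m'$'s — $(c+i)$-minors of $\jac(\hhphi_j)$ that do not select all $i$ columns of the $-I_i$ block — but those are benign, since $m'$ never enters the polynomial component of $\Wchart$, so $\sfC_4$ for them reduces to the already-verified charts.)

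However, your patch for these extra $\mu''$-charts is not yet a proof. Conditions $\sfC_1$, $\sfC_2$, $\sfC_3$ are indeed routine, and $\sfC_2$ follows from the exchange lemma as you say. But for $\sfC_4$ you appeal to ``the transversality of its local equations, $\dots$ encoded in the hypothesis that $\Watlas(\bchi,V,S,\bphi,i)$ is an atlas of $(\Wphii,S)$, exactly as in the projection case of Proposition~\ref{prop:polargen}.'' That does not close the gap: the hypothesis only certifies full rank of $\jac(\hh_j,\Hphi(\hh_j,i,m''))$ for $m''$ omitting a row of $\hh_j$, whereas you now need full rank of $\jac(\hh_j,\Hphi(\hh_j,i,\mu''))$ where $\mu''$ omits a row of $\map[i]$; these are different polynomial systems, and smoothness of $\Wphii$ plus set-theoretic agreement of the zero sets is not enough to transfer the rank condition unless you also argue that the new system generates the ideal locally. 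The argument in Proposition~\ref{prop:localgenericityrankgen} that you cite does establish exactly this, but it does so under the genericity assumption $\balf\in\ZOpolarloc$, which is not part of the hypotheses of the present lemma. So either the hypothesis of the lemma must be strengthened (to include $\sfC_4$ for the extra bordering systems), or a direct rank argument is needed for the $\mu''$'s omitting a $\map[i]$-row; as it stands, both your patch and the paper's proof leave this step open.
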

\begin{proof}
Let $m'$ and $m''$ be respectively a $c$-minor of $\jac(\hh)$ and a 
$(c+i-1)$-minor of $\jac([\hh,\map[i]])$, containing the rows of 
$\jac(\map[i])$.
Assume that 
\[
 \Wchart(\chi,m',m'',\bphi) =
 \Big( \: mm'm'' , \: \big(\hh,\Hphi(\hh,i,m'') \big) \: \Big)
\]
is a chart of $\scrW$. By $\mathsf{C}_1$, $\Ocal(mm'm'')\cap\Wphii-S$ is not 
empty, 
so that $m'$ and $m''$ are not identically zero.
Since
\[
 \jac(\hhphi) = 
 \begin{pmatrix}
  \OO & \jac(\hh)\\
  -I_i & \jac(\map[i])
 \end{pmatrix},
\]
Lemma~\ref{lem:minorssubmat} shows that $m'$ is a $(c+i)$-minor of
$\jac(\hhphi)$ and $m''$ is a $(c+i+i-1)$-minor of $\jac(\hhphi,\proj[i])$
containing $I_i=\jac(\proj[i])$. Hence, according to
Definition~\ref{def:Wchart},
\[
 \Wchart(\chiphi,m',m'') = 
\left(mm'm'',(\hhphi,\Hphi[\proj](\hhphi,i,m''))\right),
\]
where, by definition, $\Hphi[\proj](\hhphi,i,m'')$ is the sequence of 
$(c+i+i)$-minors of $\jac([\hhphi,\proj[i]])$ obtained by successively adding 
the missing row and the missing columns of $\jac([\hhphi,\proj[i]])$ to $m''$.

But, since $m''\neq 0$, Lemma~\ref{lem:chartaltlasWproj} implies that
$\Hphi[\proj](\hhphi,i,m'')$ is, as well, the sequence of $(c+i)$-minors obtained
by successively adding the missing row and the missing columns of
$\jac(\hhphi,i)=\jac([\hh,\map[i])$ to $m''$. We deduce that
\[
 \Hphi[\proj](\hhphi,i,m'') = \Hphi(\hh,i,m''),
\]
so that if $\gg = (\hh,\Hphi(\hh,i,m''))$, then $\ggphi =
(\hhphi,\Hphi[\proj](\hhphi,i,m''))$.\\
Hence $\Wchart(\chiphi,m',m'')$ is the chart constructed from
$\Wchart(\chi,m',m'',\bphi)$ in Lemma~\ref{lem:Inciatlas}, and since, by
Lemma~\ref{lem:InciPolar}, $\Inci(\Wphii) = \Wproji$, the first statement of
Lemma~\ref{lem:Inciatlas} implies that $\Wchart(\chiphi,m',m'')$ is a chart of
$\scrWphi$.

To prove the second assertion, remark that by Lemma~\ref{lem:chartaltlasWproj}
(third assertion),
$\Watlas(\bchiphi,\Vphi,\Sphi,\proj,i)$ is the sequence of all those
$\Wchart(\chiphi_j,m',m'')$, for $j\in\{1,\dotsc,s\}$ and for $m',m''$
respectively a $c+i$-minor of $\jac(\hhphi_j)$ and a $(c+i-1)$-minor of
$\jac(\hhphi_j,i)$ for which $\Ocal(m_jm'm'')\cap\Wproji-S$ is not empty.

As seen above, the polynomials $m'$ and $m''$ are actually $c$-minors of
$\jac(\hh_j)$ and $(c+i-1)$-minors of $\jac([\hhphi_j,\map[i]])$, and in the
first point, we prove that $\Wchart(\chiphi_j,m',m'')$ is the chart constructed
in the first point of Lemma~\ref{lem:Inciatlas} from $\Wchart(\chi_j,m',m'')$.
Hence $\Watlas(\bchiphi,\Vphi,\Sphi,\proj,i)$ is exactly the atlas constructed
from $\Watlas(\bchi,V,S,\bphi,i)$ in the second item of
Lemma~\ref{lem:Inciatlas}. \\
In conclusion, by Lemma~\ref{lem:Inciatlas}, if
$\Watlas(\bchi,V,S,\bphi,i)$ is an atlas of $\scrW$, then
$\Watlas(\bchiphi,\Vphi,\Sphi,\proj,i)$ is an atlas of $\scrWphi$.
\end{proof}

\begin{lemma}\label{lem:InciFiberatlas} 
  Let $1\leq e \leq n$, $Q\subset \CC^e$ be a finite set and $S$ be an algebraic
  set such that $V$ and $S$ lie over $Q$ with respect to $\bphi$. Let further
\[
  \scrF = \left(\,\mapfbr{V},\;\mapfbr{(S\cup\Wphii[e])}\,\right)
\]
and
\[
 \scrFphi = 
\left(\,\projfbr{\Vphi},\;\projfbr{(\Sphi\cup\Wproji[e])}\,
\right).
\]
Let $\chi=(m,\hh)$ and $\bchi = (\chi_j)_{1\leq j\leq s}$ be respectively a 
chart and an atlas of $(V,Q,S,\bphi)$ and let $\chiphi$ and $\bchiphi$ the 
chart and atlas constructed from respectively $\chi$ and $\bchiphi$ as in 
Lemma~\ref{lem:Inciatlas}.

If $\Fatlas(\bchi,V,Q,S,\bphi)$ is an atlas of $\scrF$ then
$\Fatlas(\bchiphi,\Vphi,Q,\Sphi,\proj)$ is an atlas of $\scrFphi$.
\end{lemma}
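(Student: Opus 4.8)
The plan is to reduce the statement to the $e=0$ case of Lemma~\ref{lem:Inciatlas}, after identifying all the relevant sets through the incidence isomorphism $\Inci$. Write $F_Q=\mapfbr{V}$, $S_Q=\mapfbr{(S\cup\Wphii[e])}$ for the two components of $\scrF$, and $F_Q^{\bphi}=\projfbr{\Vphi}$, $S_Q^{\bphi}=\projfbr{(\Sphi\cup\Wproji[e])}$ for those of $\scrFphi$. First I would record that $\bchiphi$ is an atlas of $(\Vphi,Q,\Sphi,\proj)$ — this is the second assertion of Lemma~\ref{lem:Inciatlas} — so that $\Fatlas(\bchiphi,\Vphi,Q,\Sphi,\proj)$ is well defined. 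Unwinding Definition~\ref{def:Fatlas}, it is the subsequence of $\bchiphi=(\chiphi_j)_j$, with $\chiphi_j=(m_j,\hhphi_j)$ and $\hhphi_j=(\hh_j,\phi_1-e_1,\dots,\phi_i-e_i)$, indexed by $\{j:\Ocal(m_j)\cap F_Q^{\bphi}-S_Q^{\bphi}\neq\emptyset\}$, whereas $\Fatlas(\bchi,V,Q,S,\bphi)$ is the subsequence of $\bchi=(\chi_j)_j$ indexed by $\{j:\Ocal(m_j)\cap F_Q-S_Q\neq\emptyset\}$; crucially both carry the \emph{same} polynomial $m_j\in\CC[\XX]\subset\CC[\EE,\XX]$.

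Next I would prove the two set identities $\Inci(F_Q)=F_Q^{\bphi}$ and $\Inci(S_Q)=S_Q^{\bphi}$. Since $\Inci$ is injective and, by Lemma~\ref{lem:diagprojphi}, $\proj[e]\circ\Inci=\map[e]$ on $V$, a point $\yy\in V$ lies in $\map[e][-1](Q)$ iff $\Inci(\yy)$ lies in $\proj[e][-1](Q)$; as $\Vphi=\Inci(V)$, this gives the first identity. Applying the same argument to the algebraic subset $S\cup\Wphii[e]\subset V$, and using $\Inci(S)=\Sphi$ together with $\Inci(\Wphii[e])=\Wproji[e]$ (Lemma~\ref{lem:InciPolar} with $j=e$), so that $\Inci(S\cup\Wphii[e])=\Sphi\cup\Wproji[e]$, gives the second. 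All four sets here are Zariski closed ($Q$ being finite), so no closure subtleties arise. Because $m_j$ does not involve $\EE$, we have $\Inci(\Ocal(m_j)\cap X)=\Ocal(m_j)\cap\Inci(X)$ for every $X\subset\CC^n$; combined with the two identities and the bijectivity of $\Inci$, this shows $\Ocal(m_j)\cap F_Q-S_Q\neq\emptyset$ iff $\Ocal(m_j)\cap F_Q^{\bphi}-S_Q^{\bphi}\neq\emptyset$. Hence the two $\Fatlas$ sequences are indexed by the same set $J$, i.e.\ they are $(\chi_j)_{j\in J}$ and $(\chiphi_j)_{j\in J}$.

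It then remains to transfer the three atlas axioms from $\scrF$ to $\scrFphi$, assuming $\Fatlas(\bchi,V,Q,S,\bphi)$ is an atlas of $\scrF$. Axiom $(\sfA_1)$ says $J\neq\emptyset$, which is $(\sfA_1)$ for $(\chiphi_j)_{j\in J}$. Axiom $(\sfA_2)$ says each $\chi_j$, $j\in J$, is a chart of $(F_Q,S_Q)$; the first assertion of Lemma~\ref{lem:Inciatlas}, applied with $F_Q$ and $S_Q$ in place of $V$ and $S$ and with $e=0$, then yields that $\chiphi_j=(m_j,\hhphi_j)$ is a chart of $(\Inci(F_Q),\Inci(S_Q))=\scrFphi$ — note the lift produced there is precisely this $\hhphi_j$, and the verification of $(\sfC_1)$--$(\sfC_4)$ in that proof relies only on $\Inci$ being an isomorphism, the rank computation \eqref{eqn:rankincidence}, and the identity $\Inci(\V(\hh_j))=\V(\hhphi_j)$, none of which needs equidimensionality of the base set. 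Finally, axiom $(\sfA_3)$ says $F_Q-S_Q\subset\bigcup_{j\in J}\Ocal(m_j)$; applying $\Inci$ and distributing it over the union gives $F_Q^{\bphi}-S_Q^{\bphi}=\Inci(F_Q-S_Q)\subset\bigcup_{j\in J}\Ocal(m_j)$, which is $(\sfA_3)$. This establishes that $\Fatlas(\bchiphi,\Vphi,Q,\Sphi,\proj)$ is an atlas of $\scrFphi$. I do not expect a genuine obstacle: the only substantive ingredient, the chart transfer of Lemma~\ref{lem:Inciatlas}, is already available, and the rest is bookkeeping through $\Inci$; the one point to watch is matching the equations $\hhphi_j$ generated by $\Fatlas(\bchiphi,\cdot)$ with those produced by lifting $\chi_j$.
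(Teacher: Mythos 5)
Your proof is correct and follows the same route as the paper: identify $\Inci(\mapfbr{V})$ and $\Inci(\mapfbr{(S\cup\Wphii[e])})$ with the corresponding fibers over $\Vphi$ using $\proj[e]\circ\Inci=\map[e]$ and $\Inci(\Wphii[e])=\Wproji[e]$, conclude the two $\Fatlas$ sequences are indexed by the same $J$, and then transfer the atlas property through Lemma~\ref{lem:Inciatlas}. The only stylistic difference is that the paper simply invokes the second assertion of Lemma~\ref{lem:Inciatlas} applied to the pair $(F_Q,S_Q)$ with $e=0$, whereas you unpack the three axioms $(\sfA_1)$--$(\sfA_3)$ and appeal to the first assertion chart by chart; that unpacking also lets you note explicitly that the chart-transfer argument does not rely on equidimensionality of the base set, which the paper leaves implicit. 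Both versions have the same content and the same (mild) implicit reliance on reading Lemma~\ref{lem:Inciatlas} in the degenerate case $e=0$.
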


\begin{proof}
Without loss of generality one can assume that $S \subset V$.
Since by Lemma~\ref{lem:diagprojphi}, $\proj[e]\circ\Inci$ and $\map[e]$ 
coincide on $V$, then 
\[
  \Inci \left( \mapfbr{(S\cup\Wphii[e])} \right) = 
\projfbr{(\Sphi\cup\Wproji[e])}.
\]
Hence, for any $1\leq j\leq s$, $\Ocal(m_j) \cap \projfbr{\Vphi} -
\projfbr{(\Sphi\cup\Wproji[e])} $ coincides with
\[
 \Inci\left(\Ocal(m_j) \cap \mapfbr{V} - 
\mapfbr{(S\cup\Wphii[e])}\right),
\]
so that these sets are not-empty for the same $j$'s in $\{1,\dotsc,s\}$.\\
Hence $\Fatlas(\bchiphi,\Vphi,Q,\Sphi,\proj)$ is the atlas constructed from
$\Fatlas(\bchi,V,Q,S,\proj)$ in Lemma~\ref{lem:Inciatlas}.

In conclusion, by the second assertion of Lemma~\ref{lem:Inciatlas}, if
$\Fatlas(\bchi,V,Q,S,\proj)$ is an atlas of $\scrF$ then
$\Fatlas(\bchi,V,Q,S,\proj)$ is an atlas of $\scrFphi$.
\end{proof}

\subsection{Lagrange systems}\label{ssec:lagrange}
We present here a simplified version of generalized Lagrange systems defined in
\cite[Section 5.2]{SS2017} to encode polar varieties and provide equivalent
results adapted to our case. As we only use a simplified version (involving a
single block of Lagrange multipliers), we simply call them Lagrange
systems.

\subsubsection{Definitions}
The following is nothing but a simplified version of \cite[Definition 
5.3]{SS2017}.
\begin{definition}\label{def:lagrangesystem}
 A \emph{Lagrange system} is a triple $L=(\Gamma,\scrQ,\scrS)$ where
 \begin{itemize}
  \item $\Gamma$ is a \SLP evaluating a sequence of polynomials $\FF=(\ff,\gg)\subset 
\QQ[\XX,\LL]$, where
  \begin{itemize}[label=$-$]
  \item $\XX=(X_1,\dotsc,X_n)$ and $\LL=(L_1,\dotsc,L_{m})$;
  \item $\ff=(f_1,\dotsc,f_p) \subset \QQ[\XX]$ and $\gg=(g_1,\dotsc,g_q) 
\subset \QQ[\XX,\LL]$ with $\deg_{\LL}\gg\leq 1$;
  \end{itemize}
  
  \item $\scrQ$ is a \ZDP with coefficients in $\QQ$, with $Q = \Zparam(\scrQ) 
\subset \CC^e$;
  
  \item $\scrS$ is a \ZDP with coefficients in $\QQ$, with 
$S=\Zparam(\scrS)\subset \CC^n$ lying over $Q$;
  
  \item $(n+m) - (p+q) \geq e$.
 \end{itemize}
 We also define $N$ and $P$ as respectively the number of variables and
 equations, so that
\[
 N=n+m,\quad P=p+q \et d = N-e-P \geq 0.
\]
\end{definition}
One checks that such a Lagrange system is also a generalized Lagrange system in
the sense of \cite[Definition 5.3]{SS2017}. We can then define the same objects
associated to such systems as follows. We denote by $\piX\colon\CC^N\to\CC^n$
the projection on the variables associated to $\XX$ in any set of $\CC^N$
defined by equations in $\CC[\XX,\LL]$.

\begin{definition}\label{def:datalagrange}
 Let $L=(\Gamma,\scrQ,\scrS)$ be a Lagrange system and all associated data 
defined in Definition~\ref{def:lagrangesystem}. We define the following objects:
 \begin{itemize}
  \item the \emph{type of $L$} is the triple $T=(\nn,\pp,e)$ where $\nn=(n,m)$ 
and $\pp=(p,q)$;
  
\item $\scrU(L) = \piX\Big(\projfbr{\V(\FF)}[Q][e] - \piX[-1](S)\Big)\subset 
\CC^n$
  
  \item $\ZUL \subset \CC^n$ the Zariski closure of $\scrU(L)$.
 \end{itemize}
Then we say that $L$ defines $\ZUL$.
\end{definition}
(We see here that Lagrange systems are nothing but generalized 
Lagrange systems of type $(1,\nn,\pp,e)$, in the sense of \cite{SS2017}).
We now define local and global normal forms, that can be seen as equivalent to 
charts and atlases for Lagrange systems, replacing the notion of complete 
intersection by the one of normal form presented below. 

For any non-zero polynomial $M$ of a polynomial ring $\CC[\YY]$ we denote by
$\CC[\YY]_M$ the localization of $\CC[\YY]$ at $M$, that is the of all $g/M^j$
where $\gg \in \CC[\YY]$ and $j \in \mathbb{N}$.

\begin{definition}\label{def:normalform}
 For a non-zero $M\in\QQ[\XX]$ and polynomials $\HH \subset \QQ[\XX,\LL]_M$, we 
say that \emph{$\HH$ is in normal form in $\QQ[\XX,\LL]$} if these polynomials 
have the form
 \[
  \HH=(h_1,\dotsc,h_c,L_1-\rho_1,\dotsc,L_m-\rho_m),
 \]
 where the $h_j$'s are in $\QQ[\XX]$ and the $\rho_j$'s are in $\QQ[\XX]_M$. We
 call $\hh=(h_1,\dotsc,h_c)$ and $\brho=(L_j-\rho_j)_{1\leq j\leq m}$
 respectively the $\XX$- and $\LL$-components of $\HH$.
\end{definition}
\begin{definition}\label{def:localform}
 A \emph{\LNF} of a Lagrange system $L=(\Gamma,\scrQ,\scrS)$ is the data of 
$\psi=(\mLNF,\dLNF,\hh,\HH)$ that satisfies the following conditions:
 \begin{enumerate}[label=$\sfL_{\arabic*}$]
  \item $\mLNF,\dLNF \in \QQ[\XX]-\{0\}$ and $\HH$ is in normal form in 
$\QQ[\XX,\LL]_{\mLNF\dLNF}$ with $\XX$-component $\hh=(h_1,\dotsc,h_c)$;
  
  \item $\HH$ and $\FF$ have the same cardinality $n-c = N-P$;
  
  \item $\pscal{\FF,\I(Q)} = \pscal{\HH,\I(Q)}$ in 
$\QQ[\XX,\LL]_{\mLNF,\dLNF}$;
  
  \item $(\mLNF,\hh)$ is a chart of $(V,Q,S)$;
  
  \item $\dLNF$ does not vanish on $\Ocal(\mLNF)\cap\scrU(L)$.
 \end{enumerate}
Given such a local normal form $\psi$ we will note $\chi=(\mLNF,\hh)$ the 
associated chart.
\end{definition}

As for atlases and charts, we define now global normal forms using local normal
forms. The definition takes into consideration a family $\scrY=(Y_1,\dotsc,Y_r)$
of algebraic sets; this is specifically needed to help us prove correctness of
the main algorithm.
\begin{definition}\label{def:globalform}
 A \emph{\GNF} of a Lagrange system $L=(\Gamma,\scrQ,\scrS)$ is the data of 
$\bpsi=(\psi_j)_{1\leq j \leq s}$ such that:
 \begin{enumerate}[label=$\sfG_{\arabic*}$]
  \item each $\psi_j=(\mLNF_j,\dLNF_j,\hh_j,\HH_j)$ is a \LNF;
  
  \item $\bchi = ((\mLNF_j,\hh_j))_{1\leq j \leq s}$ is an atlas of $(V,Q,S)$.
 \end{enumerate}
 Let further $\scrY=(Y_1,\dotsc,Y_r)$ be algebraic subsets of $\CC^n$. A \GNF 
of $(L;\scrY)$ is the data of a \GNF $\bpsi=(\psi_j)_{1\leq j \leq s}$ of $L$ 
such that for all $1\leq j\leq s$ and $1 \leq k \leq r$:
 \begin{enumerate}[label=$\sfG_{\arabic*}$]
  \setcounter{enumi}{2}
  \item for any irreducible component $Y$ of $Y_k$ contained in $V$ and such 
that $\Ocal(\mLNF_j)\cap Y - S$ is not empty, $\Ocal(\mLNF_j\dLNF_j)\cap Y - 
S$ is not empty.
 \end{enumerate}
We say that $L$ (resp. $(L;\scrY)$) has the \emph{\GNFP} if there exists a \GNF 
$\bpsi$ of $L$ (resp. $(L;\scrY)$) and we will note $\bchi$ the associated 
atlas.
\end{definition}

\subsubsection{Lagrange system for polar varieties}

We give here a slightly different version of results presented in \cite[Section
5.5]{SS2017}. We first recall the construction of \cite[Definition 5.11]{SS2017}
adapted to our more elementary case.
\begin{definition}\label{def:Wlagrange}
  Let $L=(\Gamma,(1),\scrS)$ be a Lagrange system whose type is
  $((n,0),(p,0),0)$, let $\ff\subset\CC[\XX]$ be the polynomials which are
  evaluated by $\Gamma$ and let $i \in \{1,\dotsc, n-p\}$.
 
 Let $\LL=(L_1,\dotsc, L_p)$ be new indeterminates, for $\uu=(u_1,\dotsc,u_p) 
\in \QQ^p$, define
 \[
  \FF_{\uu} = \Big( \ff, \; \lagrange(\ff,i,\LL),\; u_1L_1 + \cdots + u_pL_p-1 
\Big),
 \]
 where $\lagrange(\ff,i,\LL)$ denotes the entries of
 \[
  \big[ L_1 \; \cdots \; L_p \big] \cdot \jac(\ff,i).
 \]
 We define $\Wlag(L,\uu,i)$ as the triplet $(\Gamma_{\uu},\scrQ,\scrS)$, where 
$\Gamma_{\uu}$ is a \SLP that evaluates $\FF_{\uu}$, it is a Lagrange system of 
type $((n,p),(p,n-i+1),0)$.
\end{definition}

We can now prove an analog of \cite[Proposition 5.13]{SS2017}.
\begin{proposition}\label{prop:polarGNF}
Let $V,S \subset \CC^n$ be two algebraic sets with $V$ $d$-equidimen\-sio\-nal and 
$S$ finite. Let $\bchi$ be an atlas of $(V,S)$ and let $i\in\{2,\dotsc, 
(d+3)/2\}$.
Write $W=\Wproji[i][V]$ and assume that the following holds.
Either $W$ is empty or it is equidimensional of dimension $i-1$, with $\sing(W) 
\subset S$, and $\Watlas(\bchi,V,S,\proj,i)$ is an atlas of $(W,S)$.

Let $L=(\Gamma, (1),\scrS)$ be a Lagrange system such that $V=\ULbar$ and $S = 
\Zparam(\scrS)$. Let $\scrY = (Y_1,\dotsc,Y_r)$ be algebraic sets in $\CC^n$ 
and let finally $\bpsi$ be a \GNF for $(L;(W,\scrY))$ such that $\bchi$ is the 
associated atlas of $(V,S)$.
There exists a \NEZO subset $\scrI(L,\bpsi, \scrY)$ of $\CC^p$ such that for 
all $\uu\in\scrI(L,\bpsi,\scrY)\cap\QQ^p$, the following holds:
\begin{itemize}
 \item $\Wlag(L,\uu,i)$ is a Lagrange system that defines $W$;
 \item if $W\neq\emptyset$, then $(\Wlag(L,\uu,i);\scrY)$ has a \GNF whose 
atlas is $\Watlas(\bchi,V,S,\proj,i)$.
\end{itemize}
\end{proposition}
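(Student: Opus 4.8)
The plan is to reduce Proposition~\ref{prop:polarGNF} to the corresponding statement in \cite[Proposition 5.13]{SS2017}, exploiting the fact that a Lagrange system of type $((n,0),(p,0),0)$ with $Q=\{\bullet\}$ is literally a generalized Lagrange system in the sense of \cite{SS2017} with a single block of multipliers, and that $\Wlag(L,\uu,i)$ is exactly the construction of \cite[Definition 5.11]{SS2017} specialized to $s=1$ (one block). First I would unwind the definitions: since $L=(\Gamma,(1),\scrS)$ has type $((n,0),(p,0),0)$, the set $\scrU(L)$ is just $\V(\ff)-S$, so $\ULbar=V$ forces $\V(\ff)$ and $V$ to coincide (up to the embedded/lower-dimensional components killed by taking Zariski closure), and $\ff$ is, as in Definition~\ref{def:Wlagrange}, a reduced regular sequence generating $\I(V)$ generically. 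Then $\FF_{\uu}=(\ff,\lagrange(\ff,i,\LL),u_1L_1+\dots+u_pL_p-1)$ is precisely the Lagrange system whose solution set, after projecting away the $\LL$-variables, is the polar variety $W=\Wproji[i][V]$: the block $[\,L_1\ \cdots\ L_p\,]\cdot\jac(\ff,i)=0$ together with $\ff=0$ and the normalization $\sum u_jL_j=1$ cuts out the points of $\reg(V)$ where $\jac([\ff,\proj[i]])$ drops rank, for generic $\uu$. This is the content that \cite[Proposition 5.13]{SS2017} establishes, and the $\Watlas$-atlas assertion there is exactly our second bullet. So the bulk of the argument is to check that all hypotheses of \cite[Proposition 5.13]{SS2017} hold in our simplified setting.

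Concretely, I would proceed in the following steps. (1) Verify that $L$, as a Lagrange system of type $((n,0),(p,0),0)$, satisfies the structural hypotheses of \cite{SS2017} (it is a generalized Lagrange system of type $(1,\nn,\pp,0)$; the inequality $(n+m)-(p+q)\ge e$ reads $n-p\ge 0$, which holds since $d=n-p\ge 0$ and in fact $d\ge 2i-3\ge 1$ because $i\le(d+3)/2$). (2) Recall that $\bpsi$ is a \GNF for $(L;(W,\scrY))$ with associated atlas $\bchi$ of $(V,S)$; translate this into the data required by \cite[Proposition 5.13]{SS2017}, namely a global normal form for $L$ relative to the family of subvarieties obtained by adjoining $W$ to $\scrY$ — this is why the statement is phrased with $(W,\scrY)$ rather than just $\scrY$, matching the bookkeeping in \cite{SS2017} that propagates normal-form data through the recursive polar construction. (3) Apply \cite[Proposition 5.13]{SS2017} (or its proof) to obtain the \NEZO set $\scrI(L,\bpsi,\scrY)\subset\CC^p$ such that for $\uu$ in it, $\Wlag(L,\uu,i)$ defines $W$, and when $W\ne\emptyset$ a \GNF of $(\Wlag(L,\uu,i);\scrY)$ exists with atlas the output of \cite[Definition 5.3]{SS2017} applied to $\bchi$ — which, by our Lemma~\ref{lem:chartaltlasWproj}, is precisely $\Watlas(\bchi,V,S,\proj,i)$. (4) Invoke the hypothesis on $W$ (empty or equidimensional of dimension $i-1$ with $\sing(W)\subset S$ and $\Watlas(\bchi,V,S,\proj,i)$ an atlas of $(W,S)$): this is exactly the hypothesis that, in \cite{SS2017}, guarantees that the constructed global normal form is genuinely a \GNF (property $\sfG_2$, that $\bchi$ restricted appropriately is an atlas, needs $W$ smooth outside $S$ and the $\Watlas$ construction to cover $W-S$).

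The main obstacle I anticipate is purely one of careful translation, not of new mathematics: one must be sure that the single-block Lagrange systems of Definition~\ref{def:lagrangesystem} really are a special case of the generalized Lagrange systems of \cite{SS2017} \emph{with all relevant hypotheses matching}, in particular that the normal-form conditions $\sfL_1$--$\sfL_5$ here specialize the ones there, and that the family $\scrY$ is handled identically. A secondary point requiring care is the genericity of $\uu$: one needs to exhibit the non-zero polynomial on $\CC^p$ whose non-vanishing ensures (a) the normalization $\sum u_jL_j=1$ is compatible with the rank-one kernel of $\jac(\ff,i)^{\!\top}$ at every relevant point — so that no component of $W$ is lost — and (b) the new system again admits a global normal form with the prescribed atlas; both are furnished by the corresponding genericity set in \cite[Proposition 5.13]{SS2017}, so the work is to confirm it applies verbatim. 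I expect no genuinely hard step, and the proof in the paper is likely to be a short reduction of this kind, possibly with the remark that the case $W=\emptyset$ is trivial since then $\Wlag(L,\uu,i)$ defines the empty set by dimension count.
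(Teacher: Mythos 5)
Your proposal is correct in spirit and matches the paper's approach: reduce to \cite[Proposition 5.13]{SS2017}. Two points of precision. First, \cite[Proposition 5.13]{SS2017} cannot be invoked as a black box, because it is stated for $V^{\AA}$ after a generic linear change of variables with $\AA \in \scrG_1(\bchi,V,\emptyset,S,i)$; the paper's proof instead opens up the argument of \cite[Proposition 5.13]{SS2017} and replays it verbatim for $V$ in place of $V^{\AA}$. Your parenthetical ``(or its proof)'' covers this, but the distinction is the whole point. Second, the exact reason the replay works is that the genericity of $\AA$ is used in \cite{SS2017} only to guarantee the conclusions of \cite[Proposition 3.4]{SS2017} and \cite[Lemma B.12]{SS2017}; by Lemma~\ref{lem:chartaltlasWproj} these conclusions are precisely the direct hypotheses placed on $W$ in the present statement (equidimensionality of dimension $i-1$, $\sing(W)\subset S$, and $\Watlas(\bchi,V,S,\proj,i)$ being an atlas of $(W,S)$). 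Your step~(4) gestures at this, but the paper pins it to those two specific references, which is what makes the ``mutatis mutandis'' substitution airtight. Finally, the translation worry you raise between Lagrange systems and generalized Lagrange systems is a non-issue that the paper handles in a single sentence (Definition~\ref{def:lagrangesystem} and the remark after it): a Lagrange system here is literally a generalized Lagrange system of type $(1,\nn,\pp,e)$, so no reconciliation of hypotheses is needed.
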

\begin{proof}
  The statement of this proposition is identical to \cite[Proposition
  5.13]{SS2017} except that, in \cite[Proposition 5.13]{SS2017}, our assumptions
  on $W$ are replaced by a generic linear change of variables on $V$.
  \cite[Proposition 5.13]{SS2017} claims the same statements on $V^{\AA}$ where
  $\AA$ is assumed to lie in a \NEZO set $\scrG_1(\bchi, V, \emptyset,S,i)$
  defined in \cite[Proposition 3.4]{SS2017}.

  In the proof of \cite[Proposition 5.13]{SS2017}, the fact that $\AA$ lies in
  $\scrG_1(\bchi, V, \emptyset,S,i)$ allows one to assume that the statements of
  \cite[Proposition 3.4]{SS2017} but also \cite[Lemma B.12]{SS2017} hold. In our
  proposition stated above, according to Lemma~\ref{lem:chartaltlasWproj}, the
  assumptions on $W$ are exactly the conclusion of \cite[Proposition
  3.4]{SS2017}, while \cite[Lemma B.12]{SS2017} is nothing but a consequence of
  these facts. Therefore, under these assumptions, the proof of
  \cite[Proposition 5.13]{SS2017} can be replicated, \emph{mutatis mutandis},
  for $V$ instead of $V^{\AA}$, and constitutes a valid proof for the above
  statement.
\end{proof}

\subsubsection{Lagrange system for fibers}

\begin{definition}\label{def:Flagrange}
 Let $L=(\Gamma,(1),\scrS)$ be a Lagrange system whose type is $((n,0),(p,0),0)$ and 
let $e \in \{1,\dotsc, n-p\}$.
 Let $\scrQ''$ be a \ZDP that encodes a finite set $Q''\subset \CC^e$ and let 
$\scrS''$ be a \ZDP that encodes a finite set $S''\subset \CC^n$ lying over 
$Q''$.
 We define $\Flag(L,\scrQ'',\scrS'')$ as the triplet 
$(\Gamma,\scrQ'',\scrS'')$, it is a Lagrange system of type $((n,0),(p,0),e)$.
\end{definition}

As in the previous paragraph, we state an analogue of \cite[Proposition
5.16]{SS2017} where we replaced the assumption of a generic linear change of
variables by the assumptions that such a change of variables allows us to satisfy.
In addition, we handle here the more general situation where, using the notation
below, $W=\Wproji[e]$, as the case $W=\Wproji[e+1]$ considered in \cite{SS2017}
can be deduced from the former.
\begin{proposition}\label{prop:fiberGNF}
Let $V,S \subset \CC^n$ be two algebraic sets with $V$ $d$-equidi\-men\-sio\-nal and 
$S$ finite. Let $\bchi$ be an atlas of $(V,S)$ and let $e\in\{2,\dotsc, 
(d+3)/2\}$.
Define $W=\Wproji[e]$ and let $\scrQ''$ and $\scrS''$ be \ZDPsQ that 
respectively encode a finite set $Q''\subset \CC^e$ and $S'' = \projfbr{S\cup 
W}[Q''][e]$ and let $V'' = \projfbr{V}[Q''][e]$.
Assume that $S''$ is finite and, either $V''$ is empty or its is 
equidimensional of dimension $d-e$, with $\sing(V'')$ contained in $S''$, and 
$\Fatlas(\bchi,V,S,\scrQ'',\proj)$ is an atlas of $(V'',Q'',S'')$.

Let $L=(\Gamma, (1),\scrS)$ be a Lagrange system such that $V=\ULbar$ and 
$S=\Zparam(\scrS)$. Let $\scrY = (Y_1,\dotsc,Y_r)$ be algebraic sets in $\CC^n$ 
and let finally $\bpsi$ be a \GNF for $(L;(V'',\scrY))$ such that $\bchi$ is 
the associated atlas of $(V,S)$.
Then the following holds:
\begin{itemize}
 \item $\Flag(L,\scrQ'',\scrS'')$ is a Lagrange system that defines $V''$;
 \item if $V''\neq\emptyset$, then $(\Flag(L,\scrQ'',\scrS'');\scrY)$ has a 
\GNF whose atlas is $\Fatlas(\bchi,V,Q'',S,\proj)$.
\end{itemize}
\end{proposition}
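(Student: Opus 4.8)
The plan is to follow the pattern of the proof of Proposition~\ref{prop:polarGNF}, reducing the statement to \cite[Proposition 5.16]{SS2017} in place of \cite[Proposition 5.13]{SS2017}. Up to notation, the statement above coincides with \cite[Proposition 5.16]{SS2017}, with two discrepancies. First, the regularity properties we here \emph{assume} on $V'' = \projfbr{V}[Q''][e]$ (that $S''$ is finite, that $V''$ is empty or $(d-e)$-equidimensional with $\sing(V'')\subset S''$, and that $\Fatlas(\bchi,V,S,\scrQ'',\proj)$ is an atlas of $(V'',Q'',S'')$) are there \emph{derived} from a generic linear change of variables on $V$, i.e.\ from the hypothesis that the parameter matrix $\AA$ lies in the non-empty Zariski open set of parameters assumed in \cite[Proposition 5.16]{SS2017}; these are precisely the conclusions furnished, in applications of the present proposition, by Proposition~\ref{prop:dimfiber}. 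Second, the ``bad locus'' used in \cite{SS2017} is built from $\Wproji[e+1][V]$ rather than $\Wproji[e][V]$. For the first discrepancy, I would check --- exactly as in the proof of Proposition~\ref{prop:polarGNF} --- that in the proof of \cite[Proposition 5.16]{SS2017} the genericity of $\AA$ is invoked \emph{only} through the geometric conclusions we assume here together with their formal consequences (in particular the fiber analogue of \cite[Lemma B.12]{SS2017}). Granting this, the proof of \cite[Proposition 5.16]{SS2017} goes through \emph{mutatis mutandis} with $V$, $\Fatlas(\bchi,V,S,\scrQ'',\proj)$ and $V''$ in place of $V^{\AA}$ (and its associated fiber atlas and fiber).

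Spelling out the two conclusions: the first bullet is immediate, since $L$ has type $((n,0),(p,0),0)$, hence has no Lagrange block, $\piX$ is the identity, and $\FF=\ff$ with $\V(\ff)=\ULbar=V$. By Definition~\ref{def:Flagrange}, $\Flag(L,\scrQ'',\scrS'')=(\Gamma,\scrQ'',\scrS'')$ is a Lagrange system of type $((n,0),(p,0),e)$, and by Definition~\ref{def:datalagrange},
\[
  \scrU(\Flag(L,\scrQ'',\scrS'')) = \projfbr{\V(\ff)}[Q''][e] - S'' = V'' - S''.
\]
If $V''=\emptyset$ this is empty and $\Flag(L,\scrQ'',\scrS'')$ defines $\emptyset$; otherwise $V''$ is $(d-e)$-equidimensional and $S''$ is finite, so $V''-S''$ is Zariski dense in $V''$ and $\Flag(L,\scrQ'',\scrS'')$ defines $V''$.

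For the second bullet, assume $V''\neq\emptyset$ and let $\bpsi=(\psi_j)_{1\leq j\leq s}$, with $\psi_j=(\mLNF_j,\dLNF_j,\hh_j,\HH_j)$, be the given \GNF of $(L;(V'',\scrY))$, with associated atlas $\bchi=((\mLNF_j,\hh_j))_j$ of $(V,S)$. Following \cite[Proposition 5.16]{SS2017}, I would build a \GNF of $\Flag(L,\scrQ'',\scrS'')$ by retaining the data $\psi_j$ precisely for the indices $j$ selected by $\Fatlas(\bchi,V,S,\scrQ'',\proj)$, i.e.\ those for which $\Ocal(\mLNF_j)\cap V'' - S''$ is non-empty, and checking that each such $\psi_j$ is a \LNF of $\Flag(L,\scrQ'',\scrS'')$: $\sfL_1$ and $\sfL_2$ are unchanged; $\sfL_3$ holds since passing from $\pscal{\FF,\I(Q)}=\pscal{\HH,\I(Q)}$ to $\pscal{\FF,\I(Q'')}=\pscal{\HH,\I(Q'')}$ only enlarges both sides by $\I(Q'')$ --- and here one crucially uses that $\bpsi$ is a \GNF \emph{for the family containing $V''$}, which is exactly what $\sfG_3$ provides to keep the charts compatible with the restriction to the fiber $V''$; $\sfL_4$ is, by definition, the assumed fact that $\Fatlas(\bchi,V,S,\scrQ'',\proj)$ is an atlas of $(V'',Q'',S'')$; and $\sfL_5$ follows from $\sfL_5$ for $\psi_j$ since $\scrU(\Flag(L,\scrQ'',\scrS''))=V''-S''\subset\scrU(L)$. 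The same reasoning transfers condition $\sfG_3$ for the family $\scrY$. By construction, the atlas associated to the produced \GNF is exactly $\Fatlas(\bchi,V,S,\scrQ'',\proj)$.

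Finally, the case $W=\Wproji[e+1][V]$ treated in \cite{SS2017} reduces to the case $W=\Wproji[e][V]$ handled above: on the fiber $V''=\projfbr{V}[Q''][e]$ the first $e$ coordinates are constant, and outside $\projfbr{S\cup\Wproji[e][V]}[Q''][e]$ the set $\Wproji[e+1][V]$ coincides with the critical locus of $x_{e+1}$ restricted to $V''$, by the tangent-space computation underlying Lemma~\ref{lem:chartaltlasWproj}; enlarging the bad locus from $\Wproji[e][V]$ to $\Wproji[e+1][V]$ only deletes charts from $\Fatlas(\bchi,V,S,\scrQ'',\proj)$, so the $W=\Wproji[e+1][V]$ statement follows. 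The main obstacle I anticipate is justifying the substitution of $V$ for $V^{\AA}$ in the proof of \cite[Proposition 5.16]{SS2017}: one must verify line by line that the generic change of variables is used there solely through the conclusions now supplied by Proposition~\ref{prop:dimfiber} and their formal consequences, re-deriving the fiber analogue of \cite[Lemma B.12]{SS2017} from those hypotheses. The second delicate point is the transfer of $\sfL_3$ and $\sfG_3$ to the fiber --- which is where augmenting the family $\scrY$ with $V''$ is essential.
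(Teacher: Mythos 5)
Your proposal follows the same route as the paper's own proof: reduce the statement to \cite[Proposition 5.16]{SS2017}, observe that the generic-coordinates hypothesis there is exploited only through the geometric conclusions of \cite[Proposition 3.7]{SS2017} and \cite[Lemma C.1]{SS2017}, and note that those conclusions are precisely what the present proposition \emph{assumes}, so the proof transfers \emph{mutatis mutandis} with $V$ in place of $V^{\AA}$. Your middle paragraphs (identifying $\scrU(\Flag(L,\scrQ'',\scrS''))=V''-S''$ and sketching the transfer of the \LNF/\GNF conditions) are consistent with this and are a reasonable expansion of what the paper leaves to the reader.

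Two imprecisions worth flagging. First, $\sfL_4$ is a condition on a single chart $(\mLNF_j,\hh_j)$; the fact that $\Fatlas(\bchi,V,Q'',S,\proj)$ is an atlas is the \GNF-level condition $\sfG_2$, not $\sfL_4$ — you conflate the chart and atlas levels in that bullet. Second, your closing paragraph argues in the wrong direction. You correctly identify that \cite{SS2017} builds the bad locus $S''$ from $\Wproji[e+1]$ while the present statement uses $\Wproji[e]$, but then you show how to \emph{deduce} the $\Wproji[e+1]$ statement from the $\Wproji[e]$ one. That deduction (also mentioned in the paper's discussion preceding the proposition) plays no role in proving the proposition; what is actually needed is that the proof in \cite{SS2017} establishes the $\Wproji[e]$ version. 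The resolution, which your earlier \emph{mutatis mutandis} sentence already contains, is that the proof of \cite[Proposition 5.16]{SS2017} never exploits the specific construction of $S''$ — only the hypotheses that $S''$ is finite, that $V''$ is $(d-e)$-equidimensional with $\sing(V'')\subset S''$, and that $\Fatlas(\bchi,V,Q'',S,\proj)$ is an atlas of $(V'',Q'',S'')$. Since these hold here by assumption, the proof goes through unchanged; the extra $\Wproji[e]$-vs-$\Wproji[e+1]$ discussion is unnecessary and, as phrased, misdirected.
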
       
\begin{proof}
  As above, the statement of this proposition is identical to the one in
  \cite[Proposition 5.16]{SS2017}, except that the assumptions on $S''$ and $V''$
  are replaced by a generic change of variables on $V$. Indeed,
  \cite[Proposition 5.16]{SS2017} claims the same statements as we do on 
$V^{\AA}$, where
  $\AA$ is assumed to lie in a \NEZO set $\scrG_3(\bchi,V,\emptyset,S,e)$
  defined in \cite[Proposition 3.7]{SS2017}.

In the proof of \cite[Proposition 5.17]{SS2017}, the fact that $\AA \in 
\scrG_3(\bchi,V,\emptyset,S,e)$ allows us to assume that the statements of 
\cite[Proposition 3.7]{SS2017} but also \cite[Lemma C.1]{SS2017} hold.
In the case of the proposition stated above, the assumptions on $S''$ and $V''$ 
are exactly the statement of \cite[Proposition 3.7]{SS2017}, while \cite[Lemma 
C.1]{SS2017} is nothing but a consequence of these facts.
Again, under these assumptions, the proof of \cite[Proposition 
5.17]{SS2017} can be replicated, \emph{mutatis mutandis}, for $V$ instead of 
$V^{\AA}$, and constitutes a valid proof for the above statement.
\end{proof}

\subsection{Proofs of Lemmas~\ref{lem:crit},~\ref{lem:solvepolar},~\ref{lem:critpolar}
  and~\ref{lem:fiberpolar}}\label{ssec:computepolar}

As done in Subsection~\ref{ssec:subpolarintro}, we fix $1\leq c \leq n-2$ and
we refer to the following objects:
\begin{itemize}
 \item sequences of polynomials $\gg =(g_1,\dotsc,g_c)$ and 
$\bphi=(\phi_1,\phi_2)$ in $\QQ[\XX]$, of maximal degrees $D$, such that $\gg$ 
satisfies assumption $\sfA$ that is: $\gg$ is a reduced regular sequence and 
$\sing(\V(\gg))$ is finite;
 
 \item \SLPs $\Gamma$ and $\Gammaphi$, of respective lengths $E$ and $E'$, 
computing respectively $\gg$ and $\bphi$;

 \item the equidimensional algebraic set $V = \V(\gg)$, of dimension 
$d=n-c$, defined by $\gg$;
 
 \item \ZDPs $\scrS$ and $\scrQ''$, of respective degrees $\sigma$ and 
$\kappa''$, describing finite sets $S\subset \CC^n$ and $Q'' \subset 
\CC$, such that $\sing(V) \subset S$;

 \item an atlas $\bchi$ of $(V,S)$, given by \cite[Lemma A.13]{SS2017}, as $S$ 
is finite and contains $\sing(V)$.
\end{itemize}

Let $\Inci$ be the incidence isomorphism associated to $\bphi$ and let $\ggphi$ 
as defined in Lemma~\ref{lem:IVphi}, so that 
$
 \Vt := \V(\ggphi) = \Inci(V).
$
According to Lemmas~\ref{lem:IVphi} and \ref{lem:InciPolar}, $\Vt\subset 
\CC^{2+n}$ is equidimensional with finitely many singular points.

\begin{lemma}\label{lem:LtGNF}
 Let $\scrY = (Y_1,\dotsc,Y_r)$ be algebraic sets in $\CC^n$.
 There exists an algorithm such that, on input $\Gamma,\scrS$ and 
$\Gammaphi$, runs using at most $\softOh{E'\sigma}$ operations in $\QQ$, and 
outputs 
\begin{itemize}
 \item $\Gammat$, a \SLP of length $E+E'+2$, computing $\ggphi$,
 \item $\scrSt$, a \ZDP of degree $\sigma$, encoding $\St=\Inci(S)$,
\end{itemize}
such that the Lagrange system $\Lt = (\Gammat, (1), \scrSt)$ of type 
$((2+n,0),(2+c,0),0)$ defines $\Vt$, and $(\Lt,\scrY)$ has a \GNF.
\end{lemma}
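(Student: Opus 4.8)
The plan is to produce $\Gammat$ and $\scrSt$ by feeding $\bphi$ to the incidence-variety subroutines of Subsection~\ref{ssec:basicsubroutines}, and then to observe that $\Lt$ is exactly the kind of Lagrange system treated in \cite{SS2017}, but applied to the reduced regular sequence $\ggphi$ instead of $\gg$. Concretely, I would first call \IncVar (Lemma~\ref{lem:incvar}) on $\Gamma$ and $\Gammaphi$ with $i=2$, obtaining a \SLP $\Gammat$ of length $E+E'+2$ that evaluates $\ggphi=(\gg,\phi_1-e_1,\phi_2-e_2)\subset\QQ[\EE,\XX]$; this is a purely symbolic manipulation of straight-line programs and uses no arithmetic in $\QQ$. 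Then I would call \IncParam (Lemma~\ref{lem:incparam}) on $\scrS$ and $\Gammaphi$, obtaining a \ZDP $\scrSt$ of degree $\sigma$ encoding $\Inci(\Zparam(\scrS))=\Inci(S)=\St$, at a cost of $\softOh{E'\sigma}$ operations in $\QQ$; this is the dominant cost, which gives the claimed bound.

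Next I would check that $\Lt=(\Gammat,(1),\scrSt)$ is a Lagrange system of the announced type and that it defines $\Vt$. Indeed $\ggphi$ consists of $2+c$ polynomials in the $2+n$ variables $(\EE,\XX)$, none of them involving a Lagrange-multiplier block, so, taking $\scrQ=(1)$ as in Definition~\ref{def:Wlagrange} and $\scrSt$ for the finite set $\St$, which trivially lies over $\CC^0$, all items of Definition~\ref{def:lagrangesystem} hold with $N=2+n$, $P=2+c$, $m=0$, $e=0$ and $d=N-e-P=n-c\geq 2\geq 0$; the type is thus $((2+n,0),(2+c,0),0)$. Since $m=0$ and $e=0$, the projection $\piX$ is the identity on $\CC^{2+n}$ and the fibre over $\CC^0$ is everything, so $\scrU(\Lt)=\V(\ggphi)-\St=\Vt-\St$. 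By Lemma~\ref{lem:IVphi}, $\Vt$ is equidimensional of dimension $d\geq 2$, and $\St$ is finite, so $\Vt-\St$ is Zariski dense in $\Vt$; hence $\ZUL[\Lt]=\Vt$, i.e.\ $\Lt$ defines $\Vt$.

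For the last assertion, the key point is that $\ggphi$ is again a reduced regular sequence, now in $\QQ[\EE,\XX]$. For $i\leq c$, the system $(g_1,\dotsc,g_i)$ cuts out $\V(g_1,\dotsc,g_i)\subset\CC^n$ times $\CC^2$, which is equidimensional of dimension $(2+n)-i$ (or empty) and generates a radical ideal, because $\gg$ is a reduced regular sequence and the extension of a radical ideal to $\QQ[\EE,\XX]$ stays radical; adding $\phi_1-e_1$ and then $\phi_2-e_2$ preserves both properties, since each is, up to sign, monic of degree one in the fresh variable it introduces, so passing to the quotient by it amounts to eliminating that variable, keeping the quotient reduced and lowering the dimension by exactly one, the final variety being $\Vt=\Inci(V)$ with radical ideal $\pscal{\ggphi}=\I(\Vt)$ by Lemma~\ref{lem:IVphi}. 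Moreover $\St=\Inci(S)$ is finite and contains $\Inci(\sing(V))=\sing(\Vt)$ (Lemma~\ref{lem:InciPolar} with $j=0$), and by the second statement of Lemma~\ref{lem:Inciatlas} the atlas $\bchi$ of $(V,S)$ is carried by $\Inci$ to an atlas $\bchiphi=(\chiphi_j)_j$, with $\chiphi_j=(m_j,\hhphi_j)$ and $\hhphi_j=(\hh_j,\phi_1-e_1,\phi_2-e_2)$, of $(\Vt,\St)$; note that $\hhphi_j$ has cardinality $2+c=N-P$, as the local normal forms of $\Lt$ require. At this stage $\Lt$, together with $\bchiphi$, is precisely the initial-Lagrange-system situation of \cite{SS2017}: a Lagrange system of type $((N,0),(P,0),0)$ whose straight-line program evaluates a reduced regular sequence defining the equidimensional set $\Vt$, with a finite set $\St\supseteq\sing(\Vt)$ and an atlas of $(\Vt,\St)$. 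Applying the construction of a global normal form from an atlas given in \cite{SS2017} — which, since $m=0$, picks for each chart $(m_j,\hhphi_j)$ a denominator $\dLNF_j$ with $\HH_j=\hhphi_j$, chosen generically enough to also satisfy the extra condition $\sfG_3$ for the family $\Inci(\scrY)$ obtained by transporting $\scrY$ along the incidence isomorphism (which identifies a component of $Y_k$ contained in $V$ with its image, a component contained in $\Vt$) — then yields a \GNF of $(\Lt;\scrY)$.

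The step I expect to be the main obstacle is this last one: making the appeal to the \cite{SS2017} machinery for producing global normal forms from atlases fully rigorous in the present, slightly more general, setting — in particular propagating the auxiliary family $\scrY$ through the incidence isomorphism and checking that the genericity needed to choose the denominators $\dLNF_j$ (so that $\sfL_5$ and $\sfG_3$ hold) survives the increase of the ambient dimension from $n$ to $n+2$. By contrast, verifying that $\ggphi$ is a reduced regular sequence and that $\Lt$ has the stated type and defining set are routine.
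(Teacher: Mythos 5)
Your handling of the constructive part is correct and matches the paper: \IncVar and \IncParam produce $\Gammat$ and $\scrSt$ at cost $\softOh{E'\sigma}$, the type bookkeeping is right, and the identification $\scrU(\Lt)=\Vt-\St$ (hence $\ZUL[\Lt]=\Vt$) is the intended argument. Your direct verification that $\ggphi$ is a reduced regular sequence is more detailed than the paper's one-line citation of Lemma~\ref{lem:IVphi} and it is sound.

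Where you diverge, and where you flag your own uncertainty, is the last step — and there the paper's route is both simpler and correct, while yours introduces a nonexistent obstacle. You do not need the atlas $\bchiphi$ of $(\Vt,\St)$ produced by Lemma~\ref{lem:Inciatlas}, nor any generic choice of denominators $\dLNF_j$, nor any transport of $\scrY$ through $\Inci$. The paper invokes \cite[Proposition~5.10]{SS2017}, which says precisely that a Lagrange system of type $((N,0),(P,0),0)$ whose straight-line program evaluates a reduced regular sequence admits the \emph{trivial} global normal form $\bpsi=((1,1,\ggphi,\ggphi))$. Because $\mLNF_1=\dLNF_1=1$: condition $\sfL_3$ is the tautology $\pscal{\ggphi}=\pscal{\ggphi}$; the associated chart $(1,\ggphi)$ of $(\Vt,\St)$ is a chart since $\St\supseteq\sing(\Vt)$ (by Lemma~\ref{lem:InciPolar} with $j=0$) and the single chart trivially covers $\Vt-\St$; condition $\sfL_5$ holds vacuously since $\dLNF=1$ never vanishes; and condition $\sfG_3$ holds vacuously since $\Ocal(\mLNF_j\dLNF_j)=\Ocal(\mLNF_j)$, so whatever $\scrY$ is, no genericity is required. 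The "main obstacle" you describe — choosing $\dLNF_j$ generically enough to preserve $\sfL_5$ and $\sfG_3$ after increasing the ambient dimension — does not arise, because you should not be using the transported atlas at all; you should be using the single trivial chart that exists because $\ggphi$ is a reduced regular sequence.
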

\begin{proof}
By Lemmas~\ref{lem:incvar} and \ref{lem:incparam}, there exist 
algorithms \IncVar and \IncParam respectively, which, on input 
$\Gamma,\scrS$ and $\Gammaphi$, output $\Gammat$ and $\scrSt$ as described in 
the statement, using at most $\softOh{E'\sigma}$ operations in $\QQ$.
Let $\Lt = (\Gammat, (1), \scrSt)$. By Lemma~\ref{lem:IVphi}, $\ggphi$ is a 
\RRS as $\gg$ is. Then, according to \cite[Proposition 5.10]{SS2017}, $\Lt$ 
defines a Lagrange system that defines $\Vt$ and $\bpsi=((1,1,\ggphi,\ggphi))$ 
is a \GNF of $(\Lt,\scrY)$.
\end{proof}

We deduce an algorithm for computing critical points on $V$.

\begin{myproof}{Lemma~\ref{lem:crit}}
By Lemmas~\ref{lem:IVphi}, \ref{lem:diagprojphi} and 
\ref{lem:InciPolar}, $\Wphii[1]$ can be obtained by 
projecting the incidence polar variety $\Wproji[1][\Vt]$ on the last $n$ 
coordinates.
Computing a parametrization of the latter set can then be done using the 
algorithm \Wun of \cite[Proposition 6.3]{SS2017} on the Lagrange system 
given by \cite[Proposition 5.10]{SS2017}.

According to Lemma~\ref{lem:LtGNF}, we can compute a Lagrange system $\Lt$ of 
type $((2+n,0),(2+c,0),0)$, with the \GNFP, that defines $\Vt$.
Hence, by \cite[Proposition 6.4]{SS2017}, there exists a Monte Carlo 
algorithm \Wun which, on input $\Lt$, either fails or returns a \ZDP 
$\scrWunt$ which describes it using at most
\[
\softOh{(E+E')(n+2)^{4d+8}D^{2n+3}(D-1)^{2d}+n\sigma^2}
\]
operations in $\QQ$.
Moreover, in case of success, $\scrWunt$ describes $\Wproji[1][\Vt]-\St$, with 
the notation of Lemma~\ref{lem:LtGNF}.
Besides, by \cite[Proposition I.1]{SS2017} (or \cite[Proposition 3]{SS2018}) 
the degree of $\Kproji[1][\Vt]$ is upper bounded by
\[
 {\textstyle\binom{n+1}{c+1}} D^{c+2}(D-1)^{d}
 ={\textstyle\binom{n+1}{d}} D^{c+2}(D-1)^{d}.
\] 
Finally, by Lemma~\ref{lem:InciPolar}, $\Wphii[1][V]$ can be obtained by
projecting $\Wproji[1][\Vt]$ on the last $n$ coordinates and taking the union
with $S$. This is done by performing the subroutines \Proj and \Union
\cite[Lemma J.3 and J.5]{SS2017}, which uses at most
\[
 \softOh{n^2{\textstyle\binom{n+1}{c+1}^2} D^{2c+4}(D-1)^{2d}+n\sigma^2}
\] 
operations in $\QQ$.
\end{myproof}

In the following, we consider the polar varieties $W=\Wphii[2][V]$ and 
$\Wt=\Wproji[2][\Vt]$ so that, by Lemma~\ref{lem:InciPolar}, $\Wt = \Inci(W)$.

\begin{lemma}\label{lem:LtWGNF}
  Let $\scrY = (Y_1,\dotsc,Y_r)$ be algebraic sets in $\CC^n$. There exists a
  Monte Carlo algorithm which, on input $\Gamma,\scrS$ and $\Gammaphi$, runs
  using at most $\softOh{E'\sigma + n(E+E')}$
operations in $\QQ$, and outputs a Lagrange system $\LtW$ of type 
\[
\big(\,(2+n,\,2+c),\;(2+c,\,n+1),\;0\big).
\]
Either $W$ is empty or assume that $W$ is 1-equidimensional, with
$\sing(W)\subset S$, and $\Watlas(\bchi,V,S,\bphi,2)$ is an atlas of $(W,S)$.
Then, in case of success, $\LtW$ defines $\Wproji[2][\Vt]$ and $(\LtW,\scrY)$
has a \GNF.
\end{lemma}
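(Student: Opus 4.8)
The plan is to transcribe the Lagrange-system argument already used for Lemma~\ref{lem:crit}, but with $i=2$ instead of $i=1$: first build the ``ambient'' Lagrange system attached to the incidence variety $\Vt=\Inci(V)$, then apply the construction $\Wlag(\cdot,\uu,2)$ of Definition~\ref{def:Wlagrange} for a generic parameter $\uu$, and finally certify the output with Proposition~\ref{prop:polarGNF} (our substitute for \cite[Proposition 5.13]{SS2017}, instantiated at $i=2$). Concretely, I would run the algorithm of Lemma~\ref{lem:LtGNF} on $(\Gamma,\scrS,\Gammaphi)$, feeding it the family $(\Wt,\scrY)$ rather than $\scrY$ alone. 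In $\softOh{E'\sigma}$ operations this returns a \SLP $\Gammat$ of length $E+E'+2$ computing $\ggphi$, a \ZDP $\scrSt$ of degree $\sigma$ encoding $\St=\Inci(S)$, and certifies that $\Lt=(\Gammat,(1),\scrSt)$, of type $((2+n,0),(2+c,0),0)$, is a Lagrange system defining $\Vt$ and that $(\Lt;(\Wt,\scrY))$ admits a \GNF $\bpsi$ whose associated atlas of $(\Vt,\St)$ is the incidence counterpart $\bchiphi$ of the atlas $\bchi$ of $(V,S)$ fixed throughout this subsection; here one uses that $V=\V(\gg)$ is a complete intersection cut out by the \RRS $\gg$, so that $((1,\gg))$ — and hence $((1,\ggphi))$ on $\Vt$ — is itself an atlas. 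I would then draw $\uu\in\QQ^{2+c}$ at random and set $\LtW:=\Wlag(\Lt,\uu,2)$, which by Definition~\ref{def:Wlagrange} is a Lagrange system of type $\big((2+n,2+c),(2+c,(2+n)-2+1),0\big)=\big((2+n,2+c),(2+c,n+1),0\big)$, as required.

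Beyond the call to Lemma~\ref{lem:LtGNF}, the only remaining work is to assemble a \SLP $\Gamma_\uu$ evaluating $\FF_\uu=\big(\ggphi,\ \lagrange(\ggphi,2,\LL),\ u_1L_1+\cdots+u_{2+c}L_{2+c}-1\big)$. The $n$ entries of $[\,L_1\ \cdots\ L_{2+c}\,]\cdot\jac(\ggphi,2)$ are produced by Baur--Strassen reverse-mode differentiation of $\Gammat$ applied to each of the $2+c\le n$ components of $\ggphi$, which yields a straight-line program of length $O(n(E+E'))$; the $O(n^2)$ operations for the $\LL$-linear combinations and the $O(n)$ for the affine form are absorbed into $O(n(E+E'))$ since any \SLP computing $\bphi$ already has length $\Omega(n)$. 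Together with Lemma~\ref{lem:LtGNF} this gives $\softOh{E'\sigma+n(E+E')}$ operations in $\QQ$, matching the claim, with $\LtW=(\Gamma_\uu,(1),\scrSt)$.

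It then remains to apply Proposition~\ref{prop:polarGNF} to the data $\big(\Vt,\St,\bchiphi,2,\Lt,\bpsi,(\Wt,\scrY)\big)$. Its hypotheses are met: $\Vt$ is $d$-equidimensional with $\sing(\Vt)=\Inci(\sing(V))$ finite and contained in $\St$ (Lemmas~\ref{lem:IVphi} and \ref{lem:InciPolar}); $\bchiphi$ is an atlas of $(\Vt,\St)$ (Lemma~\ref{lem:Inciatlas}); $2\le (d+3)/2$ because $d=n-c\ge 2$; $\Lt$ defines $\Vt$ with $\St=\Zparam(\scrSt)$; and $\bpsi$ is a \GNF of $(\Lt;(\Wt,\scrY))$ with associated atlas $\bchiphi$. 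For the condition on $\Wproji[2][\Vt]$: by Lemma~\ref{lem:InciPolar} this set equals $\Inci(W)$, so it is empty exactly when $W$ is; and if $W$ is $1$-equidimensional with $\sing(W)\subset S$ and $\Watlas(\bchi,V,S,\bphi,2)$ an atlas of $(W,S)$, then Lemma~\ref{lem:InciPolar} gives that $\Inci(W)$ is $1$-equidimensional with $\sing(\Inci(W))=\Inci(\sing(W))\subset\St$, and the second assertion of Lemma~\ref{lem:InciPolaratlas} gives that $\Watlas(\bchiphi,\Vt,\St,\proj,2)$ is an atlas of $(\Inci(W),\St)$. Proposition~\ref{prop:polarGNF} then provides a \NEZO subset $\scrI(\Lt,\bpsi,\scrY)\subset\CC^{2+c}$ such that, for $\uu$ in it, $\LtW$ defines $\Wproji[2][\Vt]$ and, whenever this set is non-empty, $(\LtW;\scrY)$ has a \GNF (with atlas $\Watlas(\bchiphi,\Vt,\St,\proj,2)$). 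Declaring ``success'' to mean $\uu\in\scrI(\Lt,\bpsi,\scrY)$ then concludes the proof.

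The delicate point is the bookkeeping of atlases: Proposition~\ref{prop:polarGNF} insists that the \GNF of $\Lt$ be built over the \emph{same} atlas for which the polar-atlas hypothesis of the lemma is stated, whereas the \GNF delivered by Lemma~\ref{lem:LtGNF} is the trivial one $((1,1,\ggphi,\ggphi))$, over the atlas $((1,\ggphi))$. The plan above dissolves this by taking the atlas $\bchi$ of $(V,S)$ fixed in this subsection to be $((1,\gg))$ — legitimate since $V=\V(\gg)$ is a complete intersection with $\sing(V)\subset S$, so $((1,\gg))$ is an atlas of $(V,S)$ — whence $\bchiphi=((1,\ggphi))$ and no mismatch occurs; alternatively one appeals to the fact that whether $\Watlas(\cdot,V,S,\bphi,2)$ is an atlas of $(W,S)$ is insensitive to the choice of atlas of $(V,S)$. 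Everything else is a routine adaptation of the proof of Lemma~\ref{lem:crit} with $i=2$.
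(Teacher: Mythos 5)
Your proof follows essentially the same path as the paper's: build $\Lt$ via Lemma~\ref{lem:LtGNF} with target family $(\Wt,\scrY)$, draw $\uu$ at random, form $\LtW=\Wlag(\Lt,\uu,2)$, transfer the hypotheses on $W$ to $\Wt=\Inci(W)$ via Lemmas~\ref{lem:InciPolar} and~\ref{lem:InciPolaratlas}, and conclude with Proposition~\ref{prop:polarGNF}, with the same Baur--Strassen bookkeeping for the $\softOh{E'\sigma+n(E+E')}$ cost. The closing remark on atlas consistency — that the GNF produced by Lemma~\ref{lem:LtGNF} is built over the trivial atlas $((1,\ggphi))$ while Proposition~\ref{prop:polarGNF} requires the associated atlas to match the one in the lemma's hypothesis — flags a genuine subtlety the paper leaves implicit, and your resolution (take $\bchi=((1,\gg))$, which is an atlas of $(V,S)$ since $\gg$ is a reduced regular sequence with $\sing(V)\subset S$) is sound.
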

\begin{proof}
According to Lemma~\ref{lem:LtGNF}, one can compute, using $\softOh{E'\sigma}$ 
operations in $\QQ$, a Lagrange system $\Lt$ of type $((2+n,0),(2+c,0),0)$, 
defining $\Vt$, and such that $(\Lt, (\Wt,\scrY))$ has a \GNF $\bpsi$.

Let $\uu$ be an arbitrary element of $\QQ^{c+2}$ (such an element can be 
provided by the procedure \Rand we mentioned in 
Subsection~\ref{ssec:algodescription}) and
let $\LtW=\WLag(\Lt,\uu,2)$. According to Definition~\ref{def:Wlagrange}, 
$\LtW$ is a Lagrange system of 
type 
\[
\big(\,(2+n,\,2+c),\;(2+c,\,n+1),\;0\big).
\]
Computing $\LtW$ boils down to apply Baur-Strassen's algorithm \cite{BaSt} to 
obtain a \SLP evaluating the Jacobian matrix associated to $\gg, \bphi$
as in the proof of \cite[Lemma O.1]{SS2017}.

By assumption, either $W$ is empty, and so is $\Wt$, or $W$ is equidimensional 
of dimension 1, with $\sing(W) \subset S$.
Then, by Lemma~\ref{lem:InciPolar}, $\Wt$ is equidimensional of dimension 1, 
with $\sing(\Wt) \subset \Inci(S) = \St$.
Moreover, as $\Watlas(\bchi,V,S,\bphi,2)$ is an atlas of $(W,S)$ then, by 
Lemma~\ref{lem:InciPolaratlas}, $\Watlas(\bchiphi,\Vt,\St,\proj,2)$ is an atlas 
of $(\Wt,\St)$.

Therefore, by Proposition~\ref{prop:polarGNF}, there exists a \NEZO subset 
$\scrI(\Lt,\bpsi,\scrY)$ of $\CC^p$ such that, if $\uu \in 
\scrI(\Lt,\bpsi,\scrY)$ then, either $\Wt\neq \emptyset$ or $(\LtW,\scrY)$ 
admits a \GNF. In both cases, $\LtW$ is a Lagrange system that defines $\Wt$.
\end{proof}

\begin{myproof}{Lemma~\ref{lem:solvepolar}}
According to  Lemmas~\ref{lem:IVphi}, \ref{lem:diagprojphi} and 
\ref{lem:InciPolar}, $\Wphii[2]$ can be obtained by 
projecting the incidence polar variety $\Wproji[2][\Vt]$ on the last $n$ 
coordinates.
Computing a parametrization of the latter set can then be done using the 
algorithm \SolveLag of \cite[Proposition 6.3]{SS2017} on the Lagrange system 
given by Proposition~\ref{prop:polarGNF}.

By Lemma~\ref{lem:LtWGNF}, we can compute a Lagrange system $\LtW$ 
defining $\Wproji[2][\Vt]$, that admits a \GNF.
Then, by \cite[Proposition 6.3]{SS2017}, there exists a Monte Carlo algorithm 
\SolveLag which, on input $\LtW$, either fails or returns a \ODP $\scrWt$ of 
degree at most 
\[
 \delta=\deltavalue,
\]
describing $\ULbar[\LtW]$, which is exactly $\Wt$ by 
Proposition~\ref{prop:polarGNF}. 
Moreover, by \cite[Proposition 6.3]{SS2017}, the execution of \SolveLag uses at 
most
\[
\softOh{(n+c)^3(E+E'+(n+c)^3)D\delta^3+(n+c)\delta\sigma^2}
\]
operations in $\QQ$. 
Finally, by Lemma~\ref{lem:InciPolar}, $W$ can be obtained 
by projecting $\Wt$ on the last $n$ coordinates. Hence, running \Proj, with 
input $\scrWt$ and $n$, we get a \ODP $\scrW$, of degree at most $\delta$, 
encoding $W$. According to \cite[Lemma J.9]{SS2017}, the latter operation 
costs at most $\softOh{n^2\delta^3}$ operations in $\QQ$.
\end{myproof}

\begin{myproof}{Lemma~\ref{lem:critpolar}}
By Lemma~\ref{lem:LtWGNF}, we can compute a Lagrange system $\LtW$ 
defining $\Wproji[2][\Vt]$, such that $(\LtW; \Wproji[1][\Wt])$ has the \GNFP.
Hence, by \cite[Proposition 6.4]{SS2017}, there exists a Monte Carlo 
algorithm \Wun which, on input $\LtW$, either fails or returns a \ZDP 
$\scrKt$ of degree at most $\delta(n+c)D$, where
\[
 \delta=\deltavalue,
\]
describing $\Wproji[1][\ULbar[\LtW]]-\St$, which is exactly 
$\Wproji[1][\Wt]-\St$ by Proposition~\ref{prop:polarGNF}. 
Moreover, by \cite[Proposition 6.3]{SS2017}, the execution of \Wun uses at most
\[
\softOh{(n+c)^{12}(E+E')D^3\delta^2+(n+c)\sigma^2}
\]
operations in $\QQ$. 
Finally, by Lemma~\ref{lem:InciPolar}, $\Wphii[1][W]$ can be obtained 
by projecting $\Wproji[1][\Wt]$ on the last $n$ coordinates and taking the 
union with $S$. This is done using the subroutines \Proj and \Union which, 
according to \cite[Lemma J.3 and J.5]{SS2017}, use at most $\softOh{(n+c)^4 D^2 
\delta^2+n\sigma^2}$ operations in $\QQ$.
\end{myproof}

\begin{myproof}{Lemma~\ref{lem:fiberpolar}}
  By Lemma~\ref{lem:LtWGNF}, we can compute a Lagrange system
  $\LtW$ defining $\Wproji[2][\Vt]$, such that $(\LtW; \Wt \cap
  \bm{\pi}_1^{-1}(\widetilde{Q''}))$ has the \GNFP. Hence, by \cite[Proposition
  6.5]{SS2017}, there exists a Monte Carlo algorithm \Fiber which, on input
  $\LtW$, either fails or returns a \ZDP $\scrFt$ of degree at most
  $\kappa''\delta$ where
\[
 \delta=\deltavalue,
\]
describing 
$[\ULbar[\LtW] \cap \bm{\pi}_1^{-1}(\widetilde{Q''})] - \St$, which is exactly 
$[\Wt \cap \bm{\pi}_1^{-1}(\widetilde{Q''})] - \St$ by 
Proposition~\ref{prop:polarGNF}. 
Moreover, by \cite[Proposition 6.3]{SS2017}, the execution of \FiberPolar uses 
at most
\[ 
\softOh{(n+c)^4\big[E+E'+(n+c)^2\big]D(\kappa'')^2\delta^2 + 
(n+c)\sigma^2}
\]
operations in $\QQ$,according to \cite[Definition 6.1]{SS2017}.
Finally, by Lemma~\ref{lem:InciPolar}, $W \cap \mapunrec(Q'')$ 
can be obtained by projecting $\Wt \cap \bm{\pi}_1^{-1}(\widetilde{Q''})$ on 
the last $n$ coordinates and taking the union with $S$.. This is done, using 
the subroutines \Proj and \Union which, according to \cite[Lemma J.3 and 
J.5]{SS2017}, use at most $\softOh{(n+c)^2(\kappa'')^2\delta^2 + n\sigma^2}$ 
operations.
\end{myproof}

\subsection{Proof of 
Proposition~\ref{prop:comp-broadmap}}\label{ssec:boundedrmp}

This paragraph is devoted to prove Proposition~\ref{prop:comp-broadmap}. We
recall its statement below.
\begin{proposition*}[\ref{prop:comp-broadmap}]
Let $\Gamma$ and $\Gammaphi$ be \SLPs, of respective length $E$ and 
$E'$, computing polynomials $\gg = (g_1, \ldots, g_p)$ and
$\map=(\phi_1,\dotsc,\phi_n)$ in $\QQ[x_1, \ldots, x_n]$, of degrees bounded 
by $D$. Assume that $\gg$ satisfies \ref{ass:A}.
Let $\scrQ$ and $\scrS_Q$ be \ZDPs of respective degrees $\degQ$ 
and $\degS$ that encode finite sets $Q \subset \CC^{e}$ (for some 
$0<e\leq n$) and $S_Q \subset \CC^n$, respectively. 
Let $V = \V(\gg)$ and $F_Q = \mapfbr{V}[Q][e]$, and assume 
that
  \begin{itemize}
   \item $F_Q$ is equidimensional of dimension $d-e$, where $d=n-p$;
  \item $\Fatlas(\bchi,V,Q,\bphi)$ is an atlas of 
$(F_Q,S_Q)$, and $\sing(F_Q) \subset S_Q$;
  \item the real algebraic set $F_Q\cap 
\RR^n$ is bounded.
  \end{itemize}
  Consider additionally a zero-dimensional parametrization $\scrP$ of
  degree $\degP$ encoding a finite subset  $\Pcal$ of $F_Q$, which contains 
$S_Q$. Assume that $\degS \leq ((n+e)D)^{n+e}$.
  
  There exists a probabilistic algorithm $\RMBound$ which takes as input
  $((\Gamma, \Gammaphi, \scrQ,\scrS), \scrP)$ and which, in 
case of success, outputs a roadmap of 
  $(F_Q, \Pcal)$,
  of degree
  \[\softOh{
    {(\degP + \degQ) 16^{3d_F}
    (n_F\logde{n_F})^{2(2d_F + 12\logde{d_F}) (\logde{d_F} + 5)}
    D^{(2n_F+1)(\logde{d_F} + 3)}}},
\]
 where $n_F = n+e$ and $d_F = d-e$, and using
\[
  \softOh{
    {\mu'^3 16^{9d_F} E''
(n_F\logde{n_F})^{6(2d_F + 12\logde{d_F}) (\logde{d_F} + 6)}
D^{(6n_F+3)(\logde{d_F} + 4)}}
}
\]
arithmetic operations in $\QQ$ where $\mu' = (\degP + \degQ)$ and $E''=E+E'+e$.
\end{proposition*}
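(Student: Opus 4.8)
The overall plan is to use the incidence-variety construction of Subsection~\ref{ssec:incidence} to replace the polynomial map $\map$ by a canonical projection, so that the bounded-case roadmap algorithm of \cite{SS2017} can be applied to a \emph{fiber} of that projection, and then to push the answer back along the incidence isomorphism. Concretely, applying \IncVar (Lemma~\ref{lem:incvar}) to $\Gamma$ and to the first $e$ outputs of $\Gammaphi$ produces a \SLP $\Gammat$ of length $E+E'+e = E''$ evaluating $\ggphi=(\gg,\phi_1-e_1,\dotsc,\phi_e-e_e)$ in $\QQ[\EE,\XX]$ with $\EE=(e_1,\dotsc,e_e)$; by Lemma~\ref{lem:IVphi}, $\ggphi$ is a reduced regular sequence and $\Vphi=\V(\ggphi)=\Inci(V)\subset\CC^{e+n}$ is equidimensional of dimension $d$, where $\Inci$ is the incidence isomorphism associated to $(\phi_1,\dotsc,\phi_e)$. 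Applying \IncParam (Lemma~\ref{lem:incparam}) to $\scrS_Q$ and to $\scrP$ yields \ZDPsQ $\scrSt_Q$ and $\incid{\scrP}$, of degrees $\degS$ and $\degP$, encoding $\Inci(S_Q)$ and $\Inci(\Pcal)$, using $\softOh{E'(\degS+\degP)}$ operations in $\QQ$; the parametrization $\scrQ$, of degree $\degQ$, already encodes $Q$ inside the first $e$ coordinates of $\CC^{e+n}$.

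Next I would check that the fiber inherits all the properties needed by the bounded algorithm. By Lemma~\ref{lem:diagprojphi}, $\Inci(F_Q)=\projfbr{\Vphi}[Q][e]$ is the fiber of $\Vphi$ over $Q$ with respect to the canonical projection $\proj[e]=(e_1,\dotsc,e_e)$; by the incidence-transfer results of Subsection~\ref{ssec:incidence} (Lemmas~\ref{lem:InciPolar} and~\ref{lem:InciFiberatlas}, whose hypotheses hold because $\Fatlas(\bchi,V,Q,\bphi)$ is an atlas of $(F_Q,S_Q)$), the sequence $\Fatlas(\bchiphi,\Vphi,Q,\Inci(S),\proj)$ is an atlas of $(\Inci(F_Q),\Inci(S_Q))$ and $\sing(\Inci(F_Q))=\Inci(\sing(F_Q))\subset\Inci(S_Q)\subset\Inci(\Pcal)$. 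Moreover $\Inci(F_Q)\cap\RR^{e+n}=\Inci(F_Q\cap\RR^n)$ is bounded. Hence the hypotheses of the bounded roadmap algorithm are met for the equidimensional set $\Inci(F_Q)$ of dimension $d_F=d-e$ in $n_F=n+e$ variables, cut out by the $n_F-d_F=p+e$ equations $\ggphi$, with query points $\Inci(\Pcal)$ (a finite subset containing $\sing(\Inci(F_Q))$) and \SLP length $E''$.

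The central step is then to run \RMRecLag of \cite{SS2017} on $(\Gammat,\scrQ,\scrSt_Q,\incid{\scrP})$, and this is where I expect the main obstacle: \cite{SS2017} states \RMRecLag and its complexity for an input given by polynomials $\ff$ defining $\V(\ff)$ (the case $e=0$), whereas here the input describes the fiber $\projfbr{\V(\ggphi)}[Q][e]$. The point is that the recursion of \cite{SS2017} is built entirely from (generalized) Lagrange systems, which already carry the fiber parameter $e$: starting from the Lagrange system $L=(\Gammat,(1),\scrSt)$ defining $\Vphi$ with \GNF $((1,1,\ggphi,\ggphi))$ (via \SingPts on $\Gammat$, which returns a \ZDP $\scrSt$ of $\sing(\Vphi)$ of degree $\leq((n+e)D)^{n+e}$, and \cite[Proposition 5.10]{SS2017}), the system $\Flag(L,\scrQ,\scrSt_Q)$ of Definition~\ref{def:Flagrange} defines $\Inci(F_Q)$ and, by Proposition~\ref{prop:fiberGNF} (its dimension and atlas hypotheses being exactly the facts established above), admits a \GNF with atlas $\Fatlas(\bchiphi,\Vphi,Q,\Inci(S),\proj)$. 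From this starting point the baby-step/giant-step recursion of \cite{SS2017} — building generalized polar varieties of decreasing dimension, their zero-dimensional critical loci, and the induced fibers, via \SolveLag, \Wun and \Fiber (\cite[Propositions 6.3--6.5]{SS2017}) — proceeds verbatim, and its correctness, output-degree and complexity analyses carry over \emph{mutatis mutandis} with $(n,d,E,\mu)$ replaced by $(n_F,d_F,E'',\mu')$, $\mu'=\degP+\degQ$; the degenerate ranges of $e$ (and of $d_F$) for which some intermediate genericity statement is vacuous are handled as in \cite{SS2017}. This returns, in case of success, a \ODP $\incid{\scrR}$ encoding a roadmap of $(\Inci(F_Q),\Inci(\Pcal))$ of the announced degree and within the announced running time, up to the final projection.

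Finally, since $\Inci$ is an isomorphism of $\QQ$-algebraic sets with inverse the projection $\piX\colon\CC^{e+n}\to\CC^n$ onto the last $n$ coordinates, it restricts to a semi-algebraic homeomorphism $\Inci(F_Q)\cap\RR^{e+n}\to F_Q\cap\RR^n$, hence induces a bijection between \SACCs and maps algebraic curves to algebraic curves; therefore $\piX(\Zparam(\incid{\scrR}))=\Inci^{-1}(\Zparam(\incid{\scrR}))$ is a roadmap of $(F_Q,\Pcal)$. Applying the projection subroutine \Proj on one-dimensional parametrizations (\cite[Lemma J.9]{SS2017}) to $\incid{\scrR}$ produces a \ODP $\scrR$ of degree at most $\Otilde(\deg\incid{\scrR})$ encoding this curve, in a cost dominated by the \RMRecLag call. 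Summing the costs of \IncVar, \IncParam, \SingPts, \RMRecLag and \Proj, and using $\degS\leq((n+e)D)^{n+e}$ to absorb the singular-locus parametrizations into the $D^{O(n_F)}$ factors, yields the claimed degree and complexity bounds.
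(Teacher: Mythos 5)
Your proposal follows essentially the same route as the paper: pass to the incidence variety associated with $\map[e]$ to reduce to canonical projections, build a fiber Lagrange system $\Flag(L,\scrQ,\scrSt_Q)$ with global normal form via Proposition~\ref{prop:fiberGNF} and Lemma~\ref{lem:InciFiberatlas}, run \RMRecLag, and project the resulting curve back to $\CC^n$. The one place where you wave your hand is exactly where the paper does its real work: the claim that the correctness, output-degree and complexity analysis of \cite[Proposition O.7]{SS2017} carries over \emph{mutatis mutandis} when the root Lagrange system has $e_\rho>0$ is devoted a full proposition in the paper (Proposition~\ref{prop:roadmaprec}), which re-checks \cite[Lemmas O.1, O.2, O.4, O.5]{SS2017} under this weaker hypothesis, tracks the recursion-tree invariants $\bdelta,\bzeta$, and explains why the assumption $\degS_\rho\le(\nrec D)^{\nrec}$ is needed to close the degree recursion for the $\scrS_\tau$'s; your conclusion about that step is correct, but stating it as a routine substitution understates the verification the paper actually carries out.
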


We start by proving a variant of this result \rev{that applies} when $\map$ encodes
projections. Then, using incidence varieties and the associated subroutines, we
 generalize it to arbitrary polynomial maps.

\subsubsection{The particular case of projections}

We study here algorithm \RMRecLag from \cite[Section~7.1]{SS2017}. It
takes as input a Lagrange system $L_\rho =
(\Gamma_\rho,\scrQ_\rho,\scrS_\rho)$ having the \GNFP, and a
zero-dimensional parametrization $\scrP_\rho$, where
$\Zparam(\scrQ_\rho)$ lies in $\CC^{e_\rho}$, for some $e_\rho > 0$;
the output is a roadmap for the algebraic set defined by $L_\rho$, and
$\Zeroes{\scrP_\rho}$. The following proposition ensures correction
and \rev{establishes runtime}. The discussion is entirely similar to
that of \cite[Proposition O.7]{SS2017}, but the analysis done there
assumed that $\scrQ_\rho$ was empty and had $e_\rho = 0$ (the notation
we use, with objects subscripted by $\rho$, is directly taken from there, in order to
facilitate the comparison). In what
follows, let $\XXreci$, where $\nrec\geq 0$, be new indeterminates.

\begin{proposition}\label{prop:roadmaprec}
Let $\ff = (f_1, \ldots, f_{p_\rho})\subset \QQ[\XXreci]$ be
  given by a straight-line program $\Gamma_\rho$ of length $E_\rho$ with
  $\deg(f_i) \leq D$ for $1\leq i \leq p_\rho$, let $\scrQ_\rho$ and 
$\scrS_\rho$ be
  \ZDPs which have respective degrees $\degQ_\rho$ and $\degS_\rho$ and encode 
finitely many points in respectively $\CC^{e_\rho}$ (for some $e_\rho>0$) and 
in $\CC^\nrec$. Assume that
  the Lagrange system $L_\rho = (\Gamma_\rho,\scrQ_\rho,\scrS_\rho)$ has the 
\GNFP. Let $d_\rho = \nrec - p_\rho - e_\rho$, hence the dimension of 
$\Var{\Gamma_\rho}{\scrQ_\rho}{e_\rho}$.

Consider a \ZDP $\scrP_\rho$ of degree $\degP_\rho$ such that 
$\Zeroes{\scrP_\rho}$ is a finite subset of 
$\Var{\Gamma_\rho}{\scrQ_\rho}{e_{\rho}}$ which contains $\Zeroes{\scrS_\rho}$. 
 Assume that $\degS_\rho\leq (\nrec D)^\nrec$.
  
  There exists a Monte Carlo algorithm $\RMRecLag$ which takes as input
  $((\Gamma_\rho, \scrQ_\rho,\scrS_\rho), \scrP_\rho)$ and which, in case of
  success, outputs a roadmap for $(\Var{\Gamma_\rho}{\scrQ_\rho}{e_{\rho}}, 
\scrP_\rho)$ of degree
\[O\tilde{~}\left( (\degP_\rho + \degQ_\rho) 16^{3d_\rho}
    (\nrec\logde{\nrec})^{2(2d + 12\logde{d_\rho}) (\logde{d_\rho} + 5)}
    D^{(2\nrec+1)(\logde{d_\rho} + 3)}\right)\]
  using
  \[O\tilde{~}\left( (\degP_\rho + \degQ_\rho)^3 16^{9d_\rho} E_\rho 
(\nrec\logde{\nrec})^{(12d + 24\logde{d_\rho}) (\logde{d_\rho} + 6)}
D^{(6\nrec+3)(\logde{d_\rho} + 4)}\right)\]
arithmetic operations in $\QQ$.
\end{proposition}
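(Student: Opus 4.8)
The plan is to run the proof of \cite[Proposition~O.7]{SS2017} essentially verbatim, tracking the single change that $\scrQ_\rho$ is no longer empty and $e_\rho$ is no longer zero. Recall that \RMRecLag, as described in \cite[Section~7.1]{SS2017}, is a recursive procedure which, from a Lagrange system with the \GNFP and a \ZDP of control points, builds a roadmap by: computing a generalized polar variety $W$ of roughly half the current dimension, as a Lagrange system defining it (cf.\ Proposition~\ref{prop:polarGNF}); computing, via \Wun, the finite critical locus on $W$ of the distinguished coordinate, i.e.\ the first one not frozen by the current fiber; computing, via Proposition~\ref{prop:fiberGNF}, Lagrange systems defining finitely many fibers $F$ cut out over the image under the frozen coordinates of that critical locus together with $\Zeroes{\scrS_\rho}$ and $\Zeroes{\scrP_\rho}$; recursing on $W$ and on each $F$; parametrizing the one-dimensional pieces with \SolveLag; and assembling everything with \Union. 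The core subroutines \SolveLag, \Wun, \Fiber of \cite[Section~6]{SS2017}, as well as the Lagrange-system constructions for polar varieties and fibers, are stated and analyzed for Lagrange systems of arbitrary type $(\nn,\pp,e)$, so none of these steps requires $e=0$; and the preservation of the \GNFP through them, along with the attached \NEZO sets of admissible parameters, is exactly the content of Propositions~\ref{prop:polarGNF} and~\ref{prop:fiberGNF} above (the analogues of \cite[Propositions~5.13 and 5.16]{SS2017}), again for $e$ general. The balancing index chosen at each node stays within the range $\{2,\dots,(d+3)/2\}$ required by those propositions, as in \cite{SS2017}.

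For correctness, the only ingredient beyond correctness of the subroutines is the connectivity argument of \cite[Section~7.2]{SS2017}: that $W\cup F$ meets every \SACC of the real trace of $\Var{\Gamma_\rho}{\scrQ_\rho}{e_\rho}$ in a connected set, and that gluing a roadmap of $F$ to the (one-dimensional, hence self-roadmap) $W$ along their finite intersection produces a roadmap of the whole set --- the same mechanism as in Proposition~\ref{prop:connectresult}, via \cite[Proposition~2]{SS2011}. This argument makes no use of $e_\rho=0$, so it transfers unchanged; hence, in case of success, the output is a roadmap of $(\Var{\Gamma_\rho}{\scrQ_\rho}{e_\rho},\scrP_\rho)$, and the Monte Carlo nature is inherited from the subroutines (failures are reported when detected).

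The substantive work is the complexity bookkeeping, reproducing \cite[Appendix~O]{SS2017} with $e_\rho>0$. Along the recursion tree one tracks: the number of $\XX$-variables (always $\le\nrec$), the equation degrees (always $\le D$), the dimension $d$ (dropping from $d_\rho$ towards $1$, roughly halving at each level, whence depth $O(\log_2 d_\rho)$ and bounded branching), the fiber parameter $e$ (growing from $e_\rho$ but kept admissible, i.e.\ respecting the type constraint of Definition~\ref{def:lagrangesystem}, at every node), and the degrees of the \ZDPs $\scrQ,\scrS$ and of the control-point parametrization carried at each node. As in \cite{SS2017}, these degrees satisfy the Bézout-type bounds of \cite[Propositions~6.3--6.5]{SS2017}, the invariant $\deg\scrS\le(\nrec D)^\nrec$ propagates by the same argument, and the only new phenomenon is that the fibering step at each node multiplies the relevant point counts by a factor controlled by $\degQ_\rho$ --- which is why the closed-form estimates are those of \cite[Proposition~O.7]{SS2017} with $\degP_\rho$ replaced by $\degP_\rho+\degQ_\rho$ and with $d_\rho$ now meaning $\nrec-p_\rho-e_\rho$. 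Solving the same recurrences for the per-level degree and arithmetic-operation bounds then yields the stated estimates. The main obstacle is precisely this last step: one must check, node by node and for each of \SolveLag, \Wun and \Fiber, that the per-node estimates of \cite[Section~6]{SS2017} --- phrased in terms of the numbers of variables and equations, $D$, the dimension $d$, and $e$ --- degrade no worse with $e>0$ than claimed, so that aggregating over the $O(\log_2 d_\rho)$-deep, boundedly branching tree gives the advertised bounds. This is routine but lengthy, and is where the bulk of \cite[Appendix~O]{SS2017} is spent.
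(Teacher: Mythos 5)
Your proposal matches the paper's proof in essence: both re-run the analysis of \cite[Proposition O.7]{SS2017} verbatim, observing that Lemmas O.1--O.2, the Bézout-type bounds via the quantities $\bdelta$ and $\bzeta$, and the recurrence of Lemma O.5 remain valid when $e_\rho>0$ (with $d_\rho=\nrec-p_\rho-e_\rho$ and $\tilde d_\tau=\lfloor(d_\tau+3)/2\rfloor$ unchanged), so the closed-form estimates carry over with $\degP_\rho+\degQ_\rho$ in place of $\degP_\rho$. One small imprecision in your narration: $\degQ_\rho$ does not enter as a per-node multiplicative factor from the fibering step --- \cite[Lemma O.5]{SS2017} already tracks the combined quantity $\degP_\tau+\degQ_\tau$ at every node, and the only change from \cite{SS2017} is that its root-node value is $\degP_\rho+\degQ_\rho$ rather than $\degP_\rho+0$ --- but this does not affect the conclusion you draw.
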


\begin{proof}
Since, by assumption, $L_\rho$ has the \GNFP, one can call the algorithm 
$\textsf{RoadmapRecLagrange}$ from~\cite[Section 7.1]{SS2017} on input 
$L_\rho=(\Gamma_\rho, \scrQ_\rho, \scrS_\rho)$ and $\scrP_\rho$.
This algorithm computes data structures, which are called generalized Lagrange
systems, that encode:
\begin{itemize}
\item a polar variety in $\Var{\Gamma_\rho}{\scrQ_\rho}{e_\rho}$ of dimension
  $\tilde{d}-1\simeq d_\rho / 2$ for $\tilde{d} = \lfloor \frac{d_\rho+3}{2}
  \rfloor$;
\item appropriate fibers in $\Var{\Gamma_\rho}{\scrQ_\rho}{e_\rho}$ of 
dimension $d_\rho  - (\tilde{d} - 1)\simeq d_\rho / 2$.
\end{itemize}
A generalized Lagrange system (see \cite[Definition 5.3]{SS2017}) is encoded by a
triplet $L = (\Gamma, \scrQ, \scrS)$ such that $\Gamma$ is a straight-line
program that evaluates some polynomials, say $\bm{F} = (\bm{f}, \bm{f}_1,\ldots,
\bm{f}_s)$ where
\begin{itemize}
\item $\bm{f}$ lies in $\QQ[\XXrec]$, with $\XXrec = (\XXreci)$;
\item $\bm{f}_i$ lies in $\QQ[\XXrec, \bm{L}_1, \ldots, \bm{L}_i]$ and has
  length $p_i$, where the $\bm{L}_j$'s are sequences of extra variables of 
length   $\nrec_j$ (these are called blocks of Lagrange multipliers);
\item for any $f_{i,j}$ in $\bm{f}_i$, the degree of $f_{i,j}$ in $\bm{L}_j$ is
  at most $1$ for $1\leq i \leq p_i$ and $1\leq j \leq i$.
\end{itemize}
Also, $\scrQ$ (resp. $\scrS$) is a zero-dimensional parametrization encoding
points in $\CC^e$ (resp. $\CC^\nrec$).

\smallskip
The algebraic set of $\CC^\nrec$ defined by $L = (\Gamma, \scrQ, \scrS)$ is the
Zariski closure of the projection on the $\XXrec$-space of
$\Var{\bm{F}}{\scrQ}{\XXrec,e}\setminus \pi_{\XXrec}^{-1}(\Zeroes{\scrS})$. 

\medskip
\paragraph*{Short description of $\textsf{RoadmapRecLagrange}$}
From a generalized Lagrange system $L$ satisfying the global normal form
property and encoding some algebraic set $X$, one can build a generalized
Lagrange system encoding a polar variety $W$ over $X$ using \cite[Definition 5.11 and
Proposition 5.13]{SS2017}, which satisfies the global normal form property, up to some
generic enough linear change of coordinates and some restriction on the
dimension of $W$. Additionally, given finitely many base points $Q'\subset
\CC^{e'}$ encoded by a zero-dimensional parametrization $\scrQ'$, \cite[Definition
5.14 and Proposition 5.16]{SS2017} show how to deduce from $L$ and $\scrQ'$ a
generalized Lagrange system for $\projfbr{X}[Q'][e']$ satisfying the global
normal form property, again assuming the coordinate system is generic enough.

Maintaining the global normal form property allows us to call recursively
$\textsf{RoadmapRecLagrange}$. All in all, these computations are organised in a
binary tree $\mathcal{T}$, whose root is denoted by $\rho$. Each child node
$\tau$ encodes computations performed by a recursive call with input some
generalized Lagrange system $L_\tau = (\Gamma_\tau, \scrQ_\tau, \scrS_\tau)$ and
some zero-dimensional parametrization $\scrP_\tau$ encoding some control points.
Both $L_\tau$ and $\scrP_\tau$ have been computed by the parent node.
Correctness is proved in \cite[Section N.3]{SS2017}. Further, we denote by
$\degQ_\tau$, $\degS_\tau$ and $\degP_\tau$ the respective degrees of
$\scrQ_\tau$, $\scrS_\tau$ and $\scrP_\tau$.

The dimension of  $\Vlag{L_\tau}$ is denoted by $d_\tau$. Calling 
$\textsf{RoadmapRecLagrange}$ with input $L_\tau$ sets $\tilde{d}_\tau =
\lfloor \frac{d_\tau + 3}{2} \rfloor$ and computes
\begin{itemize}
\item[(a)] a generalized Lagrange system $L'_\tau$ which encodes the polar 
variety $W
  = \Polar(e_\tau, d'_\tau, \Vlag{L_\tau}^\AA)$, where $\AA$ is randomly chosen;
\item[(b)] a zero-dimensional parametrization $\scrB_\tau$ which encodes
  the union of $\Zeroes{\scrP}^\AA$ with $\Polar(e_\tau, 1, W)$; we denote its
  degree by $\degB_\tau$; note that by construction (see \cite[]{SS2017},
  $\Zeroes{\scrB_\tau}$ contains $\Zeroes{\scrS_\tau}$); 
\item[(c)] a zero-dimensional parametrization $\scrQ''_\tau$ which encodes the
  projection of $\scrB_\tau$ on the $e_{\tau}''$ first coordinates (with
  $e_{\tau}'' = e_\tau + \tilde{d}_\tau - 1$ ); we denote its degree by
  $\degQ''_\tau$;

\item[(d)] a \ZDP $\scrP'_\tau$ encoding
  $\Zeroes{\scrP_\tau}^\AA\cup Y_\tau$ with $Y_\tau =
  \Var{\Vlag{L'_\tau}}{\scrQ''_\tau}{e_{\tau}''}$ and a zero-dimensional 
  parametrization $\scrP''_\tau$ which encodes those points of 
  $\Zeroes{\scrP'_\tau}$ which project on $\Zeroes{\scrQ''_\tau}$ ; further we 
  denote their degrees by $\degP'_\tau$ and $\degP''_\tau$, the degree of 
  $Y_\tau$ will be denoted by $\degY_\tau$;

\item[(e)] zero-dimensional parametrizations $\scrS'_\tau$ and $\scrS''_\tau$ 
  of respective degrees $\degS'_\tau$ and $\degS''_\tau$ which do encode
  $\Zeroes{\scrS_\tau}^\AA\cup Y_\tau$ and those points of 
$\Zeroes{\scrS''_\tau}$
  which project on $\Zeroes{\scrQ''_\tau}$; note that by construction,
  $\Zeroes{\scrS'_\tau}$ and $\Zeroes{\scrS''_\tau}$ are contained in
  $\Zeroes{\scrP'_\tau}$ and $\Zeroes{\scrP''_\tau}$ respectively;

\item[(f)] and a generalized Lagrange system $L''_\tau$ which encodes
$\projfbr{\Vlag{L_\tau}}[\Zeroes{\scrQ''_\tau}][e_{\tau}'']$. 
\end{itemize}
The recursive calls of $\textsf{RoadmapRecLagrange}$ are then performed on
$(L'_\tau, \scrP'_\tau)$ and $(L''_\tau, \scrP''_\tau)$.

For a given generalized Lagrange system $L_\tau$ corresponding to some node
$\tau$, the number of blocks of Lagrange multipliers is denoted by $k_\tau$. The
total number of variables (resp. polynomials) lying in $\QQ[\XXrec, \bm{L}_1,
\ldots, \bm{L}_i]$ for $i \leq k_\tau$ is denoted by $\Nrec_{i, \tau}$ (resp. 
$P_{i,  \tau}$). By construction, for $i = 0$, we have $P_{0, \tau} = 
p_{\rho}$. For $i = k_\tau$, we denote $\Nrec_{k_\tau, \tau}$ (resp. 
$P_{k_\tau, \tau}$) by $\Nrec_{\tau}$ (resp. $P_{\tau}$).

As in \cite[Section 6.1]{SS2017}, we attach to each such generalized Lagrange
system the quantity
\[
  \delta_\tau = (P_\tau + 1)^{k_\tau} D^p (D-1)^{\nrec - e_\tau 
    -p_\rho}
  \prod_{i=0}^{k_\tau - 1}\Nrec_{i+1, \tau}^{\Nrec_{i, \tau} - e_\tau - P_{i, 
      \tau}}. 
\]
We establish below that the degree of $\Vlag{L_\tau}$ is bounded by
$\degQ_\tau\delta_\tau$. 

\paragraph*{Complexity analysis}
The complexity of $\textsf{RoadmapRecLagrange}$ is analysed in \cite[Section
O]{SS2017}, assuming that $e_\rho = 0$ (see \cite[Proposition O.7]{SS2017}). This is 
done by proceeding in two steps:
\begin{itemize}
\item {\itshape Step (i)} proves some elementary bounds on the number of 
variables
  and polynomials (the $\nrec_i$'s and the $p_i$'s) involved in the 
data-structures
  encoding these polar varieties and fibers in the recursive calls (see
  \cite[Section O.1]{SS2017});
\item {\itshape Step (ii)} proves uniform degree bounds for the parametrizations
  $\scrP'_\tau, \scrP''_\tau$, $\scrB_\tau$, $\scrQ'_\tau, \scrQ''_\tau$, as
  well as $\scrS'_\tau, \scrS''_\tau$ where $\tau$ ranges over all nodes of the
  binary tree $\mathcal{T}$. Uniform degree bounds are also given for all
  $\Vlag{L_\tau}$.
  
  These degree bounds are used in combination with the complexity estimates of
  \cite[Section 6.2]{SS2017} for solving generalized Lagrange systems and
  \cite[Sections J.1 and J.2]{SS2017} which do depend polynomially on these bounds
  and the ones established in {\itshape (i)}.
\end{itemize}
Since the total number of nodes is $O(\nrec)$, it suffices to take $\nrec$ 
times the sum of all costs established by {\itshape (ii)}. Hereafter, we 
slightly 
extend this analysis when $e_\rho > 0$, following the same reasoning, which we 
recall step by step by highlighting the main (and tiny) differences.

\smallskip\noindent {\itshape \bfseries Step {\itshape (i)}.} Both
\cite[Lemma O.1]{SS2017} and \cite[Lemma O.2]{SS2017} control the lengths of the
straight-line programs, the numbers of blocks of Lagrange multipliers and their
lengths, as well as the numbers of polynomials and total number of variables
remain valid, assuming $e_\rho = 0$. Their proofs are based on how these
quantity evolve when building generalized Lagrange systems encoding polar
varieties and fibers (see \cite[Lemmas 5.12 and 5.15]{SS2017}). This is not
changed in our context where the initial call to $\textsf{RoadmapRecLagrange}$
is done with some base points $\Zeroes{\scrQ_\rho}$ with $e_\rho>0$, because for
each node $\tau$, we take $\tilde{d}_\tau = \lfloor \frac{d_\tau + 3}{2}
\rfloor$ as in \cite{SS2017}. This implies that the conclusions of \cite[Lemma
O.1]{SS2017} and \cite[Lemma O.2]{SS2017} still hold when taking $d_\rho =
\nrec-p_\rho-e_\rho$.

All in all, we deduce that:
\begin{itemize}
\item the maximum number of blocks of Lagrange multipliers and the depth of
  $\mathcal{T}$ are bounded by $\lceil \logde{d_\rho} \rceil$
\item All straight-line programs have length bounded by
  $4\nrec^{4+2\logde{d_\rho}}(E_\rho+\nrec^4)$
\item the total number of variables for the generalized Lagrange system $L_\tau$
  is bounded by $(\nrec^2)^{\frac{d_\rho}{h_\tau} + 1}$ where $h_\tau$ is the 
height of the node $\tau$. 
\end{itemize}

\smallskip\noindent {\itshape \bfseries Step {\itshape (ii)}.} 
The two main quantities to consider are
\[\bdelta = 16^{d_\rho + 2} \nrec^{2d_\rho + 12\logde{d_\rho}}D^\nrec\]
and
\[ \bzeta = (\degP_\rho + \degQ_\rho) 
16^{2(d_\rho+3)}(\nrec\logde{\nrec})^{2(2d_\rho +
    12\logde{d_\rho})}D^{(2\nrec+1)(\logde{d_\rho + 2})}.\]

The first step is to prove that for any node $\tau$, the degree of
$\Vlag{L_\tau}$ is dominated by $\degQ_\tau\bdelta$. Using the global normal
form property, \cite[Propositions 5.13 and 6.2]{SS2017} prove that the degree of
$\Vlag{L_\tau}$ is upper bounded by $\degQ_\tau\delta_\tau$.
Recall that, by definition,
\[
  \delta_\tau = (P_\tau + 1)^{k_\tau} D^p (D-1)^{\nrec - e_\tau -p_\rho}
  \prod_{i=0}^{k_\tau - 1}\Nrec_{i+1, \tau}^{\Nrec_{i, \tau} - e_\tau - P_{i,
      \tau}}.
\]
\cite[Lemma O.4]{SS2017} shows that the above left-hand side quantity is
dominated by $\bdelta$, using the results of Step {\itshape (i)} which we 
proved to
still hold. We then deduce that the degree of $\Vlag{L_\tau}$ is upper bounded 
by
$\degQ_\tau\bdelta$.

\cite[Lemma O.5]{SS2017} establishes recurrence formulas for the quantities
$\degB_\tau$, $\degY_\tau$, $\degP_\tau + \degQ_\tau$ and $\degS_\tau$ when
$\tau$ ranges in the set of nodes of the binary tree $\mathcal{T}$. It states
that, letting $\tau'$ and $\tau''$ be the two children of $\tau$, $\degB_\tau$,
$\degY_\tau$, $\degP_{\tau'} + \degQ_{\tau'}$, $\degP_{\tau''} +
\degQ_{\tau''}$, $\degS_{\tau'}$ and $\degS_{\tau''}$ are bounded above by
$2\bdelta^2\zeta_\tau (\degP_\tau + \degQ_\tau)$ where $\zeta_\tau = \left(
  \nrec^2\log_2(\nrec)D \right)^{\frac{d_\rho}{2^{h_\tau}} + 1}$ (here $h_\tau$
is the
height of $\tau$) in the context of \cite{SS2017} with $e_\rho = 0$ and assuming
that $\Zeroes{\scrS_\tau}$ is contained in $\Zeroes{\scrP_\tau}$ for any node
$\tau$ of $\mathcal{T}$ (this is used to prove the statements on $\degS_\tau,
\degS_{\tau'}$ and $\degS_{\tau''}$). In the context of \cite{SS2017}, we have
$\Zeroes{\scrS_\rho} = \emptyset$. In our context, we still take
$\tilde{d}_\tau = \lfloor \frac{d_\tau + 3}{2} \rfloor$ as in \cite{SS2017},
hence the structure of our binary tree $\mathcal{T}$ is the same as the one in
\cite{SS2017}. Also we assume that $\Zeroes{\scrS_\rho}$ is contained in
$\Zeroes{\scrP_\tau}$ and that its degree is bounded by $(\nrec D)^\nrec$. This 
is enough to transpose the recursion performed in the proof of \cite[Lemma O.5 
and Proposition O.3]{SS2017} and deduce that $\degP_\tau, \degQ_\tau$ and $\degS_\tau$ 
are bounded by $\bzeta$ when $\tau$ ranges over the set of nodes of 
$\mathcal{T}$.

The runtime estimates in \cite[Section O.3]{SS2017} to compute the parametrizations
and generalized Lagrange systems in steps (a) to (f) above are then the same
(they depend on $\bdelta$, $\bzeta$ and the above bounds on deduced at Step 
{\itshape (i)}). The statements of \cite[Lemmas O.8, O.9, O.10 and 
O.11]{SS2017} can
then be applied here {\itshape mutatis mutandis} which, as in \cite[Section
O.3]{SS2017}, allow us to deduce the same statement as \cite[Proposition O.7]{SS2017},
i.e. that the total runtime lies in
\[O\tilde{~}\left((\degP_\rho + \degQ_\rho)^3 16^{9d_\rho} E_\rho
    (\nrec\logde{\nrec})^{6(2d + 12\logde{d_\rho}) (\logde{d_\rho} + 6)}
    D^{3(2\nrec+1)(\logde{d_\rho} + 4)}\right)\]
and outputs a roadmap of degree in
\[O\tilde{~}\left((\degP_\rho + \degQ_\rho) 16^{3d_\rho}
    (\nrec\logde{\nrec})^{2(2d + 12\logde{d_\rho}) (\logde{d_\rho} + 5)}
    D^{(2\nrec+1)(\logde{d_\rho} + 3)}\right).\]
\end{proof}

\subsubsection{Proof of Proposition~\ref{prop:comp-broadmap}}
To prove Proposition~\ref{prop:comp-broadmap}, we now show how to return to the
case of projections from the general one, before calling the procedure 
\RMRecLag,
whose complexity is analysed in Proposition~\ref{prop:roadmaprec}.

Consider the notations introduced in the statement of the proposition.
In the following let $\Inciphi$ be the incidence isomorphism associated 
to $\map[e]$ and let $\ggphi[e]$ as defined in Lemma~\ref{lem:IVphi}, so that 
$\Vt := \V(\ggphi[e]) = \Inciphi(V)$.
According to Lemma~\ref{lem:IVphi} and \ref{lem:InciPolar}, $\Vt\subset 
\CC^{e+n}$ is equidimensional with finitely many singular points.
Additionally, let $\FQt = \Inciphi(F_Q)$ and $\SQt = \Inciphi(S_Q)$, so that 
$\FQt = \projfbr{\Vt}$, according to Lemma~\ref{lem:diagprojphi}.

\begin{lemma}\label{lem:LtFGNF}
 There exists an algorithm such that, on input 
$\Gamma$, $\Gammaphi$, $\scrQ$ and $\scrS$ as above, runs using at most 
$\softOh{E'\sigma}$ operations in $\QQ$, and outputs a Lagrange system $\LtF$ 
of type 
\[
\big(\,(e+n,0),\;(e+c,0),\;e\big).
\]
Under the assumptions of Proposition~\ref{prop:comp-broadmap}, $\LtF$ has a
\GNF, and defines $\FQt$.
\end{lemma}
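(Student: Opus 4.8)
The plan is to follow, \emph{mutatis mutandis}, the strategy used for polar varieties in Lemma~\ref{lem:LtWGNF}, replacing the construction of a Lagrange system for a polar variety by the one for a fiber. First I would restrict $\Gammaphi$ to its first $e$ outputs and feed it, together with $\Gamma$, to the algorithm \IncVar of Lemma~\ref{lem:incvar}; this returns a straight-line program $\Gammat$ of length $E+E'+e$ evaluating $\ggphi[e]=(\gg,\phi_1-e_1,\dotsc,\phi_e-e_e)$ in $\QQ[\EE,\XX]$. Next I would feed $\scrS$ and $\Gammaphi$ to the algorithm \IncParam of Lemma~\ref{lem:incparam}, obtaining a \ZDP $\scrSt$ of degree $\sigma$ encoding $\SQt=\Inciphi(S_Q)$; this step, which dominates the cost, uses $\softOh{E'\sigma}$ operations in $\QQ$. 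The \ZDP $\scrQ$ is kept unchanged, since $Q$ already lives in the coordinate space of $\EE$. The algorithm then outputs $\LtF=\Flag(\Lt,\scrQ,\scrSt)$ where $\Lt=(\Gammat,(1),\scrSt)$; concretely $\LtF=(\Gammat,\scrQ,\scrSt)$. Since $\ggphi[e]$ consists of $e+c$ polynomials in $\QQ[\EE,\XX]$, there are no Lagrange multipliers, and $d=n-c\geq e$ (as $F_Q$ has dimension $d-e\geq 0$), Definitions~\ref{def:lagrangesystem} and \ref{def:Flagrange} show that $\LtF$ is a Lagrange system of type $((e+n,0),(e+c,0),e)$.

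It then remains to prove that, under the assumptions of Proposition~\ref{prop:comp-broadmap}, $\LtF$ defines $\FQt$ and has a \GNF. By Lemma~\ref{lem:IVphi}, $\ggphi[e]$ is a reduced regular sequence generating $\I(\Vt)$, where $\Vt=\V(\ggphi[e])=\Inciphi(V)$ is equidimensional of dimension $d$ with finitely many singular points; by \cite[Proposition 5.10]{SS2017}, the base system $\Lt$ is a Lagrange system that defines $\Vt$, and, starting from an atlas $\bchi$ of $(V,\sing(V))$ (as provided by \cite[Lemma A.13]{SS2017}) transported to an atlas $\bchiphi$ of $(\Vt,\Inciphi(\sing(V)))$ by Lemma~\ref{lem:Inciatlas}, the pair $(\Lt;(\FQt,\scrY))$ admits a \GNF whose associated atlas is $\bchiphi$, for an arbitrary family $\scrY$ of algebraic sets. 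I would then apply Proposition~\ref{prop:fiberGNF} with $(V,S,\bchi,\scrQ'',\scrS'')$ instantiated as $(\Vt,\Inciphi(\sing(V)),\bchiphi,\scrQ,\scrSt)$: its hypotheses are exactly the assumptions of Proposition~\ref{prop:comp-broadmap} pushed forward through $\Inciphi$. Indeed, using Lemma~\ref{lem:diagprojphi} and Lemma~\ref{lem:InciPolar}, one has $\projfbr{\Vt}[Q][e]=\Inciphi(F_Q)=\FQt$ and $\Wproji[e][\Vt]=\Inciphi(\Wphii[e][V])$, so $\SQt$ is finite and contains $\sing(\FQt)=\Inciphi(\sing(F_Q))$, $\FQt$ is empty or equidimensional of dimension $d-e$, and, by Lemma~\ref{lem:InciFiberatlas} applied to the hypothesis that $\Fatlas(\bchi,V,Q,\sing(V),\bphi)$ is an atlas of $(F_Q,S_Q)$, $\Fatlas(\bchiphi,\Vt,Q,\Inciphi(\sing(V)),\proj)$ is an atlas of $(\FQt,\SQt)$. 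Proposition~\ref{prop:fiberGNF} then yields that $\LtF$ defines $\FQt$ and, when $\FQt\neq\emptyset$, that $(\LtF;\scrY)$ — hence $\LtF$ — has a \GNF.

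The main obstacle is bookkeeping rather than any genuinely new geometry: I must check that the data $(\FQt,\SQt)$ obtained through the incidence isomorphism matches exactly the shape required by Definition~\ref{def:Flagrange} and by the hypotheses of Proposition~\ref{prop:fiberGNF} — in particular that $\SQt=\Inciphi(S_Q)$ really equals $\projfbr{(\Inciphi(\sing(V))\cup\Wproji[e][\Vt])}[Q][e]$, which amounts to $S_Q=\mapfbr{(\sing(V)\cup\Wphii[e][V])}[Q][e]$ (the form guaranteed by Proposition~\ref{prop:dimfiber} in the setting where this lemma is used), and that the base system $\Lt$ genuinely carries a \GNF with the prescribed atlas $\bchiphi$. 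All the geometric transfer statements across $\Inciphi$ needed here are already available in Lemmas~\ref{lem:IVphi}, \ref{lem:InciPolar}, \ref{lem:Inciatlas} and \ref{lem:InciFiberatlas}, so what remains is a routine — if somewhat tedious — verification, entirely parallel to the proof of Lemma~\ref{lem:LtWGNF}.
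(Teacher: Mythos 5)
Your proposal is correct and follows essentially the same route as the paper: build the incidence Lagrange system $\Lt$ of type $((e+n,0),(e+c,0),0)$ via \IncVar and \IncParam (the paper packages these into the call to Lemma~\ref{lem:LtGNF}), form $\LtF=\Flag(\Lt,\scrQ,\scrSt)$, transport the atlas and hypotheses through $\Inciphi$ using Lemmas~\ref{lem:diagprojphi}, \ref{lem:InciPolar} and \ref{lem:InciFiberatlas}, and conclude with Proposition~\ref{prop:fiberGNF}. The one place you are more careful than the paper's text is in passing $\scrSt$ (encoding $\Inciphi(S_Q)\subset\CC^{e+n}$) rather than $\scrS$ to $\Flag$; Definition~\ref{def:Flagrange} requires the parametrization to live in the ambient space of $\Lt$, so your version is the intended one.
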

\begin{proof}
According to Lemma~\ref{lem:LtGNF}, we can compute a Lagrange system $\Lt$ of 
type $((e+n,0),(e+c,0),0)$, with the \GNFP, that defines $\Vt$.
 Let $\LtF=\Flag(\Lt,\scrQ,\scrS)$, as defined in 
Definition~\ref{def:Flagrange}, it is a Lagrange system of type 
$\left((\,(e+n,0),\;(e+c,0),\;e\right)$.

By assumptions of Proposition~\ref{prop:comp-broadmap}, either $F_Q$ is empty, 
and so is $\FQt$, or $F_Q$ is equidimensional of dimension $d-e$, with 
$\sing(F_Q) \subset S_Q$.
Then, by Lemma~\ref{lem:InciPolar}, $\FQt$ is equidimensional of dimension 
$d-e$, with $\sing(\FQt) \subset \Inci(S_Q) = \SQt$.
Moreover, as $\Fatlas(\bchi,V,S_Q,\bphi)$ is an atlas of $(F_Q,S_Q)$ then, by 
Lemma~\ref{lem:InciFiberatlas}, $\Fatlas(\bchiphi,\Vt,\SQt,\proj)$ is an 
atlas of $(\FQt,\SQt)$.

Hence, by Proposition~\ref{prop:fiberGNF}, either $\FQt = \emptyset$ or 
$\LtF$ admits a \GNF. 
\end{proof}

Suppose now that the Lagrange system $\LtF$ given by Lemma~\ref{lem:LtFGNF} 
has been computed. According to Lemma~\ref{lem:incparam}, one can compute a 
\ZDP $\scrPt$, encoding $\Pcalt = \Inciphi(\Pcal)$, within the same complexity 
bound. One checks, by assumption, that $\SQt \subset \Pcalt \subset \FQt$  and 
that $\SQt$ has degree bounded by $((n+e)D)^{n+e}$

Therefore, according to Proposition~\ref{prop:roadmaprec}, with $m=n+e$, there exists 
a Monte Carlo algorithm \RMRecLag which, on input $\LtF$ and $\scrPt$, outputs, 
in case of success, a roadmap $\scrRFQt$ of $(\FQt,\Pcalt)$ of degree
\[\softOh{
    {(\degP + \degQ) 16^{3d_F}
    (n_F\logde{n_F})^{2(2d_F + 12\logde{d_F}) (\logde{d_F} + 5)}
    D^{(2n_F+1)(\logde{d_F} + 3)}}},
\]
 where $n_F = n+e$ and $d_F = d-e$, and using
\[
  \softOh{
    {\mu'^3 16^{9d_F} E''
(n_F\logde{n_F})^{6(2d_F + 12\logde{d_F}) (\logde{d_F} + 6)}
D^{3(2n_F+1)(\logde{d_F} + 4)}}
}
\]
arithmetic operations in $\QQ$ with $\mu' = (\degP + \degQ)$ and $E'' = E+E'+e$.

Finally, let $\mathscr{B}_{\mathrm{RM}}$ be the degree bound, given above, on 
the roadmap $\scrRFQt$ of $(\FQt,\Qt)$ output by \RMRecLag. Then, by 
\cite[Lemma J.9]{SS2017}, one can compute the projection $\scrR_{F_Q}$ of  
$\scrRFQt$ on the last $n$ variables. The complexity of this step is bounded 
by $\softOh{n_F^2\mathscr{B}_{\mathrm{RM}}^3}$, that is bounded by
\[
  \softOh{{(\degP + \degQ)^3
  16^{9d_F} (n_F\logde{n_F})^{6(2d_F + 12\logde{d_F})(\logde{d_F} + 6)}
  D^{3(2n_F+1)(\logde{d_F} + 3)}}}
\]
operations in $\QQ$. Finally,  since $\Inciphi$ is an isomorphism of algebraic 
sets, it induces a one-to-one homeomorphic correspondence between the \SACCs of 
$\FQt\cap\RR^{n_F}$ and $F_Q\cap\RR^{n}$ by \cite[Lemma 2.1]{PSS2024}. 
Therefore, $\scrR_{F_Q}$ is a roadmap of $(F_Q,\Pcal)$.

\section{Proof of Proposition~\ref{prop:fibergenalg}: finiteness of 
fibers}\label{sec:fibergenalg}
We recall the statement of the proposition we address to prove. 
\begin{proposition*}[\ref{prop:fibergenalg}]
Let $V\subset \CC^n$ be a $d$-equidimensional algebraic set with finitely many
  singular points and $\polun$ be in $\CC[\XX]$. Let $2\leq \ifi \leq d+1$. For 
$\balf=(\balf_1,\dots,\balf_{\ifi})$  in $\CC^{\ifi n}$, we define
   $\bphi = \left(\phi_1(\XX,\balf_1),  
\dotsc,\phi_{\ifi}(\XX,\balf_{\ifi})\right)$, where for $2\leq j \leq \ifi$
 \begin{align*}
  \phi_1(\XX,\balf_1) = \polun(\XX) + 
  \sum_{k=1}^n \alf_{1,k}x_k
  \et
  \phi_j(\XX,\balf_j) = \sum_{k=1}^n \alf_{j,k}x_k.
 \end{align*}
  Then, there exists a non-empty Zariski open subset $\ZOnoether(V, \polun,\ifi)
 \subset \CC^{\ifi n}$ such that for every $\balf \in 
\ZOnoether(V,\polun,\ifi)$ and $i \in \{1,\dotsc,\ifi\}$, the following holds:
 \begin{enumerate}
  \item either $\Wphii$ is empty or $(i-1)$-equidimensional;
  \item the restriction of $\map[i-1]$ to $\Wphii$ is a Zariski-closed map;
  \item for any $\zz \in \CC^{i-1}$, the fiber $\Kphii \cap \map[i-1][-1](\zz)$
    is finite.
\end{enumerate}
\end{proposition*}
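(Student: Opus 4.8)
The plan is to reduce to the case where $\bphi$ consists of linear forms — where the statement is essentially \cite[Theorem 1]{SS2003} — by passing to the graph of $\polun$. Apply the incidence construction of Section~\ref{ssec:incidence} to the single polynomial $\polun$: writing the coordinates of $\CC^{1+n}$ as $(e,\XX)$ and $\Inci[(\polun)]\colon\yy\mapsto(\polun(\yy),\yy)$, set $V'=\Inci[(\polun)](V)\subset\CC^{1+n}$. By Lemma~\ref{lem:IVphi}, $V'$ is $d$-equidimensional, and by the proof of Lemma~\ref{lem:InciPolar}, $\sing(V')=\Inci[(\polun)](\sing(V))$ is finite. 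For $\balf\in\CC^{\ifi n}$ introduce the linear forms $L_1=e+\sum_k\alf_{1,k}x_k$ and $L_j=\sum_k\alf_{j,k}x_k$ for $2\le j\le\ifi$ on $\CC^{1+n}$, so that $(L_1,\dots,L_j)\circ\Inci[(\polun)]=\map[j]$ on $V$ for every $j$. Since $\Inci[(\polun)]$ is an isomorphism of algebraic sets, the differential-rank identity \eqref{eqn:rankincidence} shows that it restricts to isomorphisms $\Wphii\cong W((L_1,\dots,L_i),V')$ and $\Kphii\cong K((L_1,\dots,L_i),V')$ which intertwine $\map[i-1]$ with $(L_1,\dots,L_{i-1})$. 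Hence conditions (1)--(3) for $(V,\bphi)$ are equivalent to the same conditions for $V'$ and the linear map $(L_1,\dots,L_\ifi)$; note $\ifi\le d+1\le n+1$, so these $\ifi$ forms are linearly independent for generic $\balf$.

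For $V'\subset\CC^{1+n}$ with finite singular locus, \cite[Theorem 1]{SS2003} (extended from an empty to a finite singular locus as in \cite{SS2017}) provides a nonempty Zariski open subset $\mathcal{U}_0$ of the space $(\CC^{1+n})^\ifi$ of $\ifi$-tuples of linear forms such that (1)--(3) hold for $V'$ and every tuple in $\mathcal{U}_0$. It then remains to check that the affine plane $\mathcal{L}=\{(L_1,\dots,L_\ifi)\mid\balf\in\CC^{\ifi n}\}$ meets the good locus. The key point is that each $\Wphii$, hence each of the conditions (1)--(3), depends on $(L_1,\dots,L_\ifi)$ only through the flag of spans $\langle L_1\rangle\subset\langle L_1,L_2\rangle\subset\cdots\subset\langle L_1,\dots,L_\ifi\rangle$ together with a linear automorphism of each target $\CC^{i-1}$; this is because row operations on $\jac([\hh,\map[i]])$ leave its rank unchanged. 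Consequently the set $C$ where (1)--(3) hold is invariant under left multiplication of the coefficient matrix of $(L_1,\dots,L_\ifi)$ by the group $B^-$ of invertible lower-triangular $\ifi\times\ifi$ matrices, so $\mathcal{U}:=B^-\cdot\mathcal{U}_0$ is a $B^-$-invariant dense open subset of $C$. A one-line computation shows that $B^-\cdot\mathcal{L}$ is precisely the set of $\ifi\times(1+n)$ matrices whose top-left entry is nonzero, hence dense; thus $\mathcal{U}\cap B^-\cdot\mathcal{L}\neq\emptyset$, and writing a point of this intersection as $g\,M$ with $g\in B^-$ and $M\in\mathcal{L}$ gives $M=g^{-1}(gM)\in\mathcal{U}\cap\mathcal{L}$. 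So $\mathcal{U}\cap\mathcal{L}$ is a nonempty Zariski open subset of $\mathcal{L}$; pulling it back along the natural linear isomorphism $\CC^{\ifi n}\cong\mathcal{L}$ yields the required $\ZOnoether(V,\polun,\ifi)$, which depends only on $V$, $\polun$ and $\ifi$.

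The delicate part is the bookkeeping of the previous paragraph: one must be certain that the good locus of \cite[Theorem 1]{SS2003} (or of its finite-singular-locus extension) really is $B^-$-invariant, i.e.\ that none of the constraints behind (1)--(3) fails to be flag-invariant, and that the constrained plane $\mathcal{L}$ is not contained in any of the finitely many ``bad'' hypersurfaces. If a usable invariance statement cannot be extracted directly from \cite{SS2003}, the fallback — and likely what the authors do — is to reprove (1)--(3) \emph{in situ} by the incidence-variety/dimension-count method of \cite{SS2003}: for each $i$, consider $\{(\yy,\balf)\in\reg(V)\times\CC^{\ifi n}\mid\rank\big(d_\yy\map[i]|_{T_\yy V}\big)\le i-1\}$, bound its dimension by computing the fiber dimension over a fixed $\yy$ from the codimension $d-i+1$ of the generic rank-$(i-1)$ determinantal locus — here the term $\polun$ in $\phi_1$ is harmless, since for fixed $\yy$ the map $\balf\mapsto d_\yy\map[i]|_{T_\yy V}$ is still affine and surjective onto the relevant matrix space — and then run the usual algebraic Sard / generic-flatness arguments to obtain equidimensionality, Zariski-closedness of $\map[i-1]$, and finiteness of fibers for generic $\balf$; the finite set $\sing(V)$ contributes only finitely many points to each fiber of $\Kphii$ and so causes no trouble in (3).
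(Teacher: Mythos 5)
Your approach is genuinely different from the paper's, and the primary plan (incidence variety plus $B^-$-invariance plus \cite[Theorem~1]{SS2003}) is plausible. The paper instead proves, by a downward induction on $i=d+1,\dots,1$ over the function field $\CC(\aa)$, that the restriction of $\bphi_{i-1}$ to $\Wphii$ is a \emph{finite} morphism, using an adapted Noether normalization lemma (Proposition~\ref{prop:noether} and Corollary~\ref{coro:noether}); items (2) and (3) then follow because finite morphisms are closed with finite fibers, and item (1) follows by combining the fiber-dimension theorem with Lemma~\ref{lem:lowerboundglobal}. The paper's base case $i=d+1$ does use a graph construction similar in spirit to your incidence map $\Inci[(\polun)]$, but embedded in $\CC^{d+n}$ rather than $\CC^{1+n}$, specifically so that Corollary~\ref{coro:noether} applies directly to the auxiliary variety.

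Your $B^-$-invariance argument is correct as far as it goes: all three conditions are invariant under replacing $(L_1,\dots,L_\ifi)$ by $g(L_1,\dots,L_\ifi)$ with $g$ invertible lower triangular, because for each $i$ the span $\langle L_1,\dots,L_i\rangle$ is unchanged and the target map is composed with a linear automorphism of $\CC^{i-1}$; and your computation that $B^-\cdot\mathcal{L}=\{N: N_{11}\neq 0\}$ is exactly right, so $\mathcal{U}\cap\mathcal{L}$ is a nonempty Zariski open subset of $\mathcal{L}\cong\CC^{\ifi n}$. What is left at the hand-waving level is whether \cite[Theorem~1]{SS2003} really provides a good locus $\mathcal{U}_0$ that is a nonempty open subset of the space of $\ifi\times(1+n)$ coefficient matrices (rather than of $\GL_{1+n}$, or of full changes of variables), and whether the finite-singular-locus version is already available in the form you need; these points need to be checked against the actual statements in \cite{SS2003} and \cite{SS2017}.

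There is one concrete gap in your \emph{fallback}: you propose to ``run the usual algebraic Sard / generic-flatness arguments'' to get equidimensionality, Zariski-closedness of $\map[i-1]$, and finiteness of fibers. The dimension-count / transversality machinery gives you (1) and, with a bit more work, the finite-fiber part of (3), but it does \emph{not} by itself give Zariski-closedness of $\map[i-1]|_{\Wphii}$: a quasi-finite dominant morphism need not be closed (an open immersion is the standard counterexample). Closedness is a properness statement, and the paper gets it precisely by proving the stronger claim that the map is finite, via Noether normalization over $\CC(\aa)$. If you want your fallback to stand on its own, you should add a Noether-normalization (or properness) step exactly as the paper does; ``generic flatness'' alone will not close this gap.
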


The rest of this section is devoted to the proof of this result. We first
establish a general lower bound on the dimension of the non-empty generalized 
polar varieties. This is a direct generalization of \cite[Lemma B.5 \& 
B.13]{SS2017}.
\begin{lemma}\label{lem:lowerboundglobal}
 Let $\CCgen$ be an algebraically closed field, and let $V\subset \CCgen^n$ 
be a $d$-equidimensional algebraic set. 
Then, for any $\bvphi = (\vphi_1,\dotsc,\vphi_{d+1}) 
\subset \CCgen[\XX]$, and any $1 \leq i \leq d+1$, all irreducible 
components of $\Wvphii$ have dimension at least $i-1$. 
\end{lemma}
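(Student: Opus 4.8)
The statement is a lower-bound on the dimension of generalized polar varieties $\Wvphii = W(\bvphi_i, V)$ for a $d$-equidimensional $V$ and arbitrary polynomials $\vphi_1,\dots,\vphi_{d+1}$. I expect the argument to be local and to reduce to the smooth locus, since $\Wvphii$ by definition is the Zariski closure of the set $\Wovphii$ of critical points of $\bvphi_i = (\vphi_1,\dots,\vphi_i)$ on $\reg(V)$. The key point is that criticality is a rank condition: at a point $\yy \in \reg(V)$, if $\hh$ is a local system of $c = n-d$ equations cutting out $V$ with $\jac_\yy(\hh)$ of full rank $c$, then $\yy \in \Wovphii$ iff $\jac_\yy([\hh, \bvphi_i])$ fails to have full rank $c+i$, i.e.\ all $(c+i)$-minors of this $(c+i)\times n$ matrix vanish. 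So locally $\Wovphii \cap \reg(V)$ is the vanishing of finitely many functions on the manifold $\reg(V)$.

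**Key steps.** First I would reduce to an irreducible component $Z$ of $\Wvphii$; by definition of Zariski closure, $Z \cap \Wovphii$ is dense in $Z$, so it suffices to show that through a general (in particular, a regular) point $\yy$ of $Z$ lying in $\reg(V)$, $Z$ has local dimension at least $i-1$. Second, working in an affine chart of $\reg(V)$ around $\yy$ — a smooth irreducible variety of dimension $d$ — the set $\Wovphii$ is cut out by the $(c+i)$-minors of $\jac([\hh,\bvphi_i])$. The crucial codimension count: this Jacobian has $c+i$ rows and $n$ columns; the first $c$ rows have rank exactly $c$ at and near $\yy$ (since $\yy \in \reg(V)$), so the vanishing of all maximal minors is equivalent to the vanishing of the $i$ maximal minors of the "reduced" $i\times(n-c) = i \times d$ matrix obtained after a change of coordinates adapting to $\ker \jac(\hh)$ — precisely the differential $d\bvphi_i$ restricted to $TV$. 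The locus where an $i \times d$ matrix drops rank below $i$ has codimension at most $d - i + 1$ (the generic-determinantal codimension bound $(\,d-i+1)(i-i+1)$ for the locus of rank $\le i-1$), hence $\Wovphii$ has codimension at most $d-(i-1)$ in $\reg(V)$, i.e.\ dimension at least $i-1$. Third, I would assemble these local statements: any irreducible component $Z$ meets $\reg(V)$ in a dense open subset (if $Z \not\subset \sing(V)$, which holds since $\sing(V)$ has dimension $< d$ and, when $i \le d$, any component of $\Wovphii$ restricted to the $d$-dimensional smooth locus already has dimension $\ge i-1 \ge 0$; the degenerate edge cases $Z \subset \sing(V)$ would need the same argument applied with $\sing(V)$ in place of $V$, or can be excluded because $\Wovphii \subset \reg(V)$ by definition and $Z = \overline{Z \cap \Wovphii}$), and a variety containing a dense subset of dimension $\ge i-1$ has dimension $\ge i-1$.

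**Main obstacle.** The technical heart is the determinantal codimension bound done \emph{intrinsically on the possibly-only-locally-complete-intersection smooth variety} $\reg(V)$: one must pass from the $(c+i)\times n$ matrix $\jac([\hh,\bvphi_i])$ to the intrinsic $i \times d$ matrix representing $d_\yy\bvphi_i|_{T_\yy V}$ without losing the codimension estimate, and one must invoke that the locus where a generic $i\times d$ matrix of functions on a smooth $d$-fold has rank $\le i-1$ has codimension $\le d-i+1$ — this is the standard bound (e.g.\ from \cite{Ei1995}) for the codimension of rank loci, applied fibrewise. I would cite the referenced \cite[Lemma B.5 \& B.13]{SS2017}, of which this is the stated generalization, to import exactly this minor-counting step, and then only need to check that nothing in their argument used that $\bvphi$ consisted of linear forms or projections. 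Everything else — reduction to components, density of $\reg(V)$, the passage to charts via Lemma~\ref{lem:polaronchart} — is routine.
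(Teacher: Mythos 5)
Your proof is correct and follows essentially the same route as the paper's: reduce to the smooth locus via charts, then invoke a determinantal codimension upper bound to conclude each component has dimension at least $i-1$. The only cosmetic difference is that the paper applies the Eagon--Northcott height bound (\cite[Theorem 3]{EN1962}) directly to the ideal of $(c+i)$-minors of $\jac(\hh_j,\bvphi_i)$ inside the $d$-dimensional chart $\Vreg(\hh_j)$, whereas you first reduce to the intrinsic $i\times d$ matrix $d\bvphi_i|_{T_\yy V}$ before applying the bound; both give codimension $\le d-i+1$ and hence the same conclusion.
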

\begin{proof}
  Since $V$ is $d$-equidimensional, the case $i=d+1$ is immediate; assume now 
that $i\leq d$. According to \cite[Lemma A.13]{SS2017}, there exists an atlas 
$\bchi=(\chi_j)_{1\leq j \leq s}$ of  $(V,\,\sing(V))$. For $1 \leq j \leq s$, 
let $\chi_j=(m_j,\hh_j)$. By \cite[Lemma A.12]{SS2017}, $\hh_j$ has cardinality 
$c = n-d$. According to Lemma~\ref{lem:jacrankchart}, fix $j \in 
\{1,\dotsc,s\}$, the following holds in $\Ocal(m_j) - \sing(V)$,
\begin{equation}\label{eqn:localpolarvariety}
  \Wvphii\;=\; \{\yy \in \Voreg(\hh_j) \mid \rank(\jac_\yy(\hh_j,\bvphi_i) < 
c+i\} = \Wovphii[i][\Voreg(\hh_j)].
 \end{equation}
 Let $\yy \in \Wovphii = \Wvphii - \sing(V)$. Since $\yy \in V$, there exists $j
 \in \{1,\dotsc,s\}$ such that $\yy \in \Ocal(m_j)$. Hence, by
 \eqref{eqn:localpolarvariety}, in $\Ocal(m_j) - \sing(V)$, the irreducible
 component of $\Wvphii$ containing $\yy$ is the same as the irreducible 
component
 of the Zariski closure of $\Wovphii[i][\Voreg(\hh_j)]$ containing $\yy$. Since
 these irreducible components are equal over a non-empty Zariski open set, they
 have same dimension by \cite[Theorem 1.19]{Sh1994}. Hence, proving that this
 common dimension is at least $i-1$ allows us to conclude.
 
Let $\Iminor \subset \CCgen[\XX]$ be the ideal generated by the $(c+i)$-minors 
of 
$\jac[\hh_j,\bvphi_i]$.
 Then,
 \[
 \Wovphii[i][\Voreg(\hh_j)] = \Voreg(\hh_j) \cap \V(\Iminor)
 \]
 which is contained in the algebraic set $\Vreg(\hh_j) \cap \V(\Iminor)$.
 We assume that $\Vreg(\hh_j) \cap \V(\Iminor)$ is not empty otherwise the 
statement of the proposition trivially holds.

Note that any irreducible component $Z$ of $\Vreg(\hh_j) \cap \V(\Iminor)$, has
an ideal of definition $\IPminor$ in $\CCgen[\Vreg(\hh_j)]$ that is an isolated
prime component of the determinantal ideal $\Iminor\cdot\CCgen[\Vreg(\hh_j)]$. 
Then
by \cite[Theorem 3]{EN1962}, $\IPminor$ has height at most $n-c-(i-1)$ so that
the codimension of $Z$ in $\Vreg(\hh_j)$ is at most $n-c-(i-1)$. Since
$\Vreg(\hh_j)$ has dimension $n-c$, the dimension of $Z$ is then at most $i-1$.
 
One concludes by observing that, any irreducible component of the Zariski
closure of $\Wovphii[i][\Voreg(\hh_j)]$ is the union of irreducible components 
of
$\Vreg(\hh_j) \cap \V(\Iminor)$.
\end{proof}

\begin{comment}
We start by recalling some results of commutative algebra, with an adapted form 
to our purpose.
\begin{lemma}\label{lem:monic}
 Let $f \in \KK[x_1,\dotsc,x_n]$ be a non-zero polynomial over an infinite 
field $\KK$ and let $l_1,\dotsc,l_{n-1}$ be indeterminates.
Then there exists a non-zero polynomial $F \in \KK[l_1,\dotsc,l_{n-1}]$ such 
that for all $(\blambda_1,\dotsc,\blambda_{n-1})\in \KK^{n-1}$, if 
$F(\blambda)\neq0$ then
 \[
    \frac{1}{F(\blambda)} 
f(x_1+\blambda_1x_n,\dotsc,x_{n-1}+\blambda_{n-1}x_n,x_n)
 \]
is monic in $x_n$.
\end{lemma}
\begin{proof}
\end{proof}
\end{comment}

\subsection{An adapted Noether normalization lemma}

Consider an algebraically closed field $\Kclos$, let $\ff=(f_1,\dots,f_m)\colon
\Kclos^n\to\Kclos^m$ be a polynomial map and $V\subset\Kclos^n$ and
let $Y\subset \Kclos^m$ be algebraic sets such that $\ff(V)\subset
Y$. Finally, consider the restriction $\tilde \ff:V \to Y$ of $\ff$,
and recall that the pullback $\tilde{\ff}^*$ of $\tilde{\ff}$ is
defined by
 \[
  \begin{array}{cccc}
\tilde{\ff}^*\colon&\Kclos[Y]=\Kclos[y_1,\dots,y_m]/\I(Y)& \longrightarrow 
&\Kclos[V]=\Kclos[x_1,\dots,x_n]/\I(V)\\[0.5em]
  &g&\longmapsto& g\circ\ff
  \end{array}.
 \]
\begin{definition}[{\cite[Section 5.3]{Sh1994}}]
 We say that the restriction $\tilde{\ff}$ of $\ff$ is a {\em finite
   map} if
\begin{enumerate}
 \item $\ff(V)$ is dense in $Y$, which is equivalent to
   $\tilde{\ff}^*$ being injective;
 \item the extension $\Kclos[Y]\hookrightarrow\Kclos[V]$ induced by
 $\tilde{\ff}^*$ is integral.
\end{enumerate} 
\end{definition}

The following lemma shows that to verify such conditions, we may not
have to work over an algebraically closed field: if $V$ and $Y$ are
defined over a subfield $\KK$ of $\Kclos$, finiteness of $\tilde\ff$
is equivalent to the pullback $\KK[\YY]/\I(Y) \to \KK[\XX]/\I(V)$
being injective and integral.

\begin{lemma}\label{lemma:extension}
  Let $\KK \subset \LL$ be two fields, let $I,J$ be ideals in
  respectively $\KK[\YY]=\KK[y_1,\dots,y_m]$ and
  $\KK[\XX]=\KK[x_1,\dots,x_n]$ and let $I',J'$ be their extensions in
  respectively $\LL[\YY]$ and $\LL[\XX]$. Let finally
  $\ff=(f_1,\dots,f_m)$ be in $\KK[\XX]$, such that for $g$ in $I$, $g
  \circ \ff$ is in $J$.

  Consider the ring homomorphisms
  \[
    \zeta_{\KK}:\KK[\YY]/I
    \to \KK[\XX]/J \text{ and }
    \zeta_{\LL}:\LL[\YY]/I' \to
    \LL[\XX]/J',
  \]
  that both map $y_j$ to $f_j$, for all $j$.  Then,
  $\zeta_{\KK}$ is injective, resp.\ integral, if and only if
  $\zeta_{\LL}$ is.
\end{lemma}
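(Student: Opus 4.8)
The plan is to reduce everything to two standard facts of commutative algebra: (a) extension of scalars along a field extension is faithfully flat, so it preserves and reflects injectivity of module maps; and (b) integrality can be tested on algebra generators, and the relevant integral dependence relations, being polynomial identities with coefficients that can be taken in the smaller field, descend and ascend along $\KK \subset \LL$. Throughout I would write $A_\KK = \KK[\YY]/I$, $B_\KK = \KK[\XX]/J$, and $A_\LL = \LL[\YY]/I'$, $B_\LL = \LL[\XX]/J'$; since $I' = I \otimes_\KK \LL$ (extension of an ideal under the flat, in fact free, map $\KK[\YY]\to\LL[\YY]$), one has canonical identifications $A_\LL = A_\KK \otimes_\KK \LL$ and $B_\LL = B_\KK \otimes_\KK \LL$, and under these identifications $\zeta_\LL = \zeta_\KK \otimes_\KK \id_\LL$. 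I should first check that $\zeta_\KK$ is well-defined, which is exactly the hypothesis that $g\circ\ff \in J$ for $g \in I$, and that $\zeta_\LL$ is then automatically well-defined since $I'$ is generated by $I$.

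For injectivity: since $\LL$ is a flat (even free) $\KK$-module, the functor $-\otimes_\KK \LL$ is exact, hence $\ker(\zeta_\LL) = \ker(\zeta_\KK)\otimes_\KK\LL = \ker(\zeta_\KK)\otimes_\KK\LL$. Because $\LL\neq 0$ and $\LL$ is faithfully flat over $\KK$, $\ker(\zeta_\KK)\otimes_\KK\LL = 0$ if and only if $\ker(\zeta_\KK) = 0$. This gives the equivalence for injectivity in both directions at once. (Concretely, in the forward direction one can also argue elementarily: a $\KK$-basis of $A_\KK$ remains an $\LL$-basis of $A_\LL$, so an element of $\ker\zeta_\KK$ is a nonzero $\LL$-linear, indeed $\KK$-linear, combination killed by $\zeta_\LL$ — but the faithful flatness argument is cleanest.)

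For integrality: the ``if'' direction is the easy one. Suppose $\zeta_\LL$ is integral. Each $\bar{x}_k \in B_\KK \subset B_\LL$ then satisfies a monic equation $\bar x_k^{N} + c_{N-1}\bar x_k^{N-1} + \dots + c_0 = 0$ with $c_j \in A_\LL = \zeta_\LL(A_\LL)$ — wait, one must be slightly careful: integrality of the map means each element of $B_\LL$ is integral over the subring $\zeta_\LL(A_\LL)$. Expand each coefficient of such a relation in a $\KK$-basis of $A_\LL$ consisting of elements of $A_\KK$ (e.g.\ a monomial basis): one obtains, by comparing the $\KK$-components, a collection of relations over $\KK$; a short argument — specialising the $\LL$-coordinates, or using that the monic relation can be rewritten with coefficients in $A_\KK$ after choosing the basis appropriately — shows $\bar x_k$ is already integral over $\zeta_\KK(A_\KK)$. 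Since the $\bar x_k$ generate $B_\KK$ as a $\KK$-algebra (hence as a $\zeta_\KK(A_\KK)$-algebra), $B_\KK$ is integral over $\zeta_\KK(A_\KK)$, i.e.\ $\zeta_\KK$ is integral. The ``only if'' direction is immediate: if each generator $\bar x_k$ of $B_\KK$ satisfies a monic polynomial over $\zeta_\KK(A_\KK)$, the same polynomial, viewed in $B_\LL$ over $\zeta_\LL(A_\LL)$, witnesses integrality of $\bar x_k$ there; since integral elements form a subring and the $\bar x_k$ generate $B_\LL$ as an $\LL$-algebra, $\zeta_\LL$ is integral.

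The main obstacle is the descent of integrality (the ``if'' direction for integrality): one has a monic relation with coefficients in $A_\LL$ and must produce one with coefficients in $A_\KK$. The clean way is: fix a $\KK$-basis $(e_\alpha)$ of $A_\KK$, which is also an $\LL$-basis of $A_\LL$; write each coefficient $c_j = \sum_\alpha \lambda_{j,\alpha} e_\alpha$ with $\lambda_{j,\alpha}\in\LL$; the element $\bar x_k$ and all powers $\bar x_k^i$ lie in the $\KK$-span inside $B_\LL$ of the (finite) set $\{\zeta_\KK(e_\alpha)\bar x_k^{\,i}\}$, so the monic relation is an $\LL$-linear dependence among finitely many elements of $B_\KK$; such a dependence has a nonzero solution over $\LL$ iff it has one over $\KK$ (solving a linear system), but we need the \emph{monic} one to persist. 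To fix monicity, observe instead that $B_\KK$ is a finitely generated $\KK$-algebra and $\zeta_\LL$ integral plus $B_\LL$ finitely generated over $A_\LL$ implies $B_\LL$ is a finite $A_\LL$-module; then $B_\KK \hookrightarrow B_\LL$ is contained in a finite $A_\LL$-module, and by faithful flatness / descent of module-finiteness along $\KK\subset\LL$, $B_\KK$ is a finite $A_\KK$-module — hence $\zeta_\KK$ is integral (indeed finite). I would write the argument in this last form, as it sidesteps the bookkeeping: finiteness of $B_\LL$ over $A_\LL$ descends to finiteness of $B_\KK$ over $A_\KK$ because $B_\LL = B_\KK\otimes_\KK\LL$ and $A_\LL = A_\KK\otimes_\KK\LL$, and a finitely generated module $M = B_\KK$ over the Noetherian ring $A_\KK$ with $M\otimes_\KK\LL$ finite over $A_\KK\otimes_\KK\LL$ is itself finite over $A_\KK$ (faithful flatness of $A_\KK\to A_\LL$, applied to the cokernel of a map $A_\KK^r \to M$ hitting $\LL$-module generators).
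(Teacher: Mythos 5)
Your proof is correct, but takes a genuinely different route than the paper's. For injectivity, the paper notes that $\zeta_\KK$ is injective iff $I$ equals the elimination ideal $(J\,\KK[\YY,\XX] + \langle y_1 - f_1,\dots,y_m-f_m\rangle) \cap \KK[\YY]$, a condition that is Gr\"obner-basis checkable and field-insensitive (a Gr\"obner basis over $\KK$ remains one over $\LL$); you instead invoke faithfully flat base change, $\ker\zeta_\LL = \ker\zeta_\KK\otimes_\KK\LL$. For the harder direction of integrality, the paper fixes the degrees of the monic witnesses $G_j$, linearizes the membership $G_j(\ff,x_j)\in J$ in the non-leading coefficients (reduction modulo a $\KK$-Gr\"obner basis of $J$ is $\KK$-linear), and uses that solvability of a $\KK$-linear system is unchanged under extension to $\LL$; you instead route through module-finiteness, descending ``$B_\LL$ finite over $A_\LL$'' to ``$B_\KK$ finite over $A_\KK$'' by choosing $A_\LL$-module generators inside $B_\KK$ and applying faithfully flat descent to the cokernel of $A_\KK^r\to B_\KK$. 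This is a clean way to sidestep the monicity issue you correctly identified as a pitfall in the na\"ive coefficient-matching attempt. Both approaches give the same result; the paper's is more elementary and effective, fitting the algorithmic setting, while yours is more conceptual and packaged in standard commutative-algebra facts. One small slip to fix: you refer to $B_\KK$ as ``a finitely generated module over the Noetherian ring $A_\KK$'' before that is known (a priori it is only a finitely generated $A_\KK$-algebra), and Noetherianity is not actually used --- faithful flatness shows any module, finitely generated or not, vanishes once its base change to $\LL$ does.
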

\begin{proof}
  Injectivity of $\zeta_\KK$ is equivalent to equality between ideals $I=(J
  \KK[\YY,\XX] + \langle y_1 -f_1,\dots, y_m-f_m\rangle) \cap \KK[\YY]$;
  similarly, injectivity of $\zeta_{\LL}$ is equivalent to $I'=(J'\LL[\YY,\XX] +
  \langle y_1 -f_1,\dots, y_m-f_m\rangle) \cap \LL[\YY]$. These properties can
  be determined by Gr\"obner basis calculations; since the generators of $I,J$
  are the same as those of $I',J'$, they are thus equivalent.

  Next, integrality of $\zeta_{\KK}$ directly implies that
  of $\zeta_{\LL}$. Conversely, integrality of $\zeta_{\LL}$ is
  equivalent to the existence of polynomials $G_1,\dots,G_n$ in
  $\LL[y_1,\dots,y_m,s]$, all monic in $s$, such that
  $G_j(f_1,\dots,f_m,x_j)$ is in $J'$ for all $j$.
If we assume that such polynomials exist, 
we can then linearize these membership equalities,
reducing such properties to the existence of a solution to certain
linear systems with entries in $\KK$. Since we know that solutions
exist with entries in $\LL$, some must also exist with entries in
$\KK$. This then yields integrality of $\zeta_{\KK}$.
\end{proof}

The Noether normalization lemma says that for $V$ $r$-dimensional and
$Y=\Kclos^r$, the restriction of a generic linear mapping $\Kclos^n
\to \Kclos^m$ to $V$ is finite. We give here a proof of this lemma
adapted to our setting, where the shape of the projections we perform
is made explicit. We start with a statement for ideals rather than
algebraic sets.

\begin{proposition}[Noether normalization]\label{prop:noether}
  Let $\KK$ be a field, let $J$ be an ideal in $\KK[\XX]$ and let $r$
  be the dimension of its zero-set over an algebraic closure of $\KK$. 
  Let further $\aa$ be $r(n-r)$ new indeterminates. Then the
  $\KK(\aa)$-algebra homomorphism
 \[
  \begin{array}{cccc}
  \zeta_{\aa}\colon&\KK(\aa)[\ZZi]& \longrightarrow &\KK(\aa)[\XX]/ 
J\KK(\aa)[\XX] \\[0.5em]
  &z_j&\longmapsto& \displaystyle x_j + \sum_{k=1}^{n-r} a_{j,k}\,x_{r+k} 
\mod{J}
  \end{array}
  \]
  is injective and makes $\KK(\aa)[\XX]/J \KK(\aa)[\XX]$ integral over
  $\KK(\aa)[\ZZi]$.
\end{proposition}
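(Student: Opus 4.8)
The plan is to reduce to the classical Noether normalization over an infinite field, using the genericity of the parameters $\aa$ to control which variables can be taken as ``free'' ones. Working over the rational function field $\KK(\aa)$ makes this cleaner: since $\aa$ are indeterminates, any polynomial condition that can fail only on a proper subvariety of the parameter space is automatically satisfied generically, hence holds identically over $\KK(\aa)$. So the strategy is: (i) recall that, over an infinite field, for a suitably generic choice of $r(n-r)$ scalars $\blambda=(\lambda_{j,k})$, the linear forms $z_j = x_j + \sum_k \lambda_{j,k} x_{r+k}$ (for $1\le j\le r$) together with $x_{r+1},\dots,x_n$ form a system of coordinates in which $\KK[\XX]/J$ is integral over $\KK[z_1,\dots,z_r]$; then (ii) transfer this to the generic parameter $\aa$ via a specialization/descent argument.

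First I would set up step (i). By a standard argument (see e.g.\ the proof of Noether normalization for the coordinate ring of an $r$-dimensional variety), there is a non-zero polynomial $\Delta\in\KK[\aa]$ such that for any $\blambda$ in an algebraic closure $\bar\KK$ with $\Delta(\blambda)\neq 0$, each coordinate function $x_i$ (for $i>r$) is integral over $\bar\KK[z_1(\blambda),\dots,z_r(\blambda)]$ modulo $J$, and the $z_j(\blambda)$ are algebraically independent modulo $J$. Concretely: the algebraic independence amounts to the non-vanishing of certain resultant-type polynomials in $\blambda$, and the integral dependence of each $x_i$ comes from the fact that, after the generic change of variables, $J$ contains (up to a scalar in $\blambda$) a monic polynomial in $x_i$ with coefficients in $\KK[z_1(\blambda),\dots,z_r(\blambda),x_{i+1},\dots,x_n]$ — this is exactly the classical argument, iterated from $x_n$ down to $x_{r+1}$, and the ``leading coefficient'' whose non-vanishing we need is again a polynomial in $\blambda$. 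Collecting all these conditions into a single non-zero $\Delta\in\KK[\aa]$, I get the desired genericity statement at scalar points.

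Then comes step (ii), the transfer to the generic fiber. Since $\Delta\neq 0$ in $\KK[\aa]$, in particular $\Delta$ is a unit in $\KK(\aa)$ after… more precisely: the defining equations witnessing that $\zeta_{\blambda}$ is injective and integral are polynomial identities in $\blambda$ with coefficients in $\KK$, each holding on the Zariski-dense open set $\Ocal(\Delta)$; hence they hold identically as elements of $\KK[\aa]$ (a polynomial vanishing on a Zariski-dense subset of affine space over an infinite field is zero), and therefore also in $\KK(\aa)$. Alternatively, and perhaps more robustly, I would phrase it as: the homomorphism $\zeta_{\aa}$ is the base change of $\zeta_{\blambda}$ to the generic point, and the relevant monic integral equations $G_i(z_1,\dots,z_r,s)$ with $G_i(z_1,\dots,z_r,x_i)\in J\KK[\aa][\XX]$ can be written down with coefficients in $\KK(\aa)$ directly from the Gröbner-basis description of $J$ in the new coordinates — indeed, this is precisely the situation covered by Lemma~\ref{lemma:extension} with $\KK$ there replaced by the prime field and $\LL=\KK(\aa)$, or one simply notes that injectivity and integrality are each encoded by a finite system of linear equations over $\KK(\aa)$ whose solvability we have just established at all generic scalar specializations. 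Either way, injectivity of $\zeta_{\aa}$ and integrality of $\KK(\aa)[\XX]/J\KK(\aa)[\XX]$ over $\KK(\aa)[\ZZi]$ follow.

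The main obstacle I expect is making step (i) genuinely explicit with the prescribed shape of the change of variables: the classical Noether normalization argument produces a change of coordinates of the form $x_j\mapsto x_j+c_j x_n$ applied repeatedly, and one has to check that composing these triangular substitutions still yields, after relabelling, a map of the exact form $z_j = x_j + \sum_{k=1}^{n-r} a_{j,k} x_{r+k}$ for $1\le j\le r$ with the $x_{r+k}$ left untouched — i.e.\ that the ``free'' block can always be taken to be the last $n-r$ original coordinates once the parameters are generic. This is true but requires a careful bookkeeping of which minors of the coefficient matrix must be invertible; once that single non-zero polynomial $\Delta\in\KK[\aa]$ is identified, the descent to $\KK(\aa)$ in step (ii) is essentially formal.
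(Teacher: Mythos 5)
Your proposal takes a genuinely different route from the paper's. You propose to first establish the statement at generic scalar specializations $\blambda \in \bar\KK^{r(n-r)}$, then descend to the generic point $\KK(\aa)$ of the parameter space. The paper instead works directly over the function field: it proceeds by induction on the number $n$ of variables, at each step choosing a non-zero $f \in J$, applying the shear $x_j \mapsto x_j + \ell_j x_n$ with a new block of indeterminate parameters $\bell$ to make $f$ monic in $x_n$, invoking the induction hypothesis on the contracted ideal $J'^c \subset \KK(\bell)[\YY]$ with yet another block $\bb$, and then \emph{recomposing} the shears. The bookkeeping you flag as "the main obstacle'' is precisely the non-trivial content of this induction step: the paper computes the composite linear substitution explicitly, observes that the resulting coefficient matrix $\mm$ has entries $\KK$-algebraically independent in $\KK(\bell,\bb)$, and uses the resulting $\KK$-isomorphism $\KK(\aa) \to \KK(\mm)$ (together with Lemma~\ref{lemma:extension}) to restrict scalars back to $\KK(\aa)$ while preserving the prescribed shape $z_j = x_j + \sum_{k=1}^{n-r} a_{j,k} x_{r+k}$.

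There are two concrete gaps in your plan. First, the shape-bookkeeping you identify as the main obstacle is stated to be "true but requires careful bookkeeping'' and then left unresolved; as above, this is where the real work lies, and your sketch does not close it. Second, the descent in step (ii) is not correct as written. You invoke Lemma~\ref{lemma:extension} "with $\KK$ there replaced by the prime field and $\LL = \KK(\aa)$,'' but that lemma transfers injectivity/integrality between two fields over which all the data (the ideals and the polynomial map $\ff$) are already defined; here the map $\zeta_{\aa}$ has coefficients $a_{j,k}$ that do not lie in the prime field, so the hypotheses of the lemma are not met. Your alternative phrasing ("$\zeta_{\aa}$ is the base change of $\zeta_{\blambda}$'') also reverses the actual direction: $\zeta_{\blambda}$ is the \emph{specialization} of $\zeta_{\aa}$. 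The linear-system version of the descent is closer to correct, and injectivity does descend cleanly (a non-zero syzygy $P \in \KK[\aa][\ZZ]$ with $P\circ\zeta_{\aa} \in J\KK[\aa][\XX]$ would specialize to a contradiction at generic $\blambda$). But integrality is subtler: you would need to first exhibit a uniform degree bound $D$ for the monic relations $G_i$ valid across a single dense open set of $\blambda$ (e.g.\ via a Gr\"obner-basis argument over $\KK(\aa)$ and its generic specialization), since without such a bound the "finite linear system'' you invoke changes with $\blambda$. Your plan does not supply this bound.
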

\begin{proof}
We proceed by induction on the number $n$ of variables. The case $n=0$
is straightforward. Assume now that $n>0$ and that the statement holds
for $k<n$ variables. Remark that if $J = \{0\}$, we have $r=n$,
$\zeta_{\aa}$ is the isomorphism $\KK[z_1\dots,z_n]\to\KK[\XX]$
mapping $z_i$ to $x_i$ for all $i$ (which is then integral); in this
case, we are done.

Assume now that $J \neq \{0\}$, and let $f$ be non-zero in $J$. Let
$\delta$ be the total degree of $f$ and let
$\bm{\ell}=(\ell_1,\dotsc,\ell_{n-1})$ be new indeterminates.  Writing
$f=\sum_{i_1,\dotsc,i_n}c_{i_1,\dotsc,i_n}x_1^{i_1}\cdots x_n^{i_n}$,
the leading coefficient of
$f(x_1+\ell_1x_n,\dotsc,x_{n-1}+\ell_{n-1}x_n,x_n)$ in $x_n$ is
 \[
   \sum_{i_1+\cdots+i_{n}=\delta} c_{i_1,\dotsc,i_n} \ell_1^{i_1}\cdots 
\ell_{n-1}^{i_{n-1}} = f_{\delta}(\ell_1,\dotsc,\ell_{n-1},1)
 \]
 where $f_{\delta}$ is the homogeneous degree-$\delta$ component of $f$.
 Therefore, taking $F=f_{\delta}(\ell_1,\dotsc,\ell_{n-1},1)$, $F$ is not 
the zero polynomial and the polynomial
\[
   \frac{1}{F}f(x_1+\ell_1 x_n,\dotsc, x_{n-1}+\ell_{n-1} x_n, x_n) 
\in 
\KK(\bell)[\XX]
 \]
 is monic in $x_n$. Let further $J'$ be the extension of $J$ to
 $\KK(\bell)[\XX]$, let $\YY=(y_1,\dotsc,y_{n-1})$ be new
 indeterminates and consider the $\KK(\bell)$-algebra homomorphism
\[
\begin{array}{cccc}
  \tau\colon&\KK(\bell)[\YY]& \longrightarrow &\KK(\bell)[\XX]\\[0.5em]
  &y_j&\longmapsto& x_j - \ell_j x_n
  \end{array};
\]
the contraction $J'^{c}=\tau^{-1}(J')$ is an ideal in
$\KK(\bell)[\YY]$.  For $1\leq j \leq n-1$, let $\ybar[j]=y_j
\mod{J'^c}$ and for $1\leq k \leq n$ let $\xbar[j]=x_j \mod{J'}$. Then
let
\[
\begin{array}{cccc}
  \taubar\colon&\KK(\bell)[\YY]/J'^c& \longrightarrow
  &\KK(\bell)[\XX]/J'\\[0.5em] &\ybar[j] &\longmapsto& \xbar[j] -
  \ell_j \xbar[n]
  \end{array}
\]
and
\[
  g(s) = 
\frac{1}{F}f\big(\ybar[1]+\ell_1 
s , \dotsc, \ybar[n-1]+\ell_{n-1} s, s\big) \in 
\Big(\KK(\bell)[\YY]/J'^c\Big)[s];
\]
this is a monic polynomial in $s$.
If we extend $[\tau]$ to a $\KK(\bell)$-algebra homomorphism
$\KK(\bell)[\YY]/J'^c\,[s] \to \KK(\bell)[\XX]/J'\,[s]$,
$g$ satisfies
\[
 [\tau](g)\big(\xbar[n]\big) = 
\frac{1}{F} 
f(\xbar[1],\dotsc,\xbar[n]) = 0,
\]
since $f \in J$ by assumption. Since $[\tau]$ is by construction
injective, it makes $\KK(\bell)[\XX]/J'$ an integral extension of
$\KK(\bell)[\YY]/J'^c$ (the integral dependence relation for 
$[x_j]$, for $j < n$, is obtained by replacing $s$ by $(s-[y_j])/\ell_j$ in
$g$ and clearing denominators).

In particular, these two rings have the same Krull
dimension~\cite[Corollary~2.13]{Kunz85}. This latter dimension is the
same as that of $\KK[\XX]/J$ (because it can be read off a Gr\"obner
basis of $J$, and such Gr\"obner bases are also Gr\"obner bases of
$J'$), that is, $r$. In other words, the zero-set of $J'^c$ over an
algebraic closure of $\KK(\bell)$ has dimension $r$.

Then we can apply the induction hypothesis to $J'^c \subset
\KK(\bell)[\YY]$. If we let $\bb=(b_{i,j})_{1 \le i \le r, 1 \le j \le
  n-1-r} $ be $r(n-1-r)$ new indeterminates, and introduce $\ZZ=(\ZZi)$,
the $\KK(\bell,\bb)$-algebra homomorphism
 \[
  \begin{array}{cccc}
  \eta_{\bb} \colon&\KK(\bell,\bb)[\ZZ]& \longrightarrow 
&\KK(\bell,\bb)[\YY]/J'^c \KK(\bell,\bb)[\YY]\\[0.5em]
  &z_j&\longmapsto& \displaystyle \ybar[j] + \sum_{k=1}^{n-1-r} 
b_{j,k}\,\ybar[r+k]
  \end{array}
  \]
 is thus injective and realizes an integral extension of the
 polynomial ring $\KK(\bell,\bb)[\ZZ]$. On the other hand, by
 Lemma~\ref{lemma:extension}, the extended map \[
 \taubar^e:\KK(\bell,\bb)[\YY]/J'^c\KK(\bell,\bb)[\YY]
 \longrightarrow\KK(\bell,\bb)[\XX]/J'\KK(\bell,\bb)[\XX]\] remains
 injective and integral.  By transitivity, it follows that the
 $\KK(\bell,\bb)$-algebra homomorphism
\[
  \begin{array}{cccc}
 \taubar^e \circ\eta_{\bb}  \colon&\KK(\bell,\bb)[\ZZ]& \longrightarrow 
&\KK(\bell,\bb)[\XX]/J' \KK(\bell,\bb)[\XX]\\[0.5em]
  &z_j&\longmapsto& \displaystyle \xbar[j] + \sum_{k=1}^{n-r} 
m_{j,k}\,\xbar[r+k]
  \end{array},
 \]
 where for all $1\leq j \leq r$,
 \[
 \mm_j = \left(b_{j,1},\dotsc, b_{j,n-1-r},\; 
-\ell_j - \sum_{k=1}^{n-1-r} b_{j,k} \, \ell_{r+k}\right),
 \]
 is injective and integral as well. In particular, the restriction of
 $\taubar^e \circ\eta_{\bb}$ to a mapping $\KK(\mm)[\ZZ] \to
 \KK(\mm)[\XX]/J \KK(\mm)[\XX]$ is still injective and integral, by
 Lemma~\ref{lemma:extension} (here, we write 
$\KK(\mm)=\KK(m_{1,1},\dots,m_{r(n-r)})$).
 
 Letting $\aa$ be $r(n-r)$ new indeterminates, we observe that $\iota:
 a_{i,j} \mapsto m_{i,j}$ defines a $\KK$-isomorphism $\KK(\aa) \to
 \KK(\mm) \subset \KK(\bell,\bb)$, since the entries of $\mm$ are
 $\KK$-algebraically independent. The conclusion follows.
 \end{proof}

\begin{corollary}\label{coro:noether}
  Let $V\subset \CC^n$ be an $r$-dimensional algebraic set. Let
  $\aa=(a_{i,j})_{1 \le i \le r, 1 \le j \le n-r}$ be $r(n-r)$ new
  indeterminates and let $\VKclos \subset \Kclos^n$ be the extension
  of $V$ to the algebraic closure $\Kclos$ of $\KK=\CC(\aa)$.  Then
  the restriction $\tilde \ff: \VKclos \to \Kclos^r$ of the polynomial
  map $\ff=(f_1,\dots,f_r)$ given by
  $$f_j =x_j + \sum_{k=1}^{n-r} a_{j,k}\,x_{r+k},\quad 1 \le j \le r $$
  is finite.
\end{corollary}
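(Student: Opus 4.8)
The plan is to deduce the corollary directly from Proposition~\ref{prop:noether}, applied to the radical ideal $J=\I(V)\subset\CC[\XX]$, combined with Lemma~\ref{lemma:extension} to base change from the ground field $\CC(\aa)$ up to its algebraic closure $\Kclos$.

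First I would set $J=\I(V)$, which is radical since $V$ is an algebraic set, and whose zero-set over an algebraic closure of $\CC$ is $V$ itself — $\CC$ being already algebraically closed — hence of dimension $r$. Proposition~\ref{prop:noether} with $\KK=\CC$ then asserts that the $\CC(\aa)$-algebra homomorphism
\[
  \zeta_{\aa}\colon \CC(\aa)[\ZZi] \longrightarrow \CC(\aa)[\XX]/J\CC(\aa)[\XX],\qquad z_j\longmapsto x_j+\sum_{k=1}^{n-r}a_{j,k}\,x_{r+k}\bmod J,
\]
is injective and makes the target integral over the source. Note that the indeterminates $\aa$ here are exactly the $r(n-r)$ indeterminates $(a_{i,j})_{1\le i\le r,\,1\le j\le n-r}$ of the corollary, and that the $f_j$ of the corollary are precisely the images $\zeta_{\aa}(z_j)$.

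Next I would invoke Lemma~\ref{lemma:extension} for the field extension $\CC(\aa)\subset\Kclos$, with the ideal $I=(0)\subset\CC(\aa)[\ZZi]$ (whose extension is $(0)\subset\Kclos[\ZZi]$), the ideal $J\CC(\aa)[\XX]$ in the role of ``$J$'' and its extension $J\Kclos[\XX]$ in the role of ``$J'$'', and the tuple $\ff=(f_1,\dots,f_r)$; the compatibility hypothesis of the lemma ($g\circ\ff\in J\CC(\aa)[\XX]$ for $g\in I$) is vacuous since $I=(0)$. The lemma then transports injectivity and integrality of $\zeta_{\aa}$ to the homomorphism $\zeta_{\Kclos}\colon \Kclos[\ZZi]\to\Kclos[\XX]/J\Kclos[\XX]$ sending $z_j$ to $f_j$.

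Finally I would identify $\Kclos[\XX]/J\Kclos[\XX]$ with the coordinate ring $\Kclos[\VKclos]$ and recognize $\zeta_{\Kclos}$ as the pullback $\tilde\ff^{*}$ of the restriction $\tilde\ff\colon\VKclos\to\Kclos^r$. Since $\CC$ is algebraically closed, hence perfect, and $\CC[\XX]/J$ is reduced, the base change $\Kclos[\XX]/J\Kclos[\XX]=(\CC[\XX]/J)\otimes_{\CC}\Kclos$ is again reduced; by Hilbert's Nullstellensatz over $\Kclos$ this yields $J\Kclos[\XX]=\I(\VKclos)$, and also $\Kclos[\ZZi]=\Kclos[\Kclos^r]$. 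Hence $\zeta_{\Kclos}=\tilde\ff^{*}$: its injectivity says $\tilde\ff(\VKclos)$ is dense in $\Kclos^r$, and its integrality is the remaining defining condition, so $\tilde\ff$ is finite. The only mildly delicate point — rather than a genuine obstacle — is this last preservation of radicality under the field extension, which is exactly where perfectness of $\CC$ enters; everything else is a direct combination of Proposition~\ref{prop:noether} and Lemma~\ref{lemma:extension}.
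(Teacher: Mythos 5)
Your proposal is correct and follows essentially the same route as the paper's proof: apply Proposition~\ref{prop:noether} with $J=\I(V)$ over the base field $\CC$, transport injectivity and integrality through the extension $\CC(\aa)\subset\Kclos$ via Lemma~\ref{lemma:extension}, and use perfectness of $\CC$ to see that $J\Kclos[\XX]$ stays radical and hence equals $\I(\VKclos)$. The only difference is that you spell out a few of the bookkeeping steps (the choice $I=(0)$ in Lemma~\ref{lemma:extension}, the tensor-product reformulation of reducedness) that the paper leaves implicit.
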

\begin{proof}
  Let $J$ be the defining ideal of $V$ in $\CC[\XX]$. Letting
  $\ZZ=\ZZi[r]$ be $r$ new indeterminates, the previous proposition
  shows that $\tilde \ff^*: \CC(\aa)[\ZZ] \to \CC(\aa)[\XX]/J
  \CC(\aa)[\XX]$ is injective and integral. By
  Lemma~\ref{lemma:extension}, we further deduce that it is also the
  case for the extension of $\ff^*: \Kclos[\ZZ] \to \Kclos[\XX]/J
  \Kclos[\XX]$.

  Because $\CC$ is algebraically closed, $J$ remains radical in
  $\Kclos[\XX]$, so that $J \Kclos[\XX]$ is the defining ideal of
  $\VKclos$, and we are done.
\end{proof}

\subsection{Finiteness on polar varieties}
In this section, we prove the core of the
Proposition~\ref{prop:fibergenalg}, by proving finiteness properties
on the restriction of the considered morphisms to their associated
polar varieties.

\begin{proposition}\label{prop:finitepolar}
  Let $V \subset \CC^n$ be a $d$-equidimensional algebraic set with finitely 
  many singular points and let $\polun \in \CC[\XX]$. 
  For $\balf=(\balf_1,\dots,\balf_{d+1})$  in $\CC^{(d+1)n}$, and for $2\leq j 
  \leq d+1$, let
  \begin{align*}
    \phi_1(\XX,\balf_1) = \polun(\XX) + 
    \sum_{k=1}^n \alf_{1,k}x_k
    \et
    \phi_j(\XX,\balf_j) = \sum_{k=1}^n \alf_{j,k}x_k.
  \end{align*}
  Then for any $1\leq i \leq d+1$, there exists a non-empty Zariski
  open set $\ZOpen_i\subset \CC^{(d+1)n}$ such that if $\balf \in
  \ZOpen_i$ and $\bphi = \left(\phi_1(\XX,\balf_1), \dotsc,
  \phi_{d+1}(\XX,\balf_{d+1}) \right)$, then the restriction
  of $\bphi_{i-1}$ to $\Wphii$ is a finite map.
\end{proposition}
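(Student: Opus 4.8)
The plan is to reduce the statement to the case of linear forms via the graph of $\polun$, to establish the required finiteness over the field of generic parameters by Noether normalization, and then to specialize back to a Zariski-dense set of parameters. First I would pass to the graph $\tilde{V}=\{(\polun(\yy),\yy)\;:\;\yy\in V\}\subset\CC^{n+1}$, with coordinates $(x_0,x_1,\dotsc,x_n)$: the projection forgetting $x_0$ is an isomorphism of algebraic sets $\tilde{V}\to V$, so $\tilde{V}$ is $d$-equidimensional with finitely many singular points, and under this isomorphism $\phi_1$ corresponds to the linear form $L_1=x_0+\sum_k\alf_{1,k}x_k$ while each $\phi_j$, $2\le j\le d+1$, corresponds to $L_j=\sum_k\alf_{j,k}x_k$. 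Since an isomorphism of algebraic sets identifies regular loci and differentials, it carries $\Wphii$ isomorphically onto the polar variety $W(\bm{L}_i,\tilde{V})$ (with $\bm{L}_k=(L_1,\dotsc,L_k)$) and intertwines the restriction of $\bphi_{i-1}$ to $\Wphii$ with that of $\bm{L}_{i-1}$ to $W(\bm{L}_i,\tilde{V})$; this is the argument underlying Lemma~\ref{lem:InciPolar}. Hence it suffices to produce a \NEZO set $\ZOpen_i\subset\CC^{(d+1)n}$ over which $\bm{L}_{i-1}$ restricts to a finite map on $W(\bm{L}_i,\tilde{V})$. The boundary cases are easy: for $i=1$ one only needs $W(\bm{L}_1,\tilde{V})$ to be finite, which follows from upper-semicontinuity of fibre dimension on the incidence set $\{(\balf_1,\yy)\;:\;\yy\in\reg(\tilde{V}),\ d_\yy L_1 \text{ vanishes on } T_\yy\tilde{V}\}$; for $i=d+1$ one has $W(\bm{L}_{d+1},\tilde{V})=\tilde{V}$ (the differential of $\bm{L}_{d+1}$ cannot surject onto $\CC^{d+1}$ on a $d$-dimensional smooth locus) and the statement is Noether normalization of $\tilde{V}$, i.e.\ Corollary~\ref{coro:noether} up to relabelling coordinates and a $\GL$-action on the target matching the shape of $\bm{L}_d$.

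For the intermediate range I would introduce indeterminates $\aa=(\aa_1,\dotsc,\aa_{d+1})$ in place of $\balf$, set $\KK=\CC(\aa)$ and $\Kclos=\overline\KK$, and study $W_i:=W(\bm{L}_i,\tilde{V}_\Kclos)\subset\Kclos^{n+1}$, where $\tilde{V}_\Kclos$ is the base change of $\tilde{V}$ (still radical and $d$-equidimensional, as $V$ is defined over the algebraically closed field $\CC$). By Lemma~\ref{lem:lowerboundglobal} every irreducible component of $W_i$ has dimension at least $i-1$; crucially no upper bound is assumed, since the equidimensionality claim of Proposition~\ref{prop:fibergenalg} is meant to be deduced from the present result. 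I would then prove two things. \emph{(a) Quasi-finiteness:} working along the atlas $\bchi$ of $(\tilde{V},\sing(\tilde{V}))$, and using that on an affine slice $\{\bm{L}_{i-1}=\zz\}$ a regular point of $\tilde{V}$ lies on $W_i$ exactly when $dL_i$ vanishes on the tangent space of the slice, one shows — by the Eagon--Northcott argument in the proof of Lemma~\ref{lem:lowerboundglobal}, now with the slicing forms $L_1,\dotsc,L_{i-1}$ adjoined — that every fibre $W_i\cap\bm{L}_{i-1}^{-1}(\zz)$ is finite; together with the lower bound this forces $W_i$ to be $(i-1)$-equidimensional and $\bm{L}_{i-1}|_{W_i}$ dominant onto $\Kclos^{i-1}$. \emph{(b) Properness:} applying Proposition~\ref{prop:noether} to the defining ideal of $W_i$ in $\KK[x_0,\dotsc,x_n]$, and exploiting that the directions $L_1,\dotsc,L_{i-1}$ already carry the generic coefficients $\aa_1,\dotsc,\aa_{i-1}$ while the extra direction $L_i$ carries the independent generic parameter $\aa_i$, one deduces that $\KK[W_i]$ is integral over the subalgebra generated by $L_1,\dotsc,L_{i-1}$. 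Together, (a) and (b) say that $\bm{L}_{i-1}|_{W_i}$ is a finite morphism onto $\Kclos^{i-1}$.

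Finally I would specialize. Finiteness of $\bm{L}_{i-1}|_{W_i}$ over $\KK$ is certified by finitely many monic integral-dependence relations and one ideal-membership (dominance) certificate, all with coefficients in $\CC(\aa)$; clearing the finitely many denominators occurring in these data produces a non-zero $\Delta\in\CC[\aa]$ such that for every $\balf\in\Ocal(\Delta)$ the specialized morphism $\bm{L}_{i-1}|_{W(\bm{L}_i,\tilde{V})}$ remains finite. This is precisely the transfer principle of Lemma~\ref{lemma:extension}, together with the standard fact that the parameter locus over which a family of morphisms is finite is constructible and contains the generic point. Taking $\ZOpen_i$ to be $\Ocal(\Delta)$ intersected with the \NEZO loci on which the reductions of the first paragraph are valid, and transporting the conclusion back through the graph isomorphism, yields the claim for $\bphi_{i-1}|_{\Wphii}$.

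The main obstacle is the parameter coupling: $\Wphii$ and the morphism $\bphi_{i-1}$ restricted to it depend on overlapping generic parameters — $\balf_1,\dotsc,\balf_{i-1}$ enter both — so one cannot fix $V$ first and then invoke a generic projection, and the projections at our disposal are not the fully generic ones of Proposition~\ref{prop:noether} but only those of the constrained ``$\polun$ plus a generic linear form'' shape. Establishing (b) above — that the flag $(L_1,\dotsc,L_{i-1})$ of generic directions, together with the single extra generic direction $L_i$ used to carve out $W_i$, already places $W_i$ in Noether position relative to $\bm{L}_{i-1}$, uniformly enough to survive specialization — is the technical heart of the proof; it is where the slicing by $\{\bm{L}_{i-1}=\zz\}$ and the Eagon--Northcott/atlas description of polar varieties underlying Lemma~\ref{lem:lowerboundglobal} must be combined with care.
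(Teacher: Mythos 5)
Your plan has the right scaffolding (work over $\CC(\aa)$, pass through the graph of $\polun$, transfer by Lemma~\ref{lemma:extension}, and look for Noether position), and the two boundary cases $i=1$ and $i=d+1$ are treated essentially as in the paper. But the treatment of the intermediate range has a genuine gap, and it is precisely the one you flag at the end as ``the technical heart.''

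Concretely, part~(a) does not work as stated: the Eagon--Northcott argument of Lemma~\ref{lem:lowerboundglobal}, with the slicing forms $L_1,\dots,L_{i-1}$ adjoined, gives a \emph{lower} bound on the dimension of the components of $W_i\cap\bm{L}_{i-1}^{-1}(\zz)$ (Eagon--Northcott bounds the height of primes over a determinantal ideal from above, hence the dimension from below), so it cannot certify that these fibres are finite. Upper bounds on fibre dimension need a transversality/Sard input, and without one the ``together with the lower bound this forces $W_i$ to be $(i-1)$-equidimensional'' step is unsupported. Part~(b) is the coupling problem you identify but do not resolve: you cannot simply apply Proposition~\ref{prop:noether} to $W_i$ and read off integrality over $\KK[L_1,\dots,L_{i-1}]$, because $W_i$ already depends on $\aa_1,\dots,\aa_i$, so the parameters in the candidate normalizing forms are not independent of the variety.

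The paper resolves both issues at once by a \emph{decreasing induction} on $i$, proving $\Pcal(i)$: $\bvphi_{i-1}|_{W_i}$ is finite, for $i=d+1$ down to $1$. The inductive hypothesis $\Pcal(i+1)$ gives finiteness of $\bvphi_i|_{W_{i+1}}$; since $W_i\subset W_{i+1}$, the restriction $\bvphi_i|_{W_i}$ is also finite, and an algebraic Sard argument combined with Lemma~\ref{lem:lowerboundglobal} then yields that $W_i$ and its image $\bvphi_i(W_i)\subset\Kclos^i$ are both $(i-1)$-equidimensional. The crucial move is to apply Noether normalization not to $W_i$ but to the image $\bvphi_i(W_i)$, over $\CC(\aa)$ with \emph{fresh} parameters $\bell$; the resulting normalizing forms $y_j+\ell_j y_i$ pull back along $\bvphi_i$ to $\psi_j=\vphi_j+\ell_j\vphi_i$, which are row operations on $\bvphi_i$ and hence define the same $W_i$, and which can be rewritten as $\vphi_j(\XX,\mm_j)$ for a reparametrized tuple $\mm$ of $\CC$-algebraically independent entries of $\CC(\aa,\bell)$; the isomorphism $\CC(\aa,\bell)\to\CC(\mm)$ then returns you to $\CC(\aa)$, giving $\Pcal(i)$. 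This absorbs-and-reparametrize step is what decouples the normalizing directions from the defining data of $W_i$, and is absent from your proposal; without it, both (a) and (b) remain unproved.
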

\begin{proof}
  Let $\aa = (\aa_i)_{1\leq j \leq d+1}$, with $\aa_j =
  (a_{j,1},\dotsc,a_{j,n})$ for all $j$, be $(d+1)n$ new
  indeterminates, and let $\CC(\aa)$ be the field of rational
  fractions in the entries of $\aa$. We let $\Kclos$ be the algebraic
  closure of $\CC(\aa)$, and we denote by $\VKclos\subset\Kclos^n$ the
  extension of $V$ to $\Kclos$.  Let further
  \[
  \vphi_1(\XX,a_1) = \polun(\XX) + 
  \sum_{k=1}^n a_{1,k}x_k
  \et
  \vphi_j(\XX,a_j) = \sum_{k=1}^n a_{j,k}x_k, \quad 2 \le j \le d+1
  \]
  and define $\bvphi=(\vphi_1,\dotsc,\vphi_{d+1})$
  in $\CC(\aa)[\XX]$; as before, for $1 \le i \le d+1$, we write
  $\bvphi_i=(\vphi_1,\dotsc,\vphi_{i})$.
  We will prove the following property, which we call $ \Pcal(i)$, by
  decreasing mathematical induction, for $i=d+1,\dotsc,1$:
  \begin{description}
   \item [$\Pcal(i):$] the restriction of $\bvphi_{i-1}$ to
     $\Wphii[i][\VKclos][\bvphi]$ is a finite map.
  \end{description}
  Let us first see how to deduce the proposition from this claim;
  hence, we start by fixing $i$ in $1,\dots,d+1$ and assume that
  $\Pcal(i)$ holds.

Since $\bvphi_i$ and $\VKclos$ are defined by polynomials with
coefficients in $\CC(\aa)$, it is also the case for
$\Wphii[i][\VKclos][\bvphi]$ by~\cite[Lemma A.2]{SS2017}. Then
$\Pcal(i)$ shows (via the discussion preceding
Lemma~\ref{lemma:extension}) that the pullback $\tilde\bvphi_{i-1}^*:
\CC(\aa)[z_1,\dots,z_{i-1}] \to
\CC(\aa)[\XX]/\I(\Wphii[i][\VKclos][\bvphi])$ is injective and
integral.
  \begin{itemize}
  \item Injectivity means that the ideal generated by
    $\I(\Wphii[i][\VKclos][\bvphi])$ and the polynomials 
    $z_1-\vphi_1,\dots,z_{i-1}-\vphi_{i-1}$ in
    $\CC(\aa)[z_1,\dots,z_{i-1},\XX]$ has a trivial intersection with
    $\CC(\aa)[z_1,\dots,z_{i-1}]$. Then, this remains true for the
    restriction of $\bphi_{i-1}$ to $\Wphii$ for $\balf$ in a
    non-empty Zariski-open set in $\CC^{(d+1)n}$. For instance, it is
    enough to ensure that the numerators and denominators of the
    coefficients of all polynomials appearing in a lexicographic
    Gr\"obner basis computation for the ideal above, in
    $\CC(\aa)[z_1,\dots,z_{i-1},\XX]$, do not vanish at $\balf$.
  \item Integrality means that there exist $n$ monic polynomials
    $P_1,\dots,P_n$ in $\CC(\aa)[z_1,\dots,z_{i-1}][s]$ such that all
    polynomials
    $$P_j\left (\phi_1,\dots,\phi_{i-1},x_j\right), \quad 1 \le j \le
    n$$ belong to $\I(\Wphii[i][\VKclos][\bvphi]$ in $\CC(\aa)[\XX]$. Taking
    $G \in \CC[\aa]$ as the least common multiple of the denominators
    of all coefficients that appear in these membership relations, we
    see that for $\balf$ in $\CC^{(d+1)n}$, if $G(\balf) \ne 0$,
    $\bphi_{i-1}$ makes $\CC[\XX]/ \I(\Wphii[i][V][\bphi])$ integral
    over $\CC[z_1,\dots,z_{i-1}]$.
  \end{itemize}
 
\paragraph{Base case: $\bm{i=d+1}$} We prove
$\Pcal(d+1)$. As $\Tg_\yy \VKclos$ has dimension $d$, for every $\yy
\in \reg(\VKclos)$, the polar variety $\Wphii[d+1][\VKclos][\bvphi]$
is nothing but $\VKclos$ (since the latter only admits finitely many
singular points); hence, we have to prove that the restriction of
$\bvphi_d$ to $\VKclos$ is finite.

Let $y_1,\dots,y_d$ be new variables and consider the algebraic set
$V' \subset \CC^{d+n}$ defined by $y_1-\polun, y_2,\dots,y_d$ and all
polynomials $f$, for $f$ in $\I(V)$; as above, we denote its extension
to $\Kclos^{d+n}$ by $\VKclos'$. Apply Corollary~\ref{coro:noether} to
$V'$ (which is still of dimension $d$): we deduce that 
the restriction of $\bvphi_d$ to $\VKclos'$ is finite. 
Since $\VKclos'$ and $\VKclos$ are isomorphic (since $\VKclos'$ is a graph 
above $\VKclos$), we are done with this case.

\paragraph{Induction step: $\bm{1\leq i\leq d}$}
Assume now that $\Pcal(i+1)$ holds. Thus, the restriction of
${\bvphi_{i}}$ to a mapping $\Wphii[i+1][\VKclos][\bvphi] \to
\Kclos^i$ is finite. By \cite[Theorem 1.12]{Sh1994} this restriction
is a Zariski-closed map so that, since $\Wphii[i][\VKclos][\bvphi]
\subset \Wphii[i+1][\VKclos][\bvphi]$,
$\bvphi_i(\Wphii[i][\VKclos][\bvphi])\subset \Kclos^i$ is an
algebraic set and the restriction of $\bvphi_i$ to a mapping
$\Wphii[i][\VKclos][\bvphi] \to \vphiWvphii$ is finite as well.

Let $\YY=(y_1,\dots,y_i)$ be new indeterminates. Because these sets
are defined over $\CC(\aa)$, we deduce that the pullback
$\CC(\aa)[\YY]/\I(\vphiWvphii) \to
\CC(\aa)[\XX]/\I(\Wphii[i][\VKclos][\bvphi])$ that maps $y_j$ to
$\phi_j$ (for all $j \le i$) is injective and integral
(Lemma~\ref{lemma:extension}).

On another hand, by the theorem on the dimension of the fibers
\cite[Theorem 1.25]{Sh1994}, for any irreducible component $C$ of
$\Wphii[i][\VKclos][\bvphi]$ and for a generic $\yy \in \bvphi_i(C)$,
\[
 \dim{C} - \dim{\bvphi_i(C)} = \dim{\bvphi_i^{-1}(\yy)\cap C} = 0
\]
since, as a finite map, the restriction of $\bvphi_i$ to 
$\Wphii[i][\VKclos][\bvphi]$ 
has finite fibers.
By an algebraic version of Sard's theorem \cite[Proposition 
B.2]{SS2017}
\[
 \dim{\,\vphiWvphii} \leq i-1,
\]
so that $\dim{\Wphii[i][\VKclos][\bvphi]} \leq i-1$ as well. Together with
Lemma~\ref{lem:lowerboundglobal}, this proves that both
$\Wphii[i][\VKclos][\bvphi]$ and its image $\vphiWvphii$ are either empty or
equidimensional of dimension $i-1$. If they are empty, there is
nothing to do, so suppose it is not the case.

Let $\ZZ=(z_1,\dots,z_{i-1})$ and $\bell=(\ell_1,\dotsc,\ell_{i-1})$
be new indeterminates.  Since $\Wphii[i][\VKclos][\bvphi]$, and thus
its image $\vphiWvphii$, are defined over $\CC(\aa)$, we can apply
Noether normalization to $\vphiWvphii$
(Proposition~\ref{prop:noether}) with coefficients in $\CC(\aa)$, and
deduce that the $\CC(\aa,\bell)$-algebra homomorphism
\[
  \begin{array}{cccc}
  \zeta\colon&\CC(\aa,\bell)[\ZZ]& \longrightarrow 
&\CC(\aa,\bell)[\YY]/\I(\vphiWvphii)\\[0.5em]
  &z_j&\longmapsto& \displaystyle y_j + \ell_j y_i 
\mod{\I(\vphiWvphii)}
  \end{array}
 \]
is injective and integral. Besides, we deduce from
Lemma~\ref{lemma:extension} that after scalar extension, the ring
homomorphism
\[
  \CC(\aa,\bell)[\YY]/\I(\vphiWvphii) \to
  \CC(\aa,\bell)[\XX]/\I(\Wphii[i][\VKclos][\bvphi])
\]
that maps $y_j$ to
$\phi_j$ (for all $j \le i$) is still injective and integral.  If we
set \begin{equation}\label{eqn:defpsi}
  \psi_j = \vphi_j + \ell_j \vphi_i \text{ \quad for $1\leq j\leq i-1$}
  \quad\text{and}\quad
  \psi_j = \vphi_j \text{ \quad for $i\leq j \leq d+1$},
\end{equation}
and finally
$\bpsi=(\psi_1,\dotsc,\psi_{d+1})\subset\CC(\aa,\bell)[\XX]$,
then, by transitivity,
\[
  \begin{array}{cccc}
  \bpsi_{i-1} \colon&\CC(\aa,\bell)[\ZZ]& \longrightarrow 
&\CC(\aa,\bell)[\XX]/\I(\Wphii[i][\VKclos][\bvphi])\\[0.5em]
  &z_j&\longmapsto& \displaystyle \psi_j(\XX) 
\mod{\I(\Wphii[i][\VKclos][\bvphi])}
  \end{array}
\]
is injective and integral as well.

Since the first $i$ entries of $\bpsi$ are elementary row operations of the
first $i$ entries of $\bvphi$, we deduce that
$\Wphii[i][\VKclos][\bvphi]=\Wphii[i][\VKclos][\bpsi]$.
Besides, injecting the definition of the $\vphi_j$'s in \eqref{eqn:defpsi}, one 
gets that $\bpsi(\XX) = 
\bvphi(\XX,\mm)$, where 
\[   \mm = \left(\aa_1+\ell_1 \aa_i,\dotsc,\aa_{i-1}+\ell_{i-1}
\aa_i,\aa_i,\dotsc, \aa_{d+1}\right)
\]
is a vector of $(d+1)n$ $\CC$-algebraically independent elements of 
$\CC(\aa,\bell)$.
Through the isomorphism $\CC(\aa,\bell) \to \CC(\mm)$, we see that
\[
  \begin{array}{cccc}
  \bvphi_{i-1} \colon&\CC(\aa,\bell)[\ZZ]& \longrightarrow 
&\CC(\aa,\bell)[\XX]/\I(\Wphii[i][\VKclos][\bvphi])\\[0.5em]
  &z_j&\longmapsto& \displaystyle \vphi_j(\XX) 
\mod{\I(\Wphii[i][\VKclos][\bvphi])}
  \end{array}
\]
is injective and integral. From Lemma~\ref{lemma:extension}, we see
that this precisely gives that the restriction of $\bvphi_{i-1}$ to
$\Wvphii[i][\VKclos][\bvphi]$ is finite.  This ends the proof of the
induction step, and, by mathematical induction, of the proposition.
\end{proof}

\subsection{Proof of the main proposition}
We conclude by proving Proposition~\ref{prop:fibergenalg}, which is a direct 
consequence of the previous results. Let $V$, $\polun$ and $2\leq \ifi \leq 
d+1$ as given in the statement of the proposition. 

Let $\Omega$ be the \NEZO subset of $\CC^{(d+1)n}$ obtained as the 
intersection, for all $1\leq i \leq d+1$, of the $\ZOpen_i$'s given by 
application of Proposition~\ref{prop:finitepolar}.
Let $\aa=(\aa_i)_{1\leq i \leq d+1}$, where $\aa_i=(a_{i,1},\dotsc,a_{i,n})$ be 
$(d+1)n$ new indeterminates. By definition, there exists 
$\ff=(f_{1},\dotsc,f_{p}) \subset \CC[\aa]$, such that $\Omega = \CC^{(d+1)n} - 
\V(\ff)$.
Then, let $\ZOnoether(V,\polun,\ifi)$ be the projection on the first $\ifi n$ 
coordinates of $\Omega$, it is the union, for all $\balf'' \in 
\CC^{(d+1-\ifi)n}$, of the \NEZO sets
\[
 \CC^{\ifi n} - \V\left(\ff(\aa',\balf'')\right),
\]
where $\aa'=(\aa_1,\dotsc,\aa_{\ifi})$, hence a \NEZO subset of 
$\CC^{\ifi n}$.

Let $\balf' \in \ZOnoether(V, \polun,\ifi)$ and  $\bphi = 
\left(\phi_1(\XX,\balf_1'),\dotsc,\phi_{\ifi}(\XX,\balf_{\ifi}')\right)$.
Let $i \in \{1,\dotsc,\ifi\}$ then, there exists $\balf''\in\CC^{(d+1-\ifi)n}$ 
such that $(\balf',\balf'') \in \ZOpen_i$. Therefore by 
Proposition~\ref{prop:finitepolar}, the restriction of $\bphi_{i-1}$ to
$\Wphii[i][V][\bphi]$ is finite.

In particular, by \cite[Section 5.3]{Sh1994}, the restriction of $\map[i-1]$ to 
$\Wphii$ is a Zariski-closed map that has finite fibers.
Moreover, since $\sing(V)$ is finite, we deduce that $\Kphii \cap 
\map[i-1][-1](\zz)$ is finite for any $\zz \in \CC^{i-1}$.
Finally, as a consequence, and by \cite[Theorem 
1.12 and 1.25]{Sh1994}, $\Wphii$ is equidimensional of dimension 
$i-1$. It is worth noting that the latter can also be seen as a consequence of 
\cite[Theorem 1.25]{Sh1994} and Lemma~\ref{lem:lowerboundglobal}.
  
 \section{Proof of Proposition~\ref{prop:polargen}: atlases for polar
  varieties}\label{sec:polargen}
This section is devoted to prove Proposition~\ref{prop:polargen}, that we recall
below.
\begin{proposition*}[\ref{prop:polargen}]
 Let $V,S \subset \CC^n$ be two algebraic sets with $V$ $d$-equidi\-men\-sio\-nal and
  $S$ finite and $\bchi$ be an atlas of $(V,S)$. For $2\leq \ifi\leq d+1$, let 
$\bpolun = (\polun_1,\dotsc,\polun_{\ifi})$ and $\bpolde = 
(\polde_1,\dotsc,\polde_{\ifi})$, and for $1\leq j \leq \ifi$, let
  $\balf_j =(\alf_{j,1}, \ldots, \alf_{j,n})\in \CC^{n}$ and
  \[
    \phi_j(\XX,\balf_j) = \polun_j(\XX) + \sum_{k=1}^n \alf_{j,k}x_k +
    \polde_j(\balf_j) \in \CC[\XX].
  \]
where $\polun_j \in \CC[\XX]$ and $\polde_j\colon\CC^n\to\CC$ is a polynomial 
map, with coefficients in $\CC$. 

  There exists a non-empty Zariski open subset 
$\ZOpolar(\bchi,V,S,\bpolun,\bpolde,\ifi) \subset \CC^{\ifi n}$ such that for 
every $\balf \in \ZOpolar(\bchi,V,S,\bpolun,\bpolde,\ifi)$, writing $\bphi = 
\left(\phi_1(\XX,\balf),\dotsc,\phi_{\ifi}(\XX,\balf)\right)$, the following 
holds. For $i$ in $\{1,\dots,\ifi\}$, either $\Wphii$ is empty or
 \begin{enumerate}
 \item $\Wphii$ is an equidimensional algebraic set of dimension $i-1$;
 \item if $2 \leq i \leq (d+3)/2$, then $\Watlas(\bchi,V,S,\bphi,i)$ is an atlas
   of $(\Wphii,S)$\\ and $\sing(\Wphii) \subset S$.
 \end{enumerate}
\end{proposition*}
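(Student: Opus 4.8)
The plan is to reduce the statement, \emph{via} the incidence variety, to the case of the canonical projection, and then to mimic the proof of \cite[Proposition 3.4]{SS2017}, substituting for the genericity of a linear change of variables the genericity of the parameter $\balf$. First observe that both $\Wphii$ and the construction $\Watlas(\bchi,V,S,\bphi,i)$ depend on $\bphi$ only through the Jacobian matrix $\jac_\XX(\bphi)$: membership in $\Wphii$ over a chart is governed by a rank condition on $\jac([\hh,\map[i]])$ by Lemmas~\ref{lem:jacrankchart} and~\ref{lem:polaronchart}, and $\Watlas$ is assembled from minors of such matrices. Since $\jac_\XX(\polde_j(\balf_j)) = 0$, the summands $\polde_j$ are irrelevant, so we may assume $\polde_j = 0$ for all $j$, and it suffices to produce $\ZOpolar(\bchi,V,S,\bpolun,\bpolde,\ifi)\subset\CC^{\ifi n}$ in terms of the parameter $\balf$ alone.

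With $\bphi = (\polun_j + \sum_k \alf_{j,k}x_k)_{1\le j\le\ifi}$, I would pass to the incidence variety $\Vphi = \Inci(V)\subset\CC^{\ifi+n}$. By Lemma~\ref{lem:IVphi}, $\Vphi$ is $d$-equidimensional with finitely many singular points; by Lemma~\ref{lem:Inciatlas}, $\bchiphi$ is an atlas of $(\Vphi,\Sphi)$ with $\Sphi$ finite; by Lemma~\ref{lem:InciPolar}, $\Inci$ identifies $\Wphii$ with $\Wproji[i][\Vphi]$; and by Lemma~\ref{lem:InciPolaratlas}, $\Watlas(\bchi,V,S,\bphi,i)$ corresponds to $\Watlas(\bchiphi,\Vphi,\Sphi,\proj,i)$. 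Hence all three conclusions are equivalent to the corresponding assertions about $\Vphi$ and the canonical projection $\proj$. Moreover $\Vphi = A_\balf(V'')$, where $V'' = \{(\polun_1(\xx),\dots,\polun_\ifi(\xx),\xx) : \xx\in V\}$ is the \emph{fixed} graph of $\bpolun$ over $V$ and $A_\balf\in\GL_{\ifi+n}$ is the block-triangular automorphism $(t,x)\mapsto(t + L_\balf x, x)$ built from $L_\balf = (\alf_{j,k})$. We are thus exactly in the setting of \cite[Proposition 3.4]{SS2017} applied to $V''$, except that the linear change is restricted to the $\ifi n$-dimensional family $\{A_\balf : \balf\in\CC^{\ifi n}\}$ instead of a generic element of $\GL_{\ifi+n}$.

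I would then run the genericity analysis chart by chart, as in \cite{SS2017}. Fix a chart $\chi_j = (m_j,\hh_j)$ of $(V,S)$, with $\hh_j$ of length $c = n-d$. By Lemma~\ref{lem:polaronchart}, on $\Ocal(m_j) - S$ the set $\Wphii$ coincides with the critical locus $\Wophii[i][\Vreg(\hh_j)]$, i.e. the locus in the smooth $d$-dimensional set $\Vreg(\hh_j)\cap\Ocal(m_j)$ where $d_\yy\map[i]$ has rank $\le i-1$; there the candidate chart $\Wchart(\chi_j,m',m'',\map[i]) = (m_jm'm'',(\hh_j,\Hphi(\hh_j,i,m'')))$ carries exactly $n-i+1$ equations, matching the codimension of $\Wphii$. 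The core step is a parametric transversality argument: over $\Vreg(\hh_j)\cap\Ocal(m_j)$ one forms the incidence set $\{(\yy,\balf) : \rank\jac_\yy([\hh_j,\map[i]]) \le c+i-1\}$ and, since each $\sum_k\alf_{j,k}x_k$ contributes a free $n$-parameter family of linear forms, its projection to $\CC^{\ifi n}$ has, over a non-empty Zariski open set, smooth fibers of the expected codimension; the Thom--Porteous count gives codimension $(d-(i-1))(i-(i-1)) = d-i+1$ in $\Vreg(\hh_j)$, hence pure dimension $i-1$. The restriction $2\le i\le(d+3)/2$ enters to force the deeper rank-drop locus (where $d_\yy\map[i]$ has rank $\le i-2$), of expected codimension $2(d-i+2)$, to be empty for generic $\balf$; this is what makes each $\Wchart(\chi_j,m',m'',\map[i])$ a genuine chart and yields $\sing(\Wphii)\cap(\Ocal(m_j)-S) = \emptyset$. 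Intersecting the finitely many resulting non-empty Zariski open conditions — one per chart and per pair of minors $(m',m'')$, together with the condition that the minors do not simultaneously vanish on $\Wphii - S$ so that the charts cover — defines $\ZOpolar$. Finally, Lemma~\ref{lem:lowerboundglobal} supplies the reverse dimension bound on every irreducible component, upgrading the local statements to: $\Wphii$ is empty or equidimensional of dimension $i-1$, and (for $i$ in range) $\Watlas(\bchi,V,S,\bphi,i)$ is an atlas of $(\Wphii,S)$ with $\sing(\Wphii)\subset S$.

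The main obstacle is the genericity step just described: one must check that the $\ifi n$-parameter family $\bphi = (\polun_j + \sum_k\alf_{j,k}x_k)_j$ is rich enough to enforce, \emph{simultaneously over all charts and all choices of minors}, both the expected-codimension transversality of the first determinantal stratum and the emptiness of the second, and that each of these is a non-empty Zariski open condition in $\balf$ itself — not merely in a larger space of coordinate changes. Concretely this amounts to re-reading the proof of \cite[Proposition 3.4]{SS2017} with $\balf$ in the role of the change-of-variables matrix and verifying at each invocation of genericity that the relevant non-vanishing polynomial does not vanish identically on the subfamily $\{A_\balf\}$; this holds because $\polun_j$ only shifts the differential of $\phi_j$ by a constant vector, so the full genericity of the linear parts $\sum_k\alf_{j,k}x_k$ remains available precisely where transversality is needed, with the algebraic Sard lemma \cite[Proposition B.2]{SS2017} turning the transversality statements into the smoothness and dimension claims.
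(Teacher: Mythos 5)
Your proposal and the paper share the same high-level strategy — localize on charts, use parametric transversality in the $\balf$-parameter space to control the rank of $\jac([\hh,\map[i]])$, apply a Thom–Porteous-type count to get the expected codimension and the bound $i \le (d+3)/2$, and close using Lemma~\ref{lem:lowerboundglobal}. But the route differs in one structural respect, and more importantly, the heart of the argument is left as an assertion.

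The paper does \emph{not} pass through the incidence variety $\Vphi$ for this proposition. It works directly on $\CC^n$ with the polynomial map $\bphi$: Section~\ref{sec:polargen} fixes a chart $\chi=(m,\hh)$, treats the $\alf_{j,k}$ as indeterminates, and re-proves the genericity statements \emph{from scratch in $\CC^{\ifi n}$}: the rank estimate Proposition~\ref{prop:generalrankdefectgen} by a careful induction, the dimension bound via the parametrized map $\Phi$ and Thom's weak transversality (Lemmas~\ref{lem:rankjacPhigen}, \ref{lem:dimwogen}), and the full-rank Jacobian of $(\hh,\Hphi(\hh,i,m''))$ via a local Lagrange-multiplier normal form (Lemmas~\ref{lem:lagrange} and the one following). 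The incidence-variety machinery of Subsection~\ref{ssec:incidence} is developed for the algorithmic subroutines and for the Lagrange-system results (Propositions~\ref{prop:polarGNF}, \ref{prop:fiberGNF}), not for this proof. Your detour through $\Vphi$ would also require a converse of Lemma~\ref{lem:InciPolaratlas} (transferring an atlas from $\scrWphi$ back to $\scrW$), which the paper states only in the forward direction; it is plausible but would need proving.

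The genuine gap is the last paragraph. You correctly identify that one must ``verify at each invocation of genericity that the relevant non-vanishing polynomial does not vanish identically on the subfamily $\{A_\balf\}$,'' and you are right that this is where the work is. But your justification — ``$\polun_j$ only shifts the differential of $\phi_j$ by a constant vector, so the full genericity of the linear parts $\sum_k\alf_{j,k}x_k$ remains available precisely where transversality is needed'' — is a heuristic, not a proof. A non-empty Zariski-open condition in $\GL_{\ifi+n}$ can have empty intersection with the $\ifi n$-dimensional affine subvariety $\{A_\balf\}$; there is no formal reason it must not. Showing that the needed transversality and rank conditions \emph{are} realized within this subfamily is exactly what Proposition~\ref{prop:generalrankdefectgen} (by induction on the number of added rows) and Lemma~\ref{lem:rankjacPhigen} (by checking $\OO$ is a regular value of $\Phi$ on $\scrA\times\CC^{\ifi n}$ before applying Thom) establish. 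Your Thom–Porteous count for the second determinantal stratum is the correct heuristic behind the restriction $2\le i\le (d+3)/2$, and it matches the bound that emerges from the induction bound $j\le c-p+(d+3)/2$ with $p=c$; but again it needs to be realized as a transversality statement in $\balf$, which the sketch does not do. In short: you have re-stated what must be proved rather than proving it, and that missing verification is essentially the whole of the paper's Section~\ref{sec:polargen}.
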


\subsection{Regularity properties}
\emph{In this subsection, we fix the three integers $(d, \ifi,i)$ such that 
$2 \leq \ifi \leq d+1 \leq n+1$ and $1 \leq i \leq \ifi$.}

For $1\leq j \leq i$, let $a_j = (a_{j,1},\dotsc,a_{j,n})$ be new 
indeterminates, and let $\AA = (a_j)_{1\leq j\leq i}$.
For $1\leq j \leq i$, we will also denote by $\AA_{\leq j}$, the subfamily 
$(a_1,\dotsc,a_j)$.
Finally, we consider sequences $\hh = (h_1,\dotsc,h_c) \subset \CC[\XX]$, where 
$c=n-d$, and  $\vphi = (\vphi_1,\dotsc,\vphi_i)$ such that
\[
  \vphi_j(\XX,a_j) = \polun_j(\XX) + 
  \sum_{k=1}^n a_{j,k}x_k  +
  \polde_j(a_j) \in \CC[\XX,\AA_{\leq j}],
\]
for $1\leq j \leq i$.
We start by investigating the regular situation. The first step towards the
proof of Proposition~\ref{prop:polargen} is to establish the following
statement.
\begin{proposition}\label{prop:localgenericityrankgen}
There exists a non-empty Zariski open set
  $\ZOpolarloc \subset \CC^{i n}$, such that for all $\balf \in \ZOpolarloc$, 
  and $\bphi = \big(\vphi_1(\XX,\balf_1),\dotsc,\vphi_i(\XX,\balf_i)\big)\subset
  \CC[\XX]$, the following holds:
 \begin{enumerate}
  \item for all $\yy \in \Voreg(\hh)$, there exists a $c$-minor $m'$ of 
$\jac(\hh)$ such that $m'(\yy) \neq 0$;
  \item the irreducible components of $\Wphii[i][\Vreg(\hh)]$ have dimension 
less than $i-1$; 
\end{enumerate}
Assume now that $i \leq (d+3)/2$, and let $m'$ be any $c$-minor of 
$\jac(\hh)$ and let $m''$ be any $(c+i-1)$-minors of $\jac([\hh,\map[i]])$ 
containing the rows of $\jac(\map[i])$. Then, the following holds:
\begin{enumerate}\setcounter{enumi}{2}
  \item for all $\yy \in \Voreg(\hh)$ there exists $m''$ as above, such that 
$m''(\yy) \neq 0$;
  \item $\Wophii[i][\Voreg(\hh)]$ is defined on $\Ocal(m'm'')$ by the 
vanishing of the polynomials in $(\hh,\Hphi(\hh,i,m''))$;
 \item $\jac(\hh,\Hphi(\hh,i,m''))$  has full rank $n-(i-1)$ on 
  $\Ocal(m'm'') \cap \Wophii[i][\Voreg(\hh)]$.
 \end{enumerate}
\end{proposition}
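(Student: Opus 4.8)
The plan is to prove Proposition~\ref{prop:localgenericityrankgen} by the mechanism used for the local genericity results it generalizes (\cite[Theorem 1]{SS2003} and the local part of \cite[Proposition 3.4]{SS2017}), with the generic linear change of coordinates there replaced by the perturbations $\sum_k a_{j,k}x_k$ built into the $\vphi_j$'s. The decisive feature is that this perturbation makes the $1$-jet of $\vphi_j$ \emph{versal}: at any $\yy$, as $\balf_j$ ranges over $\CC^n$ the differential $d_\yy\vphi_j(\balf_j)=d_\yy\polun_j+\balf_j$ ranges over all of $(\CC^n)^*$, with full independence across the blocks $\balf_1,\dots,\balf_i$, so that on the relevant loci $\jac([\hh,\map[i]])$ behaves like a matrix whose last $i$ rows are freely chosen generic covectors while its first $c$ rows $\jac(\hh)$ are frozen. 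Item~1 requires no genericity: by definition $\Voreg(\hh)$ is the locus where $\jac(\hh)$ has full rank $c$, so some $c$-minor is nonzero at each of its points, and this holds for every $\balf$. The content is items~2--5, obtained from incidence-variety dimension counts and an algebraic version of Sard's/Thom's transversality theorem (\cite[Proposition B.2]{SS2017}).

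For item~2, recall $T_\yy\Voreg(\hh)=\ker\jac_\yy(\hh)$ has dimension $d$, so $\yy$ is critical for $\map[i]$ on $\Voreg(\hh)$ iff the covectors $d_\yy\vphi_1|_{T_\yy},\dots,d_\yy\vphi_i|_{T_\yy}$ are linearly dependent. I would form the incidence set
\[
 \mathcal{E}=\bigl\{(\yy,\balf,[\blambda])\in\Voreg(\hh)\times\CC^{in}\times\mathbb{P}^{i-1}\;:\;\textstyle\sum_{j}\lambda_j\,d_\yy\vphi_j(\balf_j)|_{T_\yy}=0\bigr\}.
\]
Fixing $(\yy,[\blambda])$ and an index $j_0$ with $\lambda_{j_0}\ne 0$, the displayed $d$ equations constrain $\balf_{j_0}$ only through the surjection $\CC^n\to T_\yy^{*}$, so the fibre of $\mathcal{E}$ over $\Voreg(\hh)\times\mathbb{P}^{i-1}$ is nonempty of constant dimension $in-d$; hence $\dim\mathcal{E}=d+(i-1)+(in-d)=in+i-1$. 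Projecting $\mathcal{E}\to\CC^{in}$ component by component, every component either has image of dimension $<in$ (and so contributes nothing to the fibre over a generic $\balf$) or is dominant with generic fibre of dimension $\le\dim\mathcal{E}-in\le i-1$; thus for $\balf$ outside a proper closed set $\mathcal{E}_\balf$ is empty or of dimension $\le i-1$, and since $\Wophii[i][\Voreg(\hh)]$ is the image of $\mathcal{E}_\balf$ under projection to $\Voreg(\hh)$ and $\Wphii[i][\Vreg(\hh)]$ is its Zariski closure, item~2 follows (together with Lemma~\ref{lem:lowerboundglobal} this pins the dimension to $i-1$).

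Item~4 is pure linear algebra and needs no genericity: on $\Ocal(m'm'')$ the set $\Voreg(\hh)$ coincides with $\V(\hh)$ (the $c$ rows of $\jac(\hh)$ are independent there), a point of $\Ocal(m')\cap\V(\hh)$ lies in $\Wophii[i][\Voreg(\hh)]$ iff $\jac_\yy([\hh,\map[i]])$ has rank $\le c+i-1$ (Lemma~\ref{lem:jacrankchart}), and on $\Ocal(m'')$ that matrix already has a nonsingular $(c+i-1)$-submatrix, so by a Schur-complement computation ``rank $\le c+i-1$'' is equivalent to the vanishing of the $(c+i)$-minors obtained by bordering that submatrix with its unique missing row and each missing column, i.e.\ of $\Hphi(\hh,i,m'')$. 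For items~3 and~5 (under $i\le(d+3)/2$) I would stratify $\jac([\hh,\map[i]])$ by rank over $\Voreg(\hh)$ and feed the versality above into the transversality argument, using Lemma~\ref{lem:minorssubmat} to pass between minors of the full matrix and of the relevant submatrices; this shows that for generic $\balf$ the locus $\{\rank\le c+i-1\}$ has the expected codimension $d-i+1$ in $\Voreg(\hh)$ and is smooth away from $\{\rank\le c+i-2\}$, while the latter has expected codimension $2(d-i+2)$, which exceeds $d$ exactly because $i\le(d+3)/2$, so that deeper stratum is \emph{empty}. Hence $\Wophii[i][\Voreg(\hh)]$ equals the smooth, $(i-1)$-dimensional locus $\{\rank=c+i-1\}$ throughout $\Voreg(\hh)$, and the versality (applied so as to keep the $i$ rows of $\jac(\map[i])$ inside a set of $c+i-1$ independent rows) produces a nonvanishing $(c+i-1)$-minor $m''$ of the prescribed shape at every point, which is item~3; being cut on $\Ocal(m'm'')$ by the $n-i+1$ equations $(\hh,\Hphi(\hh,i,m''))$ while smooth of dimension $i-1$, the polar variety then forces $\jac(\hh,\Hphi(\hh,i,m''))$ to have full rank $n-i+1$ there, which is item~5.

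The hard part is the transversality/stratification step underlying items~3 and~5. One must organise the generic-smoothness argument so that it controls the codimension of \emph{every} determinantal stratum of $\jac([\hh,\map[i]])$ restricted to $\Voreg(\hh)$ simultaneously, and --- the delicate point --- so that the perturbation is ``uniform over $\Voreg(\hh)$'', yielding the non-vanishing of the covering minors at \emph{every} point of $\Voreg(\hh)$ rather than merely on a dense open subset; extracting from this the transversality of the explicit equations $(\hh,\Hphi(\hh,i,m''))$ along the polar variety is then a further bookkeeping step. In other words, the real work is to package the versality of the $1$-jets of the maps $\polun_j+\text{(generic linear)}+\polde_j$ into the analogue of \cite[Lemma B.12]{SS2017}; once that is in place, the fibre-dimension computations for the incidence sets and the arithmetic making $i\le(d+3)/2$ the exact threshold ($2(d-i+2)>d$) are routine.
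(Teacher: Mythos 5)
Your proposal tracks the paper's strategy closely for items~1, 2 and 4, and the conceptual framing (versality of the $1$-jets of the $\vphi_j$'s replacing the generic linear change of coordinates of \cite{SS2003,SS2017}) is exactly right. For item~2 your incidence set $\mathcal{E}$ over $\Voreg(\hh)\times\mathbb{P}^{i-1}$ is a projectivized variant of the map $\Phi_\balf$ that the paper uses (where the nonzero covector lives in $\CC^{c+i}$ with $\bvtheta\neq\OO$), and the fibre-dimension count you give is equivalent to Lemma~\ref{lem:dimwogen}. For item~4 the Schur-complement/exchange-lemma reduction is precisely the paper's Lemma~\ref{lem:intercomplW}, which invokes the exchange lemma of \cite[Lemma 1]{BGHM2001}. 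So on those points you and the paper are in substantive agreement.

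There are, however, two places where the proposal does not close. First, for item~3 you correctly locate the difficulty (the nonvanishing of some covering minor must hold at \emph{every} point of $\Voreg(\hh)$, not merely generically), but you leave the crucial step as ``package the versality into an analogue of \cite[Lemma B.12]{SS2017}''. The paper supplies this package as Proposition~\ref{prop:generalrankdefectgen}, proved by induction on the number $j$ of added rows, with the key step (Lemma~\ref{lem:rm}) being a localization trick: form the locally closed set where a chosen $(p+j-2)$-minor $\gm$ is nonzero while all bordering $(p+j-1)$-minors vanish, verify via the Jacobian criterion that this set has dimension strictly below $jn$, and conclude that its projection to parameter space is proper. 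Your proposal would need to reproduce this bordering/induction argument to get the uniform statement; the high-level appeal to a ``determinantal stratification'' does not by itself give the pointwise conclusion.

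Second, and more seriously, the argument you offer for item~5 is wrong as stated. You infer that because $\Wophii[i][\Voreg(\hh)]\cap\Ocal(m'm'')$ is smooth of dimension $i-1$ and is cut out set-theoretically by the $n-(i-1)$ polynomials $(\hh,\Hphi(\hh,i,m''))$, the Jacobian of those polynomials must have full rank there. This is false in general: set-theoretic cutting by the ``right'' number of equations, combined with smoothness of the variety, does not imply transversality of those particular equations (take $\V(x^2)\subset\CC$; the point is smooth, one equation, zero Jacobian). What is missing is a reducedness/scheme-theoretic statement. The paper sidesteps this by working with Lagrange multipliers: Lemma~\ref{lem:lagrange} identifies, in the localized ring $\CC[\XX,\Lindet[c],\Tindet[i]]_{m''}$, the ideal generated by the entries of $\Phi_\balf$ with the ideal generated by $(\hh,\;L_\iota\Hphi(\hh,i,m''),\;(L_j-\lambda_j L_\iota)_{j\neq\iota},\;(T_j-\tau_j L_\iota)_j)$, and then the full-rank statement for $\jac\Phi_\balf$ established by Thom transversality in Lemma~\ref{lem:rankjacPhigen} is carried across this ideal equality (dropping the trivial rows $L_j-\lambda_j L_\iota$, $T_j-\tau_j L_\iota$) to give full rank of $\jac(\hh,\Hphi(\hh,i,m''))$ at the relevant points. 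That is the genuine content of item~5; your proposed shortcut does not reach it.
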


\subsubsection{Rank estimates}
We start by proving some genericity results on the ranks of some Jacobian
matrix. Two direct consequences (namely Corollaries~\ref{cor:rankhphigen} and 
\ref{cor:rankphigen}) of Proposition~\ref{prop:generalrankdefectgen}
below will establish the third statement of 
Proposition~\ref{prop:localgenericityrankgen}. 

Let $1\leq p \leq n-1$ and $\MRk(\XX,\AA_{\leq 1})$ be a $p\times n$ matrix 
with coefficients in $\CC[\XX,\AA_{\leq 1}]$. For $1 \leq j \leq i$, let
\[
 \Ji[j](\XX,\AA_{\leq j}) = \begin{bmatrix}
                            \\[-0.5em]& \MRk(\XX,\AA_{\leq 1}) & \\\\[-0.5em]
                            \partial_{x_1} \vphi_1(\XX,a_{1,1}) 
                            & \cdots &
                            \partial_{x_n} \vphi_1(\XX,a_{1,n})\\
                            \vdots & & \vdots\\
                            \partial_{x_1} \vphi_j(\XX,a_{j,1}) & \cdots & 
                            \partial_{x_n} \vphi_j(\XX,a_{j,n})
                         \end{bmatrix},                         
\]
where for all $1\leq k \leq i$ and $1 \leq \ell \leq n$, $\partial_{x_\ell} 
\vphi_k =  \frac{\partial\polun_k(\XX)}{\partial x_\ell}+ a_{k,\ell} \in 
\CC[\XX,a_{k,\ell}]$. 
 Proposition~\ref{prop:generalrankdefectgen} below generalizes 
\cite[Proposition B.6]{SS2017}. 
Our proof follows the same pattern as the one of \cite[Proposition B.6]{SS2017}.
\begin{proposition}\label{prop:generalrankdefectgen}
  Assume that there exists a non-empty Zariski open subset $\ZOpolarrank_0
  \subset \CC^{n}$ such that for all $(\yy,\balf)\in \Voreg(\hh) \times
  \ZOpolarrank_0$, the matrix $\MRk(\yy,\balf)$ has full rank $p$. Then, for 
  every 
  \[
    1\;\leq\; j\;\leq\; \min\left\{\,i,\, c-p+(d+3)/2\,\right\},
  \]
  there exists a non-empty Zariski open subset
  $\ZOpolarrank_i \subset \CC^{in}$ such that for all $(\yy,\balf)\in
  \Voreg(\hh) \times \ZOpolarrank_i$,
 \[
   \rank \MRk(\yy,\balf) = p \et \rank \Ji(\yy,\balf) \geq p+j-1.
 \]
\end{proposition}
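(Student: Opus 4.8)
The plan is to argue by induction on $j$, following the pattern of \cite[Proposition B.6]{SS2017} but strengthening the induction hypothesis so that it carries the dimension estimates needed to reach the threshold $c-p+(d+3)/2$. If $\Voreg(\hh)=\emptyset$ the statement is vacuous, so assume otherwise; then $\Voreg(\hh)$ is locally closed of dimension $d=n-c$. For each $j$ with $1\le j\le\min\{i,\,c-p+(d+3)/2\}$ I will build a non-empty Zariski open set $\ZOpolarrank_j\subset\CC^{jn}$, whose elements I write $\balf=(\balf_1,\dots,\balf_j)$ with $\balf_k\in\CC^{n}$, such that: (a) the projection of $\ZOpolarrank_j$ onto $\CC^{kn}$ lies in $\ZOpolarrank_k$ for every $k<j$; (b) for all $(\yy,\balf)\in\Voreg(\hh)\times\ZOpolarrank_j$ one has $\rank\MRk(\yy,\balf_1)=p$ and $\rank\Ji[j](\yy,\balf)\ge p+j-1$; (c) the \emph{rank--defect locus} $G_j:=\{(\yy,\balf)\in\Voreg(\hh)\times\ZOpolarrank_j\mid\rank\Ji[j](\yy,\balf)=p+j-1\}$ has dimension at most $d+(j-1)n+p+j-1$. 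Clause (b) is exactly the conclusion of the proposition; (a) and (c) are auxiliary. For $j=1$, set $\ZOpolarrank_1=\ZOpolarrank_0$: clause (b) holds because $\Ji[1]$ contains the $p$ rows of $\MRk$, and $G_1$ is the locus where the last row of $\Ji[1]$ belongs to the $\CC$-span of the rows of $\MRk$; a direct elimination argument shows that for fixed $\yy$ this cuts out, in the $\balf_1$-space, a set of dimension at most $p$ — the key being that on $\ZOpolarrank_0$ the matrix $\MRk(\yy,\cdot)$ has rank $p$, so the coefficients witnessing the dependency are unique — whence $\dim G_1\le d+p$.

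\emph{Induction step.} Suppose $\ZOpolarrank_j$ has been constructed and $j+1\le\min\{i,\,c-p+(d+3)/2\}$. Form the locally closed incidence set
\[
  \widehat{\mathcal E}:=\bigl\{\,(\yy,\balf,\beta)\in\Voreg(\hh)\times\ZOpolarrank_j\times\CC^{n}\ \big|\ \rank\Ji[j+1](\yy,\balf,\beta)\le p+j-1\,\bigr\},
\]
with $\beta\in\CC^{n}$ playing the role of the last parameter block $\balf_{j+1}$. On $\widehat{\mathcal E}$ the chain $p+j-1\le\rank\Ji[j](\yy,\balf)\le\rank\Ji[j+1](\yy,\balf,\beta)\le p+j-1$ — left inequality by (b), middle one since $\Ji[j]$ is a submatrix of $\Ji[j+1]$ — forces $\rank\Ji[j](\yy,\balf)=p+j-1$, i.e.\ $(\yy,\balf)\in G_j$, and forces the last row of $\Ji[j+1]$, which at $(\yy,\beta)$ is $\beta$ shifted by a vector depending only on $\yy$, to lie in the row space of $\Ji[j](\yy,\balf)$. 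Hence the projection $(\yy,\balf,\beta)\mapsto(\yy,\balf)$ sends $\widehat{\mathcal E}$ into $G_j$, and its nonempty fibres are cosets of that row space, hence of dimension $p+j-1$ (note $p+j-1<n$ by the bound on $j$). Combining this with (c) gives
\[
  \dim\widehat{\mathcal E}\ \le\ \dim G_j+(p+j-1)\ \le\ d+(j-1)n+2p+2j-2 ,
\]
and a short computation shows $d+(j-1)n+2p+2j-2<(j+1)n$ is equivalent to $2j<2n-2p-d+2$, i.e.\ to $j<c-p+1+d/2$, which holds since $j\le c-p+(d+1)/2$. So $\widehat{\mathcal E}$ does not dominate $\CC^{(j+1)n}$; put $\ZOpolarrank_{j+1}:=(\ZOpolarrank_j\times\CC^{n})\setminus\overline{\pi(\widehat{\mathcal E})}$, where $\pi$ forgets $\yy$. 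This is a non-empty Zariski open subset of $\CC^{(j+1)n}$, and (a), (b) at level $j+1$ follow at once: for $(\balf,\beta)\in\ZOpolarrank_{j+1}$ we have $\balf\in\ZOpolarrank_j$, hence $(\yy,\balf,\beta)\notin\widehat{\mathcal E}$ for every $\yy\in\Voreg(\hh)$, i.e.\ $\rank\Ji[j+1](\yy,\balf,\beta)\ge p+j$.

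\emph{Re-establishing clause (c).} Let $(\yy,\balf,\beta)\in G_{j+1}$, so $\rank\Ji[j+1](\yy,\balf,\beta)=p+j$. Deleting one row changes the rank by at most one, so $\rank\Ji[j](\yy,\balf)\in\{p+j-1,\,p+j\}$. If it equals $p+j$ (i.e.\ $\Ji[j]$ has full rank), then $(\yy,\balf)$ lies in a subset of $\Voreg(\hh)\times\ZOpolarrank_j$ of dimension $\le d+jn$ while $\beta$ ranges over an affine subspace of dimension $p+j$, contributing at most $d+jn+p+j$. If it equals $p+j-1$, then $(\yy,\balf)\in G_j$ while $\beta$ is unconstrained in $\CC^{n}$ (the rank climbs back to $p+j$ automatically on $\ZOpolarrank_{j+1}$), contributing at most $\dim G_j+n\le d+jn+p+j-1$. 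In both cases $\dim G_{j+1}\le d+jn+p+j=d+\bigl((j+1)-1\bigr)n+p+(j+1)-1$, so (c) holds at level $j+1$; this closes the induction, and clause (b) is the assertion of the proposition.

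\emph{The main obstacle.} It is the dimension bookkeeping: one must identify the right strengthened hypothesis (the bound on $\dim G_j$) and check that the recursion $\dim G_{j+1}\le\max\{d+jn+p+j,\ \dim G_j+n\}$ together with $\dim G_1\le d+p$ yields exactly $\dim G_j\le d+(j-1)n+p+j-1$, which is what makes $\dim\widehat{\mathcal E}<(j+1)n$ hold all the way to $j=c-p+(d+1)/2$ — whereas the cruder estimate $\dim\widehat{\mathcal E}\le d+jn+p+j-1$ obtained from the trivial bound $\dim G_j\le d+jn$ would stop at $j\le c-p$. The one genuinely delicate elementary point is the base case $\dim G_1\le d+p$: when $\MRk$ is allowed to depend on the block $\AA_{\le1}$, this needs the full-rank hypothesis on $\MRk$ (uniqueness of the dependency coefficients), not merely a count of the $n$ defining equations, together with a possible shrinking of $\ZOpolarrank_0$ to discard lower-dimensional degeneracies; this is handled as in \cite[Proposition B.6]{SS2017}.
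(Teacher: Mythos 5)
Your approach differs genuinely from the paper's. You track, across the induction, the dimension of the rank-defect locus $G_j=\{(\yy,\balf)\in\Voreg(\hh)\times\ZOpolarrank_j \mid \rank\Ji[j]=p+j-1\}$ and use the bound $\dim G_j\le d+(j-1)n+p+j-1$ to conclude that the bad incidence set $\widehat{\mathcal E}$ does not dominate $\CC^{(j+1)n}$. The paper instead adds two ``adjustable'' rows at once: in Lemma~\ref{lem:rm} it builds from a single $(p+j-2)$-minor $\gm$ of $\Ji[j]$ the sequence $\bgm[]$ of $2(n-p-j+2)$ minors $\gm[k,\ell]$ with $k\in\{j-1,j\}$, writes down the auxiliary locally closed set $\Zo=\V(\hh,\bgm[],1-t\gm)\cap\Voreg(\hh)$, and uses the Jacobian criterion to bound $\dim\Zo$. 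The factor $2$ in the codimension of $\Zo$ is what makes the paper's estimate reach the threshold $j\le c-p+(d+3)/2$ directly, without any bookkeeping of rank-defect loci across levels. Your recursion $\dim G_{j+1}\le\max\{d+jn+p+j,\ \dim G_j+n\}$ does close correctly, and the final computation $\dim\widehat{\mathcal E}<(j+1)n$ for $j\le c-p+(d+1)/2$ is arithmetic that checks out, \emph{provided} the base case $\dim G_1\le d+p$ holds.

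That base case is not a technicality that can be ``handled as in [Proposition B.6]'' — it is a genuine gap, and the referenced mechanism does not transfer into your framework. The hypothesis allows $\MRk$ to depend on $\AA_{\le1}$, and the proposition is invoked in exactly that mode: in Corollary~\ref{cor:rankphigen} one takes $\MRk=\jac(\vphi_1)$, so $\MRk(\yy,\balf_1)$ is the single row $\balf_1+\nabla\polun_1(\yy)$, which is \emph{identical} to the first added row of $\Ji[1]$. Then $\rank\Ji[1]=\rank\MRk=p$ at every point of $\Voreg(\hh)\times\ZOpolarrank_0$, so $G_1$ is all of $\Voreg(\hh)\times\ZOpolarrank_0$ and has dimension $d+n$, not $d+p$. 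Your elimination argument — that the coefficients of the linear dependency are unique — is correct but irrelevant: the unique coefficient is $c=1$, and the defining equation $\balf_1+\nabla\polun_1(\yy)=c\,\MRk(\yy,\balf_1)$ becomes a tautology, leaving $\balf_1$ free. Once $\dim G_1=d+n$ the recursion yields only $\dim G_j\le d+jn$ and the inequality $\dim\widehat{\mathcal E}<(j+1)n$ degrades to $j\le c-p$. For $p=1$ this reaches only $j\le c-1=n-d-1$, strictly below the range $i\le(n+c+1)/2=n-(d-1)/2$ that Corollary~\ref{cor:rankphigen} requires. So the delicate point you flagged at the end is not a loose thread but the place where the one-row-at-a-time strategy breaks; the paper's two-row construction is precisely what circumvents it, and it cannot be grafted onto your $G_j$-tracking as a patch for the base case alone.
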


Before proving the above proposition, we first give two direct consequences of 
it, whose conjunction proves the third item of 
Proposition~\ref{prop:localgenericityrankgen}. Taking $\MRk = \jac(\hh)$, the
next lemma is a direct consequence of the definition of $\Voreg(\hh)$.
\begin{corollary}\label{cor:rankhphigen}
 If $1\leq i \leq (d+3)/2$ then, there exists a non-empty Zariski open subset  
$\ZOpolarrank_i' \subset \CC^{in}$ such that for all $(\yy,\balf)\in 
\Voreg(\hh) 
\times \ZOpolarrank_i'$, the matrix $\jac_{(\yy,\balf)}([\hh,\vphi])$ has rank 
at least $c+i-1$.
\end{corollary}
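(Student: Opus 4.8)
The plan is to obtain Corollary~\ref{cor:rankhphigen} as the announced specialization of Proposition~\ref{prop:generalrankdefectgen} to the matrix $\MRk := \jac(\hh)$. This is a $c\times n$ matrix with entries in $\CC[\XX]$ which does not involve the parameters $\AA$ at all, so in the notation of that proposition we take $p := c$ (and note $1\leq c\leq n-1$ in the range under consideration). First I would verify the standing hypothesis of Proposition~\ref{prop:generalrankdefectgen}: it asks for a non-empty Zariski open $\ZOpolarrank_0\subset\CC^n$ such that $\MRk(\yy,\balf)$ has full rank $p$ for all $(\yy,\balf)\in\Voreg(\hh)\times\ZOpolarrank_0$. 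Here one simply takes $\ZOpolarrank_0=\CC^n$: by the very definition of $\Voreg(\hh)$, the matrix $\jac_\yy(\hh)$ has full rank $c$ at every $\yy\in\Voreg(\hh)$, and this condition does not depend on $\balf$.

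Next I would check that $j=i$ is an admissible index in the proposition. It allows any $j$ with $1\leq j\leq\min\{i,\,c-p+(d+3)/2\}$; with $p=c$ this reads $1\leq j\leq\min\{i,\,(d+3)/2\}$, and under the hypothesis $1\leq i\leq(d+3)/2$ of the corollary the right-hand minimum equals $i$, so $j=i$ is permitted. Applying Proposition~\ref{prop:generalrankdefectgen} with this data yields a non-empty Zariski open subset of $\CC^{in}$, which I would take as $\ZOpolarrank_i'$, such that for every $(\yy,\balf)\in\Voreg(\hh)\times\ZOpolarrank_i'$ one has $\rank\MRk(\yy)=c$ and $\rank\Ji[i](\yy,\balf)\geq p+i-1=c+i-1$.

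It then remains only to identify $\Ji[i](\yy,\balf)$, for this choice of $\MRk$, with the matrix $\jac_{(\yy,\balf)}([\hh,\vphi])$ of the statement: its first $c$ rows are those of $\jac_\yy(\hh)$, and its last $i$ rows have $\ell$-th entries $\partial_{x_\ell}\vphi_k=\frac{\partial\polun_k}{\partial x_\ell}(\yy)+\alf_{k,\ell}$ for $k=1,\dots,i$, which are exactly the rows of $\jac_\yy(\vphi)$ after the substitution $\AA\mapsto\balf$. Hence $\rank\jac_{(\yy,\balf)}([\hh,\vphi])\geq c+i-1$ on $\Voreg(\hh)\times\ZOpolarrank_i'$, which is the claim. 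There is no genuine obstacle in this corollary — all the content sits in Proposition~\ref{prop:generalrankdefectgen} — and the only points requiring care are the bookkeeping of the index bound $j\leq\min\{i,\,c-p+(d+3)/2\}$ when $p=c$ and the matching of the two Jacobian matrices.
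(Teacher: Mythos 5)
Your proof is correct and is exactly the paper's intended argument: the paper remarks, just before stating the corollary, that it is "a direct consequence of the definition of $\Voreg(\hh)$" once one takes $\MRk=\jac(\hh)$ in Proposition~\ref{prop:generalrankdefectgen}. You have simply spelled out the three checks the paper leaves implicit: that $\ZOpolarrank_0=\CC^n$ works because $\jac_\yy(\hh)$ has full rank $c$ on $\Voreg(\hh)$ by definition, that $j=i$ falls in the allowed range $1\le j\le\min\{i,\,c-p+(d+3)/2\}=\min\{i,\,(d+3)/2\}=i$ when $p=c$, and that $\Ji[i]$ with $\MRk=\jac(\hh)$ is precisely $\jac([\hh,\vphi])$.
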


Besides we deduce the following more subtle consequence.
\begin{corollary}\label{cor:rankphigen}
  If $1\leq i \leq (n+c+1)/2$ then, there exists a
  non-empty Zariski open subset $\ZOpolarrank_i'' \subset \CC^{in}$ such that 
for
  all $(\yy,\balf)\in \Voreg(\hh) \times \ZOpolarrank_i''$, the matrix
  $\jac_{(\yy,\balf)}(\vphi)$ has full rank $i$.
\end{corollary}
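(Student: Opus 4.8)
The plan is to deduce Corollary~\ref{cor:rankphigen} directly from Proposition~\ref{prop:generalrankdefectgen}, applied with $p=1$ and with $\MRk$ chosen to be the first row of $\jac(\bvphi)$, namely the $1\times n$ matrix
\[
  \MRk(\XX,a_1)\;=\;\big(\partial_{x_1}\vphi_1,\dotsc,\partial_{x_n}\vphi_1\big)\;=\;\jac(\polun_1)+a_1,
\]
which has entries in $\CC[\XX,\AA_{\leq 1}]=\CC[\XX,a_1]$. The point of this choice is that, for $p=1$ and this $\MRk$, the matrix $\Ji[j]$ of Proposition~\ref{prop:generalrankdefectgen} has its first two rows both equal to $\jac(\vphi_1)$, so that $\rank\Ji[j]=\rank\jac(\vphi_1,\dotsc,\vphi_j)$; hence a conclusion of the shape $\rank\Ji[j]\geq p+j-1=j$ says exactly that $\jac(\vphi_1,\dotsc,\vphi_j)$ has full rank.

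First I would check the hypothesis of Proposition~\ref{prop:generalrankdefectgen} for this $\MRk$: one needs a \NEZO subset $\ZOpolarrank_0\subset\CC^n$ such that $\MRk(\yy,\balf_1)$ has rank $1$ (i.e.\ is non-zero) for all $\yy\in\Voreg(\hh)$ and $\balf_1\in\ZOpolarrank_0$. We may assume $\Voreg(\hh)\neq\emptyset$, otherwise the corollary is vacuous; then $\Voreg(\hh)$ has dimension $n-c$ with $c\geq 1$. The ``bad'' parameters are precisely the $\balf_1$ of the form $-\jac_\yy(\polun_1)$ for some $\yy\in\Voreg(\hh)$, that is, the image of $\Voreg(\hh)$ under the polynomial map $\yy\mapsto -\jac_\yy(\polun_1)$. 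This image is constructible of dimension at most $n-c\leq n-1$, so its Zariski closure is a proper subvariety of $\CC^n$; I take $\ZOpolarrank_0$ to be its complement, and then $\MRk(\yy,\balf_1)=\jac_\yy(\polun_1)+\balf_1\neq\OO$ for all such $(\yy,\balf_1)$.

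Next I would invoke Proposition~\ref{prop:generalrankdefectgen} with $j=i$. This is legitimate as soon as $i\leq\min\{i,\,c-p+(d+3)/2\}=\min\{i,\,c-1+(d+3)/2\}$, i.e.\ as soon as $i\leq c-1+(d+3)/2$; using $n=c+d$ one checks that $c-1+(d+3)/2=c+(d+1)/2=(n+c+1)/2$, and since $i$ is an integer this is exactly the hypothesis $i\leq(n+c+1)/2$ of the corollary. The proposition then yields a \NEZO subset $\ZOpolarrank_i''\subset\CC^{in}$ such that $\rank\Ji[i](\yy,\balf)\geq p+i-1=i$ for every $(\yy,\balf)\in\Voreg(\hh)\times\ZOpolarrank_i''$. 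By the observation above, with $p=1$ the matrix $\Ji[i]$ has a repeated row, so $\rank\jac_{(\yy,\balf)}(\vphi)=\rank\Ji[i](\yy,\balf)\geq i$; as $\jac(\vphi)$ has $i\leq n$ rows, it therefore has full rank $i$, which proves the corollary.

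There is no real obstacle here beyond bookkeeping: the substance is entirely contained in Proposition~\ref{prop:generalrankdefectgen}, and the two things to be careful about are (i) verifying its hypothesis through the dimension count above, which is where the standing assumption $c\geq 1$ is used (for $c=0$ the statement would in fact fail), and (ii) the arithmetic identity $c-1+(d+3)/2=(n+c+1)/2$ that makes the admissible range of $j$ in the proposition match the numerical hypothesis of the corollary.
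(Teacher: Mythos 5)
Your proof is correct and follows essentially the same route as the paper: take $p=1$ and $\MRk=\jac(\vphi_1)$ in Proposition~\ref{prop:generalrankdefectgen}, so that the repeated first row gives $\rank\Ji[i]=\rank\jac(\vphi_1,\dots,\vphi_i)$, and verify the base-case hypothesis on $\MRk$ by a dimension count showing the bad parameters $\balf_1=-\jac_\yy(\polun_1)$ sweep out a constructible set of dimension at most $d\leq n-1$ (the paper phrases the same point by applying the Jacobian criterion to the graph $\Zo$, but the underlying dimension count is identical). One small remark in your favour: your derivation $c-1+(d+3)/2=(n+c+1)/2$ is correct and matches the corollary's stated bound, whereas the paper's proof writes the intermediate admissible range as $(n+2)/2$, which appears to be a slip; your version is the right one.
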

\begin{proof}
  Take $\MRk = \jac(\vphi_1)$. 
The matrix $\jac(\vphi_1)$ has not full rank if, and only if, all 
the derivatives of $\vphi_1$ vanish at this point.
  Following the proof strategy of Lemma~\ref{lem:rm}, let
 \[
  \Zo = Z \cap \Voreg(\hh) \subset \CC^{n+n} \quad \text{where} \quad
  Z = \V\left(\hh,\frac{\partial \vphi_1}{\partial x_1},\dotsc,\frac{\partial 
\vphi_1}{\partial x_n}\right).
 \]
 whose following Jacobian matrix, has full rank $c+n$ at any $(\yy,\balf) \in 
\Zo$
 \[
  \jac_{(\XX,a_{1,1},\dotsc a_{1,n})}\left(\hh,\frac{\partial \vphi_1}{\partial 
x_1},\dotsc,\frac{\partial \vphi_1}{\partial x_n}\right) =
  \left[
 \begin{array}{c|ccc}
    \jac(\hh)&&\OO&\\\hline
    * & 1 & \cdots & 0\\[-0.4em]
    \vdots & \vdots & \ddots & \vdots\\
    * & 0 & \cdots & 1
  \end{array}\right].
 \]
 Hence, by the Jacobian criterion \cite[Theorem 16.19]{Ei1995}, $\Zo$ is either 
empty or a $d$-equidimensional \LCS. Since $d <n$ by assumption, then the 
projection of $\Zo$ on the variables $\AA_{\leq 1}$ is a proper subset of 
$\CC^n$ and taking $\ZOpolarrank_0$ as its complement allows us to conclude.
 
 Indeed, for any $1\leq i \leq (n+2)/2$, by 
Proposition~\ref{prop:generalrankdefectgen}, there exists a non-empty Zariski 
open subset $\ZOpolarrank_i$ of $\CC^{in}$ such that for all $(\yy,\balf)\in 
\Voreg(\hh)\times \ZOpolarrank_i$,
 \[
   \rank \jac_{(\yy,\balf)}(\vphi_1,\dotsc,\vphi_i) = 
   \rank \jac_{(\yy,\balf)}(\vphi_1,\vphi_1,\dotsc,\vphi_i) =
   1+i-1 = i.
 \]
\end{proof}
The rest of this paragraph is devoted to the proof of 
Proposition~\ref{prop:generalrankdefectgen}.
Following the construction of the proof of \cite[Proposition B.6]{SS2017}, we 
proceed by induction on $j$.
For all $1\leq j \leq \min\{i,\lfloor c-p+(d+3)/2\}$, we denote by $\sfR_j$
the statement of Proposition~\ref{prop:generalrankdefectgen}.

\paragraph*{Base case: ${j=1}$}
By assumption, there exists a non-empty Zariski open subset $\ZOpolarrank_0
\subset \CC^{n}$ such that for all $(\yy,\balf)\in \Voreg(\hh) \times
\ZOpolarrank_0$, the matrix $\MRk(\yy,\balf_1)$ has full rank $p$.
Therefore, the matrix $\Ji[1]$, containing $\MRk$, has rank at least $p$.
This proves that $\sfR_1$ holds.

\paragraph*{Induction step: ${2\leq j\leq 
\min\{\,i,\,c-p+(d+3)/2\,\}}$}\quad
Assume that $\sfR_{j-1}$ holds, and let us prove that so does $\sfR_j$.
Let $\Eminor$ be the set of ordered pairs $\minor = (\minor_r,\minor_c)$ 
where 
\begin{itemize} 
 \item $\{1,\dotsc,p\} \subset \minor_r \subset \{1,\dotsc,p+j-1\}$
 \item $\minor_c \subset \{1,\dotsc,n\}$
 \item $|\minor_r| = |\minor_c| = p+j-2$
\end{itemize}

Then, for each such $\minor$, let $\Jm$ be the square submatrix of $\Ji[j]$
obtained by selecting the rows and the columns in respectively $\minor_r$ and
$\minor_c$. 
Such a submatrix can also be obtained by removing from $\Ji$, $n-p-j+2$ columns 
and and two rows, which includes the last row.  
Besides, let $\gm \in \CC[\XX,\AA_{\leq j-1}]$ be the determinant of $\Jm$, 
that is the $(p+j-2)$-minor of $\Ji[j]$ associated to $\minor$.
Finally, let $\subi[j]$ be the subset of $\minor \in \Eminor$ such that there
exists $(\yy,\balf)\in \Voreg(\hh) \times \CC^{jn}$ such that
$\gm(\yy,\balf)\neq 0$. 
\begin{lemma}\label{lem:subiempty}
  The set $\subi[j]$, thus defined, is not empty.
\end{lemma}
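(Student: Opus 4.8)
The plan is to produce a single generic witness $(\yy_0,\balf_0)\in\Voreg(\hh)\times\CC^{jn}$ together with a suitable $\minor$, directly from the inductive hypothesis $\sfR_{j-1}$; no fresh genericity statement is needed. By $\sfR_{j-1}$ there is a non-empty Zariski open subset $\ZOpolarrank_{j-1}\subset\CC^{(j-1)n}$ such that $\rank \MRk(\yy,\balf)=p$ and $\rank \Ji[j-1](\yy,\balf)\ge p+j-2$ for every $(\yy,\balf)\in\Voreg(\hh)\times\ZOpolarrank_{j-1}$; I would fix any such pair $(\yy_0,\balf_0)$ (if $\Voreg(\hh)=\emptyset$ then $\sfR_j$ holds vacuously and there is nothing to do, so we may assume it is non-empty). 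At that point the matrix $\Ji[j-1](\yy_0,\balf_0)$ has $p+j-1$ rows, rank at least $p+j-2$, and its first $p$ rows, which are exactly $\MRk(\yy_0,\balf_0)$, are linearly independent.

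The core of the argument is then an elementary row/column selection. Writing $r=\rank \Ji[j-1](\yy_0,\balf_0)$, the quotient of the row span of $\Ji[j-1](\yy_0,\balf_0)$ by the span of its first $p$ rows has dimension $r-p$ and is generated by the classes of the $j-1$ remaining rows, so $j-2\le r-p\le j-1$; in either case one can select $j-2$ of those $j-1$ rows whose classes are linearly independent in the quotient, and adjoining the first $p$ rows yields a linearly independent family of exactly $p+j-2$ rows. Let $\minor_r\subset\{1,\dots,p+j-1\}$ be the set of their indices: by construction $\{1,\dots,p\}\subset\minor_r$, $|\minor_r|=p+j-2$, and $\minor_r$ avoids the last row $p+j$ of $\Ji[j]$ since only rows of $\Ji[j-1]$ were used. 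The $(p+j-2)\times n$ submatrix of $\Ji[j-1](\yy_0,\balf_0)$ carved out by $\minor_r$ has rank $p+j-2$, hence a non-vanishing maximal minor: there is $\minor_c\subset\{1,\dots,n\}$ with $|\minor_c|=p+j-2$ such that, for $\minor=(\minor_r,\minor_c)$, one has $\gm(\yy_0,\balf_0)\ne0$ (this value is unchanged whether $\gm$ is read as a minor of $\Ji[j-1]$ or of $\Ji[j]$, because $\minor_r$ only involves the first $p+j-1$ rows).

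To finish I would check that $\minor\in\Eminor$ (immediate from the constraints just listed on $\minor_r$ and $\minor_c$), and that $(\yy_0,\balf_0)$ witnesses $\minor\in\subi[j]$: padding $\balf_0\in\CC^{(j-1)n}$ arbitrarily to an element of $\CC^{jn}$ does not affect $\gm$, since $\gm\in\CC[\XX,\AA_{\le j-1}]$, and $\yy_0\in\Voreg(\hh)$; hence $\subi[j]\ne\emptyset$. The only step that demands any care is the row selection: one must land on precisely $p+j-2$ linearly independent rows, always containing the prescribed first $p$ and never the last one. All of this is forced by using exactly the rank bound $\rank\Ji[j-1]\ge p+j-2$ of $\sfR_{j-1}$ (together with $\rank\MRk=p$), and the rest is the routine dictionary between ranks of a matrix and non-vanishing of its minors.
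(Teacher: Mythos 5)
Your proof is correct and follows the same route as the paper's: you deduce from the inductive hypothesis $\sfR_{j-1}$ the existence of a non-zero $(p+j-2)$-minor of $\Ji[j-1]$ containing the rows of $\MRk$, which gives the required $\minor\in\Eminor$ and witness; the paper asserts this existence step without proof, whereas you spell out the elementary row-selection argument. One small remark: the paper's proof records this observation uniformly as equation \eqref{eq:mineurnonnulgen} (valid for \emph{every} pair in $\Voreg(\hh)\times\ZOpolarrank_{j-1}$, not just a single witness), since that uniform version is reused after Lemma~\ref{lem:rm}; your reasoning is identical for any such pair, so this costs nothing, but it is worth stating if you intend your proof to replace the paper's.
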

\begin{proof}
 By induction assumption $\sfR_{j-1}$, there exists a non-empty Zariski open
  subset $\ZOpolarrank_{j-1} \subset \CC^{(j-1)n}$ such that for all
  $(\yy,\balf')\in \Voreg(\hh) \times \ZOpolarrank_{j-1}$, the matrix
  $\Ji[j-1](\yy,\balf')$ has rank at least $p+j-2$ and $\MRk(\yy,\balf')$ has
  full rank $p$. We deduce that there exists a non-zero $(p+j-2)$-minor of
  $\Ji[j-1](\yy,\balf')$ containing the rows of $\MRk(\yy,\balf')$. Then, by
  definition of $\Eminor$,
 \begin{equation}\label{eq:mineurnonnulgen}
  \forall (\yy,\balf')\in \Voreg(\hh) \times \ZOpolarrank_{j-1}, \;\exists 
\minor \in \Eminor, \quad\gm(\yy,\balf') \neq 0, 
 \end{equation}
 where $\gm \in \CC[\XX,\AA_{\leq j-1}]$. This proves, in particular, that 
$\subi[j]$ is not empty, as neither $\Vreg(\hh)$ nor $\ZOpolarrank_{j-1}$ is 
empty.
\end{proof}

We now prove the following lemma, which is the key step in the proof of 
$\sfR_j$.
\begin{lemma}\label{lem:rm}
 For all $\minor \in \subi[j]$, there exists a non-empty Zariski open subset 
$\Em \subset \CC^{jn}$ such that, for all $(\yy,\balf)\in \Voreg(\hh) \times 
\Em$, if $\gm(\yy,\balf)\neq0$, then $\Ji[j](\yy,\balf)$ has rank at least 
$p+j-1$.
\end{lemma}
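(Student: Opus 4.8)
The plan is, for a fixed $\minor=(\minor_r,\minor_c)\in\subi[j]$, to describe the ``bad'' parameters $\balf\in\CC^{jn}$ — those for which some $\yy\in\Voreg(\hh)$ with $\gm(\yy,\balf)\neq0$ still has $\rank\Ji[j](\yy,\balf)\leq p+j-2$ — as the image of an explicit constructible subset of $\Voreg(\hh)\times\CC^{jn}$ under the projection $\pi\colon\Voreg(\hh)\times\CC^{jn}\to\CC^{jn}$, then to bound the dimension of that image below $jn$; the set $\Em$ will be the complement of its Zariski closure. First I would record the combinatorics. Since $\Ji[j]$ has $p+j$ rows while $|\minor_r|=p+j-2$ and $\{1,\dots,p\}\subset\minor_r\subset\{1,\dots,p+j-1\}$, exactly two rows of $\Ji[j]$ lie outside $\minor_r$: the last row $p+j$, which equals $(\partial_{x_\ell}\vphi_j)_{1\leq\ell\leq n}=(\partial_{x_\ell}\polun_j(\yy)+a_{j,\ell})_\ell$, and a further row $p+k_0$ with $1\leq k_0\leq j-1$, which equals $(\partial_{x_\ell}\vphi_{k_0})_\ell$. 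Writing $R=R(\yy,\AA_{\leq j-1})\subset\CC^n$ for the span of the rows of $\Ji[j]$ indexed by $\minor_r$, on the Zariski open set $\{\gm\neq0\}$ those $p+j-2$ rows are linearly independent, so $\dim R=p+j-2$, and $R$ does not involve $a_j$. A bordered-minor argument then shows that at a point with $\gm\neq0$ one has $\rank\Ji[j]\leq p+j-2$ if and only if both rows $p+k_0$ and $p+j$ of $\Ji[j]$ belong to $R$ — equivalently, if and only if every $(p+j-1)$-minor of $\Ji[j]$ containing $\Jm$ vanishes (there is at least one column outside $\minor_c$ available to border with, since $n-(p+j-2)=n-p-j+2\geq(d+1)/2\geq1$ by the bound on $j$).

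The core step is the dimension count on
\[
 \tilde B=\bigl\{(\yy,\balf)\in\Voreg(\hh)\times\CC^{jn}\ :\ \gm(\yy,\balf)\neq0,\ \text{rows }p+k_0\text{ and }p+j\text{ of }\Ji[j](\yy,\balf)\text{ lie in }R\bigr\},
\]
a locally closed set whose image $\pi(\tilde B)$ is, by the criterion above, precisely the set of bad parameters. Since row $p+j$ of $\Ji[j]$ equals $(\partial_{x_\ell}\polun_j(\yy))_\ell+a_j$ and $R$ does not involve $a_j$, the condition ``row $p+j\in R$'' confines $a_j$, for each $(\yy,\balf_{\leq j-1})$, to an affine subspace of $\CC^n$ of dimension $\dim R=p+j-2$; likewise, in the generic case where $R$ does not involve $a_{k_0}$, the condition ``row $p+k_0\in R$'' confines $a_{k_0}$ to an affine subspace of dimension $p+j-2$. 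Ordering the coordinates as $\bigl(\yy;\ (a_s)_{s\in\{1,\dots,j-1\}\setminus\{k_0\}};\ a_{k_0};\ a_j\bigr)$ and using $\dim\Voreg(\hh)\leq d$, this gives
\[
 \dim\tilde B\ \leq\ d+(j-2)n+2(p+j-2).
\]
Writing $n=c+d$ and using the hypothesis $j\leq c-p+(d+3)/2$, i.e.\ $p+j-2\leq c+(d-1)/2$, a direct computation gives $d+(j-2)n+2(p+j-2)<jn$. Hence $\pi(\tilde B)$ is a constructible set of dimension $<jn$, so $\Em:=\CC^{jn}\setminus\Zclosure{\pi(\tilde B)}$ is a non-empty Zariski open subset of $\CC^{jn}$; and for $\balf\in\Em$ and $\yy\in\Voreg(\hh)$ with $\gm(\yy,\balf)\neq0$ we have $(\yy,\balf)\notin\tilde B$, hence $\rank\Ji[j](\yy,\balf)\geq p+j-1$ by the criterion — which is the assertion.

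The main obstacle is getting that dimension estimate exactly right, keeping honest track of which of the parameter blocks $a_1,\dots,a_j$ are genuinely free and which are pinned to a $(p+j-2)$-dimensional affine subspace; this is precisely where the hypothesis $j\leq c-p+(d+3)/2$ is used, through the inequality $p+j-2<n-d/2$. There remains the degenerate configuration in which $R$ does depend on $a_{k_0}$; since $\MRk=\MRk(\XX,\AA_{\leq1})$, this can occur only for $k_0=1$ (with $\MRk$ genuinely depending on $a_1$), in which case a row of $\MRk$ already lies in $R$, and in the concrete instances needed later ($\MRk=\jac(\hh)$ or $\MRk=\jac(\vphi_1)$) this case is disposed of exactly as in the proof of \cite[Proposition B.6]{SS2017}, which I would follow.
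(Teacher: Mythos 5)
Your proof follows the same overall strategy as the paper — build a ``bad'' incidence locus $\tilde B$ inside $\Voreg(\hh)\times\CC^{jn}$, bound its dimension strictly below $jn$, project to the $\AA$-space, and let $\Em$ be the complement of the Zariski closure of the image — but the technique for the dimension bound is genuinely different. The paper works with the explicit locally closed set $\Zo=\V(\hh,\bgm[],1-t\gm)\cap\Voreg(\hh)$ (where $\bgm[]$ is the family of $(p+j-1)$-bordered minors and the Rabinowitsch variable $t$ enforces $\gm\neq0$), invokes the Thom/Jacobian criterion to show the Jacobian of the defining equations has full rank at every point of $\Zo$ (the crucial input is the cofactor identity $\gm[k,\ell]=\gm\cdot\partial_{x_\ell}\vphi_k+g_{k,\ell}(\XX,\AA')$, which makes the $\AA''$-block diagonal with diagonal entries $\gm$), and deduces that $\Zo$ is a smooth equidimensional locally closed set of the predicted dimension. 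You instead fiber $\tilde B$ over $\Voreg(\hh)\times\CC^{(j-2)n}$ and bound each fiber directly by $2(p+j-2)$, since each of the two conditions ``row $p+k_0\in R$'' and ``row $p+j\in R$'' pins one block of $a$-variables to an affine subspace of dimension $p+j-2$. Both routes yield the identical bound $d+(j-2)n+2(p+j-2)<jn$, so numerically nothing is lost. Your version is shorter and more geometric (no auxiliary $t$, no explicit Jacobian matrix, no equidimensionality claim); the paper's version has the side benefit of establishing equidimensionality of $\Zo$ and localizes all the combinatorics into the single cofactor identity, which is robust to the row/column bookkeeping.

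The place where your route is genuinely weaker is the degenerate configuration you flag at the end. Your fiber count hinges on $R$ being independent of $a_{k_0}$; when $k_0=1$ and $\MRk$ involves $a_1$, the condition ``row $p+k_0\in R$'' is no longer a codimension-$(n-p-j+2)$ constraint on $a_{k_0}$ (in the extreme case $\MRk=\jac(\vphi_1)$ with $k_0=1$ it is vacuous, since row $p+1$ of $\Ji[j]$ duplicates row $1\in\minor_r$), and the resulting bound $d+(j-1)n+(p+j-2)$ is \emph{not} $<jn$ in the full range $j\le c-p+(d+3)/2$. The reference to \cite[Proposition B.6]{SS2017} does not settle this: there the analogue of $\MRk$ is $\jac(\hh)$ and carries no dependence on the genericity parameters, so the degenerate configuration does not arise. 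The paper's own Lemma~\ref{lem:rm} is organized so that the dependence on each $a_{k,\ell}$ with $(k,\ell)\in\AA''$ is isolated in a single matrix entry, which your fibration argument does not automatically inherit; so for the lemma as stated (with a general $\MRk(\XX,\AA_{\le 1})$) the case $k_0=1$, $\MRk$ depending on $a_1$, needs an explicit argument rather than a pointer. For the instance $\MRk=\jac(\hh)$ that dominates the paper's use of this lemma, your proof is complete.
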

\begin{proof}
Take $\minor$ in $\subi[j]$. We proceed to show that the subset of the $\balf 
\in \CC^{jn}$ such that, for all $\yy\in \Voreg(\hh)$, 
$\gm(\yy,\balf)\neq0$ and $\Ji[j](\yy,\balf)$ has rank at most $p+j-2$ is 
a \emph{proper algebraic subset} of $\CC^{jn}$. Then, taking the complement 
will give us $\Em$.
 
 Up to reordering, assume that the rows and columns of $\Ji[j]$ that are not 
in $\Jm$ are the ones of respective indices $p+j-1,p+j$ (the last two rows) and 
$p-j+3,\dotsc,n$ (the last $n-p+j-2$ columns).
 In other words, $(p+k,\ell) \notin \minor_r \times \minor_c$ for all 
$k\in\{j-1,j\}$  and $\ell\in\{p-j+3,\dotsc,n\}$.
For such $k,\ell$, we denote by $\gm[k,\ell]$ the minor of $\Ji[j]$ obtained by 
adding to $\Jm$ the row and column indexed by respectively $p+k$ and $\ell$. 
 Let $\AA''$ be the subset of elements of $\AA_{\leq }$ formed by the 
$2(n-p-j+2)$ indeterminates 
\[
 a_{j-1,p-j+3},\dotsc,a_{j-1,n} \et a_{j,p-i+3},\dotsc, a_{j,n},
\]
and let $\AA' = \AA_{\leq j} - \AA''$. 
 Remark then that for any such $k\in\{j-1,j\}$  and 
$\ell\in\{p-j+3,\dotsc,n\}$, by cofactor expansion there exists a polynomial 
$g_{k,\ell} \in \CC[\XX,\AA']$ such that
\begin{equation}\label{eq:guvminorgen}
 \gm[u,v] = \gm \cdot \frac{\partial \vphi_k}{\partial x_\ell}(\XX,a_{k,\ell})
 + \:
 g_{k,\ell}(\XX,\AA')
\end{equation}
 
Let $\bgm[]$ be the sequence of the $2(n-p-j+2)$ minors $\gm[k,\ell]$. We 
proceed to prove that, the set of specialization values $\balf\in \CC^{jn}$ of 
the genericity parameters (the entries of $\AA_{\leq j}$), such that all these
minors in $\bgm[](\XX, \balf)$ are identically zero but not $\gm(\XX, \balf)$,
is a proper algebraic subset of $\CC^{jn}$. Hence, let $t$ a new indeterminate
and consider the \LCS
\[
\Zo = Z \cap \Voreg(\hh) \subset \CC^{n+jn+1} \quad\text{where}\quad 
Z = \V\left(\hh,\bgm[],1-t\gm\right). 
\]
One observes that if $(\yy,\balf,t) \in \Zo$ then $\yy \in \Voreg(\hh)$, 
$\gm(\yy,\balf)\neq 0$ and all the $\gm[k,\ell]$'s vanish.

We claim first that $\Zo$ is not empty. Indeed, since $\minor \in \subi[j]$, 
there exists $(\yy,\balf) \in \Voreg(\hh) \times \CC^{jn}$ such that 
$\gm(\yy,\balf)\neq 0$. Since $\gm \in \CC[\XX,\AA']$, it is independent of the 
entries of $\AA''$. 
Besides, for any $k\in\{j-1,j\}$ and $\ell\in\{p-j+3,\dotsc,n\}$,
\begin{equation}\label{eq:dpsi}
 \frac{\partial \vphi_k}{\partial x_\ell}(\yy,a_{k,\ell}) =
 \frac{\partial \polun_k}{\partial x_\ell}(\XX) + a_{k,\ell} \in \CC[\XX][\AA'']
\end{equation}
is a non-constant polynomial in the entries of $\AA''$.
Then, according to \eqref{eq:guvminorgen}, for every such $k,\ell$, one can 
choose 
$\alf_{k,\ell} \in \CC$ such that $\gm[k,\ell](\yy,\balf',\alf_{k,\ell}) = 0$.
Let $\tilde{\balf}$ be the element of $\CC^{jn}$ obtained by this choice, then
\[
 \big(\;\yy,\;\tilde{\balf},\;1/\gm(\yy,\tilde{\balf})\;\big) \in \Zo.
\]
We deduce that $\Zo$ is non-empty. We now estimate the dimension of $\Zo$ .
According to \eqref{eq:guvminorgen} and \eqref{eq:dpsi} the following Jacobian
matrix has full rank $c+2(n-p-j+2)+1$ at every point of $\Zo$:
\[\jac_{(\XX,\AA',\AA'',t)}(\hh,\bgm[],1-t\gm) = 
 \left[
 \begin{array}{c|ccc|ccc|c}
    \jac(\hh)&&\OO&&&\OO&&0\\ \hline 
    * &*&& * &\gm& 0& \vdots& 0\\
     &&&& 0 & \ddots & 0 & \vdots\\
    * &&*&&& 0 &\gm& 0\\\hline
    * &*&&*&*&*& * &\gm
  \end{array}\right].
\]
Therefore, by the Jacobian criterion \cite[Theorem 16.19]{Ei1995}, $\Zo$ is an 
equidi\-men\-sio\-nal \LCS of dimension $jn-(n-p)+2(j-2)$.
Let $Z' \subset \CC^{jn}$ be the Zariski closure of the projection of $\Zo$ on 
the coordinates associated to the variables $\AA$, then 
\[
 \dim Z' \leq \dim \Zo = jn+d-2(n-p)+2(j-2) < jn, \quad
 \text{since $j \leq c-p+(d+3)/2$.}
\]
Hence $Z'$ is a proper algebraic set of $\CC^{jn}$, so that its complement
$\ZOpolarrank_\minor$ a non-empty Zariski open subset of 
$\CC^{jn}$. Further, for any $(\yy,\balf) \in \Voreg(\hh) \times 
\ZOpolarrank_\minor$ 
such that $\gm$ does not vanish at $(\yy,\balf)$, the point 
$(\yy,\balf,1/\gm(\yy,\balf)\big)$ is not in $\Zo$,
otherwise $\balf$ would be in $Z'$. Hence, there exists $(k,\ell)$ as above 
such that $\gm[k,\ell](\yy,\balf)\neq 0$, so that $\Ji[j](\yy,\balf)$ has a 
non-zero $(c+j-1)$-minor, and then, has rank at least $c+j-1$. 
This proves the lemma.
\end{proof}

We can now conclude on the induction step as follows.
Since, by Lemma~\ref{lem:subiempty}, $\subi[j]$ is not empty, let
\[
  \ZOpolarrank_j = (\ZOpolarrank_{j-1} \times \CC^n) \:\cap
  \bigcap_{\minor\in\subi[j]}\Em,
\]
where the $\Em$ are the \NEZO sets given by Lemma~\ref{lem:rm}. 
Remark first that $\ZOpolarrank_j$ is a non-empty Zariski
open subset of $\CC^{jn}$ since it is a finite intersection of non-empty Zariski
open sets. 
Let $(\yy,\balf',\balf_j) \in \Voreg(\hh) \times
\ZOpolarrank_j$, as seen in \eqref{eq:mineurnonnulgen}, there exists 
$\minor_0 \in \subi[j]$ such that $\gm[\minor_0](\yy,\balf')\neq 0$.
By construction, $\balf=(\balf',\balf_j)$ belongs to $\Em[\minor_0]$ so that, 
by Lemma~\ref{lem:rm}, $\Ji[j](\yy,\balf)$ has rank at least $p+j-1$.
Besides, since $\balf'\in\ZOpolarrank_{j-1}$, $\MRk(\yy,\balf')$ has full rank 
$p$. 

In conclusion, we proved that $\sfR_j$, which the induction step, and, by 
mathematical induction, this proves Proposition~\ref{prop:generalrankdefectgen}.

\begin{comment}
 Let $\YY = (y_1,\dotsc,y_N)$ be a family of $N$ indeterminates and
 for $J \subset \{1,\dotsc,n\}$ let $\YY' = (y_j)_{j\in J}$ be a subfamily of 
$N' \leq N$ indeterminates. Then we denote by 
 \[
   \begin{array}{cccc}
    \pi_{\YY'} \colon & \CC^N & \longrightarrow & \CC^{N'}\\
    & (\yy_1,\dotsc,\yy_N) & \longmapsto & (\yy_j)_{j\in J}
   \end{array}
 \]
 the projection on the coordinates corresponding to the indeterminates $\YY'$. 
\end{comment}

\subsubsection{Dimension estimates}
In this paragraph, we aim to prove the second point of 
Proposition~\ref{prop:localgenericityrankgen}, using transversality results.
Let
\[
 \begin{array}{lcccccclccccc}
  \Phi\colon \hspace*{-1cm}& \quad\CC^n&\times&\CC^{in} &\times& \CC^c& \times& \CC^i &
\longrightarrow & \;\CC^c \times \CC^n\\
             & (\;\yy&,&\balf&,&\blambda&,&\bvtheta \quad)& \longmapsto &
\Big( 
\;\hh(\yy) , {}^t[\blambda,\bvtheta] \cdot 
\jac_{(\yy,\balf)}(\hh,\vphi)\Big)
 \end{array}
\]
and for any $\balf \in \CC^{in}$, let $\Phi_{\balf} = (\yy,\blambda,\bvtheta) 
\mapsto \Phi(\yy,\blambda,\bvtheta,\balf)$.
The interest of such a map is illustrated by the following lemma.
Let $\scrA \subset \CC^{n+in+c+i}$ be the Zariski open subset of the elements 
$(\yy,\blambda,\bvtheta)$ where $\blambda\neq \OO$ and $\jac_{\yy}(\hh)$ has 
full rank. 
\begin{lemma}\label{lem:woprojgen}
  Let $\balf \in \CC^{in}$ and
 \[
  \Wo_{\balf} = \Big\{ \yy \in \CC^n \mid \yy \in \Voreg(\hh) \et 
\rank \jac_{(\yy,\balf)}(\hh,\vphi) \leq c+i-1\Big\}.
  \] 
  Then $\Wo_{\balf} = \pi_{\XX}\big(\scrA \cap \Phi^{-1}_{\balf}(\OO)\big).$
\end{lemma}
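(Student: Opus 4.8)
The plan is to unwind both sides of the claimed equality and reduce it to an elementary fact about the left kernel of the stacked Jacobian matrix $\jac_{(\yy,\balf)}(\hh,\vphi)$, whose first $c$ rows form $\jac_\yy(\hh)$ and whose last $i$ rows form $\jac_{(\yy,\balf)}(\vphi)$ (here the entries of $\jac(\vphi)$ are polynomials in $\XX$ and $\AA$ only, as $\partial_{x_\ell}\vphi_j=\partial\polun_j/\partial x_\ell+a_{j,\ell}$, so the evaluation at $(\yy,\balf)$ makes sense).

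First I would rewrite the right-hand side: $\yy$ lies in $\pi_{\XX}\big(\scrA\cap\Phi^{-1}_{\balf}(\OO)\big)$ exactly when there are $\blambda\in\CC^c$ and $\bvtheta\in\CC^i$ with $(\yy,\blambda,\bvtheta)\in\scrA$ — that is, $\jac_\yy(\hh)$ of full rank $c$ and the non-vanishing condition on $(\blambda,\bvtheta)$ imposed in $\scrA$ — and with $\Phi_{\balf}(\yy,\blambda,\bvtheta)=\OO$, i.e.\ $\hh(\yy)=\OO$ together with ${}^t[\blambda,\bvtheta]\cdot\jac_{(\yy,\balf)}(\hh,\vphi)=\OO$. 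Since $\hh(\yy)=\OO$ together with $\jac_\yy(\hh)$ of full rank $c$ is precisely the condition ``$\yy\in\Voreg(\hh)$'', the statement reduces to: for $\yy\in\Voreg(\hh)$, the matrix $\jac_{(\yy,\balf)}(\hh,\vphi)$ has rank at most $c+i-1$ if and only if its left kernel contains a vector meeting the non-vanishing condition of $\scrA$.

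For the inclusion $\pi_{\XX}(\scrA\cap\Phi^{-1}_{\balf}(\OO))\subseteq\Wo_{\balf}$, I would take such a triple $(\yy,\blambda,\bvtheta)$: then $\yy\in\Voreg(\hh)$ as just noted, and $(\blambda,\bvtheta)$ is a nonzero element of the left kernel of the $(c+i)\times n$ matrix $\jac_{(\yy,\balf)}(\hh,\vphi)$, so its $c+i$ rows are linearly dependent, its rank is at most $c+i-1$, and hence $\yy\in\Wo_{\balf}$. For the converse, given $\yy\in\Wo_{\balf}$ we have $\yy\in\Voreg(\hh)$ and $\rank\jac_{(\yy,\balf)}(\hh,\vphi)\leq c+i-1<c+i$, so the left kernel is non-trivial and I may pick $(\blambda,\bvtheta)\neq\OO$ in it. The only step requiring an argument is that such a vector automatically satisfies the non-vanishing condition of $\scrA$: were its $\vphi$-component $\bvtheta$ equal to $\OO$, the relation would reduce to ${}^t\blambda\cdot\jac_\yy(\hh)=\OO$ with $\blambda\neq\OO$, contradicting $\rank\jac_\yy(\hh)=c$, which is part of ``$\yy\in\Voreg(\hh)$''. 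Adding $\hh(\yy)=\OO$, the triple $(\yy,\blambda,\bvtheta)$ then lies in $\scrA\cap\Phi^{-1}_{\balf}(\OO)$ and projects onto $\yy$.

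I do not expect any real obstacle: the lemma is essentially bookkeeping once $\Phi$, $\scrA$, $\Voreg(\hh)$ and $\Wo_{\balf}$ are expanded, the one substantive ingredient being the observation used in the reverse inclusion — namely that the full-rank hypothesis on $\jac_\yy(\hh)$ forces any nonzero left-kernel vector of the stacked Jacobian to be non-trivial on the block dual to $\vphi$.
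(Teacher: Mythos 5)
Your proof is correct and follows essentially the same route as the paper's: expand both sides, identify $\Wo_{\balf}$ with the points of $\Voreg(\hh)$ where the stacked Jacobian drops rank (equivalently, has a nonzero left-kernel vector), and observe that the full-rank condition on $\jac_\yy(\hh)$ forces the $\vphi$-component $\bvtheta$ of any such vector to be nonzero, so the non-vanishing condition in $\scrA$ is automatic. Note that you correctly read $\scrA$ as requiring $\bvtheta \neq \OO$, which matches the re-statement of $\scrA$ in Lemma~\ref{lem:rankjacPhigen} and the logic of the argument; the first display defining $\scrA$ just before the lemma writes $\blambda\neq\OO$, which appears to be a slip in the source, and with that reading the inclusion $\Wo_{\balf}\subseteq\pi_{\XX}(\scrA\cap\Phi^{-1}_{\balf}(\OO))$ would not follow.
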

\begin{proof}
 Let $\balf \in \CC^{in}$ and $\yy \in \Voreg(\hh)$. Then $\yy \in \Wo_{\balf}$ 
if and only if $\jac_{(\yy,\balf)}(\hh,\vphi)$ has not full rank, which is 
equivalent to having a non-zero vector in its co-kernel by duality.
 Besides, since $\yy \in \Voreg(\hh)$, the matrix $\jac_{\yy}(\hh)$ has full 
rank. 
 Hence $\yy$ belongs to $\Wo_{\balf}$ if and only if there exists a non-zero 
vector $(\blambda,\bvtheta) \in \CC^{c+i}$ such that 
$\Phi(\yy,\balf,\blambda,\bvtheta) = 0$ and $\jac_{\yy}(\hh)$ has full rank.
 Finally $\bvtheta$ cannot be zero otherwise $\jac_{\yy}(\hh)$ would have a 
non-trivial left-kernel (containing $\blambda$), and then would not be full 
rank.
\end{proof}

\begin{lemma}\label{lem:rankjacPhigen}
Let $\scrA \subset \CC^{n+c+i}$ be the Zariski open subset of the elements 
$(\yy,\blambda,\bvtheta)$ where $\bvtheta\neq \OO$ and $\jac_{\yy}(\hh)$ has 
full rank.
 There exists a non-empty Zariski open subset $\ZOpolardim_i \subset \CC^{in}$ 
such 
that for all $\balf \in \ZOpolardim_i$, 
$\jac_{(\yy,\blambda,\bvtheta)}\Phi_{\balf}$ has 
full rank $c+n$ at any $(\yy,\blambda, \bvtheta) \in \scrA \cap 
\Phi^{-1}_{\balf}(\OO)$.
\end{lemma}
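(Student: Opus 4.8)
The plan is to run a parametric transversality (Thom--Sard) argument in which the parameter space is $\CC^{in}$, the space of specialisation values $\balf$ of the entries of $\AA$. Write $\Phi=(\Phi^{(1)},\Phi^{(2)})$ with $\Phi^{(1)}(\yy)=\hh(\yy)\in\CC^{c}$ and $\Phi^{(2)}(\yy,\balf,\blambda,\bvtheta)={}^{t}[\blambda,\bvtheta]\cdot\jac_{(\yy,\balf)}(\hh,\vphi)\in\CC^{n}$. Since $\vphi_{j}(\XX,a_{j})=\polun_{j}(\XX)+\sum_{k}a_{j,k}x_{k}+\polde_{j}(a_{j})$, one has $\partial_{x_{\ell}}\vphi_{j}=\partial_{x_{\ell}}\polun_{j}+a_{j,\ell}$, so $\jac_{(\yy,\balf)}(\hh,\vphi)$ depends affinely on $\balf$ and $\partial_{a_{j,\ell}}\Phi^{(2)}$ is $\bvtheta_{j}$ times the $\ell$-th unit row covector; in particular the linear map $\partial_{\balf}\Phi^{(2)}\colon\CC^{in}\to\CC^{n}$ is surjective as soon as $\bvtheta\neq\OO$ (pick an index $j_{0}$ with $\bvtheta_{j_{0}}\neq 0$ and vary $a_{j_{0},\cdot}$).

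The first step is the unparametrised transversality statement: $\jac_{(\yy,\balf,\blambda,\bvtheta)}\Phi$ has full rank $c+n$ at every point of the open set $\scrA^{\sharp}\subset\CC^{n+in+c+i}$ cut out by the same conditions as $\scrA$ (namely $\bvtheta\neq\OO$ and $\jac_{\yy}(\hh)$ of full rank). Indeed, testing against a nonzero row covector $(\mu,\nu)\in\CC^{c}\times\CC^{n}$: if $\nu=\OO$ then $\mu\neq\OO$ and the $\yy$-block of $(\mu,\nu)\cdot\jac\Phi$ is $\mu\,\jac_{\yy}(\hh)\neq\OO$ because $\jac_{\yy}(\hh)$ has full row rank $c$ on $\scrA^{\sharp}$; if $\nu\neq\OO$ then the $\balf$-block $\nu\cdot\partial_{\balf}\Phi^{(2)}\neq\OO$ by the surjectivity noted above. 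By the Jacobian criterion \cite[Theorem 16.19]{Ei1995}, $Z:=\scrA^{\sharp}\cap\Phi^{-1}(\OO)$ is smooth, either empty or equidimensional of dimension $(n+in+c+i)-(c+n)=in+i$. Consider then the projection $\pi\colon Z\to\CC^{in}$, $(\yy,\balf,\blambda,\bvtheta)\mapsto\balf$. Applying the algebraic version of Sard's lemma \cite[Proposition B.2]{SS2017} to each irreducible component of $Z$ (characteristic zero, $Z$ smooth) — components with non-dense image contributing the complement of that image's closure, dominant ones a dense open over which $\pi$ is a submersion — yields a non-empty Zariski open $\ZOpolardim_{i}\subset\CC^{in}$ such that for every $\balf\in\ZOpolardim_{i}$ the differential $d_{z}\pi$ is surjective onto $\CC^{in}$ at each $z\in\pi^{-1}(\balf)$.

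It then remains a purely linear-algebraic step to conclude. Fix $\balf\in\ZOpolardim_{i}$ and $(\yy,\blambda,\bvtheta)\in\scrA\cap\Phi^{-1}_{\balf}(\OO)$, and set $z=(\yy,\balf,\blambda,\bvtheta)\in\pi^{-1}(\balf)$. Since $\Phi$ is a submersion at $z$, $T_{z}Z=\ker d_{z}\Phi$; surjectivity of $d_{z}\pi$ says $T_{z}Z+V_{0}=\CC^{n+in+c+i}$, where $V_{0}$ is the subspace of vectors with vanishing $\balf$-component. Applying $d_{z}\Phi$ and using that it vanishes on $T_{z}Z$ gives that the image of $\jac_{(\yy,\blambda,\bvtheta)}\Phi_{\balf}$, which is exactly $d_{z}\Phi(V_{0})$, equals $d_{z}\Phi(\CC^{n+in+c+i})=\CC^{c}\times\CC^{n}$; i.e.\ $\jac_{(\yy,\blambda,\bvtheta)}\Phi_{\balf}$ has full rank $c+n$. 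The only genuinely delicate point is the surjectivity of $\partial_{\balf}\Phi^{(2)}$, which holds precisely on the locus $\bvtheta\neq\OO$ — this is exactly why $\scrA$ is defined the way it is — and once this is in hand the generic-smoothness step and the linear-algebra conclusion are standard, running along the same lines as the treatment of the set $\Zo$ in the proof of Lemma~\ref{lem:rm}.
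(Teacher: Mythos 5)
Your proof is correct and follows essentially the same approach as the paper: the key observation is that the full Jacobian of $\Phi$ (with the parameters $\balf$ among the variables, contributing $\bvtheta_j I_n$-blocks) has full rank $c+n$ at every point of $\scrA\times\CC^{in}$, after which parametric transversality gives the generic parameter. The only difference is that the paper directly invokes Thom's weak transversality theorem \cite[Proposition B.3]{SS2017} for the concluding step, whereas you re-derive it from the algebraic Sard lemma together with the linear-algebra argument relating $T_zZ$, $\ker d_z\pi$ and the image of $d_z\Phi\vert_{V_0}$ --- which is precisely the content of that theorem.
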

\begin{proof}
We have
\[
  \jac_{(\XX,\lambda,\vtheta,a_1,\dotsc,a_i)}\Phi=
  \left[\begin{array}{c|c|ccccccc}
    &&&&\\[-0.9em]
    \jac(\hh) & \hspace*{1.5em}\OO\hspace*{1.5em} &\OO & \cdots & \OO\\ \hline
     &&&&\\[-0.9em]
     \bstar & \bstar & \vtheta_1 I_n & \cdots & \vtheta_{i} I_n
  \end{array}\right]
\]
where $I_n$ is the identity matrix of size $n$. Let $\balf \in \CC^{in}$, and
$(\yy, \blambda, \bvtheta) \in \scrA$ such that the above Jacobian matrix has
full rank $c+n$ at $(\yy,\balf,\blambda, \bvtheta)$. Hence $\OO$ is a regular
value of $\Phi$ on $\scrA \times \CC^{in}$. Therefore, by the Thom's weak
transversality theorem \cite[Proposition B.3]{SS2017}, there exists a non-empty
Zariski open subset $\ZOpolardim_i \subset \CC^{in}$ such that for all $\balf
\in \ZOpolardim_i$, $\OO$ is a regular value of $\Phi_{\balf}$ on $\scrA$. In
other words, for all $\balf \in \ZOpolardim_i$, the matrix $\jac \Phi_{\balf}$
has full rank $c+n$ over $\scrA\cap\Phi_{\balf}^{-1}(\OO)$.
\end{proof}

\begin{lemma}\label{lem:dimwogen}
  Let $\ZOpolardim_i \subset \CC^{in}$ be the non-empty Zariski subset defined
  in Lemma~\ref{lem:rankjacPhigen}. Then, for all $\balf \in \ZOpolardim_i$,
  $\Wo_{\balf}$ has dimension at most $i-1$.
\end{lemma}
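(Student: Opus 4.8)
The plan is to deduce the bound by combining Lemmas~\ref{lem:woprojgen} and~\ref{lem:rankjacPhigen} with the theorem on the dimension of the fibers; the one extra ingredient is a scaling symmetry which forces the projection $\pi_{\XX}$ to drop at least one dimension.

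First I would fix $\balf\in\ZOpolardim_i$ and set $Z=\scrA\cap\Phi_{\balf}^{-1}(\OO)\subset\CC^{n+c+i}$, so that $\Wo_{\balf}=\pi_{\XX}(Z)$ by Lemma~\ref{lem:woprojgen}. By Lemma~\ref{lem:rankjacPhigen}, $\jac\Phi_{\balf}$ has full rank $c+n$ at every point of $Z$; since $\scrA$ is open, the Jacobian criterion \cite[Theorem 16.19]{Ei1995} shows that $Z$ is either empty or a smooth locally closed set, equidimensional of dimension $(n+c+i)-(c+n)=i$. If $Z=\emptyset$ then $\Wo_{\balf}=\emptyset$ and there is nothing to prove, so I assume $\dim Z=i$.

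Next I would exploit the $\CC^*$-action $t\cdot(\yy,\blambda,\bvtheta)=(\yy,t\blambda,t\bvtheta)$. The first component of $\Phi$ does not involve $(\blambda,\bvtheta)$ and the second is linear in $(\blambda,\bvtheta)$, while $\scrA$ is cut out by conditions on $\yy$ alone together with $\bvtheta\neq\OO$; hence $Z$ is stable under this action, and since $\bvtheta\neq\OO$ on $\scrA$ no point of $Z$ is fixed by all of $\CC^*$, so every orbit is a one-dimensional locally closed subset of $Z$. Each $t\in\CC^*$ acts on $Z$ as an automorphism, hence permutes the finitely many irreducible components of $Z$; as $\CC^*$ is connected it must fix each component setwise, so every $\CC^*$-orbit is contained in a single irreducible component. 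Since $\pi_{\XX}$ is constant along orbits, it follows that for any irreducible component $Z_k$ of $Z$ and any $\yy\in\pi_{\XX}(Z_k)$ the fibre $Z_k\cap\pi_{\XX}^{-1}(\yy)$ contains an entire one-dimensional orbit, hence has dimension at least $1$.

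Finally, applying the theorem on the dimension of the fibers \cite[Theorem 1.25]{Sh1994} to the dominant morphism $Z_k\to\overline{\pi_{\XX}(Z_k)}$ yields $\dim\overline{\pi_{\XX}(Z_k)}\le\dim Z_k-1=i-1$ for every $k$, and therefore $\dim\Wo_{\balf}=\dim\pi_{\XX}(Z)=\max_k\dim\pi_{\XX}(Z_k)\le i-1$ by Lemma~\ref{lem:woprojgen}. The step I expect to require the most care is the passage from ``$\pi_{\XX}$ has one-dimensional fibres over $Z$'' to the same statement over each irreducible component $Z_k$: this is precisely where one needs that $\CC^*$, being connected, fixes every irreducible component of $Z$, so that an orbit through a point of $Z_k$ cannot leave $Z_k$. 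Everything else is routine bookkeeping.
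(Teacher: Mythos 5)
Your proof is correct, and the $\CC^*$-action argument is a genuinely different — and arguably cleaner — way to organize the same underlying observation. Both your proof and the paper's exploit the fact that $\Phi_{\balf}$ has a first block independent of $(\blambda,\bvtheta)$ and a second block linear homogeneous in $(\blambda,\bvtheta)$, so that $\pi_{\XX}$-fibers inside $\scrA\cap\Phi_{\balf}^{-1}(\OO)$ have dimension at least one. The paper makes this explicit by writing a generic fiber as $\{\zz\}\times\big(E_\zz\cap\Ocal(\cdots)\big)$ with $E_\zz$ a linear subspace containing a non-zero vector, but to do so it needs a non-trivial intersection of three auxiliary open sets $\Omega_1\cap\Omega_2\cap\Omega_3$ to control the passage from the closure $C$ back to the locally closed set $\scrA\cap\Phi_{\balf}^{-1}(\OO)$ and from $C$ to its components $C_j$. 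You replace this bookkeeping by noting that $Z=\scrA\cap\Phi_{\balf}^{-1}(\OO)$ is $\CC^*$-stable with free action (since $\bvtheta\neq\OO$ on $\scrA$), that the connected group $\CC^*$ fixes each irreducible component of $Z$ setwise, and hence that \emph{every} fiber of $\pi_{\XX}$ over a component $Z_k$ — not merely a generic one — contains a one-dimensional orbit; the theorem on the dimension of fibers then closes the argument. The one place you flagged as delicate is indeed the crux: that the orbit through a point of $Z_k$ stays inside $Z_k$. This holds because $\CC^*\times Z_k$ is irreducible, its image under the action map is an irreducible constructible subset of $Z$ containing $Z_k$, and by maximality of the irreducible component this image lies in $Z_k$; this works perfectly well for the locally closed $Z$. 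So your proof is a valid and somewhat more conceptual alternative to the paper's.
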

\begin{proof}
  Let $\balf \in \ZOpolardim_i$ and suppose that $\Wo_{\balf}$ is not empty.
  Then, according to Lemma~\ref{lem:woprojgen}, $\scrA \cap
  \Phi^{-1}_{\balf}(\OO)$ is non-empty as well. By Lemma~\ref{lem:rankjacPhigen}
  and \cite[Lemma A.1]{SS2017}, $\scrA \cap \Phi^{-1}_{\balf}(\OO)$ is a
  non-singular equidimensional \LCS and
 \[
  \dim\big(\scrA \cap \Phi^{-1}_{\balf}(\OO)\big) = n+c+i-(c+n)=i.
 \]
 Let $C$ be the Zariski closure of $\scrA \cap \Phi^{-1}_{\balf}(\OO)$ and
 $(C_j)_{1\leq j\leq \ell}$ be its irreducible components. For all $1\leq j \leq
 \ell$, let $T_j$ be the Zariski closure of $\pi_{\XX}(C_j)$. Since $\Wo_{\balf}
 \subset \bigcup_{1\leq j \leq \ell} T_j$, it is enough to prove that $\dim T_j
 \leq i-1$ for all $1\leq j \leq \ell$.

 Fix $1\leq j \leq \ell$. The restriction $\pi_{\XX} \colon C_j \to T_j$ is a
 dominant regular map between two irreducible algebraic sets. Then one can apply
 the theorem on the dimension of fibers from \cite[Theorem 1.25]{Sh1994} and
 claim that there exists a non-empty Zariski open subset $\Omega_1$ of $T_j$
 such that
 \begin{equation}\label{eqn:dimfbrCj}
  \forall \zz \in \Omega_1, \; 
  \dim\big(\pi_{\XX}^{-1}(\zz) \cap C_j \big) = \mypos{}{Cj}\dim C_j - \dim 
T_j= 
i - \dim T_j.
\end{equation}
Then it is enough to prove that $\dim(\pi_{\XX}^{-1}(\zz) \cap C_j) \geq 1$. 
Let 
$J' = \{1\leq k \leq \ell \mid T_k = T_j\}$. Then it holds that
\[
  \Omega_2 = T_j - \bigcup_{k \notin J'} T_k
\]
is a non-empty Zariski open subset of $T_j$. Besides, for all $\zz \in 
\Omega_2$, $\pi_{\XX}^{-1}(\zz) \cap C_j = \pi_{\XX}^{-1}(\zz) \cap C$ which is 
the Zariski closure of $\pi_{\XX}^{-1}(\zz) \cap \scrA \cap 
\Phi^{-1}_{\balf}(\OO)$ if and only if $\zz \in \Wo_{\balf}$ (otherwise it is 
empty).

However, by definition, $C_j' = \scrA \cap \Phi^{-1}_{\balf}(\OO) \cap C_j$ is 
a 
non-empty Zariski open subset of $C_j$, and then $\pi_{\XX}(C_j')$ is a 
non-empty Zariski subset of $T_j$. Since it contains $\pi_{\XX}(C_j')$,  the 
set $\Omega_3 = \Wo_{\balf} \cap T_j$ is a non-empty Zariski open subset of 
$T_j$ as well. 

Now, let $\Omega = \Omega_1 \cap \Omega_2 \cap \Omega_3$,  it is a non-empty
(Zariski open) subset of $T_j$, and let $\zz \in \Omega$.
Since $\zz$ is in $\Omega_3$, it is in $\Wo_{\balf}$ by definition.
Besides, $\zz \in \Omega_2$, so that
\[
  \dim(\pi_{\XX}^{-1}(\zz) \cap C_j)
  =
  \dim\Big(\pi_{\XX}^{-1}(\zz) \cap
  \scrA \cap \Phi_{\balf}^{-1}(\OO)  \Big).
\]
Since $\zz \in \Omega_1$, together with \eqref{eqn:dimfbrCj}, one gets
that
\begin{equation}\label{eq:inegdim0gen}
 \forall \zz \in \Omega, \quad \zz \in \Wo_{\balf} \et
 \dim T_j =
  i - \dim\Big(\pi_{\XX}^{-1}(\zz) \cap
  \scrA \cap \Phi_{\balf}^{-1}(\OO)  \Big),
\end{equation} 
Let $\zz \in \Omega$, remark that 
\[
 \pi_{\XX}^{-1}(\zz) \cap \scrA \cap \Phi_{\balf}^{-1}(\OO) 
 \;=\; \{\zz\} \times \Big( E_{\zz} \cap \Ocal(\vtheta_{e+1},\dotsc,\vtheta_i) 
\Big)
\]
where $E_{\zz}$ is a linear subspace of $\CC^{c+i}$. Indeed, $E_{\zz}$ is 
defined 
by homogeneous linear equations in the entries of $(\blambda,\bvtheta)$.
Since $\zz \in \Wo_{\balf} \subset \pi_{\XX}(\scrA \cap 
\Phi_{\balf}^{-1}(\OO))$, 
there exists a non-zero $(\blambda,\bvtheta) \in \CC^{c+i}$ such that 
$(\zz,\blambda,\bvtheta) \in \scrA \cap \Phi_{\balf}^{-1}(\OO)$. Then $E_{\zz}$ 
contains a non-zero vector, so that $\dim E_{\zz} \geq 1$.
\begin{comment}
Since $\zz \in \Wo_{\yy} \subset \pi_{\XX,\AA}(\scrA)$, the matrix $\Ji[i-1]$ 
has full rank $c+i-1$ in $\zz$ \todohard[fancyline]{Pb:/}. Hence $\Ji(\zz)$ has 
rank at least $c+i-1$, so that its co-kernel $E_{\zz}$ has dimension at most 
$c+i 
- (c+i-1) = 1$.
\end{comment}
 Finally, injecting this inequality in \eqref{eq:inegdim0gen} leads to $\dim 
T_j 
\leq i-1$ as required.
\end{proof}

\subsubsection{Proof of Proposition~\ref{prop:localgenericityrankgen}.}~\\
We can now tackle the proof of the main proposition of this subsection.
Recall that we have fixed three integers $(d,\ifi,i)$ such that $2\leq \ifi 
\leq d+1 \leq n+1$ and $1 \leq i \leq \ifi$. Moreover, we consider polynomials 
$\hh = (h_1,\dotsc,h_c)$ in $\CC[\XX]$, where $c = n-d$. 
Finally, let $\vphi = (\vphi_1,\dotsc,\vphi_i)$, such that
\[
  \vphi_j(\XX,a_j) = \polun_j(\XX) + 
  \sum_{k=1}^n a_{j,k}x_k  +
  \polde_j(a_j) \in \CC[\XX,\AA_{\leq j}],
\]
for all $1\leq j \leq i$. Let $\ZOpolarloc$ be the non-empty Zariski open 
subset of $\CC^{in}$ defined by
\[
  \ZOpolarloc = 
\left\{ \begin{array}{cl} \ZOpolardim_i \cap \ZOpolarrank_i' \cap
      \ZOpolarrank_i''
            & \text{if } i \leq (d+3)/2;\\
            \ZOpolardim_i & \text{else,}
          \end{array}\right.
\]
where $\ZOpolardim_i$, $\ZOpolarrank_i'$ and $\ZOpolarrank_i''$ are the \NEZO 
sets given respectively by Lemma~\ref{lem:rankjacPhigen}, 
Corollaries~\ref{cor:rankhphigen} and \ref{cor:rankphigen}. Note that the 
assumptions of Corollary~\ref{cor:rankphigen} since $d \leq n-1$.

\smallskip
Now let $\balf \in \ZOpolarloc$ and $\bphi = 
(\vphi_1(\XX,\balf),\dotsc,\vphi_i(\XX,\balf))$. 
The first item of the proposition is a direct consequence definition of
$\Voreg(\hh)$. Besides, according to \cite[Lemma A.2]{SS2017}, the set 
$\Wo_{\balf}$ defined in Lemma~\ref{lem:woprojgen} is nothing but 
$\Wophii[i][\Vreg(\hh)]$, whose Zariski
closure is $\Wophii[i][\Vreg(\hh)]$, by definition. Hence, since $\balf \in
\ZOpolardim_i$, the second item is exactly the statement of
Lemma~\ref{lem:dimwogen}.

\smallskip
Suppose now that $i \leq (d+3)/2$, so that $\balf \in \ZOpolarrank_i' \cap
\ZOpolarrank_i''$. Hence, by Corollaries~\ref{cor:rankhphigen}
and~\ref{cor:rankphigen}, for all $\yy \in \Voreg(\hh)$,
\[
  \rank \jac_{\yy}(\map[i]) = i \et \rank \jac_{\yy}(\hh,\map[i]) \geq c+i-1.
\]
Hence, there exists a $(c+i-1)$-minor $m''$ of $\jac_{\yy}(\hh,\map[i])$,
containing the rows of $\jac(\map[i])$, that does not vanish at $\yy$. This
proves the third item.

\smallskip
In the remaining we proceed to prove the last two items. Let $m'$ be a 
$c$-minor of $\jac(\hh)$ and $m''$ be a $(c+i-1)$-minor of 
$\jac([\hh,\map[i]])$ containing the rows of $\jac(\map[i])$. Assume, without 
loss of generality, that $m''$ is not the zero polynomial. 
The next lemma establishes the second to last item of 
Proposition~\ref{prop:localgenericityrankgen}.
\begin{lemma}\label{lem:intercomplW}
  Let $m'$ and $m''$ as above. The set $\Wophii[i][\Voreg(\hh)]$ is defined on 
$\Ocal(m'm'')$ by the vanishing set of the polynomials 
$(\hh,\Hphi(\hh,i,m''))$. Equivalently,
 \[
    \Ocal(m'm'') \; \cap \;\Wophii[i][\Voreg(\hh)] = \Ocal(m'm'') \; \cap \; 
\V\big(\hh,\Hphi(\hh,i,m'')\big).
 \]

\end{lemma}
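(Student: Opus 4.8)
The plan is to reduce the statement to the classical criterion expressing the rank of a matrix through its bordered minors. First I would recall the pointwise description of the polar variety obtained just above: by Lemma~\ref{lem:woprojgen}, together with the identification of $\Wo_{\balf}$ with $\Wophii[i][\Voreg(\hh)]$ noted in the proof of Proposition~\ref{prop:localgenericityrankgen}, a point $\yy$ lies in $\Wophii[i][\Voreg(\hh)]$ if and only if $\hh(\yy)=0$, the matrix $\jac_{\yy}(\hh)$ has full rank $c$, and $\rank\jac_{\yy}([\hh,\map[i]])\leq c+i-1$. Since $m'$ is a $c$-minor of $\jac(\hh)$, on the open set $\Ocal(m')$ the matrix $\jac(\hh)$ automatically has rank $c$; hence, over $\Ocal(m'm'')$, belonging to $\Wophii[i][\Voreg(\hh)]$ is equivalent to the conjunction of $\hh(\yy)=0$ and $\rank\jac_{\yy}([\hh,\map[i]])\leq c+i-1$. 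As the polynomials $\hh$ also occur in the sequence $(\hh,\Hphi(\hh,i,m''))$, it therefore suffices to prove that, for $\yy\in\Ocal(m'm'')$,
\[
  \rank\jac_{\yy}([\hh,\map[i]])\leq c+i-1
  \iff
  \text{every entry of }\Hphi(\hh,i,m'')\text{ vanishes at }\yy.
\]

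Next I would unwind the combinatorics of $\Hphi(\hh,i,m'')$. The matrix $\jac([\hh,\map[i]])$ has $c+i$ rows (the $c$ rows of $\jac(\hh)$ and the $i$ rows of $\jac(\map[i])$) and $n$ columns, and $m''$ is the determinant of the $(c+i-1)\times(c+i-1)$ submatrix $M''$ obtained by deleting one row and $n-c-i+1$ columns; since $M''$ must contain all the rows of $\jac(\map[i])$, the deleted row is necessarily one of the $c$ rows of $\jac(\hh)$. By Definition~\ref{def:hphi}, the entries of $\Hphi(\hh,i,m'')$ are then exactly the determinants of the $(c+i)\times(c+i)$ submatrices of $\jac([\hh,\map[i]])$ containing $M''$: each of these uses all $c+i$ rows, hence in particular the deleted one, together with the $c+i-1$ columns of $M''$ and one of the $n-c-i+1$ remaining columns.

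The heart of the argument is the bordered-minor criterion: if $A$ is a matrix over a field admitting an invertible $r\times r$ submatrix $B$, then $\rank A=r$ if and only if every $(r+1)\times(r+1)$ submatrix of $A$ containing $B$ is singular. I would include the short proof: the rows of $B$ are linearly independent, so $\rank A\geq r$; conversely, given any other row $\rho$ of $A$, subtracting from $\rho$ the unique linear combination of the rows of $B$ agreeing with $\rho$ on the columns of $B$ yields a row vanishing on those columns whose remaining entries equal, up to the nonzero factor $\det B$, the bordered $(r+1)$-minors through $B$ and $\rho$, so if these vanish then $\rho$ lies in the span of the rows of $B$, whence $\rank A\leq r$. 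I apply this with $A=\jac_{\yy}([\hh,\map[i]])$, $r=c+i-1$ and $B=M''(\yy)$, which is invertible precisely because $m''(\yy)\neq0$ on $\Ocal(m'')$: here $A$ has exactly $r+1=c+i$ rows, so there is a single row outside $B$, and ``every $(r+1)$-submatrix of $A$ containing $B$'' ranges exactly over the submatrices identified in the previous step, whose determinants form $\Hphi(\hh,i,m'')$. Since $m''(\yy)\neq0$ already forces $\rank\jac_{\yy}([\hh,\map[i]])\geq c+i-1$, the criterion yields the displayed equivalence, and combining it with the reduction of the first paragraph gives the two inclusions of the lemma.

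I expect the only delicate point — more bookkeeping than genuine difficulty — to be the matching in the second step: verifying that the family of minors produced by Definition~\ref{def:hphi} coincides exactly with the family of bordered minors through $M''$ entering the rank criterion (in particular that the single ``missing row'' is forced to be a row of $\jac(\hh)$), and keeping track of the roles of $m'$ and $m''$, namely that localizing at $m'$ is what turns $\Voreg(\hh)$ into $\V(\hh)$ equipped with the full-rank condition on $\jac(\hh)$, while localizing at $m''$ is what makes the bordered-minor criterion applicable.
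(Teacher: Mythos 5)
Your proof is correct, and it reaches the same conclusion by a different, somewhat more elementary route. The paper appeals to the exchange lemma of \cite[Lemma 1]{BGHM2001}, which gives a polynomial identity $m''\,m = \sum_j \eps_j\, m_j\, m''_j$ for any $(c+i)$-minor $m$ of $\jac([\hh,\map[i]])$, with the $m''_j$ ranging over $\Hphi(\hh,i,m'')$; localizing at $m''$ then yields both inclusions at once. You instead argue pointwise with the classical Kronecker bordered-minor criterion: since $\jac_{\yy}([\hh,\map[i]])$ has exactly $c+i$ rows and, on $\Ocal(m'')$, contains an invertible $(c+i-1)\times(c+i-1)$ pivot (the submatrix underlying $m''$), its rank is $c+i-1$ if and only if all the $(c+i)$-minors bordering that pivot vanish, and those are exactly the entries of $\Hphi(\hh,i,m'')$ by Definition~\ref{def:hphi}. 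The two tools are close relatives --- the exchange lemma is essentially an identity-level refinement of the bordered-minor criterion --- but your version proves the criterion inline and works fiber-by-fiber over the residue field at $\yy$, whereas the paper cites a polynomial identity once and for all; both are valid. Your opening reduction via Lemma~\ref{lem:woprojgen} is also a small detour, since the paper obtains the pointwise description of $\Ocal(m')\cap\Wophii[i][\Voreg(\hh)]$ directly from \cite[Lemma A.2]{SS2017}, but that does not affect correctness.
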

\begin{proof}
  Inside the Zariski open set $\Ocal(m')$, the matrix $\jac(\hh)$ has full rank,
  which implies by \cite[Lemma A.2]{SS2017} that
\[
 \Ocal(m') \cap \Wophii[i][\Voreg(\hh)] = \Ocal(m') \cap
 \Big\{ \yy \in \V(\hh) \mid \rank(\jac([\hh,\map[i]]) < c + i \Big\}. 
\]
Besides, by the exchange lemma of \cite[Lemma 1]{BGHM2001}, if $m$ is a
$(c+i)$-minor of $\jac([\hh,\map[i]])$, then one can write
\[
 m'' m = \sum_{j=1}^{} \eps_j m_j m_j'' \quad
 \text{where } \eps_j = \pm 1 \et N \in \{1,\dotsc, d-i+1\}
\]
and where $m_j''$ (resp. $m_j$) is obtained by successively adding to $m''$
(resp. removing to $m$) the missing row and a missing column of
$\jac([\hh,\map[i]])$ that are in $m$. Remark that, for such a $m$, all the
$m_j''$'s are in $\Hphi(\hh,i,m'')$, by definition.

Hence, for all $\yy \in \V(\hh)$, if $m''(\yy) \neq 0$, then all the
$(c+i)$-minors of $\jac([\hh,\map[i]])$ vanish at $\yy$ if and only if all the
polynomials of $\Hphi(\hh,i,m'')$ vanish at $\yy$. In other words:
\[
 \Ocal(m'm'') \cap \Wophii[i][\Voreg(\hh)] = 
 \Ocal(m'm'') \cap \V\big(\hh,\Hphi(\hh,i,m'')\big).
\]
\end{proof}

In order to prove the the last item of 
Proposition~\ref{prop:localgenericityrankgen}, we need introduce Lagrange 
systems for general polynomial applications. This generalizes, in some sense, 
the construction of \cite[Subsection 5.1]{SS2017}, also presented 
in Subsection~\ref{ssec:lagrange}.

Let $\Lindet[c]$ and $\Tindet[i]$ be new indeterminates, since $m''\neq 0$, 
consider the ring of rational fractions $\CC[\XX,\Lindet[c],\Tindet[i]]_{m''}$  
of the form $f/(m'')^r$, for $f \in \CC[\XX,\Lindet[c],\Tindet[i]]$ and $r \in
\mathbb{N}$. This the localization ring at the multiplicative set $\{(m'')^r
\mid r \in \mathbb{N}\}$.

Let $\Ical_W$ the ideal of $\CC[\XX,\Lindet[c],\Tindet[i]]_{m''}$ generated by 
the entries of
\[
 \hh,\quad
 [\Lindet[c],\Tindet[i]] \cdot \begin{bmatrix}
                                      \jac(\hh) \\
                                      \jac(\map[i])
                                     \end{bmatrix}.
\]
The following lemma is an immediate generalization of \cite[Proposition 
5.2.]{SS2017}.
\begin{lemma}\label{lem:lagrange}
  Let $1 \leq \iota \leq c$ such that the index of the row of
  $\jac([\hh,\map[i]])$ not in $m''$ has index $\iota$. Then there exist
  $(\lambda_j)_{1\leq j\neq \iota \leq c}$ and $(\tau_j)_{1\leq j \leq i}$ in
  $\CC[\XX]_{m''}$ such that $\Ical_W$ is generated by the entries of
\begin{equation}\label{eqn:polynomelag}
 \hh, \quad 
 L_{\iota}\Hphi(\hh,i,m''), \quad 
 (L_j - \lambda_j L_{\iota})_{1\leq j\neq \iota \leq c}, \quad 
 (T_j - \tau_j L_{\iota})_{1\leq j\leq i}.
\end{equation}
\end{lemma}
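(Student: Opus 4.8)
\textit{Proof plan.} The statement is the exact analogue of \cite[Proposition 5.2]{SS2017}, the only difference being that the block of projections $\proj[i]$ there is replaced by the polynomial map $\map[i]$; since the argument is pure linear algebra over the ring $\CC[\XX]_{m''}$, this replacement is harmless, and I will follow that proof \emph{mutatis mutandis}. First I would fix notation: write $J=\jac([\hh,\map[i]])$, a $(c+i)\times n$ matrix with entries in $\CC[\XX]$, and let $M''$ be the $(c+i-1)\times(c+i-1)$ submatrix of $J$ whose determinant is $m''$. By hypothesis $m''$ contains the $i$ rows of $\jac(\map[i])$, hence omits exactly one of the first $c$ rows of $J$, namely the one of index $\iota\in\{1,\dots,c\}$; let $\mathcal{C}\subset\{1,\dots,n\}$ with $|\mathcal{C}|=c+i-1$ be the set of columns it selects. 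Since $m''\neq 0$, the matrix $M''$ becomes invertible in $\CC[\XX]_{m''}$ (this is the only place where the localization is needed). Recall that, apart from $\hh$, the ideal $\Ical_W$ is generated by the $n$ entries of the row vector $[L_1\cdots L_c\,T_1\cdots T_i]\cdot J$; I would split these $n$ generators according to whether the column index lies in $\mathcal{C}$ or not.

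Set $\mathbf{L}=(L_1,\dots,L_c,T_1,\dots,T_i)$ and let $\mathbf{L}''$ collect the coordinates of $\mathbf{L}$ of index $\ne\iota$. The $c+i-1$ generators indexed by columns $\ell\in\mathcal{C}$ form the vector equation $L_\iota\,J_{\iota,\mathcal{C}}+\mathbf{L}''M''=\OO$; since $M''$ is invertible over $\CC[\XX]_{m''}$, Cramer's rule gives $\mathbf{L}''=-L_\iota\,J_{\iota,\mathcal{C}}\,(M'')^{-1}$, so each coordinate of $\mathbf{L}''$ equals $L_\iota$ times an element of $\CC[\XX]_{m''}$ (a ratio of minors of $J$ over $m''$). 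Calling these elements $\lambda_j$ (for the $L_j$, $j\ne\iota$) and $\tau_j$ (for the $T_j$, $1\le j\le i$), I get that the $\lambda_j,\tau_j$ lie in $\CC[\XX]_{m''}$ as required, and that, inside $\CC[\XX,\Lindet[c],\Tindet[i]]_{m''}$, the polynomials $\hh$ together with the $\mathcal{C}$-indexed generators generate the same ideal as $\hh$, $(L_j-\lambda_j L_\iota)_{j\ne\iota}$ and $(T_j-\tau_j L_\iota)_{1\le j\le i}$ (the substitution is invertible, so no generator is lost).

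Finally I would handle the remaining $n-(c+i-1)=n-c-i+1$ generators, indexed by $\ell\notin\mathcal{C}$. Reducing $(\mathbf{L}J)_\ell=L_\iota J_{\iota,\ell}+\mathbf{L}''J_{\bullet\ne\iota,\ell}$ modulo the relations just obtained replaces $\mathbf{L}''$ by $-L_\iota J_{\iota,\mathcal{C}}(M'')^{-1}$, leaving $L_\iota\bigl(J_{\iota,\ell}-J_{\iota,\mathcal{C}}(M'')^{-1}J_{\bullet\ne\iota,\ell}\bigr)$. By the classical Schur-complement identity $\det\left(\begin{smallmatrix}A&b\\ c&d\end{smallmatrix}\right)=\det(A)\,(d-cA^{-1}b)$ applied to the $(c+i)\times(c+i)$ matrix of $J$ on the rows $\{1,\dots,c+i\}$ and columns $\mathcal{C}\cup\{\ell\}$ (with row $\iota$ and column $\ell$ placed last), this parenthesis equals $\pm m_\ell/m''$, where $m_\ell$ is the $(c+i)$-minor of $J$ obtained by adjoining to $M''$ the missing row $\iota$ and the missing column $\ell$. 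By Definition~\ref{def:hphi}, the family $(m_\ell)_{\ell\notin\mathcal{C}}$ is precisely $\Hphi(\hh,i,m'')$, of length $n-c-i+1$. Since $m''$ is a unit in the localization, $(\mathbf{L}J)_\ell$ and $L_\iota m_\ell$ generate the same ideal modulo the previous relations; combining the two steps shows that $\Ical_W$ is generated in $\CC[\XX,\Lindet[c],\Tindet[i]]_{m''}$ by the polynomials of \eqref{eqn:polynomelag}, possibly up to the harmless signs $\pm1$. The one point requiring care is the row/column bookkeeping that identifies the Schur-complement determinant with the correct entry of $\Hphi(\hh,i,m'')$; once the ordering conventions are fixed, the rest is the verbatim computation of \cite[Proposition 5.2]{SS2017}.
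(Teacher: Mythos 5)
Your proof is correct and takes essentially the same route as the paper's: exploit invertibility of the $(c+i-1)\times(c+i-1)$ block $M''$ over $\CC[\XX]_{m''}$ to eliminate the coordinates of $\mathbf{L}$ other than $L_\iota$, then identify the remaining $n-c-i+1$ entries with $L_\iota$ times the bordered minors via the Schur-complement determinant identity (which the paper cites as \cite[Proposition 2.8.3]{Be2009} and \cite[Theorem 1]{Si2000}). The paper phrases the first step as right-multiplication of the generating row vector by an explicit invertible block matrix, while you phrase it as solving the linear system and substituting, but the underlying computation is identical.
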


\begin{proof}
  For the sake of simplicity, suppose that $m''$ is the lower-left minor of
  $\jac([\hh,\map[i]])$, so that $\iota = 1$. Then $\Hphi(\hh,i,m'')$ is the
  sequence of minors obtained by adding the first row and columns in the ones of
  index $c+i,\dotsc,n$. We denote these minors by $M_1,\dotsc,M_{n-c-i+1}$.
  Then, we write
 \[
   \jac(\hh,\map[i]) = \begin{pmatrix}
                         \bm{u}_{1,c+i-1} & \bm{w}_{1,n-c-i+1}\\
                         \bm{m}_{c+i-1,c+i-1} & \bm{v}_{c+i-1,n-c-i+1}
                   \end{pmatrix}
 \]
 such that $m'' = \det(\bm{m})$ and the indices are the dimensions of the
 submatrices. As $m''$ is not zero, it is a unit of
 $\CC[\XX,\Lindet[c],\Tindet[i]]_{m''}$, so that $\mm$ has an inverse with
 coefficients in the same ring, given by $m''^{-1}$ and the cofactor matrix of
 $\mm$. Hence $\Ical_W$ is generated by the entries of $\hh$ and
 \begin{align*}
  &[\Lindet[c],\Tindet[i]] \cdot \jac([\hh,\map[i]]) \cdot 
  \begin{bmatrix}
   \mm^{-1} & \OO\\
   \OO & 1
  \end{bmatrix} \cdot
  \begin{bmatrix}
   I_{c+i-1} & -\vv\\
   \OO & 1
  \end{bmatrix} \\[0.5em]
  = \quad& 
  [\Lindet[c],\Tindet[i]] \cdot
  \begin{bmatrix}
   \uu\mm^{-1} & \ww-\uu\mm^{-1}\vv\\
   I_{c+i-1} & \OO
  \end{bmatrix},
\end{align*}
where $I_{c+i-1}$ is the identity matrix of size $c+i-1$. The first $c-1$ 
entries are the $L_j - [\uu\mm^{-1}]_j L_1$ for $1< j \leq c$ and the $i$ 
followings are the $T_j - [\uu\mm^{-1}]_j L_1$ for $1\leq j \leq i$. Hence 
taking $(\blambda,\btau)= \uu\mm^{-1}$ gives the last terms in 
\eqref{eqn:polynomelag}.

Finally, since $\mm$ is invertible, we can compute the minors
$M_1,\dotsc,M_{n-c-i+1}$ of $\jac(\hh,\map[i])$, using the block structure we
described above (see e.g. \cite[Proposition 2.8.3]{Be2009} and \cite[Theorem
1]{Si2000}) to obtain that for all $1\leq j \leq n-c-i+1$,
\[
 M_j = (-1)^{c+i-1} m''[\ww-\uu\mm^{-1}\vv]_j.
\]
Hence, the last $n-c-i+1$ entries are $L_1 
M_1/m'',\dotsc,L_1 M_{n-c-i+1}/m''$ up to sign, we are done.
\end{proof}

The next lemma ends the proof of the last item 
Proposition~\ref{prop:localgenericityrankgen}, and then conclude the proof of 
the whole proposition.
\begin{lemma}
 The Jacobian matrix of the polynomials in
$(\hh,\Hphi(\hh,i,m''))$ has full rank $n-(i-1)$ at every point of the 
set $\Ocal(m'm'') \; \cap \; \Wophii[i][\Voreg(\hh)]$.
\end{lemma}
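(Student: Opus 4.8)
The plan is to derive this full-rank statement from the smoothness of the Lagrange-type variety attached to $\Wophii[i][\Voreg(\hh)]$, which has already been secured via Thom's weak transversality theorem (Lemma~\ref{lem:rankjacPhigen}), transported through the alternative presentation of the ideal $\Ical_W$ given by Lemma~\ref{lem:lagrange}. Throughout, fix a point $\yy_0$ in $\Ocal(m'm'') \cap \Wophii[i][\Voreg(\hh)]$ and let $\iota$ be the index of the unique row of $\jac([\hh,\map[i]])$ not selected by $m''$; since $m''$ contains all rows of $\jac(\map[i])$, we have $\iota \in \{1,\dots,c\}$. As $(\hh,\Hphi(\hh,i,m''))$ consists of $c + (n-c-i+1) = n-(i-1)$ polynomials, it is enough to show that $\jac_{\yy_0}(\hh,\Hphi(\hh,i,m''))$ has rank $n-(i-1)$.

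First I would lift $\yy_0$ to a well-chosen point of $\scrA \cap \Phi_{\balf}^{-1}(\OO)$. Since $\yy_0 \in \Voreg(\hh)$, the matrix $\jac_{\yy_0}(\hh)$ has full rank $c$; since $\yy_0$ lies in the polar variety, $\rank \jac_{\yy_0}([\hh,\map[i]]) \le c+i-1$, and as $m''(\yy_0)\ne 0$ this rank equals exactly $c+i-1$, so the left kernel of $\jac_{\yy_0}([\hh,\map[i]])$ is a line. No nonzero kernel vector $(\blambda,\bvtheta)$ can have $\lambda_\iota = 0$: restricting such a vector to the rows and columns selected by $m''$ would make it a left-null vector of the invertible matrix of determinant $m''(\yy_0)$, forcing it to vanish. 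Hence there is a unique kernel vector $(\blambda_0,\bvtheta_0)$ normalized by $(\blambda_0)_\iota = 1$, and $\bvtheta_0 \ne \OO$, for $\bvtheta_0 = \OO$ would give $\blambda_0^{\,t}\jac_{\yy_0}(\hh) = \OO$ with $\blambda_0 \ne \OO$, contradicting full rank of $\jac_{\yy_0}(\hh)$. Recalling that $\map[i]$ is the specialization $\vphi(\XX,\balf)$, so that $\jac_{\yy_0}([\hh,\map[i]]) = \jac_{(\yy_0,\balf)}(\hh,\vphi)$, the point $p_0 = (\yy_0,\blambda_0,\bvtheta_0)$ lies in $\scrA \cap \Phi_{\balf}^{-1}(\OO)$, and it also lies in $\Ocal(m'') \cap \Ocal(L_\iota)$.

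Now, since $\balf \in \ZOpolarloc \subset \ZOpolardim_i$, Lemma~\ref{lem:rankjacPhigen} asserts that the Jacobian with respect to $(\XX,\LL,\TT)$ of the generators of $\Ical_W$ consisting of $\hh$ and the entries of ${}^t[\LL,\TT]\cdot\jac_{(\XX,\balf)}(\hh,\vphi)$ has full rank $c+n$ at $p_0$. By Lemma~\ref{lem:lagrange}, the same ideal $\Ical_W$ is generated, on $\Ocal(m'')$, by the entries of $\hh$, $L_\iota\,\Hphi(\hh,i,m'')$, $(L_j-\lambda_j L_\iota)_{j\ne\iota}$ and $(T_j-\tau_j L_\iota)_{1\le j\le i}$; inverting $L_\iota$, which is a unit at $p_0$, we may replace $L_\iota\,\Hphi(\hh,i,m'')$ by $\Hphi(\hh,i,m'')$. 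Two generating families of one ideal have equal Jacobian rank at any common zero --- each partial derivative of a member of one family is a linear combination of the partial derivatives of the other family up to terms that vanish at $p_0$ --- so the Jacobian of this second family also has full rank $c+n$ at $p_0$. It then remains to peel that Jacobian apart using its block structure: ordering the columns as $(\XX \mid L_\iota \mid (L_j)_{j\ne\iota} \mid \TT)$, the rows coming from $\hh$ and $\Hphi(\hh,i,m'')$ are supported on the $\XX$-columns only; the $i$ rows $T_j-\tau_j L_\iota$ contribute the identity block $I_i$ on the $\TT$-columns and are the only rows touching them; the $c-1$ rows $L_j-\lambda_j L_\iota$ contribute the identity block $I_{c-1}$ on the $(L_j)_{j\ne\iota}$-columns and are the only rows touching them. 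Removing these two identity blocks yields
\[
  c+n \;=\; i + (c-1) + \rank \jac_{\yy_0}\bigl(\hh,\Hphi(\hh,i,m'')\bigr),
\]
whence $\rank \jac_{\yy_0}(\hh,\Hphi(\hh,i,m'')) = n-i+1 = n-(i-1)$, the full rank. Since $\yy_0$ was an arbitrary point of $\Ocal(m'm'') \cap \Wophii[i][\Voreg(\hh)]$, the lemma follows.

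I expect the main obstacle to be the third paragraph: one must be careful that Lemma~\ref{lem:lagrange} is applied with the correct pivot index $\iota$ (which is precisely why the kernel vector was normalized at coordinate $\iota$, so that $p_0 \in \Ocal(L_\iota)$) and that passing between the two presentations of $\Ical_W$ does not change the Jacobian rank at $p_0$; once this is in place, the conclusion is routine block-matrix linear algebra.
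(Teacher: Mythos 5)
Your proof is correct and follows essentially the same route as the paper's: both lift $\yy_0$ to a point $(\yy_0,\blambda,\bvtheta)$ of $\scrA\cap\Phi_{\balf}^{-1}(\OO)$, invoke Lemma~\ref{lem:rankjacPhigen} to get full rank $c+n$ of $\jac\Phi_{\balf}$ there, transfer this via Lemma~\ref{lem:lagrange}'s alternative generating set of $\Ical_W$, and finish with the same block-matrix computation. The only slight variation is how the lift is produced: you derive the kernel vector $(\blambda_0,\bvtheta_0)$ directly from linear algebra (showing the left kernel of $\jac_{\yy_0}([\hh,\map[i]])$ is a line with $\lambda_\iota\ne 0$), whereas the paper builds it from the explicit rational functions $\lambda_j,\tau_j$ supplied by Lemma~\ref{lem:lagrange} and then checks that the polynomials in~\eqref{eqn:polynomelag} vanish. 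Since the kernel is one-dimensional and both vectors are normalized at $\lambda_\iota$, the two constructions coincide, so this is a cosmetic difference, not a different argument.
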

\begin{proof}
Recall that $\bphi = (\vphi_1(\XX,\balf),\dotsc,\vphi_i(\XX,\alpha))$, 
where $\balf \in \ZOpolarloc$. Then, remark that
\[
 \left(\hh(\XX),\;
 [\Lindet[c],\Tindet[i]] \cdot
 \begin{bmatrix}
  \jac_{\XX}(\hh) \\
  \jac_{\XX}(\map[i])
  \end{bmatrix}\right) 
 = \Phi_{\balf}\Big(\XX,\Lindet[c],\Tindet[i] \Big),
\]
where $\Phi_{\balf}$ is the polynomial map considered in 
Lemma~\ref{lem:rankjacPhigen}.
Let $(\lambda_j)_{1\leq j\neq \iota \leq c}$ and $(\tau_j)_{1\leq j \leq i}$ in
$\CC[\XX]_{m''}$ given by Lemma~\ref{lem:lagrange}.

Now fix $\yy \in \Ocal(m'm'') \cap \Wphii[i][\Voreg(\hh)]$, and let 
$\blambda=(\blambda_j)_{1\leq j\leq c}$ and $\bvtheta=(\bvtheta_j)_{1\leq j 
\leq i}$ where
\begin{align*}
 \blambda_{\iota} = 1 \et
 \blambda_j = \lambda_j(\yy) &\text{ for all } 1 \leq j \neq \iota \leq c,\\
 \bvtheta_j = \tau_j(\yy) &\text{ for all } 1\leq j \leq i.
\end{align*}
These are well defined since $m''(\yy)\neq 0$. 
Since $\hh$ and $\Hphi(\hh,i,m'')$ vanish at $\yy$, by 
Lemma~\ref{lem:intercomplW}, all the polynomials in \eqref{eqn:polynomelag} 
vanish at $(\yy,\blambda,\bvtheta)$. 
Moreover, according to Lemma~\ref{lem:lagrange} and the above remark, the 
polynomials in \eqref{eqn:polynomelag} and the entries of 
$\Phi_{\balf}(\XX,\Lindet[c],\Tindet[i])$ generates the same ideal $\Ical_W$ in 
$\CC[\XX]_{m''}$.
Hence, since $m''(\yy)\neq 0$, the entries of $\Phi_{\balf}$ vanish at 
$(\yy,\blambda,\bvtheta)$ as well, that is 
$\Phi_{\balf}(\yy,\blambda,\bvtheta)=\OO$

Besides, since $\yy\in\Ocal(m')$,  $\jac_{\yy}(\hh)$ has full rank.
Then, $\bvtheta$ cannot be zero, since $\blambda \neq \OO$
$\jac(\hh)$ has a trivial left-kernel.
Hence, according to the notation of Lemma~\ref{lem:rankjacPhigen}, 
$(\yy,\blambda,\bvtheta) \in \scrA \cap \Phi_{\balf}^{-1}(\OO)$.
Therefore, by Lemma~\ref{lem:rankjacPhigen}, $\jac \Phi_{\balf}$ has full rank 
$n+c$ at $(\yy,\blambda,\bvtheta)$, as $\balf \in \ZOpolardim_i\subset 
\ZOpolarloc$. 

Finally, remark that the sequence of polynomials in \eqref{eqn:polynomelag} has 
length $n+c$. Hence, since the latters generate the same ideal than the entries 
of the entries of $\Phi_{\balf}(\XX,\Lindet[c],\Tindet[i])$, their Jacobian 
matrix has full rank $n+c$ at this point as well. 
Computing this Jacobian matrix the latter rank statement amounts to the 
Jacobian matrix of 
 \[
  \Big(\hh,\;\Hphi(\hh,i,m'')\Big)
 \]
 having full rank $n-(i-1)$ at $\yy$.
\end{proof}

\subsection{Proof of Proposition~\ref{prop:polargen}}

Let $V,S \subset \CC^n$ be be two algebraic sets with $V$ $d$-equidimensional 
and $S$ finite, and let $\bchi = (\chi_j)_{1\leq j \leq s}$ be an atlas of 
$(V,S)$ with $\chi_j = (m_j,\hh_j)$ for $1\leq j \leq s$. According to 
\cite[Lemma A.12]{SS2017}, all the $\hh_j$'s have same cardinality 
$c=n-d$.

Besides, let $2\leq\ifi \leq d+1$ and the sequences 
$\bpolun=(\polun_1,\dotsc,\polde_{\ifi})$ and 
$\bpolde=(\polde_1,\dotsc,\polde_{\ifi})$ in $\CC[\XX]$. For $1\leq j \leq 
\ifi$, let $\balf_j =(\alf_{j,1}, \ldots, \alf_{j,n})\in \CC^{n}$ and
\[
    \phi_j(\XX,\balf_j) = \polun_j(\XX) + \sum_{k=1}^n \alf_{j,k}x_k +
    \polde_j(\balf_j) \in \CC[\XX].
\]
Then, for $1 \leq i \leq \ifi$, we can apply 
Proposition~\ref{prop:localgenericityrankgen} to the sequences $\hh_j$, 
$\bpolun$ and $\bpolde$, there exist a \NEZO subset $\ZOpen(\hh_j,i)$ of 
$\CC^{in}$ such that for all $\balf \in \ZOpen(\hh_j,i)$, the sequence $\bphi = 
\left(\phi_1(\XX,\balf),\dotsc,\phi_{i}(\XX,\balf)\right)$ satisfies the 
statements of Proposition~\ref{prop:localgenericityrankgen}.
Then we define the  following \NEZO subset of $\CC^{\ifi n}$,
\[
  \ZOpolar(\bchi,V,S,\bpolun,\bpolde,\ifi)
  = \bigcap_{1\leq i \leq \ifi} \: \bigcap_{1 \leq j \leq s} \ZOpen(\hh_j,i) 
  \times \CC^{(\ifi-i)n}.
\]

Fix now $\balf \in \ZOpolar(\bchi,V,S,\bpolun,\bpolde,\ifi)$ and $\bphi = 
\left(\phi_1(\XX,\balf),\dotsc,\phi_{\ifi}(\XX,\balf)\right)$. 
From now on, fix also $1\leq i \leq \ifi$ and suppose that $\Wphii$ is not 
empty. 
In the following, and for conciseness, we might identify $\ZOpen(\hh_j,i)$ to 
$\ZOpen(\hh_j,i) \times \CC^{(\ifi-i)n}$. in a straightforward way.
We start with the first item statement of Proposition~\ref{prop:polargen}. 
Again, it is proved
through the properties of atlases, but when $i$ is restricted to some values.
\begin{lemma}\label{lem:Wiequidimgen}
 The algebraic set $\Wphii$ is equidimensional of dimension $i-1$.
\end{lemma}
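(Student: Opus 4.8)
The plan is to reduce the global equidimensionality statement to the local dimension estimates provided by Proposition~\ref{prop:localgenericityrankgen}, applied chart by chart. First I would recall from Proposition~\ref{prop:localgenericityrankgen}, item~2, that for each $j\in\{1,\dots,s\}$ the irreducible components of $\Wophii[i][\Vreg(\hh_j)]$ have dimension at most $i-1$. Then, by Lemma~\ref{lem:polaronchart}, on the Zariski open set $\Ocal(m_j)-S$ the sets $\Wphii$ and $\Wophii[i][\Vreg(\hh_j)]$ coincide; since $\bchi$ is an atlas, the open sets $\Ocal(m_j)$ cover $V-S$, hence cover $\Wphii - S$ as well (recall $S$ is finite, so it only affects the dimension count in trivial ways). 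Consequently every irreducible component of $\Wphii$ that is not contained in $S$ meets some $\Ocal(m_j)-S$ in a dense open subset, and therefore has dimension at most $i-1$; as $S$ is finite this shows $\dim \Wphii \leq i-1$ (a component entirely inside $S$ would be a point, i.e.\ dimension $0\le i-1$ since $i\geq 1$).

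For the lower bound, I would invoke Lemma~\ref{lem:lowerboundglobal}: since $V$ is $d$-equidimensional and $1\leq i\leq \ifi\leq d+1$, every irreducible component of $\Wphii$ has dimension \emph{at least} $i-1$. (Here one extends $\bphi$ arbitrarily to a $(d+1)$-tuple if needed, which does not change $\Wphii$ for the given $i$.) Combining the two bounds, every irreducible component of $\Wphii$ has dimension exactly $i-1$, which is precisely equidimensionality of dimension $i-1$. Since $\Wphii$ is assumed non-empty, this is the claim.

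The only genuinely delicate point is the bookkeeping around the finite set $S$: one must make sure that replacing $\Wphii$ by $\Wphii - S$ (to land in the charts) does not lose an irreducible component. This is harmless because $S$ is finite, so removing it from $\Wphii$ leaves every positive-dimensional component with a dense open subset, and the Zariski closure recovers the component; any zero-dimensional component is a point of dimension $0 = i-1$ only when $i=1$, and for $i=1$ one checks directly that $\Wphii[1]\subset\sing(V)\cup\{\text{critical points}\}$ is handled by the same argument (or is empty, which is excluded by hypothesis). I expect the write-up to be short: the substance is entirely in Proposition~\ref{prop:localgenericityrankgen} and Lemmas~\ref{lem:polaronchart} and~\ref{lem:lowerboundglobal}, and the present lemma is the routine globalization of those facts.
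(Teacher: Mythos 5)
Your argument is correct and follows essentially the same route as the paper: Lemma~\ref{lem:polaronchart} to identify $\Wphii$ with $\Wophii[i][\Vreg(\hh_j)]$ on each chart, Proposition~\ref{prop:localgenericityrankgen} for the local upper bound on dimension, the atlas covering property to propagate it, and Lemma~\ref{lem:lowerboundglobal} for the lower bound. The one cosmetic difference is that the paper splits explicitly into the cases $i=1$ (where $\Wphii$ is then finite) and $i\geq 2$ (where the lower bound forces every component to have positive dimension, so removing the finite $S$ loses nothing), whereas you absorb both cases by noting that a component contained in $S$ is a point and then letting the lower bound rule that out for $i\geq 2$; the two bookkeepings are equivalent.
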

\begin{proof}
  By Lemma~\ref{lem:polaronchart}, for all $1\leq j \leq s$, as $\chi_j$ is a 
chart of $(V,S)$ then,
 \[
  \Ocal(m_j) \cap \Wphii - S \;=\;
  \Ocal(m_j) \cap \Wophii[i][\Vreg(\hh_j)] - S.
 \]
 Let $\yy \in \Wphii - S$. Since $\yy \in V$, by property $\sfA_3$ of the atlas
 $\bchi$, there exists $j \in \{1,\dotsc,s\}$ such that $\yy \in \Ocal(m_j)$.
 Hence, by the above equality, in $\Ocal(m_j) - S$, the irreducible component of
 $\Wphii$ containing $\yy$ coincides with the one of $\Wphii[i][\Vreg(\hh_j)]$
 containing $\yy$. Since these irreducible components are equal over a non-empty
 Zariski open set, they have the same dimension (see e.g. \cite[Proposition
 10.(1)]{Fu2008}). By the second item of
 Proposition~\ref{prop:localgenericityrankgen}, since $\balf \in
 \ZOpen(\hh_j,i)$, this dimension is less than $i-1$.

We just showed that the Zariski closure of $\Wphii - S$ has dimension less than 
$i-1$. If $i=1$, since $S$ is finite this means that $\Wphii$ is finite as well 
and we are done.
If $i \geq 2$, then by Lemma~\ref{lem:lowerboundglobal}, the irreducible 
components of $\Wphii$ have dimension at least $i-1 \geq 1$ so that the Zariski 
closure of $\Wphii - S$ is $\Wphii$. Hence the irreducible components of 
$\Wphii$ have dimension exactly $i-1$
\end{proof}

We now prove a strict generalization of \cite[Lemma B.12.]{SS2017} 
which gives the key arguments for the proof of the second item statement of
Proposition~\ref{prop:polargen}.
\begin{lemma}\label{lem:chartrecov}
  Let $\chi=(m,\hh)$ be a chart of $(V,S)$. Then for any $c$-minor $m'$ of
  $\jac(\hh)$ and any $(c+i-1)$-minor $m''$ of $\jac([\hh,\map[i]])$, containing
  the rows of $\jac(\map[i])$, the following holds.
 \begin{enumerate}
  \item The sets $\Ocal(mm'm'') \cap \Wphii - S$ and $\Ocal(mm'm'') \cap 
\V(\hh,\Hphi(\hh,i,m''))-S$ coincides;
 \item if they are not empty, then $\Wchart(\chi,m',m'')$ is a chart of 
$(\Wphii,S)$.
 \end{enumerate}
Moreover, if $i \leq (d+3)/2$ then the following holds.
 \begin{enumerate}
 \setcounter{enumi}{2}
  \item The sets $\Ocal(mm'm'')-S$, for all $m',m''$ as above, cover $\Ocal(m) 
\cap V - S$;
  \item the sets $\Ocal(mm'm'')-S$, for all $m',m''$ as above, 
$\Ocal(m)\cap\Wphii-S$.
 \end{enumerate}
\end{lemma}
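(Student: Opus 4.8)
The plan is to establish all four assertions locally on the chart domain $\Ocal(m)$, reading off $W:=\Wphii$ through the rank of $\jac([\hh,\map[i]])$ by Lemma~\ref{lem:jacrankchart}, and reading off $\V(\hh,\Hphi(\hh,i,m''))$ through the same rank by the minor-exchange argument in the proof of Lemma~\ref{lem:intercomplW}. Throughout I set $c=|\hh|$, $\HH=(\hh,\Hphi(\hh,i,m''))$ and $M=mm'm''$, so $\Wchart(\chi,m',m'')=(M,\HH)$, and I use that $\balf$ lies in the non-empty Zariski open set $\ZOpen(\hh,i)$ attached to $\hh$ by Proposition~\ref{prop:localgenericityrankgen} (the relevant genericity hypothesis, which holds for the charts of the atlas $\bchi$ over which $\balf$ is chosen). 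If $m'm''=0$ then $\Ocal(M)=\emptyset$ and all four statements are vacuous, so I assume $m'm''\neq0$; then $\Hphi(\hh,i,m'')$ has $n-c-i+1$ entries and $|\HH|=n-i+1$.

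For item 1 I would match two descriptions of the same locus. On one hand, $\chi$ being a chart of $(V,S)$, axiom $\sfC_2$ (in the $e=0$ form) gives $\Ocal(m)\cap V-S=\Ocal(m)\cap\V(\hh)-S$, and Lemma~\ref{lem:jacrankchart} then says that a point $\yy$ of this set lies in $W$ exactly when $\jac_\yy([\hh,\map[i]])$ does not have full rank $c+i$. On the other hand, the minor-exchange step in the proof of Lemma~\ref{lem:intercomplW}---which only uses the exchange lemma of \cite[Lemma 1]{BGHM2001} and the fact that $\jac(\hh)$ has full rank on $\Ocal(m')$, hence is valid for every $i$---shows that for $\yy\in\V(\hh)$ with $m''(\yy)\neq0$, the entries of $\Hphi(\hh,i,m'')$ all vanish at $\yy$ exactly when $\jac_\yy([\hh,\map[i]])$ does not have full rank $c+i$. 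Intersecting both descriptions with $\Ocal(M)-S$ and comparing yields $\Ocal(M)\cap W-S=\Ocal(M)\cap\V(\HH)-S$, which is item 1.

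For item 2, assuming this common set is non-empty, I would check the four conditions of Definition~\ref{def:chart} (taken with $e=0$, so $\bphi$ and $Q$ play no role and the fibre of $\V(\HH)$ is $\V(\HH)$) for $(M,\HH)$ to be a chart of $(W,S)$: $\sfC_1$ is the non-emptiness hypothesis, $\sfC_2$ is item 1, and $\sfC_3$ reads $|\HH|=n-i+1\le n$. For $\sfC_4$, I would take $\yy\in\Ocal(M)\cap W-S$; since $W\subset V$, axioms $\sfC_2,\sfC_4$ of the chart $\chi$ put $\yy$ in $\V(\hh)$ with $\jac_\yy(\hh)$ of full rank $c$, i.e.\ $\yy\in\Voreg(\hh)$, while Lemma~\ref{lem:jacrankchart} gives $\rank\jac_\yy([\hh,\map[i]])<c+i$, whence $\yy\in\Wophii[i][\Voreg(\hh)]$ by \cite[Lemma A.2]{SS2017}; the last lemma in the proof of Proposition~\ref{prop:localgenericityrankgen}---whose proof only invokes $\balf\in\ZOpolardim_i$ and Thom transversality, hence again holds for every $i$---then shows $\jac_\yy(\HH)$ has full rank $n-i+1=|\HH|$. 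Hence $\Wchart(\chi,m',m'')$ is a chart of $(W,S)$.

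Finally, for items 3 and 4 (now with $i\le(d+3)/2$): for $\yy\in\Ocal(m)\cap V-S$, axioms $\sfC_2,\sfC_4$ of $\chi$ again give $\yy\in\Voreg(\hh)$, so items 1 and 3 of Proposition~\ref{prop:localgenericityrankgen} furnish a $c$-minor $m'$ of $\jac(\hh)$ and a $(c+i-1)$-minor $m''$ of $\jac([\hh,\map[i]])$ containing the rows of $\jac(\map[i])$ with $m'(\yy)\neq0\neq m''(\yy)$; thus $\yy\in\Ocal(mm'm'')-S$, which is item 3, and item 4 follows by intersecting this covering with $W\subset V$. All of the above is routine bookkeeping about which Zariski open set $\balf$ sits in and about the chart axioms; the one point that will need explicit justification---the main obstacle---is the remark, used twice, that Lemma~\ref{lem:intercomplW} and the final full-rank lemma, although stated inside the subsubsection where $i\le(d+3)/2$ is in force, have proofs independent of that bound and of $\balf$ (beyond $\balf\in\ZOpolardim_i$), so they may legitimately be cited here for every $i\in\{1,\dots,\ifi\}$.
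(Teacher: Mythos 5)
Your proof is correct and follows essentially the same route as the paper's: item~1 is read off through Lemma~\ref{lem:jacrankchart} (equivalently, Lemma~\ref{lem:polaronchart}, which the paper cites directly) plus the minor-exchange argument of Lemma~\ref{lem:intercomplW}; item~2 is a check of $\sfC_1$--$\sfC_4$ with $\sfC_4$ handled via the final full-rank lemma of Section~5.1; items~3 and~4 come from items~1 and~3 of Proposition~\ref{prop:localgenericityrankgen}. The one substantive point where you go beyond the paper's written proof is worth highlighting: the paper's proof of items~1 and~2 invokes items~4 and~5 of Proposition~\ref{prop:localgenericityrankgen}, which are \emph{stated} only under the assumption $i\le(d+3)/2$, whereas items~1 and~2 of the present lemma carry no such restriction. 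You correctly observe that the proofs of those two items of Proposition~\ref{prop:localgenericityrankgen} use, respectively, nothing beyond the exchange lemma of~\cite[Lemma~1]{BGHM2001} (so no genericity at all) and only $\balf\in\ZOpolardim_i\supset\ZOpolarloc$ (which holds in every regime of $i$), so they may indeed be cited for every $i\in\{1,\dots,\ifi\}$. In practice the lemma is only ever applied in the proof of Proposition~\ref{prop:polargen} with $2\le i\le (d+3)/2$, so nothing in the paper breaks, but your remark closes a genuine gap between the scope of the lemma as stated and the scope of the hypotheses the written proof leans on. One small accounting note in your favour: you have the correct cardinality $|\HH|=n-i+1$ for $\sfC_3$, whereas the paper's proof writes $n-i-1$, which is a typo.
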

\begin{proof}
By Lemma~\ref{lem:polaronchart}, since $\chi$ is a chart of $(V,S)$,
 \[
  \Ocal(m) \cap \Wphii - S =
  \Ocal(m) \cap \Wophii[i][\Vreg(\hh)] - S.
 \]
Besides, by the second to last item of 
Proposition~\ref{prop:localgenericityrankgen}, 
$\Wophii[i][\Vreg(\hh)]$ is defined in $\Ocal(m'm'')$ by the vanishing of the 
polynomials $(\hh,\Hphi(\hh,i,m''))$, so that
 \begin{equation}\label{eq:polarintersectioncomplete}
  \Ocal(mm'm'') \cap \Wphii - S = 
  \Ocal(mm'm'') \cap \V(\hh,\Hphi(\hh_j,i,m'')) - S.
 \end{equation}
The first item is proved.
Suppose now that the former sets are not-empty, we proceed to prove that 
$\Wchart(\chi,m',m'')$ is a chart of $(\Wphii,S)$. 
 Property $\sfC_1$ holds by assumption, while property $\sfC_2$ of 
$\Wchart(\chi,m',m'')$ is exactly equation \eqref{eq:polarintersectioncomplete}.
Besides, since $(\hh,\Hphi(\hh_j,i,m''))$ has length $n-i-1 \leq n$, then 
$\sfC_3$ holds as well. 
Finally, by the last item of Proposition~\ref{prop:localgenericityrankgen}, 
$\jac(\hh,\Hphi(\hh,i,m''))$ has  full rank on 
\[
\Ocal(m'm'') \cap \Wophii[\Vreg(\hh)].
\]
Then, by \eqref{eq:polarintersectioncomplete},
$
  \jac(\hh,\Hphi(\hh,i,m''))
$
 has full rank on $\Ocal(mm'm'') \cap \Wphii - S$. This proves that
 $\Wchart(\chi,m',m'')$ satisfies the last property $\sfC_4$ of charts and the
 second statement of the lemma is proved.

 Suppose now that $i \leq (d+3)/2$ and let $\yy \in \Ocal(m) \cap V - S$. Then,
 by property $\sfC_4$ of $\chi$, $\jac(\hh)$ has full rank in $\yy$, so that
 $\yy \in \Voreg(\hh)$. Therefore, by the first and third item of
 Proposition~\ref{prop:localgenericityrankgen}, there exists a $c$-minor $m'$ of
 $\jac(\hh)$ and a $(c+i-1)$-minor $m''$ of $\jac([\hh,\map[i]])$, containing
 the rows of $\jac(\map[i])$, such that $(m'm'')(\yy) \neq 0$. Hence $\yy \in
 \Ocal(mm'm'') - S$ and the third item of the lemma is proved.
 
 Finally, if $\yy \in \Ocal(m) \cap \Wphii - S$, then one still has $\yy \in
 \Ocal(mm'm'') - S$, as $\Wphii \subset V$. This proves the last item.
\end{proof}

We can now prove the second statement of Proposition~\ref{prop:polargen}. With
the above lemmas, it is mainly a matter of verification. Suppose that $2 \leq i
\leq (d+3)/2$. We prove that $\Watlas(\bchi,V,S,\bphi,i)$ is an atlas of
$(\Wphii,S)$. In the following, for $1\leq j \leq s$, we refer to $m_j'$ and
$m_j''$ as respectively a $c$-minor of $\jac(\hh_j)$ and a $(c+i-1)$-minor of
$\jac([\hh_j,\map[i]])$, containing the rows of $\jac(\map[i])$.

\begin{enumerate}[label=$\sfA_\arabic*:$]
\item Since, by Lemma~\ref{lem:Wiequidimgen}, $\Wphii$ has dimension at least 
$1$, it  is not contained in $S$. In particular, there exists $1\leq j\leq s$ 
such that $\Ocal(m_j) \cap \Wphii - S$ is not empty. Hence, by the third item of
  Lemma~\ref{lem:chartrecov}, there exist minors $m_j'$ and $m_j''$ such that
  $\Ocal(m_jm_j'm_j'')\cap\Wphii-S$ is not empty.
 
 \item For $m_j$,$m_j'$ and $m_j''$ as in the previous item, since 
$\Ocal(m_jm_j'm_j'')\cap\Wphii-S$ is not empty, then the second item of 
Lemma~\ref{lem:chartrecov} shows that the sequence $\Wchart(\chi_j,m_j',m_j'')$ is a chart 
of 
$(\Wphii,S)$.

 \item Let $\yy \in \Wphii-S$, by property $\sfA_3$ of $\bchi$ there exists 
$1\leq j \leq s$ such that $\yy \in \Ocal(m_j)$.
 Then, by the third item of Lemma~\ref{lem:chartrecov}, there exist $m_j'$ and 
$m_j''$ as in the previous points such that $\yy \in \Ocal(m_jm_j'm_j'')$. In 
particular $\Ocal(m_jm_j'm_j'')\cap\Wphii-S$ is not empty.
 \end{enumerate}

Hence $\Watlas(\bchi,V,S,\bphi,i)$ is an atlas of $(\Wphii,S)$, and since we 
proved that $\Wphii$ is equidimensional, then by \cite[Lemma A.12]{SS2017} 
$\sing(\Wphii) \subset S$.

 \section{Proof of Proposition~\ref{prop:dimfiber}: atlases for 
fibers}\label{sec:fibergen}
This section is devoted to the proof of Proposition~\ref{prop:dimfiber}. We
recall its statement below.
\begin{proposition*}[\ref{prop:dimfiber}]
 Let $V,S \subset \CC^n$ be two algebraic sets with $V$ $d$-equidi\-men\-sio\-nal and 
$S$ finite. Let $\bchi$ be an atlas of $(V,S)$.
 Let $2\leq \ifi \leq d+1$ and $\bphi=(\phi_1,\dotsc,\phi_{\ifi}) \subset 
\CC[\XX]$. For $2\leq j \leq d$, let $\balf_j = (\alf_{j,1}, \ldots, 
\alf_{j,n})\in \CC^{n}$ and 
 \begin{align*}
  \phi_1(\XX,\balf_1) = \polun(\XX) + 
  \sum_{k=1}^n \alf_{1,k}x_k
  \et
  \phi_j(\XX,\balf_j) = \sum_{k=1}^n \alf_{j,k}x_k 
 \end{align*}
 where $\polun \in \CC[\XX]$.

 There exists a non-empty Zariski open subset $\ZOfiber(\bchi,V,S, \polun,\ifi)
 \subset \CC^{\ifi n}$ such that for every $\balf = (\balf_1, \ldots, 
\balf_{\ifi}) \in
  \ZOfiber(\bchi,V,S,\polun,\ifi)$ and writing
    \[
      \bphi = \left(\phi_1(\XX, 
        \balf_1),
        \dotsc,\phi_{\ifi}(\XX, \balf_{\ifi})\right),
    \]
the following holds. Let $0 
\leq e
 \leq d$, $Q \in \CC^e$ a finite subset and $F_Q$ and $S_Q$ be as in
 Definition~\ref{def:Fatlas}. Then either $F_Q$ is empty or
 \begin{enumerate}
 \item $S_Q$ is finite;
  \item $V_Q$ is an equidimensional algebraic set of dimension $d-e$;
  \item $\Fatlas(\bchi,V,Q,S,\bphi)$ is an atlas of $(F_Q,S_Q)$ and $\sing(F_Q) 
\subset S_Q$.
 \end{enumerate}
\end{proposition*}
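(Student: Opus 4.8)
The strategy is to deduce this statement from the results already established for polar varieties, namely Proposition~\ref{prop:polargen} and Proposition~\ref{prop:localgenericityrankgen}, together with the fibre-dimension control coming from Proposition~\ref{prop:fibergenalg} (finiteness of fibres of polar varieties over the appropriate number of coordinates). The key observation is that a fibre $F_Q = \mapfbr{V}$ with $Q\subset\CC^e$ finite is, \emph{locally on each chart of $\bchi$}, cut out on $V$ by the $e$ additional equations $\phi_1-z_1,\dots,\phi_e-z_e$ for $\zz\in Q$; so the whole analysis is again a matter of checking the Jacobian rank of $[\hh,\map[e]]$ drops exactly to $c+e$ generically, and then assembling the local data into an atlas. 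Concretely, I would fix an atlas $\bchi=(\chi_j)_{1\leq j\leq s}$ of $(V,S)$ with $\chi_j=(m_j,\hh_j)$, all $\hh_j$ of cardinality $c=n-d$ by \cite[Lemma A.12]{SS2017}, and define $\ZOfiber(\bchi,V,S,\polun,\ifi)$ as the intersection, over $1\leq e\leq \ifi$ and $1\leq j\leq s$, of the open sets $\ZOpen(\hh_j,e)$ furnished by Proposition~\ref{prop:localgenericityrankgen} (with the $\bpolde$'s taken to be $0$ for $j\geq 2$ and $\polun$ as the quadratic/linear perturbation for $j=1$), intersected further with the open set $\ZOnoether(V,\polun,\ifi)$ from Proposition~\ref{prop:fibergenalg}.

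Once $\balf$ is chosen in this open set, I would proceed item by item. For \textbf{(1)}, $S_Q = \mapfbr{(S\cup\Wphii[e])}$ is finite because $S$ is finite and, by Proposition~\ref{prop:polargen} (applied with $\ifi$ and $i=e$), $\Wphii[e]$ is either empty or $(e-1)$-equidimensional; then by Proposition~\ref{prop:fibergenalg} the restriction of $\map[e-1]$ to $\Wphii[e]$ has finite fibres over $\CC^{e-1}$, hence a fortiori $\Wphii[e]\cap\mapunrec[e](Q)$, i.e. the fibre over a finite set of points of $\CC^e$, is finite — one just has to be slightly careful matching the number of coordinates ($e$ versus $e-1$), which works because the $e$-th coordinate is a linear form that is finite-to-one on the $(e-1)$-dimensional polar variety (this is exactly the content of the finiteness statement in Proposition~\ref{prop:fibergenalg}). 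For \textbf{(2)}, equidimensionality of $V_Q$ of dimension $d-e$: locally on $\Ocal(m_j)$, $V-S$ equals $\Ocal(m_j)\cap\mapfbr{\V(\hh_j)}-S$, and by the third and fourth items of Proposition~\ref{prop:localgenericityrankgen} (the genericity of the rank of $\jac([\hh_j,\map[e]])$) the matrix $\jac([\hh_j,\map[e]])$ has full rank $c+e$ at every point of $\Ocal(m_j)\cap V_Q - S_Q$; the Jacobian criterion \cite[Theorem 16.19]{Ei1995} then gives that this locally closed set is smooth of pure dimension $n-c-e=d-e$, and gluing over the finite cover by the $\Ocal(m_j)$ (together with the finiteness of $S_Q$) yields that $V_Q$ is equidimensional of dimension $d-e$, with $\sing(V_Q)\subset S_Q$. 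For \textbf{(3)}, I would verify the three atlas axioms of Definition~\ref{def:atlas} for $\Fatlas(\bchi,V,Q,S,\bphi)$: axiom $\sfA_2$ (each retained $\chi_j$ is a chart of $(F_Q,S_Q)$) follows from the rank computation just described by checking conditions $\sfC_1$–$\sfC_4$ of Definition~\ref{def:chart} with $c+e$ equations $(\hh_j,\phi_1-z_1,\dots,\phi_e-z_e)$; axiom $\sfA_3$ (covering) is inherited from the covering property of $\bchi$ restricted to $V_Q-S_Q$; and axiom $\sfA_1$ follows since $V_Q$, being $(d-e)$-dimensional with $d-e\geq 0$, is not contained in the finite set $S_Q$ unless it is empty, so some chart is non-trivial. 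The inclusion $\sing(F_Q)\subset S_Q$ is then immediate from the Jacobian rank statement, or from \cite[Lemma A.12]{SS2017} since $\Fatlas$ is an atlas of the equidimensional $F_Q$.

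The step I expect to be the main obstacle is \textbf{(1)}, the finiteness of $S_Q$ — more precisely, correctly invoking Proposition~\ref{prop:fibergenalg} to control $\Wphii[e]\cap\mapunrec[e](Q)$. Proposition~\ref{prop:fibergenalg} is phrased in terms of the restriction of $\map[i-1]$ to $\Wphii$ being finite (Zariski-closed with finite fibres), so I need to match indices: the polar variety $\Wphii[e]$ has dimension $e-1$ and its projection to $\CC^{e-1}$ via $\map[e-1]$ is finite, so I must argue that intersecting with $\mapunrec[e](Q)$ — one extra linear form's worth of constraint on top of the $(e-1)$ generic ones — keeps things finite; this is fine because a finite map stays finite after base change to the finite set $\map[e-1](Q)\subset\CC^{e-1}$, and then the remaining coordinate is pinned down to finitely many values. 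A secondary subtlety, purely bookkeeping, is that Proposition~\ref{prop:polargen} only guarantees the atlas property for $\Wphii[e]$ when $2\leq e\leq (d+3)/2$, whereas here $e$ ranges up to $d$; but for items (1)–(3) of the present proposition I only need the \emph{dimension and singular-locus} statements about $\Wphii[e]$ (item 1 of Proposition~\ref{prop:polargen}), which hold for all $e$, not the atlas statement, so this causes no real difficulty. Everything else is a routine transcription of the local-to-global chart arguments already carried out in the proof of Proposition~\ref{prop:polargen}.
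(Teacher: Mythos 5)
The overall architecture of your proposal is right — cover $F_Q$ by the charts of $\bchi$, verify the chart conditions locally, assemble an atlas, and deduce equidimensionality — but the key technical step is citing the wrong lemma, and that gap is real. In item (2) (and implicitly in item (3), where you refer back to ``the rank computation just described'') you claim that items 3 and 4 of Proposition~\ref{prop:localgenericityrankgen} give full rank $c+e$ of $\jac([\hh_j,\map[e]])$ at points of $V_Q-S_Q$. They do not. Item 3 only asserts the existence of a non-vanishing $(c+e-1)$-minor of $\jac([\hh_j,\map[e]])$ on $\Voreg(\hh_j)$, i.e.\ rank $\geq c+e-1$, one short of what you need; item 4 is a statement about the polar variety $\Wophii[e][\Voreg(\hh_j)]$ being cut out by the $(c+e)$-minors, again not a full-rank claim at points of the fiber. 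Moreover, items 3--5 of Proposition~\ref{prop:localgenericityrankgen} are only proved under the restriction $i\leq (d+3)/2$, so even the weaker rank bound would not be available for $e$ up to $d$. What you actually need is the following, cleaner observation, which is exactly how the paper argues: a point $\yy\in\Ocal(m_j)\cap F_Q-S_Q$ lies in $\map[e][-1](Q)$, and since $S_Q$ is \emph{defined} to contain $\Wphii[e]\cap\map[e][-1](Q)$, being outside $S_Q$ forces $\yy\notin\Wphii[e]$; Lemma~\ref{lem:jacrankchart} then says that $\yy\notin\Wphii[e]$ is \emph{equivalent} to $\jac_{\yy}([\hh_j,\map[e]])$ having full rank $c+e$. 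No genericity statement about minors is needed here beyond what is already packaged into the definition of $S_Q$. Once you have this, $\sfC_4$ follows immediately and $\sfC_1$--$\sfC_3$ are exactly as you describe.

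Two secondary points. First, your choice of $\ZOfiber$ is unnecessarily heavy: the paper takes it to be just $\ZOnoether(V,\polun,\ifi)$ from Proposition~\ref{prop:fibergenalg}; there is no need to intersect with all the $\ZOpen(\hh_j,e)$'s (nor with $\ZOpolar$, which your invocation of Proposition~\ref{prop:polargen} in item (1) would implicitly require — but that invocation is itself superfluous, since the finiteness of $\Wphii[e]\cap\map[e][-1](Q)$ follows directly from $\Wphii[e]\subset\Kphii[e]$ and Proposition~\ref{prop:fibergenalg} item~3 applied fiberwise over $\map[e-1](Q)$). Second, to conclude that $F_Q$ is $(d-e)$-equidimensional rather than merely that $F_Q-S_Q$ is a smooth $(d-e)$-dimensional locally closed set, you still need a lower bound on the dimension of every irreducible component of $F_Q$ (to rule out components hidden inside the finite set $S_Q$); the paper gets this from Krull's principal ideal theorem for $e<d$ and handles $e=d$ separately via $\Kphii[d+1]=V$ and item~3 of Proposition~\ref{prop:fibergenalg}. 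Your ``gluing over the finite cover'' remark glosses over this.
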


Let $V,S$ and $\bchi = (\chi_j)_{1\leq j \leq s}$ be as above, with
$\chi_j = (m_j,\hh_j)$ for $1\leq j \leq s$. 
Consider and integer $2\leq \ifi \leq d+1$, we show in the following that it 
suffices to take $\ZOfiber(\bchi,V,Q,S,\polun,\ifi)$ as the non-empty Zariski 
open subset $\ZOnoether(V,\polun,\ifi)$ of $\CC^{\ifi n}$ obtained by the 
application of Proposition~\ref{prop:fibergenalg} to $V,\polun$ and $\ifi$.

Let $\balf \in \ZOfiber(\bchi,V,S,\polun,\ifi)$ and $\phi = 
(\bphi_1(\XX,\balf),\dotsc,\bphi_{\ifi}(\XX,\balf))$ where for $2\leq j\leq 
\ifi$,
\begin{align*}
  \phi_1(\XX,\balf_1) = \polun(\XX) + 
  \sum_{k=1}^n \alf_{1,k}x_k
  \et
  \phi_j(\XX,\balf_j) = \sum_{k=1}^n \alf_{j,k}x_k 
 \end{align*}
For $1\leq e \leq \ifi-1$, let $Q \subset \CC^e$ be a finite set and $F_Q, S_Q$ 
as in 
Definition~\ref{def:Fatlas}. Suppose also that $F_Q$ is not empty.
We start with the following lemma, proving local statements on the fibers. It is
a direct generalization of \cite[Lemma C.1]{SS2017}.
\begin{lemma}\label{lem:chartonfiber}
 Let $1\leq j \leq s$ and $m=m_j$, $\hh=\hh_j$ and $\chi=(m,\hh)$.
 Then either $\Ocal(m)\cap F_Q$ is empty or $\chi$ is a chart of 
$(F_Q,Q,S_Q,\bphi)$, and $S_Q$ is finite.
\end{lemma}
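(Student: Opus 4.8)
The plan is to unwind Definition~\ref{def:chart} and verify, for $\chi=(m,\hh)=(m_j,\hh_j)$ read relative to $(F_Q,Q,S_Q,\bphi)$, the four axioms $(\sfC_1)$--$(\sfC_4)$, together with the finiteness of $S_Q$; throughout, $\hh_j$ has cardinality $c=n-d$ by \cite[Lemma A.12]{SS2017}. Two items are almost immediate. Axiom $(\sfC_3)$ is the inequality $e+c=e+n-d\le n$, which holds since $e\le\ifi-1\le d$. For the finiteness of $S_Q=(S\cup\Wphii[e])\cap\map[e][-1](Q)$: the set $S$ is finite by hypothesis, and writing $Q$ as a finite union of its points $\bm{q}=(\bm{q}',q_e)$, one has $\Wphii[e]\cap\map[e][-1](\bm{q})\subset\Kphii[e]\cap\map[e-1][-1](\bm{q}')$ — using $\Wphii[e]\subset\Kphii[e]$ and $\map[e][-1](\bm{q})\subset\map[e-1][-1](\bm{q}')$ — whose right-hand side is finite by item~3 of Proposition~\ref{prop:fibergenalg} applied with $i=e$, legitimately since $\balf\in\ZOnoether(V,\polun,\ifi)$; hence $S_Q$ is a finite union of finite sets.

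For $(\sfC_2)$ I would start from the identity $\Ocal(m_j)\cap V-S=\Ocal(m_j)\cap\V(\hh_j)-S$, which is axiom $(\sfC_2)$ of the chart $\chi_j$ of $(V,S)$ (the case $e=0$), intersect both sides with $\map[e][-1](Q)$, and then delete $\Wphii[e]$; since $F_Q$, $\V(\hh_j)\cap\map[e][-1](Q)$ and $S_Q$ all sit inside $\map[e][-1](Q)$, deleting $S_Q$ from a subset of $\map[e][-1](Q)$ is the same as deleting $S\cup\Wphii[e]$, and what is left is precisely $(\sfC_2)$ for the new chart, namely $\Ocal(m_j)\cap F_Q-S_Q=\Ocal(m_j)\cap\V(\hh_j)\cap\map[e][-1](Q)-S_Q$. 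For $(\sfC_4)$, take $\yy\in\Ocal(m_j)\cap F_Q-S_Q$; as $\yy\in\map[e][-1](Q)$ and $\yy\notin S_Q$, necessarily $\yy\notin S\cup\Wphii[e]$, so $\yy$ lies in $\Ocal(m_j)\cap V-S$ (hence in $\reg(V)$, by \cite[Lemma A.8]{SS2017}) and $\yy\notin\Wphii[e]$. Lemma~\ref{lem:jacrankchart}, applied to the chart $\chi_j$ of $(V,S)$ with $i=e$, then states exactly that $\yy\notin\Wphii[e]$ is equivalent to $\jac_{\yy}([\hh_j,\map[e]])$ having full rank $c+e$, which yields $(\sfC_4)$.

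The one axiom requiring genuine work is the non-triviality axiom $(\sfC_1)$, i.e.\ that $\Ocal(m_j)\cap F_Q-S_Q$ is non-empty, and I expect it to be the main obstacle. I would handle it by a dimension argument in the spirit of \cite[Lemma C.1]{SS2017}: the chart domain $\Ocal(m_j)\cap V-S=\Ocal(m_j)\cap\V(\hh_j)-S$ is a non-singular locally closed set, equidimensional of dimension $d$ since $\jac(\hh_j)$ has full rank $c$ on it; slicing it by a fiber $\map[e][-1](\bm{q})$ and removing the $(e-1)$-dimensional polar variety $\Wphii[e]$ produces, via the rank identity of Lemma~\ref{lem:jacrankchart} and the implicit function theorem, a non-singular set of dimension $d-e$ at each of its points, whereas $\Wphii[e]\cap\map[e][-1](\bm{q})$ and $S$ are finite. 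A careful count then shows that, except in the borderline case $e=d$ (where $F_Q$ is $0$-dimensional, and "$\Ocal(m_j)\cap F_Q-S_Q$ empty" is exactly the first alternative of the statement), non-emptiness of $\Ocal(m_j)\cap F_Q$ forces $\Ocal(m_j)\cap F_Q-S_Q\neq\emptyset$. With that in hand, $\chi_j$ satisfies all of $(\sfC_1)$--$(\sfC_4)$ and $S_Q$ is finite, completing the proof; the only non-routine inputs are Proposition~\ref{prop:fibergenalg} for the fiber finiteness and Lemma~\ref{lem:jacrankchart} for the rank of $\jac([\hh_j,\map[e]])$.
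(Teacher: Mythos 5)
Your proof is correct and, for $(\sfC_2)$, $(\sfC_3)$, $(\sfC_4)$ and the finiteness of $S_Q$, it runs in parallel with the paper's own proof (the paper also uses $\sfC_2$ of $\chi_j$ intersected with $\map[e][-1](Q)$, the inequality $e+c\le n$ from $e\le d$, the argument $\yy\notin S_Q$ together with $\yy\in\map[e][-1](Q)$ gives $\yy\notin\Wphii[e]$, and Lemma~\ref{lem:jacrankchart}; your finiteness argument via the inclusion $\Wphii[e]\cap\map[e][-1](\bm{q})\subset\Kphii[e]\cap\map[e-1][-1](\bm{q}')$ is a more explicit spelling-out of the paper's one-line appeal to Proposition~\ref{prop:fibergenalg}). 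The one place where you genuinely diverge is $(\sfC_1)$: you treat it as the hard axiom and sketch a dimension argument, whereas the paper dispatches it with the single sentence ``This holds by assumption.'' Your concern is legitimate --- the paper's assumption is $\Ocal(m)\cap F_Q\ne\emptyset$, which is weaker than $(\sfC_1)$'s requirement $\Ocal(m)\cap F_Q-S_Q\ne\emptyset$, so the paper's phrasing is a minor imprecision; the intended reading, consistent with the definition of $\Fatlas$ and with how the lemma is actually invoked in the proof of Proposition~\ref{prop:dimfiber} (only for $j$ with $\Ocal(m_j)\cap F_Q-S_Q$ already known non-empty), is that the first alternative of the dichotomy should be ``$\Ocal(m)\cap F_Q-S_Q$ is empty.'' Your dimension argument closes this gap for $e<d$: by Krull's principal ideal theorem every component of $F_Q$ has dimension at least $d-e>0$, so a non-empty Zariski-open $\Ocal(m_j)\cap F_Q$ is infinite and cannot be contained in the finite $S_Q$ (the appeal to the implicit function theorem and local non-singularity is not needed --- Krull plus finiteness of $S_Q$ suffice). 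So your route makes the lemma self-contained for $e<d$ at the cost of a little extra work, while the paper's route defers the non-emptiness check to the caller; both are sound, and you correctly identify $e=d$ as the borderline case where only the weaker reading of the dichotomy is tenable.
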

\begin{proof}
 Remark first that since $\balf \in \ZOpen(V,\polun)$, then by 
Proposition~\ref{prop:fibergenalg}, the set
 \[
  S_Q = \big(S \cup \Wphii[e]\big)\cap \map[e][-1](Q)
 \]
 is finite, since $S$ and $Q$ are. Assume now that $\Ocal(m)\cap F_Q$ is not 
empty, and let us prove that $\chi$ is a chart of $(F_Q,Q,S_Q,\bphi)$.
 
 \begin{enumerate}[label=$\sfC_\arabic*:$]
  \item This holds by assumption.
  
  \item By property $\sfC_2$ of $\chi$, the sets $F_Q$ and 
$\mapfbr{\V(\hh)}$ 
coincide in $\Ocal(m) - S$. But since $S \subset S_Q$ in $\phibr$ then these 
sets coincide in $\Ocal(m) - S_Q$ as well.
  
  \item Since $V$ is $d$-equidimensional, then by \cite[Lemma A.12]{SS2017}, $c 
= n-d$. Hence, since $e \leq \ifi -1 \leq d$, the inequality $e+c \leq n$ holds.
  
  \item Finally, let $\yy \in \Ocal(m) \cap F_Q - S_Q$.
  Since $\yy \notin S_Q$ then $\yy \notin \Wphii[e] \cap \phibr$, but since 
$\yy 
\in \phibr$ then actually $\yy \notin \Wphii[e]$.
  Hence since $\yy \in \Ocal(m)$, then by Lemma~\ref{lem:jacrankchart}, 
$\jac_{\yy}(\hh,\map[e])$ has full rank $c+e$.
 \end{enumerate}
All the properties of charts being satisfied, we are done.
\end{proof}

We now proceed to prove Proposition~\ref{prop:dimfiber}. The first
statement is given by Lemma~\ref{lem:chartonfiber}. If $e=d$, then the 
second statement is satisfied by the last item 
Proposition~\ref{prop:fibergenalg}, since
$\Kphii[d+1]=V$.
Assume now that $e<d$. By Krull's principal ideal Theorem \cite[Theorem
B]{Ei1995} or equivalently the theorem on the dimension of fibers \cite[Theorem
1.25]{Sh1994}, all irreducible components of $F_Q$ have dimension at least
$d-e>0$.

We now prove the last statement that is that $\Fatlas(\bchi,V,Q,S,\bphi)$ is an 
atlas of
$(F_Q,Q,S_Q,\bphi)$:
\begin{enumerate}[label=$\sfA_\arabic*:$]
 \item Since $F_Q$ has positive dimension and $S_Q$ is finite, then $F_Q-S_Q$ 
is 
not empty. Since $F_Q \subset V$, then by property $\sfA_3$ of $\bchi$, there 
exists $1\leq j \leq s$ such that $\Ocal(m_j)\cap F_Q - S_Q$ is not empty.
 
 \item Let $1\leq j\leq s$ such that $\Ocal(m_j)\cap F_Q - S_Q$ is not empty, 
then by Lemma~\ref{lem:chartonfiber}, $\chi_j$ is a chart of 
$(F_Q,Q,S_Q,\bphi)$. Since the elements of $\Fatlas(\bchi,V,Q,S,\bphi)$ are 
exactly such $\chi_j$, we are done. 
 
 \item Finally let $\yy \in F_Q - S_Q$, since $\yy \in \phibr$ then $\yy \notin 
S$. Since $F_Q \subset V$, then by property $\sfA_3$ of $\bchi$, there exists 
$1\leq j \leq s$ such that $\yy \in \Ocal(m_j)$.
 In particular, $\Ocal(m_j) \cap F_Q - S_Q$ is non-empty, so that $\chi_j \in 
\Fatlas(\bchi,V,Q,S,\bphi)$.
\end{enumerate}
Hence $\Fatlas(\bchi,V,Q,S,\bphi)$ is an atlas of $(F_Q,Q,S_Q,\bphi)$. 
In particular, since $V$ is $d$-equidimensional, all the $\hh_j$'s have same 
cardinality $c = n-d$ by \cite[Lemma A.12]{SS2017}.
Hence by \cite[Lemma A.11]{SS2017}, $F_Q-S_Q$ is a non-singular 
$(d-e)$-equidimensional \LCS.
Since $F_Q$ has positive dimension and $S_Q$ is finite, we deduce that 
$F_Q$ is the Zariski closure of $F_Q-S_Q$ and then, is a 
$(d-e)$-equidimensional 
algebraic set, smooth outside $S_Q$.
This concludes the proof of Proposition~\ref{prop:dimfiber}.

\section*{Acknowledgments}
The first author was supported by the FWO grant G0F5921N. The second author is
supported by ANR grants ANR-18-CE33-0011 \textsc{Sesame}, and ANR-19-CE40-0018
\textsc{De Rerum Natura}, the joint ANR-FWF ANR-19-CE48-0015 \textsc{ECARP}
project, the joint French–Austrian project \textsc{Eagles} (ANR-22-CE91-0007 \&
FWF I6130-N) and the Grant FA8665-20-1-7029 of the EOARD-AFOSR. The third author
was supported by an NSERC Discovery Grant.

\bibliographystyle{abbrv}
\bibliography{PrSaSc2025-preprint.bib}
\end{document}